\newcommand {\OPT } {\mathrm{OPT}}
\newcommand {\inter} {\mathrm{int}}
\newtheorem{definition}{Definition}
\newtheorem{lemma}{Lemma}
\newtheorem{theorem}[lemma]{Theorem}
\newtheorem{observation}{Observation}
\newtheorem{corollary}[lemma]{Corollary}
\newtheorem{claim}[lemma]{Claim}
\begin{document}
\newif\ifFull 
\Fulltrue

\title{
{A PTAS for Three-Edge Connectivity in Planar Graphs}
}
\author{Glencora Borradaile}
\author{Baigong Zheng }
\affil{Oregon State University \\
\{glencora, zhengb\}@eecs.oregonstate.edu
}

\date{}
\maketitle
\thispagestyle{empty}
\begin{abstract}
We consider the problem of finding the minimum-weight subgraph that satisfies given connectivity requirements. Specifically, given a requirement $r \in \{0,1,2,3\}$ for every vertex, we seek the minimum-weight subgraph that contains, for every pair of vertices $u$ and $v$, at least $\min\{ r(v), r(u)\}$ edge-disjoint $u$-to-$v$ paths. We give a polynomial-time approximation scheme (PTAS) for this problem when the input graph is planar and the subgraph may use multiple copies of any given edge. This generalizes an earlier result  for $r \in \{0,1,2\}$. In order to achieve this PTAS, we prove some properties of triconnected planar graphs that may be of independent interest. 
\end{abstract}
\vfill
{\small 
\ifFull
\else
\paragraph{Acknowledgements}
We thank Hung Le, Amir Nayyeri and David Pritchard for helpful discussions.
\fi
This material is based upon work supported by the National Science Foundation under Grant No.\ CCF-1252833.}

\newpage
\setcounter{page}{1}
\section{Introduction}


The survivable network design problem aims to find a low-weight subgraph that connects a subset of vertices and will remain connected despite edge failures, an important requirement in the  field of telecommunications network design. This problem can be formalized as the $I$-edge connectivity problem for an integer set $I$ as follows:  for an edge-weighted graph $G$ with a requirement function on its vertices $r:V(G)\to I$, we say a subgraph $H$ is a feasible solution if for any pair of vertices $u, v\in V(G)$, $H$ contains $\min\{r(u), r(v)\}$ edge-disjoint $u$-to-$v$ paths; the goal is to find the cheapest such subgraph.  
In the {\em relaxed} version of the problem, $H$ may contain multiple (up to $\max I$) copies of $G$'s edges ($H$ is a {\it multi}-subgraph) in order to achieve the desired connectivity, paying for the copies according to their multiplicity;  otherwise we refer to the problem as the  {\it strict} version. Thus $I = \{1\}$ corresponds to the minimum spanning tree problem and $I = \{0,1\}$ corresponds to the minimum Steiner tree problem.  Here our focus is when $\max I \ge 2$.

This problem and variants have a long history. 
The $I$-edge connectivity problem, except when $I = \{1\}$, is MAX-SNP-hard~\cite{CL99}. There are constant-factor approximation algorithms for the strict $\{2\}$-edge-connectivity problem, with the best factor being 5/4 due to Jothi, Raghavachari and Varadarajan~\cite{FJ81, KV94, JRV03}. Ravi~\cite{Ravi92} gave a 3-approximation for the
strict $\{0,2\}$-edge-connectivity problem and Klein and Ravi~\cite{KR93} gave a 2-approximation for the strict $\{0,1,2\}$-edge-connectivity problem. For general requirement, Jain~\cite{Jain01} gave a 2-approximation for both the strict version and the relaxed version of the problem.

We study this problem in planar graphs.  In planar graphs, the $I$-edge connectivity problem, except when $I = \{1\}$, is NP-hard (by a reduction from Hamiltonian cycle).  Berger, Czumaj, Grigni, and Zhao~\cite{BCGZ05} gave a polynomial-time approximation scheme\footnote{A polynomial-time approximation scheme for an minimization problem is an algorithm that, given a fixed constant $\epsilon > 0$, runs in polynomial time and returns a solution within $1 +\epsilon$ of optimal. The algorithm's running time need not be polynomial in $\epsilon$.} (PTAS) for the relaxed $\{1,2\}$-edge-connectivity problem, and Berger and Grigni~\cite{BG07} gave a PTAS for the strict $\{2\}$-edge-connectivity problem.  Borradaile and Klein~\cite{BK08} gave an {\em efficient}\footnote{A PTAS is efficient if the running time is bounded by a polynomial whose degree is independent of $\epsilon$.} PTAS (EPTAS) for the relaxed $\{0,1,2\}$-edge-connectivity problem\footnote{Note that in Borradaile and Klein~\cite{BK08} claimed their PTAS would generalize to relaxed $\{0,1,\ldots,k\}$-edge-connectivity, but this did not come to fruition.}.  The only planar-specific algorithm for non-spanning, strict edge-connectivity is a PTAS for the following problem: given a subset $R$ of edges, find a minimum weight subset $S$ of edges, such that for every edge in $R$, its endpoints are two-edge-connected in $R \cup S$~\cite{KMZ15}; otherwise, the best known results for the strict versions of the edge-connectivity problem when $I$ contains 0 and 2 are the constant-factor approximations known for general graphs.

In this paper, we give an EPTAS for the relaxed $\{0,1,2,3\}$-edge-connectivity problem in planar graphs.  This is the first PTAS for connectivity beyond 2-connectivity in planar graphs:
\begin{theorem}
For any $\epsilon>0$ and any planar graph instance of the relaxed $\{0,1,2,3\}$-edge connectivity problem, there is an $O(n\log n)$-time algorithm that finds a solution whose weight is at most $1+\epsilon$ times the weight of an optimal solution.
\end{theorem}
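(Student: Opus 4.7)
The plan is to adapt the spanner--decomposition--DP framework that underlies previous planar PTAS results for connectivity problems, in particular the Borradaile--Klein PTAS for relaxed $\{0,1,2\}$-edge-connectivity~\cite{BK08}. In outline, I would (i) construct a light \emph{spanner} subgraph $H$ of weight at most $(1+O(\epsilon))\,\OPT$ that contains a $(1+O(\epsilon))$-approximate feasible solution; (ii) decompose $H$ into pieces of branchwidth $O(1/\epsilon)$ using a Baker--Klein-style BFS-level shifting argument; (iii) solve each piece optimally by a dynamic program over a branch decomposition, and glue the pieces along their (duplicated) shared boundaries.

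For step (i), I would begin from a constant-factor approximation to the relaxed $\{0,1,2,3\}$-edge-connectivity problem (Jain's $2$-approximation suffices as a seed). The task is to augment this seed by a set of ``shortcut'' edges of total weight $O(\epsilon)\,\OPT$ so that the resulting spanner $H$ contains some feasible solution of weight at most $(1+O(\epsilon))\,\OPT$. This is where the new structural properties of triconnected planar graphs promised in the abstract should enter: one needs a rerouting lemma asserting that any feasible solution can be modified so that all its edges lie in $H$, paying only a $(1+O(\epsilon))$ multiplicative penalty.

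For step (ii), I would run a BFS from a carefully chosen root in $H$ and slice the graph into annular regions of depth $O(1/\epsilon)$, with the standard shifting argument guaranteeing a choice of cut levels whose total boundary weight is an $O(\epsilon)$ fraction of the weight of $H$; these boundary edges can then be duplicated on both sides of the cut, so that the slices become independent instances. Each slice, after suitably contracting the outside, is a planar graph of branchwidth $O(1/\epsilon)$. For step (iii), the DP must track, at each interface of the branch decomposition, the pattern of edge-disjoint paths crossing the interface: for each boundary vertex $v$ with $r(v)=3$, up to three path-ends must be recorded and properly matched with partial solutions on the other side. The number of distinct states is bounded by a function of the branchwidth, giving polynomial (indeed near-linear, after using standard planar branch-decomposition techniques) time per slice.

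The principal obstacle will be step (i), the spanner construction and rerouting lemma. For $2$-edge-connectivity, the analogous result relies on ear decompositions of $2$-edge-connected subgraphs, which admit local rerouting with small additive cost. For $3$-edge-connectivity, no such classical decomposition is immediately available, so one needs an analogue --- perhaps an ear-style or Tutte-style decomposition of triconnected planar multigraphs --- that is robust under local surgery, together with a quantitative argument bounding the added spanner weight by $O(\epsilon)\,\OPT$. Establishing this structural theorem is where I expect the bulk of the technical work to lie, and it is consistent with the abstract's remark about proving properties of triconnected planar graphs of independent interest.
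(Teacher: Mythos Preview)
Your high-level framework (spanner $\to$ bounded-branchwidth decomposition $\to$ DP) matches the paper, and your steps (ii)--(iii) are indeed routine adaptations of~\cite{BK08}. But step~(i) has both a specification slip and a genuine gap in the structural content. The spanner is not required to have weight $(1+O(\epsilon))\,\OPT$ --- if it did and were feasible, you would already be done. The correct requirement is weight $f(\epsilon)\cdot\OPT$ while \emph{containing} a $(1+\epsilon)$-approximate feasible solution. The paper does not seed from Jain's $2$-approximation; it uses the mortar-graph-and-bricks construction of~\cite{BKM09} (seeded from an approximate Steiner tree), and the heart of the argument is a Structure Theorem asserting three properties of $\OPT\cap\inter(B)$ for each brick $B$: it is a forest with leaves on $\partial B$ (P1); any constituent tree may be replaced by another tree on the same leaves without breaking feasibility (P2); and $O_\epsilon(1)$ trees with $O_\epsilon(1)$ leaves suffice (P3).

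The real gap is in what it takes to establish P1 and, especially, P2. You gesture at ear- or Tutte-style decompositions of triconnected multigraphs; the paper's route is quite different and more delicate. It first \emph{cleaves} vertices of $\OPT$ to convert $k$-edge-connectivity into $k$-vertex-connectivity for $k\le 3$, then works in a minimal $(Q,r)$-vertex-connected planar graph. There it proves (a) every cycle contains a terminal --- via the removable-edge machinery of Holton, Jackson, Saito and Wormald --- which yields P1; and (b) the \emph{Connectivity Separation Theorem}: any terminal pair requiring triconnectivity has three vertex-disjoint witness paths no two of which touch the same terminal-bounded tree. Result~(b) is what makes P2 true, and it is obtained via an auxiliary \emph{Tree Cycle Theorem} describing how each terminal-bounded tree sits inside a canonical enclosing cycle. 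Unlike the biconnected analogue, these lemmas require \emph{global} planarity, not just planarity of individual bricks --- which is why the PTAS does not extend to bounded-genus surfaces, a subtlety your plan does not anticipate. A generic ``rerouting lemma'' would founder on exactly the obstruction illustrated in the paper's Figure~\ref{fig: replace}: replacing a tree inside a brick by another tree on the same leaves can merge the three witness paths unless one first proves that such configurations cannot occur in a minimal solution.
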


In order to give this EPTAS, we must prove some properties of triconnected (three-vertex connected) planar graphs that may be of independent interest. 
One simple-to-state corollary of the sequel is:
\begin{theorem}\label{thm: twoterminals}
In a planar graph that minimally pairwise triconnects a set of terminal vertices, every cycle contains at least two terminals.
\end{theorem}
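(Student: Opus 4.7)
The plan is to proceed by contradiction. Suppose $C$ is a cycle in $H$ with at most one terminal; I want to exhibit an edge of $H$ that may be removed without violating pairwise 3-edge-connectivity among terminals, contradicting minimality of $H$.

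The first step is a parity argument. Since $H$ is minimal, every edge $e \in C$ lies in some essential 3-cut $S_e$ (a 3-edge cut with terminals on both sides). A cycle meets any edge cut in an even number of edges, so $|C \cap S_e|$ is even; together with $|S_e| = 3$ this forces $|C \cap S_e| = 2$. Writing $S_e = \{e, f_e, g_e\}$ with $f_e \in C$ and $g_e \notin C$, the pair $\{e, f_e\}$ splits $C$ into two arcs, one on each side of the cut. Because $C$ has at most one terminal, at least one of these arcs, call it $P_e$, is terminal-free, so it is a Steiner subpath of $C$ whose vertices lie entirely on one side of $S_e$.

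The second step uses planarity. In the planar embedding, $C$ bounds an interior disk, and the third edge $g_e$ lies strictly on one side of $C$. Dually, $S_e$ corresponds to a triangle in $H^*$ with exactly two of its edges in the dual cut $C^*$ of $C$. Iterating this over all edges of $C$ yields a pairing $e \leftrightarrow f_e$ on the edges of $C$ (which forces $|C|$ to be even) together with a family of third edges $\{g_e\}$, and planarity strongly constrains how these interact with the cyclic order on $C$. I then pass to an \emph{innermost} counterexample --- choosing $C$ so as to minimize the region of the embedding on the terminal-free side --- which forces every $g_e$ onto the terminal-rich side and rules out shortcut structures inside the terminal-free region.

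The final step, and the main obstacle, is a modification argument showing that this innermost configuration cannot sustain itself. Using the terminal-free arcs $P_e$ together with the rigid planar structure of the pairing, I would either exhibit a specific edge of $C$ that lies in no essential 3-cut (contradicting the assumption that every edge of $C$ is essential) or produce a smaller essential cut of $H$ that avoids some edge of $C$. The delicate combinatorics concerns how the pairings interact cyclically on $C$: non-crossing pairings are addressed by an uncrossing argument via submodularity of the cut function, exposing a 2-edge cut separating some pair of terminals, while crossing pairings are ruled out directly by planarity, since a pair of crossing essential 3-cuts in a planar graph would uncross to produce a 2-cut, yielding the required contradiction.
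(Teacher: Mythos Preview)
Your proposal has genuine gaps. First, in this paper ``triconnects'' means three-\emph{vertex}-connects, not three-edge-connects; your opening move --- that every edge of a minimal solution lies in an essential 3-edge cut --- is the right reflex for edge connectivity but is not what minimality gives you here. Deleting an edge of a minimally $Q$-triconnected graph produces a 2-\emph{vertex} separator between some terminal pair, which does not in general yield a 3-edge bond containing that edge, so the parity argument never gets started. Second, even granting the edge-connectivity reading, the ``pairing $e\leftrightarrow f_e$'' is not well defined: $f_e$ depends on the particular essential cut $S_e$ you picked, and there is no reason $S_{f_e}=S_e$, so you cannot conclude that $|C|$ is even or that the associations form a matching. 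Third, your final step --- the one you flag as the main obstacle --- is only a sketch. The assertion that crossing essential 3-cuts in a planar graph uncross to a 2-cut is not true in general (submodularity gives two corner cuts of total size at most $6$, but both can have size $3$), and you give no mechanism for actually locating an inessential edge on $C$.

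For comparison, the paper does not argue via edge cuts at all. It first shows (using the Holton--Jackson--Saito--Wormald theory of removable edges in triconnected graphs) that every cycle of a minimal $Q$-triconnected graph contains at least one terminal. It then proves a much stronger structural result, the Tree Cycle Theorem: every terminal-bounded component of such a graph is a tree. Theorem~\ref{thm: twoterminals} is then a one-line corollary --- a cycle with at most one terminal would sit inside a single terminal-bounded component, contradicting that component being a tree. So the paper's route is through vertex-connectivity structure and a fairly heavy planar lemma, not through cut uncrossing.
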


In the remainder of this introduction we overview the PTAS for network design problems in planar graphs~\cite{BK08} that we use for the relaxed $\{0,1,2,3\}$-edge connectivity problem.  In this overview we highlight the technical challenges that arise from handling 3-edge connectivity.  We then overview why we use properties of vertex connectivity to address an edge connectivity problem and state our specific observations about triconnected planar graphs that we require for the PTAS framework to apply.
In the remainder, 2-ECP refers to ``the relaxed $\{0,1,2\}$-edge-connectivity problem" and 3-ECP refers to ``the relaxed $\{0,1,2,3\}$-edge-connectivity problem".

\subsection{Overview of PTAS for 2-ECP}

The planar PTAS framework grew out of a PTAS for travelling salesperson problem~\cite{Klein08} and has been used to give PTASes for Steiner tree~\cite{BKK07, BKM09}, Steiner forest~\cite{BHM11} and 2-EC~\cite{BK08} problems.   
For simplicity of presentation, we follow the PTAS whose running time is doubly exponential in $1/\epsilon$~\cite{BKK07}; this can be improved to singly exponential as for Steiner tree~\cite{BKM09}.  
Note that for all these  problems (except Steiner forest, which requires a preprocessing step to the framework), the optimal value of the solution is within a constant factor of the optimal value of a Steiner tree on the same terminal set where we refer to vertices with non-zero requirement as  {\em terminals}.  
Let OPT be the weight of an optimal solution to the given problem. 

The PTAS for 2-ECP relies on an algorithm to find the {\em mortar graph} $MG$ of the input graph $G$. 
The mortar graph is a grid-like subgraph of $G$ that spans all the terminals and has total weight bounded by $f(\epsilon)$ times the minimum weight of a Steiner tree spanning all the terminals. 
To construct the mortar graph, we can first find an approximate Steiner tree for all terminals and then recursively augment it with some short paths.
For each face of the mortar graph, the subgraph of $G$ that is enclosed by that face (including the boundary of the face) is called a {\em brick}.
It is shown that there is a nearly optimal solution for 2-ECP whose intersection with each brick is a set of non-crossing trees~\cite{BK08}. 
Further, it is proved that each such tree has only $g(\epsilon)$ leaves and its leaves are a subset of a designated vertex set, called {\em portals}, on the boundary, allowing these trees to be computed efficiently~\cite{EMV87}.

In the following, $O$-notation hides factors depending on $\epsilon$ that only affect the running time.  
The PTAS for 2-ECP consists of the following steps:

\begin{description}
\item [Step 1:] Find a subgraph that satisfies two properties: its weight is at most $O(\OPT)$ and it contains a $(1+\epsilon)$-approximate solution. Such a subgraph is often referred to as a {\em spanner} in the literature. It is sufficient to solve the problem in the spanner.
\begin{description}
\item [(1)] Find the mortar graph $MG$.
\item [(2)] For each brick of $MG$, designate as {\em portals} a constant number (depending on $\epsilon$) of vertices on the boundary of each brick. 
\item [(3)] Find Steiner trees for each subset of portals in each brick by the algorithm of Erickson et al.~\cite{EMV87}. All the Steiner trees of each brick together with mortar graph $MG$ form the spanner.
\end{description}
\item [Step 2:] Decompose the spanner into a set of subgraphs, called {\em slices}, that satisfy the following properties.
\begin{description}
\item [(1)] Each slice has bounded {\em branchwidth}.
\item [(2)] Any two slices share at most one simple cycle.
\item [(3)] Each edge can belong to at most two slices; such edges form the boundaries of slices.
\item [(4)] The weight of all edges that belong to two slices is at most  $\epsilon \OPT$. \end{description}
\item [Step 3:] Select a set of ``artificial" terminals with connectivity requirements on the boundaries of slices from the previous step to achieve the following:
\begin{itemize}
\item For each slice, there is a feasible solution with respect to original and artificial terminals whose weight is bounded by the weight of the slice's boundary plus the weight of the intersection of an optimal solution with the slice.
\item The union over all slices of such feasible solutions is a feasible solution for the original graph.
\end{itemize}
\item [Step 4:] Solve the 2-ECP with respect to original and artificial terminals in each slice by dynamic programming.
\item [Step 5:] 
Convert the optimal solutions from the previous step to a solution for the spanner.
\end{description}

We can construct the spanner in $O(n \log n)$ time~\cite{BKM09}. 
We can identify the boundary edges in Step 2 by doing breadth-first search in the planar dual and then applying the shifting technique of Baker~\cite{Baker94}, which can be done in linear time. 
With these edges we can decompose the spanner into slices in linear time.
Step 3 can be done in linear time since the slices form a tree structure and we only need to choose as a new terminal any vertex on the boundary cycle if the cycle separates any two original terminals. If a boundary cycle separates two terminals requiring two-edge connectivity, the connectivity requirement of the new terminal on that cycle is 2, otherwise 1.
By standard dynamic programming techniques we can solve the 2-ECP in the graph with bounded branchwidth in linear time.
Step 5 can be done in linear time.
So the total running time of the PTAS for 2-ECP is $O(n\log n)$.

We will generalize this PTAS to 3-ECP.  
The differences for 3-ECP are Step 3 and Step 4.
For Step 3, we set the connectivity requirement of the new terminal to 3 if its corresponding boundary cycle separates two terminals requiring three-edge connectivity.
For Step 4, we use the dynamic programming for $k$-ECP on graphs with bounded branchwidth\ifFull given in Section~\ref{sec:dp}, \fi, which is \ifFull inspired by \else similar to \fi that for the $k$-vertex-connectivity spanning subgraph problem in Euclidean space given by Czumaj and Lingas in \cite{CL98,CL99}.

To prove this PTAS is correct, we need to show the subgraph obtained from Step 1 
is a spanner for 3-ECP, which is the challenge of this work (as with most applications of the PTAS framework). 
For any fixed $0 < \epsilon < 1$ and input graph $G$, a subgraph $H$ of $G$ is a spanner for 3-ECP, if it has the following two properties.
\begin{description}
\item [(1)] The weight of $H$ is at most $O(\OPT)$.
\item [(2)] There is a $(1+\epsilon)$-approximate solution using only the edges of $H$.
\end{description}
The weight of the spanner found in Step 1 is at most $f(\epsilon)$ times the minimum weight of a Steiner tree spanning all the terminals. 
Since $\OPT$ is more than the minimum weight of a Steiner tree, the weight of our spanner is at most $O(\OPT)$.
The second property will be guaranteed by the following Structure Theorem, which is the main focus of this paper.
\begin{theorem}[Structure Theorem]\label{thm:struct}
For any $\epsilon>0$ and any planar graph instance $(G,w, r)$ of 3-ECP, there exists a feasible solution $S$ in our spanner such that 
\begin{itemize}
\item the weight of $S$ is at most $(1+c\epsilon) \OPT$ where $c$ is an absolute constant, and 
\item the intersection of $S$ with the interior of any brick is a set of trees whose leaves are on the boundary of the brick and each tree has a number of leaves depending only on $\epsilon$.
\end{itemize}
\end{theorem}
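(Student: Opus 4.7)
The plan is to follow the Borradaile--Klein framework for 2-ECP~\cite{BK08}, substituting the structural tools for 2-EC optimal solutions with properties of triconnected planar graphs of which Theorem~\ref{thm: twoterminals} is a corollary. I begin with a minimal optimal 3-EC solution $\OPT^*$, i.e., one from which no edge copy can be removed while maintaining feasibility. Applied to $\OPT^*$, Theorem~\ref{thm: twoterminals} says that every simple cycle in $\OPT^*$ contains at least two terminals. Since the mortar graph $MG$ spans all terminals, no terminal lies strictly inside any brick, so any simple cycle of $\OPT^*$ contained in a brick $B$ must meet the boundary $\partial B$. Consequently $\OPT^* \cap \inter(B)$ is a forest for every brick $B$, which gives the first half of the structural claim.

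The second half --- each tree has only $g(\epsilon)$ leaves, all of them portals --- is obtained by modifying $\OPT^*$ using cheap pieces of $MG$. The mortar graph is designed so that each brick is crossed by short ``supercolumn'' paths. A Baker-style shifting argument selects a periodic offset so that the chosen supercolumn edges used for rerouting have total weight at most $\epsilon \cdot w(MG)$. For each tree $T$ in $\OPT^* \cap \inter(B)$ with more than $g(\epsilon)$ boundary leaves, two leaves lying in the same supercolumn strip are identified, and the subtree joining them is rerouted through that supercolumn and then along $\partial B$, merging two boundary attachments of $T$ into one. Iterating until each tree has at most $g(\epsilon)$ leaves, and then sliding each remaining leaf along $\partial B$ to its nearest portal, yields a solution whose extra weight beyond $\OPT$ is $O(\epsilon) \cdot w(MG) = O(\epsilon\,\OPT)$. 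The resulting few-leaf, portal-attached subtrees are then exactly the kind of pieces produced by the Erickson--Monma--Veinott enumeration over portal subsets, so the final solution lies in the spanner.

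The main obstacle is proving that every rerouting step preserves 3-edge-connectivity between the original terminals. In the 2-ECP case, the brick-interior pieces of $\OPT^*$ are a disjoint union of simple paths with endpoints on $\partial B$, and a swap that detours a path along the boundary obviously keeps each 2-cut covered. In the 3-ECP case the pieces are trees, a merge changes a tree's edge-multiplicity pattern at the boundary, and one must argue that every 3-cut separating a pair of terminals is still crossed the correct number of times. I expect the correct technical vehicle to be a lemma about minimal triconnected planar subgraphs showing that the contribution of a brick to any triconnection of two boundary terminals is captured by at most a constant number of edge-disjoint boundary-to-boundary walks, so that replacing a many-leaf tree with a few-leaf tree plus boundary detours never destroys a 3-cut. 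Formulating and proving this lemma --- together with the bookkeeping for edge multiplicities introduced by the merges --- is the step I expect to dominate the technical work, and is precisely where the promised new properties of triconnected planar graphs enter.
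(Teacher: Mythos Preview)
Your outline has two genuine gaps.

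First, you apply Theorem~\ref{thm: twoterminals} directly to a minimal 3-\emph{edge}-connected optimal solution $\OPT^*$, but that theorem is about graphs that minimally 3-\emph{vertex}-connect (triconnect) the terminals. An edge-connected multigraph can have cycles with no terminals at all (imagine three parallel edges meeting a non-terminal vertex of high degree). The paper closes this gap by \emph{cleaving}: a sequence of planarity-preserving vertex splits, together with zero-weight edges, that converts $\OPT$ into a solution $\OPT_C$ in a modified graph where the terminals are $k$-vertex-connected whenever they were $k$-edge-connected. Only after cleaving do Theorem~\ref{thm: minimal} and the Connectivity Separation Theorem apply, giving the forest property P1 inside each brick. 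Your argument for P1 needs this step or an equivalent; without it the hypothesis of Theorem~\ref{thm: twoterminals} is not met.

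Second, your plan for P2 and P3 diverges from the paper and, as written, would not go through. The paper does not reduce leaves by iteratively merging two leaves through a supercolumn; supercolumns are used only once, in the Augment step, to ensure all joining vertices lie on the north/south boundaries. The leaf reduction is done by a dedicated structural lemma (the analogue of the 2-ECP ``restructure'' lemma) that replaces a family of non-crossing trees spanning two $\epsilon$-short boundary paths by (i) at most $\epsilon^{-1}$ few-leaf trees plus (ii) a single carefully built subgraph $\widehat C$ that is itself 3-edge-connected on its boundary vertices and can route any three boundary-pairs edge-disjointly. More importantly, the reason any such replacement preserves feasibility is the Connectivity Separation Theorem (Theorem~\ref{thm: tri_disjoint} / Corollary~\ref{cor: terminal}): for every pair of terminals there exist three vertex-disjoint paths no two of which use the same terminal-bounded tree, hence no two of which pass through the same component of $\OPT_C\cap\inter(B)$. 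This is exactly the ``lemma about minimal triconnected planar subgraphs'' you anticipate, but it is both more specific and harder than what you sketch: it requires the Tree Cycle Theorem and the analysis of removable edges in contracted $Q$-triconnected graphs, and it is where planarity is genuinely needed (unlike the 2-ECP analogue, it fails on the torus). Your merge-and-slide procedure does not obviously respect this separation of paths across terminal-bounded trees, so the 3-cut-preservation claim is unsupported.
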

The {\em interior} of a brick is all the edges of a brick that are not on the boundary.
We denote the interior of a brick $B$ by $\inter(B)$.
Consider a brick $B$ of $G$ whose boundary is a face of $\mathrm{MG}$ and consider the intersection of $\OPT$ with the interior of this brick, $\OPT \cap \inter(B)$.  
To prove the Structure Theorem, we will show that: 
\begin{description}
\item [P1:] $\OPT \cap \inter(B)$ can be partitioned into a set of trees $\mathcal T$ whose leaves are on the boundary of $B$.
\item [P2:] If we replace any tree in $\mathcal T$ with another tree spanning the same leaves, the result is a feasible solution.
\item [P3:] There is another set of $O(1)$ trees $\mathcal T'$ that costs at most a $1+\epsilon$ factor more than $\mathcal T$, such that each tree of $\mathcal T'$ has $O(1)$ leaves and $(\OPT \setminus {\mathcal T} ) \cup {\mathcal T}'$ is a feasible solution.\footnote{Strictly, we also add multiple copies of edges from the boundary of $B$ to guarantee feasibility of $(\OPT\setminus {\mathcal T} ) \cup {\mathcal T}'$.}
\end{description}

Property P1 implies that we can decompose an optimal solution into a set of trees inside of bricks.
Property P2 shows that we can treat those trees independently with regard to connectivity, and this gives us hope that we can replace $\OPT \cap \inter(B)$ with some Steiner trees with terminals on the boundary which we can efficiently compute in planar graphs~\cite{EMV87}.
Property P3 shows that we can compute an approximation to $\OPT \cap \inter(B)$ by guessing $O(1)$ leaves.  
Those approximations can be combined efficiently in the remaining steps of the PTAS.

For the Steiner tree problem, P1 and P2 are nearly trivial to argue; the bulk of the work is in showing P3~\cite{BKK07}.  

For 2-ECP, P1 depends on first converting $G$ and $\OPT$ into $G'$ and $\OPT'$ such that $\OPT'$ biconnects  (two-vertex connects) the terminals requiring two-edge connectivity and using the relatively easy-to-argue fact that every cycle of $\OPT'$ contains at least one terminal. By this fact, a cycle in $\OPT'$ must contain a vertex of the brick's boundary (since $MG$ spans the terminals), allowing the partition of $\OPT' \cap \inter(B)$ into trees.  P2 and P3 then require that two-connectivity across the brick is maintained.

For 3-ECP, P1 is quite involved to show, but further to that, showing Property P2 is also involved;  the issues\footnote{The issues also appear in 2-ECP, but we explain why it is easy to handle in 2-ECP in the next subsection.} are illustrated in Figure~\ref{fig: replace} and  are the focus of Sections~\ref{sec:vert-conn} and~\ref{sec:conn-sep}.  
As with 2-ECP, we convert $\OPT$ into a vertex connected graph to simplify the arguments.  
Given Properties P1 and P2, we illustrate Property P3 by following a similar argument as for 2-ECP; since this requires reviewing more details of the PTAS framework, we cover this in \ifFull Section~\ref{sec:spanner}.\else the full appended version.  
\fi

\begin{figure}[ht]
  \centering
  \includegraphics[scale=1.0]{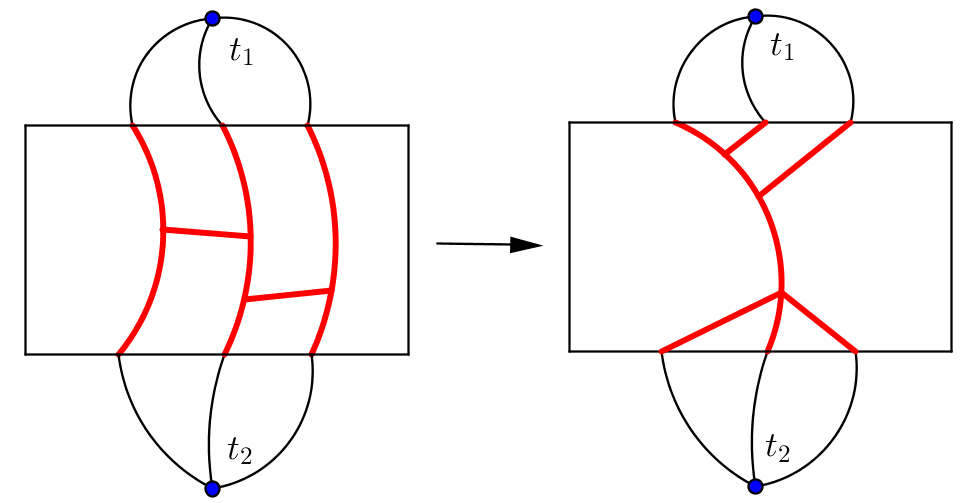}
  \caption{If the bold red tree (left) is $\OPT \cap \inter(B)$ (where $B$ is denoted by the rectangle), replacing the tree with another tree spanning the same leaves (right) could destroy 3-connectivity between $t_1$ and $t_2$.  We will show that such a tree cannot exist in a minimally connected graph.}
  \label{fig: replace}
\end{figure}

\paragraph*{Non-planar graphs}
We point out that, while previously-studied problems that admit PTASes in planar graphs (e.g.\ independent set and vertex cover~\cite{Baker94}, TSP~\cite{Klein08,Klein06,AGKKW98}, Steiner tree~\cite{BKM09} and forest~\cite{BHM11}, 2-ECP~\cite{BK08}) generalize to surfaces of bounded genus~\cite{BDT12}, the method presented in this paper 3-ECP cannot be generalized to higher genus surfaces.  In the generalization to bounded genus surfaces, the graph is preprocessed (by removing some provably unnecessary edges) so that one can compute a mortar graph whose faces bound disks.  This guarantees that even though the input graph is not planar, the bricks are; this is sufficient for proving above-numbered properties in the case of TSP, Steiner tree and forest and 2-ECP.  However, for 3-ECP, in order to prove P2, we require {\em global} planarity, not just planarity of the brick.
To the authors' knowledge, this is the only problem that we know to admit a PTAS in planar graphs that does not naturally generalize to toroidal graphs.

\subsection{Reduction to vertex connectivity}
Here we give an overview of how we use vertex connectivity to argue about the structural properties of edge-connectivity required for the spanner properties.  
\ifFull
First some definitions.
Vertices $x$ and $y$ are $k$-vertex-connected in a graph $G$ if $G$ contains $k$ pairwise vertex disjoint $x$-to-$y$ paths. If $k=3$ ($k=2$), then $x$ and $y$ are also called triconnected (biconnected).
\else
Two vertices are {\em triconnected} (biconnected) if there are three (two) intenally vertex-disjoint paths between them.
\fi
For a subset $Q$ of vertices in $G$ and a requirement function $r:Q\to\{2,3\}$, subgraph $H$ is said to be $(Q,r)$-vertex-connected if every pair of $x$ and $y$ in $Q$ is $k$-vertex-connected for $k=$min$\{r(x),r(y)\}$. We call vertices of $Q$ {\em terminals}.  
If $r(x)=3$ ($r(x)=2$) for all $x\in Q$, we say $H$ is $Q$-triconnected ($Q$-biconnected).  
We say a $(Q,r)$-vertex-connected graph is {\em minimal}, if deleting its any edge or vertex violates the connectivity requirement. 

\begin{figure}[ht]
  \centering
  \includegraphics[scale=1]{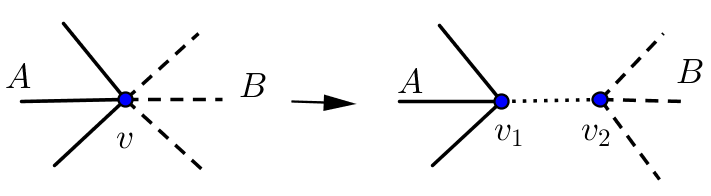}
  \caption{Vertex $v$ is cleaved into vertices $v_1$ and $v_2$. The edges incident to $v$ are partitioned into two sets $A$ and $B$ to become incident to distinct copies.}
  \label{fig: cleave}
\end{figure}
We {\em cleave} vertices to create a subgraph $\OPT'$ of graph $G'$ that is a vertex-connected version of the $\{0,1,2,3\}$-edge-connected multisubgraph $\OPT$ of $G$.  Informally, cleaving a vertex means splitting the vertex into two copies and adding a zero-weight edge between the copies; incident edges choose between the copies in a planarity-preserving way (Figure~\ref{fig: cleave}).  We \ifFull show in Section~\ref{sec:cleave} how to \else can \fi cleave the vertices of $\OPT$ so that if two terminals are $k$-edge-connected in $\OPT$, there are corresponding terminals in $\OPT'$ that are $k$-vertex-connected.  We will argue that $\OPT'$ satisfies Properties P1 and P2 and 
these two properties also hold for OPT,
since OPT$'$ is obtained from OPT by cleavings.

To prove that $\OPT'$ satisfies Property P1, we show that every cycle in OPT$'$ contains at least one terminal (Section~\ref{sec:vert-conn}). To prove that $\OPT'$ satisfies Property P2, we define the notion of a {\em terminal-bounded component}: a connected subgraph is a terminal-bounded component if it is an edge between two terminals or obtained from a maximal terminal-free subgraph $S$ (a subgraph containing no terminals), by adding edges from $S$ to its neighbors (which are all terminals by maximality of $S$). We prove that in a minimal $Q$-triconnected graph any terminal-bounded component is a tree whose leaves are terminals and the following Connectivity Separation Theorem in Section~\ref{sec:conn-sep}:

\begin{theorem}[Connectivity Separation Theorem]\label{thm: tri_disjoint}
Given a minimal $(Q,r)$-vertex-connected planar graph, for any pair of terminals $x$ and $y$ that require triconnectivity (biconnectivity), there are three (two) vertex disjoint paths from $x$ to $y$ in $G$ such that any two of them do not contain edges of the same terminal-bounded tree.
\end{theorem}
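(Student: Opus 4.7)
The plan is an exchange argument over families of vertex-disjoint paths. By Menger's theorem applied to the $(Q,r)$-vertex-connected graph $G$, there exist $k = \min\{r(x), r(y)\}$ internally vertex-disjoint $x$-to-$y$ paths; among all such families $\mathcal{P} = \{P_1,\ldots,P_k\}$ I would select one that minimizes, lexicographically, (i) the number of TBTs whose edges lie in more than one path of $\mathcal{P}$, and then (ii) the total edge count $\sum_i |E(P_i)|$. The goal is to show that the minimum value of (i) is zero.

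Assume for contradiction that some TBT $T$ is used by two paths $P_i, P_j$. Using the structural result that each TBT is a tree with terminal leaves whose interior non-terminal vertices have all their graph-incident edges inside $T$, both paths must enter and leave $T$ at terminal leaves and, being vertex-disjoint, trace two vertex-disjoint tree paths of $T$ between disjoint pairs $(a,b)$ and $(c,d)$ of terminal leaves. Planarity of $G$ lets us view $T$ as sitting inside a topological disc with its terminal leaves on the boundary; the cyclic order of $a,b,c,d$ around this disc is then compatible with their order on the planar fan formed by $\mathcal{P}$, closed by a virtual $xy$ edge.

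The key step is to reroute $P_j$ so that it connects $c$ to $d$ without using any edge of $T$, while remaining internally vertex-disjoint from the remaining paths $P_\ell$. An alternative $c$-$d$ route outside $T$ must exist because $c, d \in Q$ and, by minimality of $G$, $T$ alone cannot supply all their required vertex-connectivity; planarity then restricts the alternative to a planar face of $\mathcal{P}$ adjacent to $P_j$, and an uncrossing argument produces a reroute disjoint from every other $P_\ell$. Substituting this reroute strictly reduces measure (i), contradicting minimality of $\mathcal{P}$. The biconnected case is handled in the same way with $k=2$.

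The main obstacle is this last rerouting step: ensuring that an appropriate $c$-$d$ route outside $T$ actually lies in the right planar face, and neither re-enters $T$ nor introduces a new shared TBT. I expect this to require a delicate topological uncrossing argument in the planar embedding together with minimality of $G$ to rule out degenerate configurations, and is likely where the bulk of the technical work of Section~\ref{sec:conn-sep} lies.
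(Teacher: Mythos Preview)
Your high-level strategy --- pick a family of $k$ vertex-disjoint $x$-to-$y$ paths minimizing the number of shared terminal-bounded trees, then reroute to eliminate a shared tree --- is indeed the same skeleton as the paper's Lemma~\ref{lem: disjointpaths}. The genuine gap is in the rerouting step, and you have correctly identified it as the obstacle, but you have underestimated what is needed to close it.

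Your proposed reroute is: find some $c$-to-$d$ path outside $T$ (existence coming from minimality and the connectivity requirement of $c,d$), then uncross it against the other $P_\ell$ using planarity. Even granting that such a path exists and can be made disjoint from the other $P_\ell$, nothing prevents it from passing through a \emph{different} terminal-bounded tree $T'$ that one of the other paths already uses. Your lexicographic measure (i) could then stay the same or increase, and the exchange argument stalls. The secondary measure (ii) on edge count does not help here, since the reroute can easily be longer. The paper confronts exactly this issue and resolves it not by a generic uncrossing argument but by first proving the Tree Cycle Theorem (Theorem~\ref{thm: treecycle}): every terminal-bounded tree $T$ sits inside a canonical cycle $C(T)$ whose interior contains only $T$, whose vertices include all leaves of $T$, and --- crucially --- whose maximal terminal-free arcs lie in pairwise distinct terminal-bounded trees (property~(d)). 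The rerouting is then done \emph{along $C(T)$}, not along an arbitrary alternative path, and property~(d) is precisely the statement that this introduces no new shared tree. Establishing the Tree Cycle Theorem is the bulk of Section~\ref{sec:conn-sep} and relies on the removable-edge machinery of Section~\ref{sec: removable_edges}; it is not something a direct topological uncrossing can replace.

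A secondary issue: your claim that $P_i$ and $P_j$ each traverse $T$ along a single tree path between two leaves is not immediate --- a simple path can enter and exit $T$ several times through distinct leaves --- and when all three paths share $T$ (which does occur), rerouting just one of them is not enough. The paper's argument handles all intersecting paths simultaneously via $C(T)$, splitting into cases according to whether the third path meets $C(T)$.
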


\begin{corollary}\label{cor: terminal}
Given a minimal $(Q,r)$-vertex-connected planar graph, for any pair of terminals $x$ and $y$ that require triconnectivity (biconnectivity), there exist three (two) vertex disjoint $x$-to-$y$ paths such that any path that connects any two of those $x$-to-$y$ paths contains a terminal.
\end{corollary}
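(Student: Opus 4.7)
The plan is to take the three (or two) vertex-disjoint paths $P_1,P_2,P_3$ (respectively $P_1,P_2$) provided by the Connectivity Separation Theorem and show that these same paths already satisfy the stronger conclusion of the corollary, via contradiction.

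Suppose, toward contradiction, that there is a path $R$ with one endpoint $u \in V(P_i)$ and the other endpoint $v \in V(P_j)$ for some $i \neq j$, and that $R$ contains no terminal. Since $x$ and $y$ are terminals, neither $u$ nor $v$ can be an endpoint of its host path, so both $u$ and $v$ are non-terminal internal vertices of $P_i$ and $P_j$ respectively, and in fact every vertex of $R$ is a non-terminal. Because $V(R)$ is connected and terminal-free, it sits inside a single maximal terminal-free connected subgraph $S$, so every edge of $R$ lies in the terminal-bounded tree $T$ associated with $S$.

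The next step is to observe that $P_i$ and $P_j$ each contain an edge of $T$, which contradicts Theorem~\ref{thm: tri_disjoint}. At $u \in S \cap V(P_i)$, any edge of $P_i$ incident to $u$ goes either to a terminal, in which case it is an $S$-to-terminal edge and so belongs to $T$ by definition, or to another non-terminal which, by maximality of $S$, also lies in $S$, making the edge an internal edge of $S$ and hence of $T$. Either way, $P_i$ shares an edge with $T$. Applying the same reasoning at $v \in S \cap V(P_j)$ shows $P_j$ also shares an edge with $T$. Thus $P_i$ and $P_j$ both contain edges of the same terminal-bounded tree, contradicting Theorem~\ref{thm: tri_disjoint}.

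I do not foresee a serious obstacle here, since the structural heavy lifting is already packaged into Theorem~\ref{thm: tri_disjoint} together with the fact, noted just before the theorem, that terminal-bounded components of a minimal $(Q,r)$-vertex-connected graph are trees. The only subtlety is making sure that the endpoints $u,v$ of the hypothetical terminal-free path $R$ belong to the same $S$ as $R$'s interior; this is immediate from maximality of $S$ once we know $u$ and $v$ are non-terminals, which in turn is forced by $R$ being terminal-free together with $x,y$ being terminals.
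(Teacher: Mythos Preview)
Your proof is correct and is precisely the intended derivation: the paper states Corollary~\ref{cor: terminal} as a direct consequence of Theorem~\ref{thm: tri_disjoint} without giving a separate argument, and your contradiction argument (a terminal-free $P_i$-to-$P_j$ path forces both $P_i$ and $P_j$ to contain an edge of the same terminal-bounded tree) is exactly the natural way to unpack that implication. The only point worth noting explicitly is that the degenerate case $u=v$ cannot occur, since the paths are internally vertex-disjoint and share only the terminals $x,y$; you implicitly handle this when observing $u,v\notin\{x,y\}$.
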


\noindent This can be viewed as a generalization of the following by Borradaile and Klein for 2-ECP~\cite{BK13}: 

\begin{theorem}\label{thm: ntforbi} {\rm (Theorem 2.8 \cite{BK13}).} Given a graph that minimally biconnects a set of terminals, for any pair of terminals $x$ and $y$ and for any two vertex disjoint $x$-to-$y$ paths, any path that connects these paths must contain a terminal.
\end{theorem}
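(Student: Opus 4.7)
The plan is to argue by contradiction. I assume that the path $R$ that connects $P_1$ and $P_2$ contains no vertex of $Q$ --- in particular $u, v$ and every interior vertex of $R$ are non-terminals --- and I will deduce that some edge of $R$ can be deleted without destroying the $Q$-biconnectivity of $G$, contradicting minimality.

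Pick any edge $e=pq$ of $R$; I want to show $G-e$ is still $Q$-biconnected. By Menger's theorem, this reduces to showing that for every terminal pair $a,b\in Q$ and every $w\in V(G)\setminus\{a,b\}$, the vertices $a$ and $b$ lie in the same component of $G-e-w$. The theta graph $P_1\cup P_2\cup R$ contains the two cycles through $e$: $C_1=P_1[u,x]\cup P_2[x,v]\cup R$ and $C_2=P_1[u,y]\cup P_2[y,v]\cup R$, with $C_1\cap C_2=R$. If $w\notin R$, then $w$ misses at least one of $C_1, C_2$, so the corresponding $C_i-e$ certifies that $p$ and $q$ are connected in $G-e-w$; any $a$-to-$b$ path in $G-w$ (which exists by $G$'s biconnectivity) that uses $e$ can then be rerouted through this detour, yielding an $a$-to-$b$ walk in $G-e-w$ and hence a simple $a$-to-$b$ path there.

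The main obstacle is the remaining case $w\in R$. Since $R$ is terminal-free, $w$ is non-terminal, hence $w\ne a,b$; and $w\ne p,q$, for otherwise $G-e-w=G-w$, in which $a$ and $b$ are connected by biconnectivity of $G$. So $w$ is a proper interior vertex of $R$. Because $w$ lies off $P_1\cup P_2$, the internally disjoint $u$-to-$v$ paths $\pi_x=P_1[u,x]\cup P_2[x,v]$ and $\pi_y=P_1[u,y]\cup P_2[y,v]$ survive in $G-w$, so in $G-e-w$ the vertices $\{u,v,x,y\}$, together with the $R$-segments on the $u$-side of $w$ and on the $v$-side of $e$, lie in a common component $\Gamma$. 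If $a$ and $b$ were separated, one of them --- say $b$ --- would lie outside $\Gamma$, and the 2-vertex-connectivity of $G$ between $b$ and each of $x, y$ would supply two pairs of disjoint paths, each pair consisting of one path through $w$ and one through $e$. The plan is to splice these paths with the surviving $\pi_x, \pi_y$ inside $\Gamma$ and with the $a$-to-$\{x,y\}$ connectivity in $\Gamma$ to produce two internally disjoint $a$-to-$b$ paths in $G-e-w$, contradicting the assumed separation. Carrying out this splicing while preserving vertex-disjointness --- in particular, controlling interactions between the $x$-branch and $y$-branch of the reassembled paths and ensuring they do not collide at a vertex inside $\Gamma$ --- is the delicate combinatorial step I expect to be the crux of the proof; once handled, $G-e$ is $Q$-biconnected, contradicting minimality of $G$.
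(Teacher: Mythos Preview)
The paper does not prove this statement itself; it is quoted as Theorem~2.8 of~\cite{BK13}. That said, the paper's own machinery in Section~\ref{sec:eardecomp} (Observation~\ref{obs: terminal ear} and Lemma~\ref{lem: ear}) contains exactly the argument one would use: take $H=P_1\cup P_2$, which is a cycle and hence biconnected; build an open ear decomposition of $G$ starting from $H$ in which every subsequent ear is chosen through a not-yet-spanned terminal (minimality of $G$ guarantees that once all terminals are spanned no edges remain); then an induction on the highest ear index used by $R$ shows that any $H$-to-$H$ path in $G\setminus H$ must contain an entire ear and hence a terminal. Since a $P_1$-to-$P_2$ path can be shortened to one internally disjoint from $P_1\cup P_2$, this gives the theorem immediately.

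Your approach is genuinely different: you fix an arbitrary edge $e\in R$ and attempt to show $G-e$ is still $Q$-biconnected via a Menger-type case split on the cut vertex $w$. The case $w\notin R$ is clean, but the case $w\in\mathrm{int}(R)$ is a real gap, not just a routine detail. Concretely, with $w$ on (say) the $u$-side of $e$, your analysis shows that $\{w,q\}$ is a 2-cut of non-terminals separating some terminal $a$ from $x,y,b$; you then propose to splice the $b$-to-$x$ and $b$-to-$y$ biconnectivity witnesses with $\pi_x,\pi_y$ to reconnect $a$ and $b$ in $G-e-w$. But nothing you have written forces the $a$-side of this cut to admit any route out other than through $e$ or $w$: the internal vertices of $R(w,p]$ may have degree exactly two in $G$, and the terminal $a$ may hang off that side via paths that all funnel through $p$ or through $w$. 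In that scenario the disjoint $b$-to-$x$ and $b$-to-$y$ paths you invoke live entirely on the $b$-side and give you no leverage on the $a$-side; the ``splicing'' has nothing to splice. (Also, note you only need one $a$-to-$b$ path in $G-e-w$, not two; asking for two makes the task harder, not easier.)

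The deeper issue is that your argument never re-uses minimality after the initial setup, whereas minimality is exactly what rules out the bad configuration above. The ear-decomposition proof packages minimality once, as ``every ear past $P_1\cup P_2$ contains a terminal,'' and then a short induction finishes; it never has to reason about an arbitrary cut vertex $w$ at all.
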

Note that Theorem~\ref{thm: ntforbi} holds for general graphs while Corollary~\ref{cor: terminal} only holds for planar graphs.  This again underscores why our PTAS does not generalize to higher-genus graphs. Further, Theorem~\ref{thm: ntforbi} implies that for biconnectivity any pair of disjoint $x$-to-$y$ paths has the stated property, while for triconnectivity there can be a pair of $x$-to-$y$ paths that does not have the property (See Figure~\ref{fig: counterex}). So Corollary~\ref{cor: terminal} is the best that we can hope, since it shows there exists a set of disjoint paths that has the stated property;  higher connectivity comes at a price.

\begin{figure}[ht]
\centering
\includegraphics[scale=1.7]{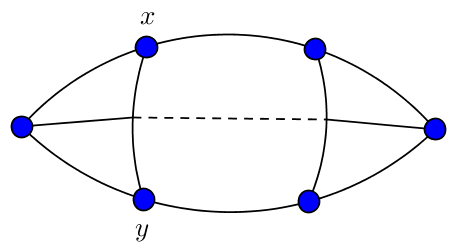}
\caption{A minimal $Q$-triconnected graph. The bold vertices are terminals. The dashed path connects two $x$-to-$y$ paths but it does not contain any terminal.}\label{fig: counterex}
\end{figure}


For OPT$'$, Corollary~\ref{cor: terminal} implies Property P2.
Consider the set of disjoint paths guaranteed by Corollary~\ref{cor: terminal}. If any tree replacement in a brick merges any two disjoint paths, say $P_1$ and $P_2$, in the set (the replacement in Figure~\ref{fig: replace} merges three paths), then the replaced tree must contain at least one vertex of $P_1$ and one vertex of $P_2$.
This implies the replaced tree contains a $P_1$-to-$P_2$ path $P$ such that each vertex in $P$ has degree at least two in the replaced tree. 
Further, $P$ contains a terminal by Corollary~\ref{cor: terminal}. However, all the terminals are in mortar graph, which forms the boundaries of the bricks. 
So $P$ must have a common vertex with the boundary of the brick.  
By Property P1, the replaced tree, which is in the intersection of $\OPT'$ with the interior of the brick, can only contain leaves on the boundary of the brick. 
Therefore, the replaced tree can not contain such a $P_1$-to-$P_2$ path, otherwise there is a vertex in $P$ that has degree one in the tree.
\ifFull We give the complete proof in Section~\ref{sec:spanner}.\fi

\ifFull \else
\paragraph*{In the full version}
Since the novel technical contribution is in illustrating Properties P1 and P2 above, we focus on the arguments for these in this extended abstract.  Unfortunately, even providing the full proof of these properties would exceed the length of the extended abstract, but we aim to give an overview of the main ideas of the proof here. Full details of the proofs and omitted proofs can be found in the appended full version of the paper.
\fi

\section{Vertex-connectivity basics} \label{sec:vert-conn}

In this section, we consider minimal $(Q,r)$-vertex-connected graphs for a subset $Q$ of vertices and a requirement function $r:Q\to \{2,3\}$.
\ifFull
A subgraph induced by $S\subseteq V(G)$ or $S\subseteq E(G)$ in $G$ is denoted by $G[S]$.  The degree of vertex $v$ in $G$ is denoted by $d_G(v)$. 
By $P[a,b]$ or $T[a,b]$ we denote the $a$-to-$b$ subpath of path $P$ or tree $T$.

\begin{lemma}\label{lem: biconnected}
A minimal $(Q,r)$-vertex-connected graph is biconnected.
\end{lemma}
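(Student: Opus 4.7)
My plan is a proof by contradiction that rules out the two ways $G$ could fail to be biconnected: having more than one connected component, or having a cut vertex. I will tacitly assume $|Q|\ge 2$, since otherwise the vertex-connectivity constraints are vacuous and minimality forces $G$ to be trivial.

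First I would argue $G$ is connected. Pick any two terminals $x,y\in Q$; since $\min\{r(x),r(y)\}\ge 2$, they must be joined by at least one path, so all of $Q$ lies in a single component. Any additional component $C'$ would contain no terminal, and since connectivity requirements only concern pairs in $Q$, all edges and vertices of $C'$ could be deleted without losing $(Q,r)$-vertex-connectivity, contradicting minimality.

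Next, suppose $G$ is connected but has a cut vertex $v$, and let the components of $G\setminus v$ be $G_1,G_2,\ldots$. If two of these components, say $G_1$ and $G_2$, each contain a terminal, pick $x\in V(G_1)\cap Q$ and $y\in V(G_2)\cap Q$; every $x$-to-$y$ path in $G$ must pass through $v$, so $x$ and $y$ cannot be joined by two internally disjoint paths, violating $\min\{r(x),r(y)\}\ge 2$. Otherwise, all of $Q$ lies in $V(G_1)\cup\{v\}$. A simple path between any two terminals can visit $v$ at most once and thus never uses a vertex of $G_2$, so the full connectivity requirement is already satisfied inside $G[V(G_1)\cup\{v\}]$. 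We can therefore delete any edge of $G_2$, and subsequently any isolated non-terminal vertex that results, without violating $(Q,r)$-vertex-connectivity, again contradicting minimality.

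The delicate point is the last minimality step: I must confirm that every required collection of internally disjoint terminal-to-terminal paths can be chosen inside $G[V(G_1)\cup\{v\}]$. This follows from the fact that $v$ is an articulation point, so a simple path cannot both leave and return to $V(G_1)\cup\{v\}$ through $v$. Once both subcases yield a contradiction, no cut vertex can exist and $G$ is biconnected.
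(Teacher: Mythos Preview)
Your proof is correct and follows essentially the same approach as the paper's: assume a cut vertex, observe that terminals in different components would violate the requirement $\min\{r(x),r(y)\}\ge 2$, so all terminals lie in one side $G[V(G_1)\cup\{v\}]$; then simple paths between terminals can visit $v$ at most once and hence stay in that side, contradicting minimality. The only differences are cosmetic: you add an explicit preliminary step showing $G$ is connected (which the paper leaves implicit), and the paper phrases the minimality contradiction by exhibiting the smaller subgraph $H_i\cup\{u\}$ directly rather than by deleting an edge outside it.
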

\begin{proof}
For a contradiction, assume that a minimal $(Q,r)$-vertex-connected graph $H$ has a cut-vertex $u$ and let the subgraphs be $H_i$ for $0<i\leq k$ and $k\geq2$  after removing $u$. Then for any vertex $x\in H_i$ and $y\in H_j$ ($i\neq j$), every $x$-to-$y$ path must contain $u$. If $H_i$ and $H_j$ ($i\neq j$) both have terminals, then terminals in those different subgraphs do not achieve the required vertex-connectivity. It follows that there exists one subgraph $H_i\cup\{u\}$ that contains all the terminals. For any two terminals $x,y\in H_i\cup\{u\}$, the paths witnessing their connectivity are simple and so can only visit $u$ once. Therefore, $H_i\cup\{u\}$ is a smaller subgraph that is $(Q,r)$-vertex-connected, contradicting the minimality.
\end{proof}

\subsection{Ear decompositions}\label{sec:eardecomp}  An {\em ear decomposition} of a graph is a partition of its edges into a sequences of cycles and paths (the {\em ears} of the decomposition) such that the endpoints of each ear belong to union of earlier ears in the decomposition. An ear is {\em open} if its two endpoints are distinct from each other. An ear decomposition is {\em open} if all ears but the first are open. A graph containing more than one vertex is biconnected if and only if it has an open ear decomposition~\cite{Whitney32}. Ear decompositions can be found greedily starting with any cycle as the first ear.  It is easy to see that a more general ear decomposition can start with any biconnected subgraph:

\begin{observation}\label{obs: ear}
  For any biconnected subgraph $H$ of a biconnected graph $G$, there exists an open ear decomposition $E_1, E_2, \dots, E_k$ of $G$ such that $H=\bigcup_{i\le j} E_i$ for some $j \leq k$.
\end{observation}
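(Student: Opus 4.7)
The plan is to mimic the classical existence proof for open ear decompositions, but starting the induction from $H$ rather than from a single cycle. Concretely, I would construct a chain of biconnected subgraphs
$$H = G_0 \subsetneq G_1 \subsetneq \cdots \subsetneq G_{k-j} = G,$$
where each $G_{i+1}$ is obtained from $G_i$ by adjoining a single open ear, and then read off the desired decomposition: any open ear decomposition of $H$ itself (which exists by Whitney's theorem, since $H$ is biconnected) gives $E_1,\ldots,E_j$, and the ears added in the chain give $E_{j+1},\ldots,E_k$.

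The inductive step is the only substantive content. Assume $G_i$ is biconnected and $G_i \subsetneq G$. I would split into two cases. If there is an edge $e = uv \in E(G) \setminus E(G_i)$ with both $u, v \in V(G_i)$, then $e$ by itself is an open ear (its endpoints are distinct vertices of $G_i$), and we set $G_{i+1} = G_i \cup \{e\}$. Otherwise, because $G$ is connected and $V(G_i)$ is a proper vertex subset (as all edges of $E(G)\setminus E(G_i)$ have an endpoint outside $V(G_i)$), pick an edge $uv \in E(G)$ with $u \in V(G_i)$ and $v \notin V(G_i)$. Since $G$ is biconnected, $G - u$ is connected, so there is a path $P$ in $G - u$ from $v$ to some vertex of $V(G_i)$; let $w$ be the first vertex of $P$ lying in $V(G_i)$, and let $P'$ be the prefix of $P$ ending at $w$. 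Then $uv$ followed by $P'$ is a path from $u$ to $w$ whose interior vertices lie outside $V(G_i)$, and $u \neq w$ because $w \in V(G - u)$. This is an open ear, so set $G_{i+1}$ to be $G_i$ together with this ear.

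Two facts need to be checked to keep the invariant and show termination. First, adjoining an open ear to a biconnected graph yields a biconnected graph: this is a standard exercise, since the new ear, together with any two internally disjoint paths in $G_i$ between its endpoints, provides the two internally disjoint paths required between any vertex on the ear and any vertex of $G_i$. Second, each step strictly adds at least one edge (and in the second case also at least one vertex) drawn from the finite set $E(G)$, so the process terminates at $G_{k-j} = G$.

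The main obstacle, if any, is purely bookkeeping: verifying that the first ear we attach to $H$ is indeed open (which holds automatically because in both cases the two endpoints lie in $V(G_i) \supseteq V(H)$ and are forced to be distinct), and that concatenating an open ear decomposition of $H$ with our greedy extension still qualifies as an open ear decomposition of $G$ in the sense defined earlier in the paper. No new ideas beyond Whitney's theorem and the open-ear preservation lemma are required.
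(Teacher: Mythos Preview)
Your proposal is correct and is precisely the greedy extension argument the paper has in mind; the paper itself gives no proof beyond the sentence ``Ear decompositions can be found greedily starting with any cycle as the first ear,'' treating the observation as an immediate consequence. Your write-up simply makes explicit the two cases (chord versus crossing edge plus a $G-u$ path back) that the word ``greedily'' is hiding, so there is no substantive difference in approach.
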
 
Let $G$ be a minimal $(Q, r)$-vertex-connected graph, and let $H$ be a minimal $Q_3$-triconnected subgraph of $G$ where $Q_3= r^{-1}(3)$. Then
more strongly, we can assume that each ear of $G$ that are within the parts of $G\setminus H$ contains a terminal.  ($G$ is biconnected by Lemma~\ref{lem: biconnected}.) We do so by starting with an open ear decomposition of $H$ and then for each terminal that is not yet spanned in turn, we add an ear through it; such an ear exists because these terminals require biconnectivity and must have two disjoint paths to the partially constructed ear decomposition.  Any remaining edges after the terminals have been spanned would contradict the minimality of $G$.   Formally and more specifically:
\begin{observation}\label{obs: terminal ear}
  For $G$ and $H$, there is an open ear decomposition $E_1, E_2, \dots, E_k$ of $G$ such that for any component $\chi$ of $G\setminus H$, $\chi = \bigcup_{i = a}^bE_i$ for some $a \le b \le k$ and $E_i$ contains a terminal for $i = a, \ldots, b$.
\end{observation}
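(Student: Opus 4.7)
My plan is to start with an open ear decomposition of $H$ and then greedily extend it to $G$ by adding ears one at a time, organizing the work so that each new ear passes through a terminal and so that the ears contributed by a single component of $G\setminus H$ form a contiguous block. Since $H$ is triconnected (hence biconnected), I may fix an open ear decomposition $E_1,\dots,E_j$ of $H$ (using, e.g., Observation~\ref{obs: ear}), and I let $\mathcal{D}$ denote the growing partial decomposition, initially equal to $H$.

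Next I order the components of $G\setminus H$ and process them one at a time. Because $H$ is a minimal $Q_3$-triconnected subgraph we have $Q_3\subseteq V(H)$, so every terminal outside $V(H)$ has requirement~$2$. For each component $\chi$ in turn, while there is a terminal $t\in Q\cap V(\chi)\setminus V(\mathcal{D})$, I add an open ear through $t$. Such an ear exists by the following standard fact about biconnected graphs: since $G$ is biconnected by Lemma~\ref{lem: biconnected} and $\mathcal{D}$ is biconnected (being an open ear decomposition with $|V(\mathcal{D})|\ge 2$), $G$ contains two paths from $t$ to $V(\mathcal{D})$ that are internally vertex-disjoint and end at distinct vertices of $V(\mathcal{D})$; concatenating them yields an open ear $P_t$ through $t$ whose internal vertices all lie outside $V(\mathcal{D})$. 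Crucially, every internal vertex of $P_t$ is reached from $t$ by edges of $P_t$, none of which lie in $\mathcal{D}$ and in particular none lie in $H$, so these internal vertices lie in $V(\chi)$ and $P_t\subseteq\chi$. I append $P_t$ as the next ear in the decomposition and repeat. When no terminal of $\chi$ remains outside $V(\mathcal{D})$, I move on to the next component; since distinct components of $G\setminus H$ can share vertices only through $V(H)$, this ordering guarantees that the ears contributed by $\chi$ occupy a contiguous run of indices.

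After every component has been processed, every terminal lies in $V(\mathcal{D})$: $Q_3\subseteq V(H)\subseteq V(\mathcal{D})$ and every $t\in Q_2\setminus V(H)$ was absorbed when its component was processed. The current $\mathcal{D}$ is biconnected (as an open ear decomposition) and contains $H$ (so it triconnects $Q_3$), hence it is $(Q,r)$-vertex-connected. Minimality of $G$ then forces $\mathcal{D}=G$, so every edge of $G$ ends up in some ear and each ear outside of $E_1,\dots,E_j$ contains the terminal for which it was introduced. The most delicate step is verifying that $P_t\subseteq\chi$ (so that the contiguous-block structure is preserved); everything else follows from the minimality argument, which rules out any leftover ``terminal-free'' ear.
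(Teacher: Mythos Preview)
Your argument is correct and follows essentially the same approach as the paper, which only sketches the proof in the paragraph preceding the observation: start from an open ear decomposition of $H$, add one ear through each remaining terminal (using biconnectivity of $G$), and invoke minimality of $G$ to conclude that no further ears are needed. Your write-up is more careful than the paper's sketch in that you explicitly process components one at a time and verify that each new ear stays inside the current component, which is exactly what is needed to get the contiguous-block structure. One small inaccuracy: $H$ is not triconnected in general (only its contracted version is, by Lemma~\ref{lem: triconnected}); what you actually need and use is that $H$ is biconnected, which follows from Lemma~\ref{lem: biconnected} applied to $H$ as a minimal $Q_3$-triconnected graph.
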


\begin{lemma}\label{lem: ear}
  For $G$ and $H$, there is an open ear decomposition $E_1, E_2, \dots, E_k$ of $G$ such that for any component $\chi=\bigcup_{i=a}^b E_i$ of $G\setminus H$, any path in $\chi$ (or $\chi\setminus E_a$) with both endpoints in $H$ (or $H\cup E_a$) strictly contains a vertex of $Q$. 
\end{lemma}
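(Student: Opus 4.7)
The plan is to refine the iterative construction behind Observation~\ref{obs: terminal ear} and to establish the extra path property by contradiction. I would start with an open ear decomposition of $H$ (available since $H$ is biconnected by Lemma~\ref{lem: biconnected}) and, for each component $\chi$ of $G\setminus H$, greedily add open ears each passing through a yet-unspanned terminal, exactly as in Observation~\ref{obs: terminal ear}. Call this the \emph{base decomposition}; every ear of $\chi$ in it covers a terminal.

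The heart of the proof is to show that no path $P\subseteq \chi$ with both endpoints $u,v\in H$ can be free of interior terminals. Suppose for contradiction that such a $P$ exists. Since $H$ is biconnected and $P$ is an open ear over $H$, the subgraph $H\cup P$ is biconnected; by Observation~\ref{obs: ear} it extends to an open ear decomposition of $G$ in which $P$ appears immediately after the ears of $H$. I would then re-run the minimality argument behind Observation~\ref{obs: terminal ear} on this alternative decomposition: starting from $H\cup P$, keep adding ears that pass through not-yet-spanned terminals, and invoke minimality of $G$ to conclude that once all terminals are spanned, no edges of $G$ can remain. But $P$ itself spans no new terminal (its interior is terminal-free and its endpoints, lying in $H$, were already spanned), so the rearranged decomposition of $\chi$ uses one fewer terminal-covering ear than the base decomposition while retaining the same edge set, which forces some edge of $\chi$ to be superfluous and contradicts the minimality of $G$. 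The parenthetical case of $\chi\setminus E_a$ with endpoints in $H\cup E_a$ follows by the same argument with $H\cup E_a$ in place of $H$, which is biconnected because $E_a$ is an open ear attached to the biconnected $H$.

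The main obstacle I foresee is pinning down the step ``$P$ has no internal terminal $\Rightarrow$ some edge of $\chi$ is removable.'' Merely rewriting the ear decomposition does not by itself shrink the edge set; what it provides is a certificate that some ear of the new decomposition is not needed to achieve the required $(Q,r)$-vertex-connectivity, which then allows one to drop any edge from that ear. Making this rigorous will require a careful bookkeeping of which terminal each new ear is responsible for spanning, essentially formalizing the informal counting sketched inside the proof of Observation~\ref{obs: terminal ear}.
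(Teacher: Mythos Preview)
Your approach has a genuine gap, and it is precisely the one you flagged. The counting argument does not close. The number of open ears of $\chi$ relative to $H$ is a topological invariant (equal to $|E(\chi)|$ minus the number of internal vertices of $\chi$), so inserting $P$ as a first ear does not change the total count. In the base decomposition there are $k$ ears, each containing a terminal; hence $\chi$ has at least $k$ interior terminals. In your alternative decomposition the first ear $P$ is terminal-free and you then add $k-1$ further ears. Nothing prevents those $k-1$ ears from covering all $k$ (or more) terminals, since a single ear may pass through several terminals. So you do not arrive at a superfluous ear, and no edge of $\chi$ is forced to be removable. The minimality of $G$ is never actually contradicted.

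The paper's proof bypasses this entirely with a short induction on the ears of the decomposition from Observation~\ref{obs: terminal ear}. Assume the claim holds for every $H$-to-$H$ path in $\bigcup_{i=a}^{c}E_i$. Any $H$-to-$H$ path in $\bigcup_{i=a}^{c+1}E_i$ that uses an edge of $E_{c+1}$ must traverse the \emph{entire} ear $E_{c+1}$: the internal vertices of an open ear have no other incidences within the current partial decomposition, so once the path enters $E_{c+1}$ it cannot leave except through the ear's two endpoints, both of which lie in $H\cup\bigcup_{i=a}^{c}E_i$. Since $E_{c+1}$ contains a terminal by construction, the path contains that terminal. The parenthetical case is identical with $H\cup E_a$ replacing $H$. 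No minimality or counting is needed beyond what already went into Observation~\ref{obs: terminal ear}.
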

\begin{proof}
We prove the lemma for the path in $\chi$ with endpoints in $H$; the other case can be proved similarly.
We prove this by induction on the index of the ear decomposition guaranteed by Observation~\ref{obs: terminal ear}.  A path $P$ in $G\setminus H$ with both endpoints in $H$ belongs to a component $\chi$ of $G\setminus H$.  
Suppose that the lemma is true for every $H$-to-$H$ path in $\bigcup_{i = a}^cE_i$; we prove the lemma true for such a path in $\bigcup_{i = a}^{c+1}E_i$. Since $E_{c+1}$ is an open ear, any path with two endpoints in $H$ that uses an edge of $E_{c+1}$ would have to contain the entirety of $E_{c+1}$, which contains a terminal.
\end{proof}

\subsection{Removable edges}\label{sec: removable_edges}

\fi

\ifFull
\else
Borradaile and Klein prove that in a minimal $Q$-biconnected graph, every cycle contains a terminal (Theorem 2.5~\cite{BK13}).  We prove a similar property (in the appended full paper) for a minimal $(Q,r)$-vertex connected graph here. 
This property implies property P1, that is the intersection of an optimal solution with the interior of any brick can be partitioned into a set of trees whose leaves are on the boundary of the brick.
Note that our proof for this property does not depend on planarity.

For a $Q$-triconnected graph $H$, we can obtain another graph $H'$ by contracting all the edges incident to the vertices of degree two in $H$.  We say $H'$ is {\em contracted version of $H$} and, alternatively, is \emph{contracted} $Q$-triconnected.  
We can prove that $H'$ is triconnected.
\fi

Holton, Jackson, Saito and Wormald study the {\em removability} of edges in triconnected graphs~\cite{HJSW90}.
For an edge $uv$ of a simple, triconnected graph $G$, removing $e$ consists of:
\begin{itemize}
\item Deleting $uv$ from $G$.
\item If $u$ or $v$ now have degree 2, contract incident edges.
\item If there are now parallel edges, replace them with a single edge.
\end{itemize}
\ifFull
\else
By applying several results of Holton et al.~\cite{HJSW90} about removable edges, we prove (in the appended full paper) that every cycle in a minimum contracted $Q$-triconnected graph contains a terminal.
For a graph $G$ that is $(Q,r)$-vertex connected, let $G'$ be a minimum $Q$-triconnected graph that is a supergraph of $G$.
Let $G''$ is the contracted version of $G'$.
Then every cycle in $G''$ contains a terminal. 
Since $G'$ is a subdivision of $G''$, we know every cycle in $G'$ contains a terminal.
Since $G$ is a subgraph of $G'$, we have the following theorem.  
\fi 
\ifFull
The resulting graph is denoted by $G\ominus e$. If $G\ominus e$ is triconnected, then $e$ is said to be removable.  We use the following theorems of Holton et al.~\cite{HJSW90}.
\begin{theorem}[Theorem 1~\cite{HJSW90}]\label{thm: nonremovable}
 Let $G$ be a triconnected graph of order at least six and $e\in E(G)$. Then $e$ is nonremovable if and only if there exists a set $S$ containing exactly two vertices such that $G\setminus \{e, S\}$ has exactly two components $A,B$ with $|A|\geq2$ and $|B|\geq2$.
\end{theorem}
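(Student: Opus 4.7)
The plan is to prove both implications separately, with the technical work in each direction being to track a vertex-cut through the two simplification steps (suppressing any degree-$2$ vertex created by deleting $e$, then merging parallel edges) that distinguish $G\ominus e$ from $G\setminus e$.

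For the $(\Leftarrow)$ direction, I would suppose such a pair $S=\{x,y\}$ exists for $e=uv$. Triconnectivity of $G$ forbids $S$ from being a $2$-cut of $G$, which forces $u,v\notin S$ (else $S$ already cuts $G$) and further forces $u$ and $v$ to lie in different components, say $u\in A$ and $v\in B$, of $G\setminus(\{e\}\cup S)$; otherwise the edge $e$ would be internal to one side and $S$ alone would disconnect $G$. So $\{x,y\}$ is a $2$-cut of $G\setminus e$. The simplification steps can contract only $u$ or $v$ (the only vertices whose degree dropped) and can merge only parallel edges created by such a contraction, so all side effects stay within $A\cup S$ or $B\cup S$. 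Because $|A|,|B|\geq 2$, each side retains at least one non-$S$ vertex after simplification, and $\{x,y\}$ remains a $2$-cut of $G\ominus e$. Combined with $|V(G\ominus e)|\geq |V(G)|-2\geq 4$, this shows $G\ominus e$ fails to be triconnected, so $e$ is nonremovable.

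For the $(\Rightarrow)$ direction, I would assume $e$ is nonremovable and choose a vertex-cut $T$ of $G\ominus e$ with $|T|\leq 2$. I lift $T$ to a vertex set $S'$ of $G\setminus e$ by ``undoing'' any contraction: if $u$ (resp.\ $v$) was contracted and the resulting composite vertex lies in $T$, replace it by the two original neighbours of the suppressed endpoint. The resulting $S'$ has size at most $4$ and separates $G\setminus e$. Triconnectivity of $G$ then prevents any size-at-most-$2$ subset of $V(G)$ from disconnecting $G$ itself, which forces $e$ to lie across every minimal such separation; this pushes $u$ and $v$ onto opposite sides of the cut and trims $S'$ down to a $2$-element set $\{x,y\}$ that separates the intended $A$ from $B$ in $G\setminus(\{e\}\cup\{x,y\})$. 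The component-size requirement $|A|,|B|\geq 2$ follows because a singleton side would contain a vertex of degree at most three whose neighbours all lie in $\{x,y,u\}$ (or $\{x,y,v\}$), producing a $2$- or $3$-cut of $G$ contradicting triconnectivity together with the order lower bound $|V(G)|\geq 6$.

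The main obstacle is the bookkeeping in the $(\Rightarrow)$ direction: a cut-vertex of $G\ominus e$ may correspond to a $2$-cut of $G$, and cuts that run through a merged parallel-edge endpoint need their ``hidden'' partner restored before the lift is faithful. I expect a short case analysis on whether $T$ is a cut-vertex or a $2$-cut of $G\ominus e$, and on whether $T$ meets a contracted or merged vertex; in every case, triconnectivity and simplicity of $G$ together with the order hypothesis $|V(G)|\geq 6$ should force the lift to be exactly a $2$-element set with the required component sizes.
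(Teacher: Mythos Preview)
The paper does not prove this statement. Theorem~\ref{thm: nonremovable} is quoted verbatim as Theorem~1 of Holton, Jackson, Saito and Wormald~\cite{HJSW90} and used as a black box; the paper provides no argument of its own for it. So there is no ``paper's own proof'' against which to compare your proposal.

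As for the proposal itself: the $(\Leftarrow)$ direction is essentially fine. The $(\Rightarrow)$ direction is on the right track but the sketch is loose in two places. First, your description of the lift is muddled: after suppressing a degree-$2$ endpoint $u$ with neighbours $a,b$ in $G\setminus e$, there is no ``composite vertex'' to replace---the vertex set of $G\ominus e$ is literally $V(G)$ minus the suppressed endpoints, so any cut $T$ of $G\ominus e$ is already a subset of $V(G)$. The correct observation is simply that any $p$--$q$ path in $(G\setminus e)\setminus T$ that passes through a suppressed $u$ can be shortcut via the edge $ab$ to a path in $(G\ominus e)\setminus T$, so $T$ also separates $G\setminus e$. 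Second, your argument that $|A|,|B|\geq 2$ is not quite right as stated: a singleton side need not have a vertex of degree at most three in $G$; you need instead to use that $G$ is triconnected on at least six vertices to rule out small sides (and to handle the case where a suppressed endpoint becomes an isolated component of $(G\setminus e)\setminus T$ when both its neighbours land in $T$). These are repairable, but the write-up as given does not yet establish the component-size condition.
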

In the above theorem, we call $(e,S)$ a separating pair.  For a separating pair $(e,S)$ of $G$, we say $S$ is the {\em separating set} for $e$. 
\begin{theorem}[Theorem 2 \cite{HJSW90}] \label{thm: removableone}
Let $G$ be a triconnected graph of order at least six, and let $(e,S)$ be a separating pair of $G$. Let $e=xy$, and let $A$ and $B$ be the two components of $G\setminus \{e, S\}$, $x\in A$, and $y\in B$. Then every edge joining $S$ and $\{x,y\}$ is removable.
\end{theorem}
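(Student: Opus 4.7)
The plan is to argue by contradiction, using Theorem~\ref{thm: nonremovable} as the sole characterization of nonremovability. By the obvious symmetry between $s_1, s_2$ and between $x, y$, it suffices to show that a representative edge, say $f = s_1 x$, is removable. So suppose $f$ is nonremovable. Then Theorem~\ref{thm: nonremovable} supplies a separating pair $(f, S')$ with $S' = \{u, v\}$ such that $G \setminus \{f, u, v\}$ splits into exactly two components $A', B'$, each of order at least $2$. Since $f$ is the unique edge crossing this cut, we may label the components so that $s_1 \in A'$ and $x \in B'$.

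The driving technical facts are two ``unique-bridge'' statements forced by the two separating pairs: in $G \setminus \{s_1, s_2\}$ the edge $e = xy$ is the only edge between $A$ and $B$, and in $G \setminus \{u, v\}$ the edge $f$ is the only edge between $A'$ and $B'$. Together with the hypothesis that $G$ is $3$-connected (so no two-vertex set disconnects $G$), these constraints rigidly couple the partition $V(G) = A \sqcup B \sqcup \{s_1\} \sqcup \{s_2\}$ with the partition $V(G) = A' \sqcup B' \sqcup \{u\} \sqcup \{v\}$.

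From here I would split into cases according to $|\{s_1, s_2\} \cap \{u, v\}|$; since $s_1 \in A'$, the intersection can only contain $s_2$. In the disjoint case, $s_2$ lies in $A' \cup B'$ and $y$ lies in $A' \cup B' \cup \{u, v\}$; chasing the neighborhoods of $s_2$ and $y$ through the two unique-bridge facts, one shows that either (i)~one of $\{s_2, u\}, \{s_2, v\}, \{u, v\}$ becomes a $2$-vertex cut of $G$, contradicting triconnectivity, or (ii)~one of $A', B'$ is forced to contain fewer than two vertices, contradicting the choice of separating pair. In the overlapping case $s_2 \in \{u, v\}$, the second pair is $(f, \{s_2, w\})$ for some $w$; analyzing the three possibilities for $y$ (in $A'$, in $B'$, or $y = w$) yields a $2$-cut or a size violation in the same manner.

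I expect the main obstacle to be the disjoint case. There one must carefully track which of $u$ or $v$ is responsible for bridging the two sides of the other separator, and which side of each partition contains $y$ relative to $s_2$; the bridging edge $e = xy$ has to reside on a specific side in each partition, and reconciling these constraints without creating a $2$-cut takes the bulk of the combinatorial work. The overlapping case, by contrast, is considerably more rigid: the shared vertex already fixes most of the configuration, and only a short verification remains.
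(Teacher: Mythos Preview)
The paper does not prove this statement at all: Theorem~\ref{thm: removableone} is quoted verbatim from Holton, Jackson, Saito and Wormald~\cite{HJSW90} and used as a black box, so there is no proof in the paper to compare your proposal against.

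That said, your outline is the natural approach and is in the spirit of the arguments in~\cite{HJSW90}: assume $f=s_1x$ is nonremovable, invoke Theorem~\ref{thm: nonremovable} to obtain a second separating pair $(f,\{u,v\})$, and play the two partitions $A\sqcup B\sqcup\{s_1,s_2\}$ and $A'\sqcup B'\sqcup\{u,v\}$ against each other. Your case split on $|\{s_2\}\cap\{u,v\}|$ is correct (since $s_1\in A'$), and the intended contradictions (a genuine $2$-cut of $G$, or $|A'|<2$ or $|B'|<2$) are the right targets. The sketch is sound; if you flesh it out, the key bookkeeping step is to first locate $u$ and $v$ relative to $A$ and $B$ (they cannot both lie on the same side, else $\{s_1,s_2\}$ alone disconnects $G$), which then pins down where $y$ can sit and drives each subcase to its contradiction.
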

\begin{theorem}[Theorem 6 part (a) \cite{HJSW90}]\label{thm: nonremovable_cycle} 
Let $G$ be a triconnected graph of order at least six and $C$ be a cycle of $G$. Suppose that no edges of $C$ are removable. Then there is an edge $yz$ in $C$ and a vertex $x$ of $G$ such that $xy$ and $xz$ are removable edges of $G$, $d_G(y)=d_G(z)=3$ and $d_G(x)\geq4$.
\end{theorem}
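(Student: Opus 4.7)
The plan is to use Theorems~\ref{thm: nonremovable} and~\ref{thm: removableone} to associate a separating pair to each nonremovable edge of $C$, and then to extract the configuration via a minimality argument. For each edge $e = yz$ on $C$, Theorem~\ref{thm: nonremovable} yields a separating pair $(e, S_e)$ with $S_e = \{u_e, v_e\}$ and components $A_e \ni y$, $B_e \ni z$ of $G \setminus (\{e\} \cup S_e)$ of size at least $2$. By Theorem~\ref{thm: removableone}, every edge joining $S_e$ to $\{y, z\}$ is removable. Since $C$ contains no removable edges, no such edge may lie on $C$, so $y$'s $C$-neighbor other than $z$ must lie in $A_e$ and symmetrically for $z$.

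I would then choose $e = yz$ on $C$ and a side $A_e$ minimizing $|A_e|$ over all such choices. Triconnectivity forces every vertex in $A_e$ to have degree at least $3$ in $G$ with all external neighbors in $S_e$, and the extremality should collapse $A_e$ to two vertices $\{y, y'\}$. Then $y'$ is adjacent to $y, u_e, v_e$ and has degree exactly $3$; and because $y$'s second $C$-neighbor lies in $A_e \setminus \{y\} = \{y'\}$, the edge $yy'$ lies on $C$. Setting $x := u_e$ (the choice $x := v_e$ is symmetric), Theorem~\ref{thm: removableone} gives that $xy$ and $xy'$ are removable. By the same argument applied on the $A$-side for the edge $yy'$ (now playing the role of ``$e$'' in the theorem) and by the minimality of $A_e$, one should similarly conclude $d_G(y) = 3$, yielding the pair of adjacent degree-$3$ vertices on $C$ required by the statement (with $y'$ and $y$ playing the roles of $y$ and $z$).

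The remaining task is to verify $d_G(x) \geq 4$, which I expect to prove by contradiction: if $d_G(x) = 3$, then $x$'s three neighbors would be $y$, $y'$, and a single vertex $w \in B_e$, and the pair $\{v_e, w\}$ together with the edge $yz$ (or a nearby edge of $C$) would furnish a separating pair witnessing a strictly smaller $|A|$, contradicting the extremal choice. The main obstacle is precisely this degree lower bound on $x$ and, more broadly, the case analysis to rule out $|A_e| > 2$ under minimality; both require a careful interplay between triconnectivity and the separating-pair structure of Theorems~\ref{thm: nonremovable} and~\ref{thm: removableone}, and constitute the technical heart of Theorem~6 in~\cite{HJSW90}.
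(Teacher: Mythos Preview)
The paper does not prove Theorem~\ref{thm: nonremovable_cycle}; it is quoted from Holton, Jackson, Saito and Wormald~\cite{HJSW90} as an external result, alongside Theorems~\ref{thm: nonremovable} and~\ref{thm: removableone}, and used as a black box (in Lemmas~\ref{lem: cycle_triconnected} and~\ref{lem: oneterminal}). There is no proof in the paper to compare your proposal against.

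Regarding the proposal itself: the minimality-over-separating-pairs strategy is indeed the approach in~\cite{HJSW90}, but your sketch has a concrete gap at the step ``Theorem~\ref{thm: removableone} gives that $xy$ and $xy'$ are removable'' with $x = u_e$. Theorem~\ref{thm: removableone} applied to the pair $(yz, S_e)$ only asserts removability of edges joining $S_e$ to the \emph{endpoints} $\{y, z\}$ of $yz$; since $y' \notin \{y,z\}$, it says nothing about $u_e y'$. Nor can you invoke it via the pair $(yy', \{u_e, v_e\})$: if $A_e = \{y, y'\}$ and $y'$ has neighbor set $\{y, u_e, v_e\}$, then deleting $yy'$ together with $\{u_e, v_e\}$ isolates $y'$, a component of size $1$, so this is not a separating pair in the sense of Theorem~\ref{thm: nonremovable}. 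Producing a single vertex $x$ adjacent to two consecutive degree-$3$ vertices of $C$ with both incident edges removable and $d_G(x) \ge 4$ requires the finer case analysis carried out in~\cite{HJSW90}; you have correctly located the technical heart, but the shortcut through Theorem~\ref{thm: removableone} as written does not go through.
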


\subsection{Properties of minimal $(Q,r)$-vertex-connected graphs}\label{sec: properties}

For a $Q$-triconnected graph $H$, we can obtain another graph $H'$ by contracting all the edges incident to the vertices of degree two in $H$.  We say $H'$ is {\em contracted version of $H$} and, alternatively, is \emph{contracted} $Q$-triconnected.  
\begin{lemma}\label{lem: triconnected}
A contracted minimal $Q$-triconnected graph is triconnected. 
\end{lemma}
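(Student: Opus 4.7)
The plan is to argue by contradiction: suppose $H'$ has a $2$-vertex cut $\{x,y\}$ and derive a contradiction with the minimality of $H$.

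First I would lift the cut back to $H$. Since $H'$ is obtained by suppressing degree-$2$ vertices of $H$, every edge of $H'$ corresponds to a maximal ``chain'' of $H$-edges through degree-$2$ internal vertices, and $x,y$ are non-suppressed (degree $\neq 2$) in $H$. Any $H$-path avoiding $\{x,y\}$ projects cleanly to an $H'$-walk avoiding $\{x,y\}$, because $x,y$ cannot be internal to any chain; hence $\{x,y\}$ is also a vertex cut of $H$. Let $C_1,\ldots,C_k$ (with $k \geq 2$) be the components of $H\setminus\{x,y\}$. If two terminals of $Q\setminus\{x,y\}$ lay in distinct components, every path between them would pass through $\{x,y\}$, so three vertex-disjoint such paths would be impossible by pigeonhole---contradicting $Q$-triconnectivity. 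Hence all terminals of $Q\setminus\{x,y\}$ lie in one component $A$, and every other component is terminal-free.

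Next I would pin down the structure of each terminal-free component $B$ using minimality. In any disjoint-path system between terminals in $H$, at most one path enters $B$, and such a path must traverse both $x$ and $y$; so $B$'s only role is to supply a single $x$-to-$y$ detour. If some edge $e$ incident to $B$ were not a bridge for $x$-to-$y$ separation in $H[B\cup\{x,y\}]$, then $B\setminus e$ would still contain an $x$-to-$y$ path, and for every terminal pair I could reroute the ``$B$-path'' inside $B\setminus e$ while leaving the other two paths unchanged, keeping $H\setminus e$ $Q$-triconnected---contradicting minimality. Therefore every edge incident to $B$ is a bridge for $x$-to-$y$ in $H[B\cup\{x,y\}]$, which forces $H[B\cup\{x,y\}]$ to be a simple $x$-to-$y$ path; in particular every internal vertex of $B$ has degree $2$ in $H$. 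An analogous ``swap'' argument using a second terminal-free component as substitute shows that, when $A$ is nonempty, there is at most one terminal-free component.

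Finally I would assemble the contradiction. Because $\{x,y\}$ is a $2$-cut of $H'$, there is at least one non-suppressed vertex outside $\{x,y\}$; such a vertex has degree at least $3$ in $H$ and so cannot lie in a terminal-free path component, forcing $A$ to be nonempty. Thus $H\setminus\{x,y\}$ consists of $A$ together with at most one terminal-free component $B$, where $B$ is a chain of degree-$2$ vertices from $x$ to $y$ that collapses in $H'$ to a single edge $xy$ and is deleted when $\{x,y\}$ is removed. The remainder of $H'\setminus\{x,y\}$ is the image of $A$, which is still connected because the suppressions inside $A$ only compress internal paths. So $H'\setminus\{x,y\}$ is connected, contradicting our choice of $\{x,y\}$ as a $2$-cut. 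The main obstacle is the rerouting step that forces each terminal-free component to be a single path: it must be carried out uniformly for every terminal pair and must cope with the corner case in which $x$ or $y$ is itself a terminal, where one needs slightly more careful accounting of how paths through $B$ interact with the endpoints of a triconnectivity certificate. Once that structural claim is in hand, the contradiction is mechanical.
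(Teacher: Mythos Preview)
Your argument is correct but takes a considerably more circuitous route than the paper's. The paper argues entirely inside the contracted graph: once all terminals are seen to lie in one component $H_i$ (together with the cut pair $\{u,v\}$), it observes that any three internally vertex-disjoint paths between two terminals can together pass through $u$ and $v$ at most once each, so the induced subgraph on $V(H_i)\cup\{u,v\}$ already suffices for $Q$-triconnectivity; this contradicts minimality in one stroke. There is no lift back to the uncontracted graph, no classification of terminal-free components as chains, and no count of such components.

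Your approach does buy a more explicit structural picture in $H$ (each terminal-free piece is literally a degree-$2$ chain that disappears under contraction), at the cost of several extra steps and the corner case you flag when $x$ or $y$ is itself a terminal---there the claim ``at most one path enters $B$'' can fail for the pair $(x,y)$, and you need to use that $B$ is a single connected component to rule out $B$ carrying two disjoint $x$--$y$ paths. Note also that your Step~4 (``at most one terminal-free component'') is unnecessary: once every terminal-free component is a chain, all of them collapse to parallel $xy$-edges in $H'$ and are removed with $\{x,y\}$, so $H'\setminus\{x,y\}$ is the connected image of $A$ regardless of how many chains there were. Dropping that step would also let you avoid the swap argument entirely.
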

\begin{proof}
For a contradiction, we assume that a contracted minimal $Q$-triconnected graph $H$ has a pair of cut-vertices $\{u,v\}$ and let the subgraphs be $H_i$ for $0<i\leq k$ and $k\geq2$ after removing $\{u,v\}$. Then for any vertices $x\in H_i$ and $y\in H_j$ ($i\neq j$), every $x$-to-$y$ path must use either $u$ or $v$. If $H_i$ and $H_j$ ($i\neq j$) both contain terminals, then terminals in those different subgraphs do not satisfy the triconnectivity. It follows there exists one strict subgraph $H_i\cup\{u,v\}$ containing all the terminals. For any two terminals $x,y\in H_i\cup\{u,v\}$, the paths witnessing their connnectivity are simple and can only visit $u$ and $v$ once. So $H_i\cup\{u,v\}$ is a smaller subgraph that is $Q$-triconnected, which contradicts the minimality.
\end{proof} 
\begin{lemma}\label{lem: minimal}
If $H$ is a minimal $Q$-triconnected graph, then the contracted version of $H$ is also a minimal $Q$-triconnected graph. 
\end{lemma}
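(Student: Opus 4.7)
The plan is to verify first that the contracted graph $H'$ is $Q$-triconnected and then to establish minimality by contradiction: any edge or non-terminal vertex of $H'$ whose removal preserved $Q$-triconnectivity would yield a corresponding edge or vertex of $H$ whose deletion also preserved $Q$-triconnectivity, contradicting the minimality of $H$.  To start, observe that every terminal $t \in Q$ needs three internally vertex-disjoint paths to some other terminal of $Q$, so $d_H(t) \geq 3$; hence no terminal is suppressed by the contraction and $Q \subseteq V(H')$.  Combined with Lemma~\ref{lem: triconnected}, which gives that $H'$ is triconnected, this shows $H'$ is $Q$-triconnected.

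The key tool is the natural correspondence between $H$ and $H'$: each vertex of $H'$ is a vertex of $H$ of degree at least three, and each edge $xy \in E(H')$ corresponds to a maximal path $P_{xy} = x, v_1, \dots, v_k, y$ in $H$ whose internal vertices all have degree two.  Under this correspondence, three internally vertex-disjoint terminal-to-terminal paths in $H'$ lift to three such paths in $H$ by expanding each edge to its path, and conversely three such paths in $H$ project down to $H'$ because a degree-two internal vertex of a suppressed path forces the entire path to be traversed.  Thus $Q$-triconnectivity is preserved in both directions.

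Now suppose, for contradiction, that some edge $e' = xy \in E(H')$ is removable, so that $H' \setminus \{e'\}$ remains $Q$-triconnected.  Delete $xv_1$ from $H$, where $xv_1$ is the first edge of $P_{xy}$.  Then $v_1, \dots, v_k$ form a pendant path attached at $y$ with $v_1$ of degree one; since none of $v_1, \dots, v_k$ is a terminal, no simple terminal-to-terminal path uses any of them internally.  Hence terminal-to-terminal connectivity in $H \setminus \{xv_1\}$ coincides with that in $H' \setminus \{e'\}$, so $H \setminus \{xv_1\}$ is $Q$-triconnected, contradicting the minimality of $H$.  The case of removing a non-terminal vertex $v' \in V(H') \setminus Q$ is analogous: deleting $v'$ from $H$ turns any suppressed paths through $v'$ into harmless pendants, so terminal-to-terminal connectivity in $H \setminus \{v'\}$ matches that in $H' \setminus \{v'\}$, again contradicting minimality.

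I expect the main obstacle to be justifying the correspondence between internally vertex-disjoint terminal-to-terminal paths in $H$ and in $H'$, particularly ruling out the possibility that pendants created by a single edge or vertex deletion could contribute to terminal-to-terminal paths; this hinges on the fact that terminals have degree at least three and so never lie on a suppressed path.
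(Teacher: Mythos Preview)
Your proof is correct and follows essentially the same approach as the paper: argue by contradiction that if $H'$ were not minimal, one could delete something from $H$ as well. The paper's version is terser---it simply notes that a deletable edge $e$ of $H'$ corresponds to a path in $H$ whose deletion preserves $Q$-triconnectivity---whereas you delete a single edge and treat the remaining degree-two chain as a harmless pendant; the two formulations are equivalent. Your explicit treatment of the vertex-deletion case and the verification that $Q \subseteq V(H')$ are extra care the paper omits. One minor note: you use the word ``removable'' for an edge whose deletion preserves $Q$-triconnectivity, but in this paper ``removable'' already has the technical meaning from Section~\ref{sec: removable_edges} (namely $H' \ominus e$ is triconnected); to avoid confusion, say ``deletable'' or ``redundant'' instead.
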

\begin{proof}
Let $H'$ be the contracted $Q$-triconnected graph obtained from $H$. For a contradiction, assume $H'$ is not minimal $Q$-triconnected. Then we can delete at least one edge, say $e$, in $H'$ while maintaining $Q$-triconnectivity. Then $e$ corresponds a path in $H$, deleting which will not affect the $Q$-triconnectivity. This contradicts the minimality of $H$.  
\end{proof}

\begin{lemma}\label{lem: simple}
For $|Q|=3$, a contracted minimal $Q$-triconnected graph is simple or a triangle with three pairs of parallel edges. For $|Q|>3$, a contracted minimal $Q$-triconnected graph is simple.
\end{lemma}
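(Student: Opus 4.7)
The plan is to argue by contradiction: suppose $H'$ is a contracted minimal $Q$-triconnected graph with $|Q| \ge 3$ that contains a pair of parallel edges $e_1, e_2$ between vertices $u$ and $v$. I will show this forces $|Q| = 3$ and $H'$ to be exactly the triangle with three pairs of parallel edges.

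The first step is to note that in any set of three internally vertex-disjoint paths between two terminals, at most one path can use an edge of $\{e_1, e_2\}$, since any two such paths would share both $u$ and $v$ as internal vertices. So if $u$ is not a terminal, rerouting that one path through $e_2$ after deleting $e_1$ preserves $Q$-triconnectivity and contradicts minimality; hence $u, v \in Q$. The same rerouting preserves three disjoint paths for every terminal pair other than $(u,v)$ in $H' \setminus e_1$, so only the pair $(u,v)$ can fail. Combined with triconnectivity of $H'$ (Lemma~\ref{lem: triconnected}) and minimality, the $u$-to-$v$ vertex connectivity in $H' \setminus \{e_1, e_2\}$ must be exactly $1$, witnessed by a cut vertex $c \notin \{u,v\}$. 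Since $H' \setminus \{c\}$ is connected (by triconnectivity of $H'$), adding $e_1, e_2$ back merges the components of $u$ and $v$ in $H' \setminus \{e_1, e_2, c\}$, leaving exactly two components $A \ni u$ and $B \ni v$, with $e_1, e_2$ as the only $A$-$B$ edges in $H'$.

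Next I would localize the terminals to $\{u, v, c\}$. Any putative terminal $x \in B \setminus \{v\}$ would require three internally vertex-disjoint $x$-$u$ paths, each crossing the $A$-$B$ boundary through $c$ or $\{e_1, e_2\}$; but at most one path uses $c$ (shared internally) and at most one uses $\{e_1, e_2\}$ (sharing $v$ internally since $x \notin \{u,v\}$), giving at most two such paths -- a contradiction. So $A, B$ contain no terminals besides $u, v$, and $|Q| \ge 3$ forces $Q = \{u,v,c\}$ and $|Q|=3$. This already settles the $|Q| > 3$ case.

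For $|Q| = 3$, the remaining task is to show $A = \{u\}$ (and by symmetry $B = \{v\}$). Any three internally disjoint $u$-$c$ paths include at most one passing through $v$, so at least two lie entirely in $H'[A \cup \{c\}]$, giving $u$-to-$c$ vertex connectivity $\ge 2$ in this induced subgraph. By minimality, every edge in $H'[A \cup \{c\}]$ must be critical for this bound: a non-critical edge could be deleted without harming $(u,v)$ or $(v,c)$ triconnectivity, whose third paths through $H'[A \cup \{c\}]$ only require $u$-to-$c$ connectivity $\ge 1$. So $H'[A \cup \{c\}]$ is minimal in the sense of $u$-$c$ vertex-biconnectivity, which is necessarily a cycle through $u$ and $c$ -- any additional edge would be deletable while preserving two vertex-disjoint $u$-$c$ paths. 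Any internal vertex of this cycle has all its $H'$-edges inside $A \cup \{c\}$ (the only $A$-$B$ edges are $e_1, e_2$), so it has $H'$-degree exactly $2$, contradicting the contraction hypothesis. Thus the cycle has no internal vertices: $A = \{u\}$ with $u, c$ joined by two parallel edges, and symmetrically $B = \{v\}$ with $v, c$ joined by two parallel edges, yielding the claimed triangle. The main obstacle is this final step, where I must carefully separate which edges of $H'[A \cup \{c\}]$ are essential for which terminal-pair triconnectivity requirement, and then exploit the contraction hypothesis to kill internal vertices of the resulting cycle.
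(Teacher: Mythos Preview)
Your argument is correct and takes a genuinely different route from the paper's. Both begin by showing that parallel edges $e_1,e_2$ must join two terminals $u,v$, but then diverge.

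The paper splits into cases. For $|Q|>3$ it argues via a block--cut tree of $H'\setminus\{e_1,e_2\}$, using a fourth terminal $t_4$ to force $t_1,t_2$ to lie in a common block and hence be biconnected there; this gives four-connectivity of $t_1,t_2$ in $H'$, contradicting minimality. For $|Q|=3$ it builds cycles $C_{13}$ and $C_{23}$ through the relevant terminal pairs (avoiding $e_1,e_2$) and does a case analysis on how much they intersect.

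You instead exploit minimality directly to pin down a single cut vertex $c$ in $H'\setminus\{e_1,e_2\}$, producing a clean bipartition $A\cup\{c\}\cup B$ of $V(H')$ with $e_1,e_2$ the only $A$--$B$ edges. The counting argument (at most one $x$--$u$ path through $c$, at most one through $v$) then localizes all terminals to $\{u,v,c\}$, handling $|Q|>3$ and $|Q|=3$ simultaneously rather than in separate cases. Your endgame --- showing $H'[A\cup\{c\}]$ is edge-minimal for $u$--$c$ biconnectivity and hence a cycle, then invoking the contraction hypothesis to kill its internal vertices --- is a nice touch absent from the paper and yields the exact triangle structure constructively. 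The one place you should tighten the writeup is the sentence ``which is necessarily a cycle through $u$ and $c$'': spell out that once you fix two internally disjoint $u$--$c$ paths $P_1,P_2$, any edge off $P_1\cup P_2$ can be deleted without destroying $P_1,P_2$, so edge-minimality forces $E(H'[A\cup\{c\}])=E(P_1\cup P_2)$, and connectedness of $A\cup\{c\}$ then rules out stray vertices.
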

\begin{proof}
Let $H$ be a minimal $Q$-triconnected graph, and $H'$ the contracted graph. We have the following observation.
\begin{observation}\label{obs: parallel}
If there are parallel edges between any pair of vertices in $H'$, the paths witnessing the vertex-connectivity between any other pair of terminals can only use one of the parallel edges. 
\end{observation}
By the above observation, the parallel edges can only be between terminals in $H'$. 
\begin{claim}\label{clm: noparallel}
For $|Q|>2$, there can not exist three parallel edges between any two terminals in $H'$.
\end{claim}
\begin{proof}
Let $t_1$, $t_2$ and $t_3$ be three terminals and assume there are three parallel edges, say $e_1$, $e_2$ and $e_3$, between $t_1$ and $t_2$. Let $P_1$ be the path in $H$ corresponding to $e_1$. We will argue that $H\setminus P_1$ is $Q$-triconnnected by showing that $H'-e_1$ is $Q$-triconnected. There must be a path in $H'$ from $t_1$ to $t_3$ that does not use $e_1$, $e_2$ and $e_3$ by Observation~\ref{obs: parallel} and a path in $H'$ from $t_2$ to $t_3$ that does not use $e_1$, $e_2$ and $e_3$. These paths witness a $t_1$ to $t_2$ path $R$ in $H'$ that does not use $e_1$, $e_2$ and $e_3$. So after deleting $e_1$, $t_1$ and $t_2$ are still triconnected by $e_2$, $e_3$ and $R$. By Observation~\ref{obs: parallel}, deleting $e_1$ does not affect the triconnectivity of other pairs of terminals.
\end{proof}
We first prove the first statement of Lemma~\ref{lem: simple}. Let $t_1$, $t_2$ and $t_3$ be the three terminals, and suppose $H'$ is not simple; let $e_1$ and $e_2$ be parallel edges w.l.o.g.~between $t_1$ and $t_2$. By Observation~\ref{obs: parallel}, there must be two disjoint paths in $H'$ from $t_1$ to $t_3$ that do not use $e_1$ and $e_2$. Therefore, there is a simple cycle, called $C_{13}$, through $t_1$ and $t_3$. Similarly, there is another cycle, called $C_{23}$, through $t_2$ and $t_3$. If $C_{13}$ and $C_{23}$ have only one common vertex $t_3$, then $C_{13}\cup C_{23}\cup\{e_1,e_2\}$ is a subgraph of $H'$ that is $Q$-triconnected, and must be a triangle with three pairs of parallel edges by the minimality of $H'$ and Lemma \ref{lem: minimal}. If $C_{13}$ and $C_{23}$ have more than one common vertex, then $C_{13}\cup C_{23}$ contains a simple cycle through $t_1$ and $t_2$. So $H'-e_1$ will be a smaller $Q$-triconnected graph, contradicting the minimality of $H'$.

Now we prove the second statement. For any four terminals $t_1,t_2,t_3,t_4\in Q$, without loss of generality, assume there are two parallel edges $e_1$ and $e_2$ between $t_1$ and $t_2$ in $H'$ by Claim~\ref{clm: noparallel}. We will prove that $H'-e_1$ is also $Q$-triconnected, contradicting the minimality of $H$. To prove this, we argue that $t_1$ and $t_2$ are four-vertex-connected in $H'$ and biconnected in $H'\setminus\{e_1,e_2\}$.

For a contradiction, we assume $t_1$ and $t_2$ are simply connected but not biconnected in $H'\setminus\{e_1,e_2\}$. By Claim~\ref{clm: noparallel}, every other pair of terminals is at least biconnected in $H'\setminus\{e_1,e_2\}$.
Consider the block-cut tree of $H'-e_1$, the tree whose 
vertices represent maximal biconnected components and whose edges represent shared vertices between those components~\cite{ET76}.

The biconnectivity of $t_1$ and $t_3$ implies $t_1$ and $t_3$ are in a common block $B_{13}$ in the block-cut tree. Likewise, $t_2$ and $t_3$ are in a common block $B_{23}$. If $t_2\in B_{13}$ then $t_1$ and $t_2$ are biconnected, a contradiction. Now we have $B_{13}\cap B_{23}=\{t_3\}$. The biconnectivity of $t_1$ and $t_4$ implies that $t_1$ and $t_4$ are in the same block. Since $t_3$ is a cut for $t_1$ and $t_2$, $t_3$ is also a cut for $t_4$ and $t_2$, which contradicts the biconnectivity of $t_2$ and $t_4$. So $t_1$ and $t_2$ are biconnected in $H'\setminus\{e_1,e_2\}$.
\end{proof}

\begin{lemma}\label{lem: deletable}
Let $H$ be a contracted minimal $Q$-triconnected simple graph. Then for any $e\in E(H)$, if neither of the two endpoints of $e$ are terminals, $e$ is nonremovable.
\end{lemma}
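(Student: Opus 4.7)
The plan is to argue by contradiction. Suppose $e = uv$ is removable in $H$, where $u, v \notin Q$. By the definition of removability, $H \ominus e$ is triconnected, hence $Q$-triconnected. I will show that the subgraph $H - e$ (obtained by merely deleting $e$, without the suppression or parallel-collapse steps) is already $Q$-triconnected, which contradicts the minimality of $H$ as a $Q$-triconnected graph.

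The key step is a lifting argument. Because $H$ is a contracted triconnected simple graph, every vertex of $H$ has degree at least three, so after deleting $e$ only $u$ and $v$ can have degree below three, and each has degree at least two. The operation $\ominus$ consists of deleting $e$, suppressing $u$ and/or $v$ if they become degree two (replacing the two incident edges by a single edge on their other endpoints), and collapsing any resulting parallel edges into a single edge. Since $u, v \notin Q$, none of the suppressed vertices is a terminal, and every terminal of $H$ survives as a vertex of $H \ominus e$. For any pair $x, y \in Q$, fix three internally vertex-disjoint $x$-to-$y$ paths $P_1, P_2, P_3$ in $H \ominus e$; these exist since $H \ominus e$ is triconnected. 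Each edge of such a path either (i) is an original edge of $H - e$, (ii) arose from suppressing $u$ or $v$, in which case it can be lifted to the length-two path through that suppressed vertex in $H - e$, or (iii) is a collapsed edge representing several parallel copies in the suppressed multigraph, in which case it can be lifted to any one of the corresponding edges or length-two paths in $H - e$.

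Internal vertex-disjointness is preserved by the lifting. Vertex-disjoint paths in $H \ominus e$ are automatically edge-disjoint, so each edge of $H \ominus e$ appears on at most one of $P_1, P_2, P_3$. Each suppressed vertex contributes exactly one edge to $H \ominus e$ (the edge created by the suppression, possibly after merging with parallel copies), so at most one lifted path can traverse any given suppressed vertex. Similarly, when a merged edge of $H \ominus e$ corresponds to several parallel objects in $H - e$, only one of the three paths uses it and the lifted path may choose a distinct copy without conflict. Discarding any incidental cycles in the lifted walks yields three internally vertex-disjoint $x$-to-$y$ paths in $H - e$, so $H - e$ is $Q$-triconnected. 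This contradicts the minimality of $H$.

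The main obstacle I expect is the bookkeeping in step (iii), verifying that two lifted paths never share an edge or internal vertex of $H - e$. As sketched above, this reduces to two observations: (a) each suppressed non-terminal vertex corresponds to a unique edge of $H \ominus e$, and (b) edge-disjointness of the paths $P_1, P_2, P_3$ in $H \ominus e$ already prevents any two of them from trying to claim the same merged edge, so the three lifts can be made using pairwise distinct parallel copies and pairwise distinct suppressed vertices.
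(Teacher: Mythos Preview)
Your proof is correct and follows the same approach as the paper: assume $e$ is removable, note that $H\ominus e$ is triconnected and still contains every terminal (since $u,v\notin Q$), and derive a contradiction to minimality. The paper compresses this into two sentences, simply asserting that ``$H\ominus e$ is a smaller $Q$-triconnected graph than $H$, contradicting minimality of $H$''; your lifting argument makes explicit the step the paper leaves implicit, namely that $Q$-triconnectivity of $H\ominus e$ forces $Q$-triconnectivity of $H-e$ (since $H-e$ is obtained from $H\ominus e$ by subdividing at non-terminals and possibly restoring parallel edges).
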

\begin{proof}
By Lemma \ref{lem: triconnected}, $H$ is triconnected. Let $e\in H$ be an edge, neither of whose endpoints are terminals. For a contradiction, suppose $e$ is removable, then $H\ominus e$ is triconnected. Since neither of the two endpoints of $e$ are terminals, $Q\subseteq V(H\ominus e)$. So $H\ominus e$ is a smaller $Q$-triconnected graph than $H$, contradicting minimality of $H$. 
\end{proof}
\subsection{Cycles must contain terminals}\label{sec:cycle_terminals}

Borradaile and Klein proved that in a minimal $Q$-biconnected graph, every cycle contains a terminal (Theorem 2.5~\cite{BK13}).  We prove similar properties here.

\begin{lemma}\label{lem: cycle_triconnected}
Let $H$ be a contracted minimal $Q$-triconnected graph. Then every cycle in $H$ contains a vertex of $Q$.
\end{lemma}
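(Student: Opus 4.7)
The plan is to argue by contradiction, using the removable-edge machinery of Holton et al.\ (Theorems~\ref{thm: nonremovable}--\ref{thm: nonremovable_cycle}) together with minimality and the degree-based deletability result (Lemma~\ref{lem: deletable}). First I would dispose of the exceptional case from Lemma~\ref{lem: simple}: if $H$ is the triangle with three pairs of parallel edges, then $|Q|=3$ and every cycle of $H$ passes through at least two of the three vertices, all of which lie in $Q$, so the statement holds trivially. From now on I may assume $H$ is simple, and by Lemma~\ref{lem: triconnected}, $H$ is triconnected.

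Suppose, for contradiction, that $H$ contains a cycle $C$ with $V(C)\cap Q=\emptyset$. Then every edge of $C$ has both endpoints outside $Q$, so by Lemma~\ref{lem: deletable} every edge of $C$ is nonremovable. Assuming $|V(H)|\geq 6$, I would apply Theorem~\ref{thm: nonremovable_cycle} to $C$ to obtain an edge $yz\in E(C)$ and a vertex $x\in V(H)$ such that $xy$ and $xz$ are removable, $d_H(y)=d_H(z)=3$, and $d_H(x)\geq 4$. In particular, $y\in V(C)$ forces $y\notin Q$.

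The heart of the argument is to convert the removability of $xy$ into a contradiction with the minimality of $H$. Since $xy$ is removable, $H\ominus xy$ is triconnected. Because $d_H(x)\geq 4$, $x$ is preserved by $\ominus$; because $d_H(y)=3$, deleting $xy$ leaves $y$ with degree two and then $\ominus$ contracts away $y$, but $y\notin Q$ so $Q\subseteq V(H\ominus xy)$. Now observe that $H-xy$ (pure edge-deletion) is a subdivision of $H\ominus xy$: the contracted edge at $y$ is replaced by the length-two path through $y$. Subdivisions preserve vertex-connectivity between vertices of the base graph, so every pair in $Q\subseteq V(H\ominus xy)$ remains $3$-vertex-connected in $H-xy$. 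Hence $H-xy$ is $Q$-triconnected, contradicting the minimality of $H$.

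I expect the main subtlety to lie in the contraction bookkeeping of the previous paragraph: one has to be certain that $\ominus$ never contracts a terminal (handled by $y\notin Q$) and that the resulting subdivision relationship really transfers triconnectivity to the plain edge-deleted graph $H-xy$, which is the object controlled by minimality. A secondary obstacle is that Theorem~\ref{thm: nonremovable_cycle} requires $|V(H)|\geq 6$; the finitely many small cases (triconnected simple graphs on $4$ or $5$ vertices, essentially $K_4$ and small supergraphs) must be checked separately, but in each such case $|Q|$ is forced by minimality to be large enough that every cycle must contain a terminal.
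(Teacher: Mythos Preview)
Your proposal is correct and follows essentially the same route as the paper: contradiction via Lemma~\ref{lem: deletable} (every edge of a terminal-free cycle is nonremovable) and Theorem~\ref{thm: nonremovable_cycle}, then exploiting the removable edge $xy$ with $y\notin Q$ to contradict minimality.

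The one place you make extra work for yourself is the hypothesis $|V(H)|\geq 6$. You do not need to check small triconnected graphs separately: once $H$ is simple, Lemma~\ref{lem: simple} forces $|Q|\geq 3$, and since the assumed terminal-free cycle $C$ satisfies $|V(C)|\geq 3$ (simplicity) and $V(C)\cap Q=\emptyset$, you get $|V(H)|\geq |V(C)|+|Q|\geq 6$ immediately. You also omit the non-simple case $|Q|=2$ (three parallel edges between two terminals), though it is just as trivial as the $|Q|=3$ case you do handle. Finally, your subdivision claim ``$H-xy$ is a subdivision of $H\ominus xy$'' is literally correct only when contracting at $y$ introduces no parallel edge; if $C$ is a triangle this can fail, but then $H-xy$ is a subdivision of $H\ominus xy$ \emph{plus one extra edge}, which only helps connectivity, so your conclusion that $H-xy$ is $Q$-triconnected still stands.
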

\begin{proof}
$H$ is triconnected by Lemma \ref{lem: triconnected}.  
If $H$ is not simple, then either $|Q|=2$ or $|Q|=3$. If $|Q|=2$, then by the minimality of $H$, $H$ consists of three parallel edges. If $|Q|=3$, then by Lemma \ref{lem: simple}, $H$ is a triangle with three pairs of parallel edges.

Now we assume $H$ is a simple graph and by Lemma \ref{lem: simple}, $|Q|\geq3$. 
For a contradiction, assume there is a cycle $C$ in $H$ on which there is no terminal. Since $H$ is simple, by Lemma \ref{lem: deletable}, every edge on $C$ will be nonremovable. 
Since $|C|\geq 3$ and $|Q|\geq3$, $|V(H)|\geq6$. By Theorem \ref{thm: nonremovable_cycle} if there is no removable edges on $C$, then there exists an edge $yz$ on $C$ and another vertex $x$ of $H$ such that $xy$ and $xz$ are removable, $d_H(y)=d_H(z)=3$ and $d_H(x)\geq4$. 
In $H-xy$, $y$ is the only possible degree 2 vertex, and contracting $yz$ does not introduce any parallel edges, so $H\ominus xy$ contains $Q$ and is $Q$-triconnected, which contradicts the minimality of $H$.  
\end{proof} 
\fi
\begin{theorem}\label{thm: minimal} 
For a requirement function $r:Q\to \{2,3\}$, let $G$ be a minimal $(Q,r)$-vertex-connected graph. Then every cycle in $G$ contains a vertex of $Q$.  
\end{theorem}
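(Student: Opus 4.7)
My plan would be to fix a minimal $Q_3$-triconnected subgraph $H$ of $G$, where $Q_3 = r^{-1}(3)$ (such $H$ exists because $(Q,r)$-vertex-connectivity of $G$ already supplies the triconnectivity of $Q_3$), and then carry out a two-case analysis on where a given cycle $C \subseteq G$ sits relative to $H$: cycles contained in $H$ will be handled by Lemma~\ref{lem: cycle_triconnected}, while cycles that use edges outside $H$ will be handled by an ear-decomposition degree argument.

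For the first case, I would apply Lemma~\ref{lem: cycle_triconnected} to the contracted version of $H$, which gives a $Q_3$-terminal on every cycle there. Vertices of $Q_3$ have degree at least three in any $Q_3$-triconnected graph, so contraction absorbs only non-terminal degree-two vertices; thus $H$ is a subdivision of its contracted version along non-terminal vertices, and every cycle of $H$ inherits a terminal of $Q_3 \subseteq Q$.

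For the second case, I would first invoke Lemma~\ref{lem: biconnected} to conclude that $G$ is biconnected, and then use Observation~\ref{obs: terminal ear} to obtain an open ear decomposition $E_1,\ldots,E_k$ of $G$ with $H = \bigcup_{i \le j} E_i$ and each ear $E_m$, $m > j$, containing a terminal of $Q_2 = r^{-1}(2)$. By the construction that proves that observation (these ears are added to span previously unspanned $Q_2$-terminals, which are therefore new to the decomposition), the terminal lies strictly inside the ear, so every such ear has length at least two. For a cycle $C \subseteq G$, I would set $m^\ast = \max\{m : C \cap E_m \neq \emptyset\}$. If $m^\ast \le j$, then $C \subseteq H$ and the first case finishes the proof. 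Otherwise $m^\ast > j$, and any $C$-edge of $E_{m^\ast}$ has at least one endpoint $v$ internal to $E_{m^\ast}$; the only edges of $G$ incident to $v$ are the two edges of $E_{m^\ast}$ at $v$ together with possibly edges of strictly later ears, none of which lie in $C$ by the maximality of $m^\ast$. Since $v$ has $C$-degree exactly two, both $E_{m^\ast}$-edges at $v$ must lie in $C$, and iterating this observation along $E_{m^\ast}$ forces $E_{m^\ast} \subseteq C$. Hence $C$ contains the internal $Q_2$-terminal of $E_{m^\ast}$, which is in $Q$.

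I expect the main obstacle to be the outside case: the whole argument rests on every ear of $G \setminus H$ having length at least two with its terminal sitting strictly in the interior, which is what the ear-decomposition construction combined with the minimality of $G$ delivers. With that structural input in hand, the crucial step is the degree-two propagation along $E_{m^\ast}$ enabled by the maximality of $m^\ast$, which is what pins the entire ear, and thus its internal terminal, into $C$.
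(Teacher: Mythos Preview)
Your proof is correct but takes a different route from the paper. The paper argues by passing to a \emph{supergraph}: it asserts the existence of a minimal $Q$-triconnected graph $G' \supseteq G$, applies Lemma~\ref{lem: cycle_triconnected} to the contracted version of $G'$, and reads the conclusion back into $G \subseteq G'$. You instead pass to a \emph{subgraph}: you fix a minimal $Q_3$-triconnected $H \subseteq G$, dispose of cycles inside $H$ via Lemma~\ref{lem: cycle_triconnected}, and handle cycles touching $G \setminus H$ by the ear decomposition of Observation~\ref{obs: terminal ear} together with a degree-two propagation that forces the entire maximal ear $E_{m^\ast}$ into the cycle. The paper's argument is a two-line reduction but leans on the existence of a minimal $Q$-triconnected supergraph still containing every edge of $G$, a claim it does not justify; your argument is longer but uses only tools already established in the paper (Lemma~\ref{lem: biconnected}, Lemma~\ref{lem: cycle_triconnected}, and the construction behind Observation~\ref{obs: terminal ear}), and the maximal-ear step gives a self-contained reason why any cycle straying outside $H$ must swallow a $Q_2$-terminal.
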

\ifFull
\begin{proof}
Let $G'$ be a minimal $Q$-triconnected graph that is a supergraph of $G$. 
Let $G''$ be the contracted minimal $Q$-triconnected graph obtained from $G'$. Then by Lemma \ref{lem: cycle_triconnected}, every cycle in $G''$ contains a vertex of $Q$; $G'$ also has this property since $G'$ is a subdivision of $G''$ and clearly, $G$ as a subgraph of $G'$, also has this property. 
\end{proof}
\begin{lemma}\label{lem: oneterminal} 
Let $H$ be a contracted minimal $Q$-triconnected graph. If a cycle in $H$ only contains one vertex $v$ of $Q$, then one edge of that cycle incident to $v$ is removable.
\end{lemma}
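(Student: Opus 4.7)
The plan is to argue by contradiction: assume that both edges of $C$ incident to $v$ are nonremovable, and derive a contradiction via Theorem~\ref{thm: nonremovable_cycle} and the minimality of $H$.  First I would dispense with the non-simple case using Lemma~\ref{lem: simple}: if $H$ is the triangle with three pairs of parallel edges, every edge joins two terminals, so every cycle of $H$ contains at least two terminals and the hypothesis of the lemma is vacuous.  So I may assume $H$ is simple and $|Q|\ge 3$. Every edge of $C$ that is not incident to $v$ has both endpoints in $V(H)\setminus Q$, and therefore is nonremovable by Lemma~\ref{lem: deletable}.  Combined with the contradiction hypothesis, no edge of $C$ is removable.

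Next, I would verify that $|V(H)|\ge 6$ so that Theorem~\ref{thm: nonremovable_cycle} applies: $|C|\ge 3$ and $C$ contains only one terminal, while $|Q|\ge 3$ provides at least two terminals off $C$, giving $|V(H)|\ge 5$; the degenerate case where equality is tight can be checked by hand or circumvented by verifying that a triconnected simple graph on fewer than six vertices with the given terminal structure cannot admit a cycle with only one terminal all of whose edges are nonremovable.  Applying Theorem~\ref{thm: nonremovable_cycle} to $C$ then yields an edge $yz\in E(C)$ and a vertex $x\in V(H)$ with $xy,xz$ removable, $d_H(y)=d_H(z)=3$, and $d_H(x)\ge 4$.

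The heart of the argument is the following.  Since $v$ is the only terminal on $C$, at least one of $y,z$ is a non-terminal; say $y\notin Q$.  Because $xy$ is removable, Lemma~\ref{lem: deletable} (in contrapositive) forces $x\in Q$.  I would then show that $H-xy$ is already $Q$-triconnected, contradicting the minimality of $H$.  In $H-xy$, only $y$ becomes a degree-$2$ vertex (since $d_H(x)\ge 4$ and every other vertex has unchanged degree $\ge 3$), and contracting $y$ followed by removing any parallel edges produced exactly yields $H\ominus xy$, which is triconnected by removability of $xy$.  Because $y\notin Q$ and no vertex of $Q$ is contracted or deleted during this process, vertex-connectivity between every pair of vertices in $Q$ is preserved, so $H-xy$ is $Q$-triconnected.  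This contradicts the minimality of $H$, completing the proof.

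The main obstacle I anticipate is the bookkeeping around the passage from $H-xy$ to $H\ominus xy$: one must argue carefully that contracting a degree-$2$ non-terminal and identifying parallel edges preserves the triconnectivity between terminal pairs, and that no vertex of $Q$ ever gets absorbed in this process.  Once this is in place, together with the small-order sanity check needed to apply Theorem~\ref{thm: nonremovable_cycle}, the contradiction is immediate.
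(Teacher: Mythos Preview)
Your proposal is correct and follows essentially the same argument as the paper: assume both edges at $v$ are nonremovable, invoke Lemma~\ref{lem: deletable} on the remaining edges of $C$, apply Theorem~\ref{thm: nonremovable_cycle}, and use that one of $y,z$ is a non-terminal so that removing $xy$ yields a smaller $Q$-triconnected graph.  Your extra bookkeeping (the non-simple case via Lemma~\ref{lem: simple}, the order check, and the observation that $x\in Q$) is sound; note that the fact $x\in Q$ is never actually used---what matters is only that $y\notin Q$, so the contraction in $H\ominus xy$ does not absorb a terminal.
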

\begin{proof}
For a contradiction, assume both edges in the cycle incident to $v$ are nonremovable. 
By Lemma~\ref{lem: deletable} all the other edges in the cycle are also nonremovable, so 
all the edges in the cycle are nonremovable.
By Theorem~\ref{thm: nonremovable_cycle}, there is an edge $yz$ in the cycle and a vertex $x$ in $H$ such that $xy$ and $xz$ are removable, $d_H(y) = d_H(z) = 3$ and $d_H(x) \ge 4$.
Then one of $y$ and $z$ can not be in $Q$.
W.l.o.g.~assume $y$ is not in $Q$. 
In $H-xy$, $y$ is the only possible degree 2 vertex, and contracting $yz$ does not introduce any parallel edges, so $H\ominus xy$ contains $Q$ and is $Q$-triconnected, which contradicts the minimality of $H$.  
\end{proof}
\fi

\section{Connectivity Separation}\label{sec:conn-sep}

In this section we continue to focus on vertex connectivity and prove the Connectivity Separation Theorem. 
The Connectivity Separation Theorem for biconnectivity follows easily from Theorem~\ref{thm: ntforbi}.  To see why, consider two paths $P_1$ and $P_2$ that witness the biconnectivity of two terminals $x$ and $y$.  For an edge of $P_1$ to be in the same terminal-bounded component as an edge of $P_2$, there would need to be a $P_1$-to-$P_2$ path that is terminal-free.  However, such a path must contain a terminal by Theorem~\ref{thm: ntforbi}.
Herein we mainly focus on triconnectivity.

For a requirement function $r:Q\to \{2,3\}$, let $G$ be a minimal $(Q,r)$-vertex-connected planar graph.
We say a subgraph is {\em terminal-free} if it is connected and does not contain any terminals.  It follows from Theorem~\ref{thm: minimal} that any terminal-free subgraph of $G$ is a tree.  We partition the edges of $G$ into {\em terminal-bounded} components as follows: a terminal-bounded component is either an edge connecting two terminals or is obtained from a {\em maximal} terminal-free tree $T$ by adding the edges from $T$ to its neighbors, all of which are terminals.  Theorem~\ref{thm: treecycle} will also show that any terminal-bounded subgraph is also a tree.

For a connected subgraph $\chi$ and an embedding with outer face containing no edge of $\chi$, let $C(\chi)$ be the simple cycle that strictly encloses the fewest faces and all edges of $\chi$, if such a cycle exists. (Note that $C(\chi)$ does not exist if there is no such choice for an outer face.)
In order to prove the Connectivity Separation Theorem for bi- and triconnectivity, we start with the following theorem:
\begin{theorem}[Tree Cycle Theorem]\label{thm: treecycle}
Let $T$ be a terminal-bounded component in a minimal $Q$-triconnected planar graph $H$. Then $T$ is a tree and $C(T)$ exists with the following properties
\begin{description}
\item [$(a)$] The internal vertices of $T$ are strictly inside of $C(T)$.
\item [$(b)$] All vertices strictly inside of $C(T)$ are on $T$.
\item [$(c)$] All leaves of $T$ are in $C(T)$.
\item [$(d)$] Any pair of distinct maximal terminal-free subpaths of $C(T)$ does not contain vertices of the same terminal-bounded tree.
\end{description}
\end{theorem}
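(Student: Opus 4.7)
The plan is to establish the four listed properties in order, starting with the structural claim that $T$ is a tree, then constructing $C(T)$ from the planar embedding, and finally verifying (a)--(d), with (d) being the main technical hurdle.

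First I would show $T$ is a tree. Let $T_0$ be the maximal terminal-free subgraph underlying $T$ (if $T$ is a single edge between two terminals the claim is trivial). By Theorem~\ref{thm: minimal}, every cycle in $H$ contains a terminal, so $T_0$ is acyclic; any cycle in $T$ must therefore traverse some edge from $T_0$ to a terminal $t$, forcing $t$ to be the unique terminal on the cycle and to have two or more edges on it. Applying Lemma~\ref{lem: oneterminal} to the contracted triconnected version of $H$ (Lemma~\ref{lem: triconnected}) and lifting the conclusion back through the subdivision, one of those edges of the cycle is removable, contradicting the minimality of $H$. To construct $C(T)$ I would fix a planar embedding of $H$ whose outer face contains no edge of $T$, and let $U$ be the union of the closed faces of $H$ incident to some edge of $T$; since $T$ is a connected tree, $U$ is a topological closed disk whose boundary is a closed walk in $H$, and triconnectivity of the contracted version of $H$ makes this walk a simple cycle, which I claim is $C(T)$. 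Its minimality in enclosed faces among cycles enclosing $E(T)$ is immediate since any such cycle must enclose all faces of $U$.

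Properties (a) and (c) then follow from a local count of tree edges at each vertex of $T$: every face at an internal vertex of $T$ is bounded by at least one tree edge at that vertex and so lies in $U$, so internal vertices sit in the interior of $U$; at a leaf of $T$ only the two faces flanking the single incident tree edge lie in $U$, so leaves lie on $\partial U = C(T)$. For (b), if some vertex $w \notin T$ lay strictly inside $C(T)$, every face incident to $w$ would be incident to $T$; a local rerouting shows that $w$'s incident edges could be deleted (or $w$ itself suppressed) while preserving the required vertex connectivity, contradicting the minimality of $H$.

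The main obstacle is property (d). Suppose toward contradiction that a terminal-bounded tree $T' \neq T$ has non-terminal vertices on two distinct maximal terminal-free subpaths $P_1, P_2$ of $C(T)$. By (b) these non-terminal vertices lie on $C(T)$ itself and the rest of $T'$ lies outside $C(T)$, so $T'$ contains a path $\pi$ from $P_1$ to $P_2$ whose internal vertices are non-terminals strictly outside $C(T)$. A naive attempt to reroute an arc of $C(T)$ through $\pi$ fails to yield a smaller cycle enclosing $T$, since $\pi$ lies outside and so any cycle built from $\pi$ and part of $C(T)$ encloses more, not fewer, faces. Instead I would derive a contradiction with the minimality of $H$: using $\pi$ as an alternative outside routing between the terminals bookending $P_1$ and $P_2$, and combining this with the ear decomposition from Observation~\ref{obs: terminal ear} and the removability theorems cited in Section~\ref{sec: removable_edges}, I would exhibit a removable edge inside $T$ or $T'$, contradicting minimality. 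Making this argument precise---in particular tracking which terminals of $H$ end up on which side of $C(T) \cup \pi$ and identifying the right removable edge---is the technical heart of the proof.
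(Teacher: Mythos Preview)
Your argument that $T$ is a tree has a genuine gap. You invoke Lemma~\ref{lem: oneterminal} to obtain a removable edge $e$ incident to the unique terminal $t$ on the one-terminal cycle and claim this contradicts minimality of $H$. But ``removable'' only means the contracted graph $H' \ominus e$ is triconnected; it does not mean the preimage in $H$ remains $Q$-triconnected. If $t$ has degree exactly $3$ in $H'$, deleting $e$ drops $t$ to degree $2$ and the $\ominus$ operation contracts $t$ away, so $H' \ominus e$ no longer contains $t$. In $H$ itself, deleting the corresponding edge leaves $t$ with only two internally disjoint paths to any other terminal, so $Q$-triconnectivity genuinely fails and there is no contradiction with minimality. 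The paper avoids this trap by reversing the order: it first constructs $C(T)$ from the maximal terminal-free subtree $T^*$ (which is acyclic by Theorem~\ref{thm: minimal} alone), establishes (a) and (b), and only then proves $T$ is a tree via a path-rerouting argument showing that for any two terminals one can find three internally disjoint paths only one of which needs internal vertices of $T$, so simple connectivity of $T$ already suffices and minimality forces $T$ to be acyclic.

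Your construction of $C(T)$ as $\partial U$ for $U$ the union of faces incident to $T$, and your one-line proof of (b), are both where the real work hides. Triconnectivity of the contracted graph does not by itself make $\partial U$ a simple cycle: a boundary vertex could have two non-adjacent arcs of incident faces lying in $U$, so the walk revisits it. And the claim that a non-$T$ vertex strictly inside $C(T)$ can be ``locally rerouted around'' and suppressed is precisely the hard content of (b); the paper needs a long induction (Lemma~\ref{lem: cycles}) building $C(T)$ as the outer boundary of cycles $C_P$ around root-to-leaf paths of $T^*$, together with a dedicated removable-edge criterion (Lemma~\ref{lem: disjoint}), to push this through. The same Lemma~\ref{lem: disjoint} is what makes (d) go: the paper builds two nested cycles out of $C(T')$ and a path through $T$, then applies the lemma to an edge on your path $\pi$ to get removability. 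Your outline for (d) names the right target but would need a substitute for that lemma.
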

\ifFull
Theorem~\ref{thm: twoterminals} follows from the Tree Cycle Theorem.
\begin{proof}[Proof of Theorem~\ref{thm: twoterminals}]
For a contradiction, assume there is a cycle in $H$ that only containing one terminal, then there is a terminal-bounded component containing that cycle, which can not be a tree, contradicting Tree Cycle Theorem.
\end{proof}
\else
We note that Theorem~\ref{thm: twoterminals} follows immediately from the Tree Cycle Theorem. \fi
We give an overview of the proof the Tree Cycle Theorem in Subsection~\ref{sec:treecycle}.  
First, let us see how the Tree Cycle Theorem implies the Connectivity Separation Theorem.

\subsection{The Tree Cycle Theorem implies the Connectivity Separation Theorem}

For a requirement function $r:Q\to\{2,3\}$, let $G$ be a minimal $(Q,r)$-vertex-connected planar graph. 
Let $Q_3$ be the set of terminals requiring triconnectivity, and let $H$ be a minimal $Q_3$-triconnected subgraph of $G$. 
Let $Q_2 = Q \setminus Q_3$.    
\ifFull
We will prove the theorem for different types of pairs of terminals, based on their connectivity requirement.
Lemma~\ref{lem: ear} allows us to focus on $H$ when considering terminals in $H$ (note that terminals of $Q_2$ may be in $H$), for a simple corollary of Lemma~\ref{lem: ear} is that if two trees are terminal-bounded in $H$, then they cannot be subtrees of the same terminal-bounded tree in $G$.  Note that if we consider a subset of the terminals in defining free-ness of terminals, the same properties will hold for $Q$, as adding terminals only further partitions terminal-bounded trees.
\else
Consider two terminals $x$ and $y$.  We sketch the proof here; full details are in the appended full version.

Suppose $x$ and $y$ only require biconnectivity.
For this case, we can find a simple cycle $C$ containing $x$ and $y$ such that every $C$-to-$C$ path contains a terminal as an internal vertex. As argued at the start of Section~\ref{sec:conn-sep}, this proves the Connectivity Separation Theorem for $x$ and $y$.

Suppose $x$ and $y$ require triconnectivity, that is $x,y \in Q_3$. 
Since graph $H$ is $Q_3$-triconnected, there are three internally vertex-disjoint paths from $x$ to $y$ in $H$.
We  modify these three paths such that they do not contain edges of the same terminal-bounded tree.
Suppose  all three paths contain some edges of a common terminal-bounded tree $T$.
By the Tree Cycle Theorem, there is a cycle $C(T)$ that contains all leaves of $T$ and all other vertices of $T$ are enclosed by $C(T)$. 
So all the tree paths must intersect cycle $C(T)$.  Note that since both of $x$ and $y$ are terminals, the edges incident to $x$ and $y$ are not in the same terminal-bounded tree.
So, for each $x$-to-$y$ path, we can identify non-trivial subpaths: one to-$C(T)$ prefix and one from-$C(T)$ suffix. 
We can find two subpaths of $C(T)$ and one path in $T$ such that they are vertex-disjoint and the union of these three subpaths together with the to-$C(T)$ prefices and the from-$C(T)$ suffices defines another three internally vertex-disjoint $x$-to-$y$ paths in $H$.
Only one of the three new paths will contain edges of $T$.
By property (d) of the Tree Cycle Theorem, the two subpaths of $C(T)$ will not introduce any {\em shared} terminal-bounded tree.
We can apply a similar modification when there are only two $x$-to-$y$ paths containing edges of the same terminal-bounded tree.
The argument for extending the property from $H$ to $G$ requires minimal extra work.
\fi

\ifFull
\subsubsection{Connectivity Separation for $\mathbf{x,y \in Q_3}$} \label{sec:conn-sep-tric} 

For now, we consider only $Q_3$ to be terminals. 
We say connected components {\em share} a terminal-bounded tree if they contain edges (and so internal vertices) of that tree.  
We prove the following lemma which can be seen as a generalization of Connectivity Separation for contracted triconnected graphs.  We  use this generalization to prove Connectivity Separation for terminals both of which may not be in $Q_3$.  Connectivity Separation for terminals in $Q_3$ follows from this lemma by considering  three vertex disjoint paths paths matching $A$ to $B$ where $A = \{x,x,x\}$ and $B = \{y,y,y\}$.   Note that the lemma may swap endpoints of paths; in particular, for vertex disjoint $a$-to-$b$ and $c$-to-$d$ paths, the lemma may only guarantee $a$-to-$d$ and $c$-to-$b$ paths that do not share terminal-bounded trees.

\begin{lemma}\label{lem: disjointpaths}
If two multisets of vertices $A$ and $B$ (where $|A| = |B| = 2$, resp. $3$) satisfy the following conditions, 
then there are two (resp. three) internally vertex-disjoint paths from $A$ to $B$ such that no two of them share the same terminal-bounded tree.
\begin{enumerate}
\item Distinct vertices in $A$ are in distinct terminal-bounded trees and distinct vertices in $B$ are in distinct terminal-bounded trees.
\item There are two (resp. three) vertex-disjoint paths from $A$ to $B$.
\end{enumerate}
\end{lemma}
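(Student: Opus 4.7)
The plan is to begin with the system of $k$ internally vertex-disjoint paths $P_1,\ldots,P_k$ guaranteed by condition~2 (where $k\in\{2,3\}$) and to eliminate shared terminal-bounded trees by iterated local surgery driven by the cycles supplied by the Tree Cycle Theorem. Define the potential $\Phi$ of a system of paths to be the number of unordered index pairs $\{i,j\}$ for which $P_i$ and $P_j$ contain edges of some common terminal-bounded tree. The conclusion of the lemma is precisely $\Phi=0$, so it suffices to show that whenever $\Phi>0$ one can reroute the paths to strictly decrease $\Phi$ while preserving internal vertex-disjointness and the multisets $A$ and $B$ of endpoints. Since $\Phi$ is a nonnegative integer bounded by $\binom{k}{2}$ times the number of terminal-bounded trees of $H$, the procedure terminates.

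For the surgery, fix a planar embedding of $H$ and suppose a terminal-bounded tree $T$ is used by the paths indexed in a set $I\subseteq\{1,\ldots,k\}$ with $|I|\ge 2$. Apply the Tree Cycle Theorem to $T$ to obtain the simple cycle $C=C(T)$, and let $D$ be the closed disk it bounds. Properties~(a) and~(b) say that the internal vertices of $T$ are exactly the vertices strictly inside $C$, and~(c) places every leaf of $T$ on $C$. Condition~1 together with the fact that each $P_i$ has endpoints in $A\cup B$ implies that no endpoint of any $P_i$ lies strictly inside $C$: such an endpoint would have to lie in $T$, and would then be a leaf of $T$ on $C$ rather than inside. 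Consequently, for each $i\in I$ the path $P_i$ meets $D$ in a disjoint union of subpaths, each with both endpoints on $C$ and each using at least one edge of $T$; every $P_j$ with $j\notin I$ meets $D$ only along $C$ itself, by property~(b).

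Fix one index $i_0\in I$ and leave $P_{i_0}$ unchanged. For every other $i\in I$, replace each maximal subpath of $P_i$ lying in $D$, which runs between two vertices of $C$ through internal vertices of $T$, by one of the two arcs of $C$ joining those two vertices. The non-crossing structure of the $P_i$'s around $C$, inherited from planarity and internal vertex-disjointness, together with the fact that property~(b) rules out any foreign vertex inside $D$, lets us make a consistent choice of arcs so that the resulting walks remain pairwise internally vertex-disjoint; they can then be shortcut to simple $A$-to-$B$ paths with the same multisets of endpoints, though the pairing between $A$ and $B$ may be permuted, which is exactly why the lemma is phrased in $A$-to-$B$ matching form. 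After this surgery only $P_{i_0}$ still contains edges of $T$, so every pair of paths that previously shared $T$ no longer does. No new sharing is introduced: the only new edges lie on $C$, and property~(d) states that distinct maximal terminal-free subpaths of $C$ lie in distinct terminal-bounded trees, so the vertex-disjoint arcs chosen for different rerouted paths cannot share a tree with one another, and likewise cannot duplicate a tree already present in some $P_j$, $j\notin I$, beyond what was already present through $C$. Hence $\Phi$ strictly decreases.

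The main obstacle is the step that produces a globally consistent, internally vertex-disjoint choice of arcs on $C$ when rerouting several paths simultaneously, avoiding clashes with one another and with the portions of the unrerouted paths that touch $C$. This relies on a planar uncrossing argument that uses property~(b) to rule out interference from the interior of $D$, and uses planarity to translate the internal vertex-disjointness of the original $P_i$'s into a non-crossing pattern of their intersection points with $C$. A secondary technical point is that the walks resulting from the arc substitutions must be reduced to simple paths; because we need only internal disjointness between distinct paths, this reduction is harmless.
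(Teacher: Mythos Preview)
Your overall strategy---use the cycle $C(T)$ from the Tree Cycle Theorem to reroute paths that share a terminal-bounded tree $T$---matches the paper's, but there are two genuine gaps.

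First, the potential $\Phi$ you define (the number of index pairs sharing \emph{some} tree) need not strictly decrease even under a correct surgery. If $P_1$ and $P_2$ share both $T$ and another tree $T'$, then after eliminating the $T$-sharing they still share $T'$, so the pair $\{1,2\}$ still contributes to $\Phi$ and $\Phi$ is unchanged. The paper instead inducts on $|\mathcal{T}|$, the number of terminal-bounded \emph{trees} shared by at least two paths, and shows that the set of shared trees after one step is contained in $\mathcal{T}\setminus\{T\}$.

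Second, and more seriously, the claim that ``no new sharing is introduced'' does not follow from property~(d). Property~(d) only says that two vertex-disjoint arcs of $C(T)$ cannot lie in a common terminal-bounded tree; it says nothing about an arc of $C(T)$ sharing a tree $T'$ with a retained prefix or suffix of some path, or with a path $P_j$, $j\notin I$, that uses $T'$ elsewhere in $H$ without touching $C(T)$. Concretely: if $P_1,P_2$ share $T$, an arc of $C(T)$ lies in $T'$, and $P_3$ uses $T'$ away from $C(T)$, then rerouting $P_2$ along that arc creates a new shared tree between $P_2$ and $P_3$. The paper closes this hole with two ingredients you omit: an explicit \emph{augmentation} step in which each $P_i$ is extended by a doubled terminal-free spur to touch $C(T)$ whenever $P_i$ and $C(T)$ share a tree not already in $\mathcal{T}$, and a specific choice of $T$ as the \emph{first} tree of $\mathcal{T}$ encountered along $P_1$. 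Together these guarantee that any tree an inserted arc shares with another path was already in $\mathcal{T}$.

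A smaller point: your rerouting scheme ``leave $P_{i_0}$ unchanged and push the others onto arcs of $C$'' is not the paper's and is not obviously feasible when $|I|=3$, since the unchanged $P_{i_0}$ may occupy vertices of $C$ that obstruct finding two further disjoint arcs. The paper's Case~2 instead replaces the middle portions of all three paths: two by disjoint arcs of $C$, and the third by a path through $T$ itself joining the ``middle'' entry and exit points on $C$, which is what makes internal vertex-disjointness go through.
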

\begin{proof}
We prove the lemma for three paths as the two-paths version is proved by the first case of three-paths version.
Let $P_1$, $P_2$ and $P_3$ be the three vertex disjoint paths whose endpoints are in different terminal-bounded trees.  For $i=1,2,3$, let $a_i$ and $b_i$ be the endpoints of $P_i$.
Let $\mathcal{T}$ be the collection of terminal-bounded trees shared by two or more of $P_1, P_2$ and $P_3$.  We prove, by induction on the size of $\mathcal{T}$, that we can modify the paths to satisfy Connectivity Separation.  We pick a tree $T \in \mathcal{T}$ shared by (w.l.o.g.) $P_1$ and 
modify the paths so that $T$ is not shared by the new paths; further, we show that the terminal-bounded trees shared by the new paths are a subset of $\mathcal{T}\setminus \{T\}$.

We order the vertices of $P_i$ from $a_i$ to $b_i$ for $i=1,2,3$.  Among all the trees in $\mathcal{T}$ shared by $P_1$, let $T$ be the tree sharing the first vertex from $a_1$ to $b_1$ along $P_1$.  
Without loss of generality, assume that $T$ is shared by $P_2$.  ($P_3$ may also share $T$.)  Let $C$ be the cycle guaranteed by the Tree Cycle Theorem for $T$ in $H$.  Since $x$ and $y$ are terminals, they are not strictly enclosed by $C$ (property~(b)). Further, $P_1$ and $P_2$ must both contain vertices of $C$ because $P_1$ and $P_2$ both contain internal vertices of $T$ and the internal vertices of $T$ are strictly enclosed by $C$ (property~(a)).

\noindent{\bf Augmenting the paths} 
First we augment each path to simplify the construction, but may make the paths non-simple.  If there is a terminal-bounded tree shared by only $P_i$ and $C$ (and not $P_j$, $j \ne i$), then there is a terminal-free path from $P_i$ to $C$; add two copies of this path to $P_i$ ($P_i$ travels back and forth along this path).  We repeat this for every possible shared tree and $i = 1,2,3$.  We let $P_1,P_2,P_3$ be the resulting paths.  Note that adding such paths does not introduce any new shared terminal-bounded trees between $P_1,P_2,P_3$ and $P_1,P_2,P_3$ are still vertex-disjoint.  

Let $u_i$ and $v_i$ be the first and last vertex of $P_i$ that is in $C$.  There are two cases:

\noindent{\bf Case 1: $\mathbf{P_3}$ is disjoint from $\mathbf C$.} In this case there were no applicable augmentations to $P_3$ as described above (and $u_3$ and $v_3$ are undefined).  Since $T$ is the first tree in $\mathcal T$ along $P_1$, $u_1$ and $u_2$ cannot be internal vertices of the same terminal-bounded tree.  Further, by planarity and disjointness of the three paths, $\{u_1, v_1\}$ and $\{u_2, v_2\}$ do not interleave around $C$. Let $C_i$ be the $u_i$-to-$v_i$ subpath of $C$ disjoint from $\{u_{3-i}, v_{3-i}\}$ for $i=1,2$: $C_1$ and $C_2$ are disjoint.  By construction, $P_1 \cup P_2 \cup C_1 \cup C_2 \setminus \{P_1[u_1,v_1],P_2[u_2,v_2]\}$ contains two disjoint $A$-to-$B$ paths; these paths replace $P_1$ and $P_2$.  By the definition of $u_i$ and $v_i$ for $i=1,2$ and the above-described path augmentation, if $C_i$ shares a terminal-bounded tree with another path, then that tree was already in $\mathcal T$.  Therefore we have reduced the number of shared terminal-bounded trees (since $T$ is no longer shared) without introducing any new shared terminal-bounded trees.

\noindent{\bf Case 2: $\mathbf{P_3}$ and $\mathbf C$ have at least one common vertex.} In this case, $P_3$ may or may not contain internal vertices of $T$.  By planarity and disjointness of the paths, the sets $\{u_1,u_2,u_3\}$ and $\{v_1,v_2,v_3\}$ do not interleave around $C$.  Let $C_u$ and $C_v$ be the minimal subpaths of $C$ that span $\{u_1, u_2, u_3\}$ and $\{v_1, v_2, v_3\}$, respectively. Let $C_1$ and $C_2$ be the components of $C\setminus \{C_u, C_v\}$. Then $C_1$ and $C_2$ are disjoint paths that connect two vertices of $\{u_1, u_2, u_3\}$ to two vertices of $\{v_1, v_2, v_3\}$.  

By planarity and disjointness of the paths, there must be a leaf of $T$ that is an internal vertex of $C_u$ in order for paths to reach $T$ from that leaf of $T$; the same property holds for $C_v$. Let $C_3$ be the simple path from the middle vertex of $\{u_1,u_2,u_3\}$ to the middle vertex of $\{v_1, v_2, v_3\}$ in $C_u \cup C_v \cup T$.  Then the to-$C$ prefices of $P_i$, $C_i$ and the from-$C$ suffices of $P_i$ (for $i=1,2,3$) together form three vertex-disjoint paths between the same endpoints.  

The resulting path that contains $C_3$ is the only of the three resulting paths that contains internal vertices of $T$ since $C_1$ and $C_2$ do not share the same terminal-bounded tree because the endpoints of $C_3\cap T$ are terminals (property (d) of the Tree Cycle Theorem).  

By the definition of $u_i$ and $v_i$ and the above-described path augmentation, if $C_i$ shares a terminal-bounded tree with another path, then that tree was already in $\mathcal T$.  Therefore we have reduced the number of shared terminal-bounded trees (since $T$ is no longer shared) without introducing any new shared terminal-bounded trees.
\end{proof}

\subsubsection{Connectivity Separation for $\mathbf{x,y \in H \cap Q}$} \label{sec:conn-sep-H}

Note that $H$ may span vertices of $Q_2$.  Let $P_1$ and $P_2$ be vertex-disjoint $x$-to-$y$ paths.  The first and last edges of $P_1$ and $P_2$ are in different terminal-bounded trees, since $x$ and $y$ are terminals.  If either $P_1$ or $P_2$ is an edge, then we could obtain Connectivity Separation for $x$ and $y$.  Otherwise, let $x_1$ and $x_2$ be $x$'s neighbors on $P_1$ and $P_2$ respectively; similarly define $y_1$ and $y_2$.  Let $P_1'$ and $P_2'$ be the paths guaranteed by Lemma~\ref{lem: disjointpaths} when applied to $P_1[x_1,y_1]$ and $P_2[x_2,y_2]$.  Then $P_1' \cup P_2' \cup \{xx_1,xx_2,yy_1,yy_2\}$ contain vertex disjoint $x$-to-$y$ paths that satisfy the requirements of Connectivity Separation Theorem.

\subsubsection{Connectivity Separation for $\mathbf{x \in Q_2 \setminus H, y \in Q}$} 
Since $x$ and $y$ only require biconnectivity, we will prove that there exists a simple cycle $C$ containing $x$ and $y$, such that every $C$-to-$C$ path strictly contains a terminal, from which Connectivity Separation follows as argued in the beginning of Section~\ref{sec:conn-sep}.  The following claim gives a sufficient condition for such cycle $C$.

\begin{claim}\label{clm: 3}
If every terminal-bounded tree of $H$ and every component of $G\setminus E(H)$ contain at most one strict subpath of $C$, then every $C$-to-$C$ path contains a terminal.
\end{claim}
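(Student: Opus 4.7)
The plan is to argue by contradiction. Suppose $P$ is a $C$-to-$C$ path in $G$ containing no terminal, with endpoints $u,v\in V(C)$; in particular, $u$, $v$, and all internal vertices of $P$ are non-terminals.

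First I would assign to each edge of $P$ a \emph{host}: if the edge lies in $E(H)$, its host is the terminal-bounded tree of $H$ containing it; otherwise, its host is the component of $G\setminus E(H)$ containing it. Because every vertex of $P$ is a non-terminal, each lies as an internal vertex of a unique terminal-bounded tree of $H$ (as far as its $H$-edges go) and in at most one component of $G\setminus E(H)$ (as far as its non-$H$-edges go). Hence consecutive $H$-edges of $P$ share a host, consecutive non-$H$-edges of $P$ share a host, and the host can change along $P$ only at a non-terminal vertex transitioning between an $H$-edge and a non-$H$-edge.

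Next I would exhibit a cycle contained in a single host $X$, formed from a sub-path of $P$ together with a sub-arc of $X\cap C$. Since $u$ and $v$ are non-terminals lying interior to $C$-arcs of some hosts $Y_u$ and $Y_v$, both $C$-edges at $u$ (respectively $v$) lie in $Y_u$ (respectively $Y_v$). I would analyze the first run of $P$ at $u$ and the last run of $P$ at $v$ and compare their hosts to $Y_u$ and $Y_v$; the single-subpath hypothesis then forces the configuration to close off within a single host, because otherwise the path $P$ would have to exit and re-enter the same host along $C$, producing two disjoint sub-paths of $C$ in that host and contradicting the hypothesis. Planarity is crucial here, as it forces $P$ to lie on a single side of $C$ and thereby constrains how host runs can interleave.

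Finally, I would derive a contradiction from the cycle in $X$. If $X$ is a terminal-bounded tree of $H$, the Tree Cycle Theorem (Theorem~\ref{thm: treecycle}) asserts that $X$ is a tree, so it cannot contain a cycle. If $X$ is a component of $G\setminus E(H)$, then the constructed cycle contains a sub-path of $P$ whose endpoints lie in $V(H)\cap V(X)$ (via the host-transition structure at non-terminal switch vertices, which necessarily lie in $V(H)$); Lemma~\ref{lem: ear} then says any such $H$-to-$H$ sub-path of $X$ strictly contains a terminal, contradicting that $P$ is terminal-free. The main obstacle will be the planar case analysis in the second step, where identifying the single-host cycle requires carefully tracing the runs of $P$ against the single-subpath structure of $C$, and distinguishing the various positions (interior versus terminal) at which the $P$-edges and $C$-edges meet at $u$ and $v$.
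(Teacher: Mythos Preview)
Your proposal overcomplicates the argument and introduces an unnecessary reliance on planarity; the paper's proof is a simple two-case split that avoids cycle-hunting entirely.

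Your observation that consecutive $H$-edges of a terminal-free path share a terminal-bounded tree, and consecutive non-$H$-edges share a component of $G\setminus E(H)$, is exactly right and is essentially all that is needed. But rather than tracing host-runs and appealing to planarity to force a cycle inside one host, the paper simply splits on whether $P$ contains any edge of $G\setminus E(H)$:
\begin{itemize}
\item If $P$ has an edge in some component $\chi$ of $G\setminus E(H)$, take a maximal subpath $P'$ of $P$ lying in $\chi$. Its endpoints lie in $H\cup C_\chi$, where $C_\chi$ is the unique subpath of $C$ in $\chi$ (itself an $H$-to-$H$ path, hence containing a terminal by Lemma~\ref{lem: ear}). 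Taking $C_\chi$ as the first ear $E_a$ of $\chi$, Lemma~\ref{lem: ear} gives that $P'$ strictly contains a terminal. No cycle is constructed.
\item If $P\subseteq H$, then by your own host observation $P$ lies in a single terminal-bounded tree $T$. Both endpoints of $P$ are on $C\cap T$, hence on the unique subpath $C_T$; then $P$ together with the $C_T$-arc between them is a cycle in $T$, contradicting that $T$ is a tree.
\end{itemize}

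Your cycle-in-a-host plan has a genuine gap in the multi-host case: when $P$ passes through several hosts, you never actually exhibit a cycle inside one host, and the phrase ``otherwise $P$ would have to exit and re-enter the same host along $C$'' does not make sense, since $P$ is internally disjoint from $C$. The planarity you invoke is neither used nor needed for this claim. There is also a gap in your component case: you assert the endpoints of the relevant subpath lie in $V(H)\cap V(X)$, but if $P$ lies entirely inside a component $\chi$ its endpoints may be interior vertices of $C_\chi$ and not in $V(H)$ at all; this is precisely why the paper invokes the $H\cup E_a$ version of Lemma~\ref{lem: ear} with $E_a=C_\chi$ rather than the plain $H$-to-$H$ version.
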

\begin{proof}
Let $P$ be a $C$-to-$C$ path and let $C_{\chi}$ be the subpath of $C$ in the component $\chi$ of $G\setminus H$.  
Then $C_{\chi}$ has two endpoints in $H$ and it must contain a terminal by Lemma~\ref{lem: ear}.  
Every subpath of $P$ in $\chi$ has endpoints in $H\cup C_{\chi}$. We can take $C_{\chi}$ as the first ear for $\chi$, and then every subpath of $P$ in $\chi$ contains a terminal by Lemma~\ref{lem: ear}. 
So if $P$ contains an edge of $\chi$, we have the claim.

If $P$ does not contain an edge in $G\setminus H$, then $P\subseteq H$. 
Further, if $P$ does not contain any terminal, then it must be in some terminal-free tree $T$ by condition of the claim $T$ contains only one subpath $C_T$ of $C$. 
Since $P$ is a $C$-to-$C$ path, $C_T$ and $P$ form a cycle in $T$, which contradicts $T$ is a tree. So $P$ must contain a terminal.
\end{proof}

The following claim will allow us to use Lemma~\ref{lem: disjointpaths} on parts of the $x$-to-$y$ paths.

\begin{claim}\label{clm:2}
  Each connected component of $G\setminus H$ has at most one non-terminal vertex in common with any terminal-bounded tree of $H$.
\end{claim}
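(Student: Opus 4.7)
The plan is to derive a contradiction from the assumption that two distinct non-terminal vertices $u$ and $v$ of a common terminal-bounded tree $T$ of $H$ lie in the same component $\chi$ of $G\setminus H$. The main tool will be the Tree Cycle Theorem, which supplies a cycle $C(T)\subseteq H$ whose strict interior contains exactly the non-leaf vertices of $T$; by property~(a), both $u$ and $v$ lie strictly inside $C(T)$.

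I would then fix a simple $u$-to-$v$ path $P$ in $\chi$ (which exists because $u,v$ are in the same component) and let $c$ be the first vertex of $P$, traversed from $u$, that lies on $C(T)$, setting $c=v$ if no such vertex exists; let $P' = P[u,c]$. The key geometric observation is that, since $G$ is planarly embedded and $C(T)$ is a cycle of $G$, no single edge of $G$ can have one endpoint strictly inside $C(T)$ and the other strictly outside, so combined with the choice of $c$ every internal vertex of $P'$ lies strictly inside $C(T)$. Since $P'$ is a path in $\chi$ whose endpoints both lie in $V(H)$ ($u\in V(T)\subseteq V(H)$ and $c$ either lies on $C(T)\subseteq V(H)$ or equals $v\in V(H)$), Lemma~\ref{lem: ear} provides a terminal $t\in Q$ as an internal vertex of $P'$.

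To close, $t$ lies strictly inside $C(T)$; by property~(b) of the Tree Cycle Theorem $t\in V(T)$; being a terminal of $T$, $t$ must be a leaf of $T$; and property~(c) places every leaf of $T$ on $C(T)$, contradicting $t$ being strictly inside. The step I expect to be the main obstacle is the planarity argument that confines the interior of $P'$ strictly inside $C(T)$, since this is the only point where the embedding of $G$ is genuinely used; once it is in hand, the contradiction follows immediately from Lemma~\ref{lem: ear} and properties~(a)--(c) of the Tree Cycle Theorem, and property~(d) is not needed.
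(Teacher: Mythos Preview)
Your argument has a genuine gap at the final step. The Tree Cycle Theorem is a statement about the minimal $Q_3$-triconnected graph $H$; in particular, property~(b) only asserts that every vertex \emph{of $H$} strictly inside $C(T)$ lies on $T$, and Lemma~\ref{lem: leaf} identifies the leaves of $T$ with the $Q_3$-terminals on $T$. But Lemma~\ref{lem: ear}, applied to the $H$-to-$H$ path $P'$ in the component $\chi$ of $G\setminus H$, only produces a terminal $t\in Q$, and nothing forces $t\in Q_3$. In fact $t$ is an internal vertex of a path in $\chi$, so if you shorten $P'$ so that its internal vertices lie in $V(\chi)\setminus V(H)$ (take $c$ to be the first vertex of $P$ in $V(H)$ rather than on $C(T)$), the terminal produced necessarily lies in $Q_2\setminus V(H)$. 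Then property~(b) says nothing about $t$, and even if $t$ happened to lie in $V(H)\cap V(T)$ as an internal vertex, it would simply be a $Q_2$-terminal sitting at a non-leaf node of $T$, which is perfectly consistent with Lemma~\ref{lem: leaf}. So no contradiction arises, and the planarity argument you flagged as the main obstacle is not where the proof breaks.

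The paper's proof avoids this issue by not invoking the Tree Cycle Theorem at all. Instead it argues directly from minimality of $G$: given two non-terminal vertices $a,b$ of $T$ in a common component $\chi$, it takes an $a$-to-$b$ path $P$ in $\chi$ and a suitable subpath $R$ of $T[a,b]$ (a maximal suffix whose internal vertices have degree~$2$ in $G$), and shows $G\setminus R$ is still $(Q,r)$-vertex-connected. Triconnectivity of $Q_3$ is preserved because $(T\setminus R)\cup P$ is still a tree spanning the same leaves, and Connectivity Separation for $H$ (already established in Sections~\ref{sec:conn-sep-tric}--\ref{sec:conn-sep-H}) guarantees that at most one of any three disjoint paths between $Q_3$-terminals uses $T$; biconnectivity of $G\setminus R$ is established via an explicit open ear decomposition. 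This contradicts minimality of $G$.
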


\begin{proof}
For a contradiction, suppose $a$ and $b$ are two non-terminal vertices of component $\chi$ of $G\setminus H$ that are both in some terminal-bounded tree $T$ of $H$.  Let $P$ be an $a$-to-$b$ path in $\chi$ and let $R$ be a maximal suffix of $T[a,b]$ every internal vertex of which has degree 2 in $G$.
We show that deleting $R$ maintains the required connectivity of $G$ and this contradicts minimality of $G$. 

Let $H'= (H\setminus R) \cup P$.  First we show that $H'$ is $Q_3$-triconnected. Since $a$ and $b$ are not terminals, they are not leaves of $T$.
Then $(T\setminus R)\cup P$ is a tree that contains all the terminals of $T$ as leaves.  Since $H$ satisfies Connectivity Separation, as argued (Section~\ref{sec:conn-sep-tric} and~\ref{sec:conn-sep-H}), any two terminals have three vertex disjoint paths, no two of which contain edges from the same terminal-bounded tree.  Therefore replacing $T$ with $(T\setminus R)\cup P$ will preserve the connectivity.

To prove $G\setminus R$ is biconnected, we construct an open ear decomposition. Let $R'$ be the maximal superpath of $R$ in $H$ every internal vertex of which has degree 2 in $H$. (Note that the endpoints of $R$ need not have degree $>2$ in $H$.) Since the contracted version of $H$ is triconnected, the endpoints of $R'$ are triconnected in $H$ and $R'$ becomes an edge in the contracted version of $H$. Therefore $H\setminus R'$ is biconnected.  Consider an ear decomposition of $G$ that starts with an ear decomposition of $H \setminus R'$ as guaranteed by Observation~\ref{obs: ear}.   
Since every internal vertex of $R$ has degree 2 in $G$ and since the endpoints of $R$ have degree $>2$ in $G$, we can greedily select ears of $(G \setminus H)\cup (R' \setminus R)$ to span the resting vertices.
\end{proof}

Let $\chi$ be the component of $G\setminus H$ that contains $x$, and let $a$ and $b$ be two vertices of $\chi \cap H$. 
Then $a$ and $b$ are in distinct terminal-bounded trees of $H$: if either of $a$ and $b$ is terminal, then it could be in at least two terminal-bounded trees; otherwise by Claim~\ref{clm:2} they can not be in the same terminal-bounded tree.
We consider an ear decomposition of $G$ that is guaranteed by Lemma~\ref{lem: ear}, starting with an ear decomposition of $H$ with consecutive ears composing $\chi$.
There are two cases.

\noindent{\bf Case 1: $\mathbf{y\in H}$} In this case we construct a simple cycle $C$ to satisfy Claim~\ref{clm: 3} containing $x$ and $y$ as follows. 
We first find an $a$-to-$b$ path $P_1$ of $H$ that contains $y$: add another new vertex $t$ and two new edges $ta$ and $tb$ into $H$, then $H \cup \{ta, tb\}$ is biconnected and there exist two vertex disjoint paths from $t$ to $y$ through $a$ and $b$ respectively. 
So there exist two disjoint paths $P_a$ and $P_b$ from $A=\{y, y\}$ and $B = \{a, b\}$, and let $P_1 = P_a \cup P_b$.
Let $P_2$ be the $a$-to-$b$ path that is taken as the first ear of $\chi$.
If $P_a$ and $P_b$ share any terminal-bounded trees, we could modify the two paths by Lemma~\ref{lem: disjointpaths} for $A=\{y, y\}$ and $B=\{a, b\}$ such that the new paths $P_a'$ and $P_b'$ from $A$ to $B$ are vertex disjoint and do not share any terminal-bounded tree. 
Further, we shortcut $P_a'$ and $P_b'$ such that they have at most one subpath in each terminal-bounded tree.  
Since $y$ is a terminal, $C= P_a' \cup P_b' \cup P_2$ satisfies the conditions of Claim~\ref{clm: 3}, giving the Connectivity Separation.

\noindent{\bf Case 2: $\mathbf{y\in G\setminus H}$} In this case $x$ and $y$ may or may not be in the same component of $G\setminus H$. 

Suppose $x$ and $y$ are in the same component $\chi$ of $G\setminus H$. 

If there is an $a$-to-$b$ path $P_1$ in $\chi$ that contains both of $x$ and $y$, we take $P_1$ as the first ear of $\chi$.
Let $P_2$ be an $a$-to-$b$ path in $H$. We shortcut $P_2$ in each terminal-bounded tree such that each terminal-bounded tree contains at most one subpath of $P_2$. Let the cycle $C$ be composed by $P_1$ and $P_2$. Then by Claim~\ref{clm: 3}, every $C$-to-$C$ path contains a terminal, giving the Connectivity Separation.

If there is no such $P_1$, we take as the first ear of $\chi$ an $a$-to-$b$ path that contains $x$, and take the second ear containing $y$. Then there is a cycle $C$ containing $x$ and $y$ in the first two ears. 
Let $P$ be any $C$-to-$C$ path. 
Since any pair of vertices in $\chi \cap H$ are not in the same terminal-bounded tree (as argued for $a$ and $b$), $P$ will contain a terminal if it contains an edge in $H$. 
If $P$ does not contain an edge of $H$, then $P$ will contain a terminal by a similar proof of Lemma~\ref{lem: ear}.
So $P$ always contains a terminal, giving the Connectivity Separation. 

Suppose $x$ and $y$ are not in the same component. 
Let $\chi'$ be the component of $G\setminus H$ that contains $y$ and let $a'$ and $b'$ be two common vertices of $\chi'$ and $H$. Then $a'$ and $b'$ are not in the same terminal-bounded tree of $H$ as previously argued.
We argue there exist two disjoint paths from $A = \{a, b\}$ to $B = \{a', b'\}$ in $H$: add two new vertices $s$ and $t$, and four new edges $sa$, $sb$, $ta'$ and $tb'$ into $H$, then $H \cup \{sa, sb, ta', tb'\}$ is biconnected and there are two vertex disjoint paths from $s$ to $t$ that contains two paths $P_a$ and $P_b$ from $A$ to $B$. 
Let $P_1$ (or $P_2$) be the $a$-to-$b$ (or $a'$-to-$b'$) path that is taken as the first ear of $\chi$ (or $\chi'$).
If $P_a$ and $P_b$ share any terminal-bounded trees, we can modify them by Lemma~\ref{lem: disjointpaths} for $A$ and $B$ such that the new paths $P_a'$ and $P_b'$ from $A$ to $B$ are vertex disjoint and share no terminal-bounded tree. 
Further, we can shortcut $P_a'$ and $P_b'$ such that each terminal-bounded tree contains at most one subpath of $P_a'$ and $P_b'$. 
Let the cycle $C$ be composed by $P_a'$, $P_b'$, $P_1$ and $P_2$. Then every terminal-bounded tree of $H$ and every component of $G\setminus H$ contain at most one strict subpath of $C$. By Claim~\ref{clm: 3} every $C$-to-$C$ path contains a terminal, giving the Connectivity Separation.

This completes the proof of the Connectivity Separation Theorem. 
\fi

\subsection{Proof of Tree Cycle Theorem} \label{sec:treecycle}

Let $G$ be a minimal $Q_3$-triconnected planar graph.  We prove the Tree Cycle Theorem for the contracted $Q_3$-triconnected graph $H$ obtained from $G$. If the theorem is true for $H$, then it is true for $G$ since subdivision will maintain the properties of the theorem. 
\ifFull
By Lemmas~\ref{lem: minimal} and~\ref{lem: simple}, if there are parallel edges in $H$, then either $|Q_3| = 2$ and $H$ consists of three parallel edges or $|Q_3| = 3$ and $H$ is a triangle with three pairs of parallel edges. The Tree Cycle Theorem is trivial for these two cases.  
\paragraph{Proof Overview}
\else
We give a high-level overview of the proof; the full details of the proof are in the appended full version of the paper.

\fi
We focus on a maximal terminal-free tree $T^*$, rooted arbitrarily, of $H$ and the corresponding terminal-bounded component $T$ (that is, $T^* \subset T$). 
We view $T^*$ as a set $\cal P$ of root-to-leaf paths. We show that we find a cycle for each path in $\cal P$ that strictly encloses only vertices on the paths. 
The outer cycle of the cycles for all the paths in $\cal P$ defines $C(T)$. See Figure~\ref{fig: treecycle}.
Property (a) directly follows from the construction. 
Property (b) is proved by induction on the number of root-to-leaf paths of $T$: when we add a new cycle for a path from $\cal P$, the new outer cycle will only strictly enclose vertices of the root-to-leaf paths so far considered.
After that, we show any two terminals are triconnected when $T$ is a tree: by modifying the three paths between terminals in a similar way to \ifFull Lemma~\ref{lem: disjointpaths}\else the proof for Connectivity Separation\fi, only one path will require edges in $T$.
Since $T$ is connected, this proves $T$ is a tree by minimality of $H$.
Combining the above properties and triconnectivity of $H$, we can obtain property (c).
Property (d) is proved by contradiction: if there is another terminal-bounded tree $T'$ that shares two terminal-free paths of $C(T)$, then there is a terminal-free path in $T'$. We can show there is a removable edge in this path of $T'$, contradicting \ifFull Lemma~\ref{lem: deletable}. \else the minimality of $H$.

\begin{figure}[ht]
\centering
\includegraphics[scale=1]{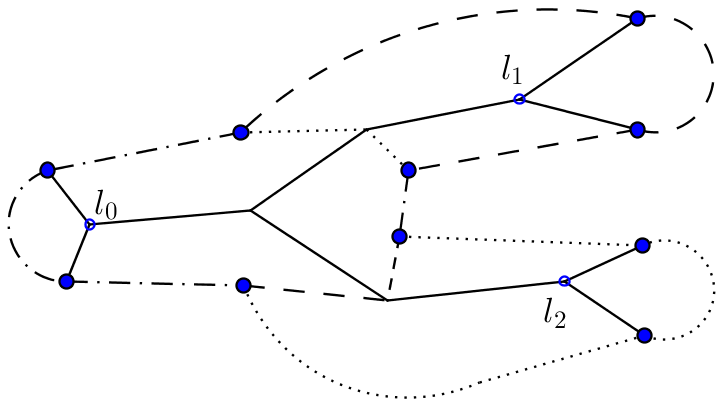}
\caption{Illustration of $C(T)$. The dashed cycle is $C_P$ for $P$ from $l_0$ to $l_1$ and the dotted cycle is $C_{P'}$ for $P'$ from $l_0$ to $l_2$. The outer boundary forms $C(T)$.}\label{fig: treecycle}
\end{figure}
\fi

\ifFull
\paragraph*{Property (a)}   
To prove that $C(T)$ exists, we will prove the following after proving some lemmas regarding terminal-free paths (Section~\ref{sec:strictly-term-free}), which guarantees that there is a drawing of $H$ such that $T^*$ is enclosed by some cycle:
\begin{lemma}\label{lem: exist}
There is a face of $H$ that does not touch any internal vertex of $T^*$.
\end{lemma}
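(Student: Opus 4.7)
The plan is to argue by contradiction: suppose every face of $H$ contains at least one internal vertex of $T^*$ on its boundary. I will derive either a terminal-free cycle in $H$ (contradicting Theorem~\ref{thm: minimal}) or a structural violation of the minimality of $H$.

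The first step is a local structural observation: every edge of $H$ incident to an internal vertex $v$ of $T^*$ lies in $T$. If $v$ had a neighbor $w \notin T^*$, then by maximality of the terminal-free tree $T^*$ the vertex $w$ must be a terminal, so $vw$ is an edge of $T$. If $v$ had a neighbor $w \in T^*$ that is not its $T^*$-neighbor, then $T^*[v,w] + vw$ would be a terminal-free cycle, contradicting Theorem~\ref{thm: minimal}. Consequently the closed neighborhood of $I$ (the internal vertices of $T^*$) lies entirely inside $V(T)$, and every face of $H$ touching a vertex of $I$ does so via two edges of $T$.

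The second step is topological. Let $T^{**}$ be the subtree of $T^*$ obtained by discarding the leaves of $T^*$; if $T^{**}$ is empty then the lemma is trivial, so assume $T^{**}$ is nonempty. Since $T^{**}$ is a tree in the planar embedding, its complement is a single open connected region. The assumption that every face of $H$ is incident to some vertex of $I = V(T^{**})$ forces all faces of $H$ to be ``wrapped around'' $T^{**}$, so that the boundary walk around $T^{**}$ meets each face of $H$ in at least one corner.

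The third step combines this with the terminal-free path lemmas of Section~\ref{sec:strictly-term-free} (cited in the line preceding the lemma). Tracing the boundary walk of $T^{**}$ inside the embedding of $H$ and using the first step (all edges at $I$ are in $T$), between two consecutive $T^{**}$-corners one finds either a subwalk inside $T$ that, combined with a $T^{**}$-path, produces a terminal-free cycle in $H$---contradicting Theorem~\ref{thm: minimal}---or a detour through $H \setminus T$ which, being terminal-free by hypothesis, again closes up into a terminal-free cycle through $T^{**}$, yielding the same contradiction. In either case we contradict the minimality of $H$.

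The main obstacle I expect is making the topological tracing argument combinatorially precise: the faces of $H$ are not determined by the drawing of $T^{**}$ alone, because edges from leaves of $T^*$ to terminals and edges of $H \setminus T$ further subdivide the region around $T^{**}$ into the actual faces of $H$. The care needed is to argue that, under the ``every face touches $I$'' hypothesis, the terminal-free-path lemmas rule out any such subdividing paths from sneaking in without producing a terminal-free cycle; this is where the preparatory lemmas of Section~\ref{sec:strictly-term-free} do the real work.
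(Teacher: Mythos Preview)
Your proposal has a genuine gap in step three: the ``topological tracing'' does not produce a terminal-free cycle in the way you claim. Your first observation is correct---every edge of $H$ incident to a vertex of $T^*$ lies in $T$, by maximality of $T^*$ together with Theorem~\ref{thm: minimal}. But this says nothing about the remainder of a face boundary. A face of $H$ that touches an internal vertex $v$ of $T^*$ is bounded at $v$ by two edges of $T$, and then by a path of $H$ that may pass through arbitrarily many terminals; no hypothesis forces that path to be terminal-free. Your phrase ``a detour through $H\setminus T$ which, being terminal-free by hypothesis'' invokes a hypothesis that does not exist. Concretely, if $T^*$ is a star with center $v$, then every face at $v$ can be bounded by two star edges and a terminal-laden arc, and no terminal-free cycle appears. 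You yourself flag this as ``the main obstacle,'' and then defer to the lemmas of Section~\ref{sec:strictly-term-free} without saying how they resolve it; they do not resolve it along these lines.

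The paper's proof is structurally different and does not look for a terminal-free cycle at all. It fixes a leaf edge $ac$ of $T^*$ (with $a$ the leaf) and considers $C(ac)$; the neighbors $t_1,t_2$ of $a$ on $C(ac)$ are terminals by maximality of $T^*$. Lemma~\ref{lem: outside} supplies a $t_1$-to-$t_2$ path $P'$ whose internal vertices are strictly outside $C(ac)$, and one picks $P'$ so that $C'=P'\cup\{at_1,at_2\}$ encloses the fewest faces; triconnectivity then forces $C'$ to strictly enclose no vertex, so its interior is a single face. Under the contradiction hypothesis that face touches an internal vertex $b$ of $T^*$, and since $a,t_1,t_2$ are not internal to $T^*$, $b$ must lie on $P'$. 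Now take the path $P^*$ in $T^*$ from $a$ through $b$; by Lemma~\ref{lem: exist1} the cycle $C(P^*)$ exists, and by Lemmas~\ref{lem: clean} and~\ref{lem: cut_on_cycle} it strictly encloses only internal vertices of $P^*$. Since $b$ is strictly inside $C(P^*)$ while $b$ sits on $C'$, the two cycles must cross, putting a nontrivial arc of $C(P^*)$ strictly inside $C'$---contradicting that $C'$ strictly encloses nothing. The contradiction is geometric (a forced crossing of two carefully chosen minimal cycles), not the existence of a terminal-free cycle.
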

We will also prove the following two lemmas in Section~\ref{sec:strictly-term-free}:
\begin{lemma}\label{lem: cycle}
For any terminal-free path $P$, there is a drawing of $H$ in which there is a simple cycle that strictly encloses all the vertices of $P$ and only the vertices of $P$.
\end{lemma}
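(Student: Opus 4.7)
The plan is to construct the desired cycle as the topological boundary of a ``tubular neighborhood'' of $P$ in the planar embedding of $H$. Since $H$ is triconnected and planar, by Whitney's theorem it has an essentially unique combinatorial embedding, so specifying ``the drawing'' amounts to choosing the outer face. Let $\mathcal{F}$ be the set of faces of $H$ whose boundary contains at least one edge of $P$, and let $N(P)$ denote the closed planar region obtained as the union of $V(P)$, $E(P)$, and the closed faces in $\mathcal{F}$. By construction, the only vertices and edges of $H$ that meet the interior of $N(P)$ are those of $P$ itself, since the other faces of $H$ contribute no vertex or edge to the interior.

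My strategy is to show that $N(P)$ is homeomorphic to a closed disk. Once this is established, the topological boundary $\partial N(P)$ is a simple closed curve in the plane, and, being composed of edges of $H\setminus E(P)$ and vertices of $V(H)\setminus V(P)$, it is a simple cycle $C$ of $H$ disjoint from $V(P)$. Taking the face of $C$ that does not contain $N(P)$ as the outer face of the drawing then makes $V(P)$ precisely the set of vertices of $H$ strictly enclosed by $C$, which is the desired conclusion.

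The main obstacle is the disk property: ruling out the possibility that $\partial N(P)$ is a closed walk that revisits some vertex or edge. Suppose a vertex $u\notin V(P)$ is visited more than once on $\partial N(P)$. Then the edges of $u$ that border faces of $\mathcal{F}$ occupy two or more disjoint arcs in the rotation around $u$, and planarity forces $\{u\}$ together with at most two vertices of $V(P)$ to be a vertex-cut of $H$ separating pieces of $P$ from each other. I would exploit the terminal-freeness of $P$ here: every edge of $P$ joins two non-terminals, so by Lemma~\ref{lem: deletable} each edge of $P$ is non-removable, and thus by Theorem~\ref{thm: nonremovable} has a separating pair of size exactly two. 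Combining the separator supplied by $u$ with the separating pair of a carefully chosen edge of $P$ between the two appearances of $u$, I would apply Theorem~\ref{thm: removableone} to extract a removable edge incident to $u$ or to $V(P)$ whose deletion preserves $Q_3$-triconnectivity, contradicting the minimality of $H$.

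The delicate point in the last step is that the natural separator read off from the embedding may have size three rather than two, so some case analysis is needed to match it against a separating pair guaranteed by non-removability. This is where I expect the bulk of the technical work to lie; once handled, properties (a), (b), and the enclosure statement for Lemma~\ref{lem: exist} and the Tree Cycle Theorem will follow almost immediately from the disk structure of $N(P)$.
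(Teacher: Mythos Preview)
Your tubular-neighborhood idea is natural, but two concrete gaps prevent it from going through as written.

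First, the assertion that ``the only vertices and edges of $H$ that meet the interior of $N(P)$ are those of $P$ itself'' is \emph{not} true by construction. An edge $e\notin E(P)$ lies in the interior of $N(P)$ whenever both faces incident to $e$ contain some edge of $P$ on their boundaries, and this does happen: if $u$ is a degree-$3$ internal vertex of $P$ with third neighbor $v\notin V(P)$, then the two faces bordering $uv$ each contain one of the two $P$-edges at $u$, so $uv$ is interior to $N(P)$. Showing that no \emph{vertex} outside $V(P)$ ends up strictly enclosed is exactly the content of Lemmas~\ref{lem: cut_on_cycle} and~\ref{lem: clean}, which take genuine work with separating pairs; it is not a consequence of the definition of $N(P)$.

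Second, and more fatally, the \emph{endpoints} $a,b$ of $P$ need not lie in the interior of $N(P)$. An endpoint $a$ has only one $P$-edge, and since $\deg_H(a)\ge 3$, at least one wedge in the rotation at $a$ is bounded by two non-$P$ edges; the corresponding face need not contain any edge of $P$, so it may lie outside $\mathcal{F}$ and $a$ lands on $\partial N(P)$. (The same failure occurs at any internal vertex of $P$ of degree $\ge 4$ whose two $P$-edges are adjacent in the rotation.) Hence even if $N(P)$ were a disk, its boundary would in general pass \emph{through} $a$ and $b$, so it cannot be the cycle the lemma asks for.

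The paper's proof proceeds differently and in two stages. It first produces $C(P)$, the minimal cycle strictly enclosing the \emph{internal} vertices of $P$, and shows via Lemmas~\ref{lem: exist1}, \ref{lem: cut_on_cycle}, and~\ref{lem: clean} that $C(P)$ encloses only vertices of $P$; at this point $a,b\in C(P)$. It then invokes Lemma~\ref{lem: outside} to find, for each endpoint, a path between that endpoint's two neighbors on $C(P)$ running strictly outside $C(P)$; splicing these detours in (and removing the four boundary edges at $a$ and $b$) yields a larger simple cycle that now strictly encloses $a$ and $b$ as well. Your disk argument, even if completed, would at best reproduce the first stage; you would still need an analogue of Lemma~\ref{lem: outside} to push the cycle past the endpoints.
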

\begin{lemma}\label{lem: disjoint}
Let $C_1$ and $C_2$ be two nested simple cycles of $H$ such that the edges of $C_1$ are enclosed by $C_2$, and $C_1$ and $C_2$ share at most one subpath.
Let $xy$ be an edge strictly enclosed by $C_2$ and not enclosed by (or on) $C_1$. 
If $H$ satisfies the following conditions, then $xy$ is removable:
\begin{description}
\item [1.] $C(xy)$ is vertex-disjoint with the common subpath of $C_1$ and $C_2$, and consists of two vertex-disjoint $C_1$-to-$C_2$ paths, a subpath of $C_1$ and a subpath of $C_2$ respectively. 
\item [2.] For every neighbor $z$ of $xy$ in $C(xy) \setminus \{C_1, C_2\}$, there is a $z$-to-$C_i$ path that shares only $z$ with $C(xy)$ (for $i=1$ or $i=2$).
\end{description}
\end{lemma}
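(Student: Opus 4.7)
The plan is to argue by contradiction. Suppose $xy$ is nonremovable; by Theorem~\ref{thm: nonremovable} there exist two vertices $p,q$ such that $H\setminus\{xy,p,q\}$ has exactly two components $A\ni x$ and $B\ni y$ with $|A|,|B|\ge 2$. I will produce an $x$-to-$y$ path in $H\setminus\{xy,p,q\}$, contradicting the separation.

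First I would set up notation by writing $C(xy)=R_1\cup S_2\cup R_2\cup S_1$ according to condition~1, where $R_1,R_2$ are the two $C_1$-to-$C_2$ subpaths internally disjoint from $C_1\cup C_2$ and $S_1,S_2$ are the subpaths of $C_1,C_2$ lying on $C(xy)$. Condition~1 places $C(xy)$ disjoint from the common subpath of $C_1$ and $C_2$, so the outer arcs $C_1\setminus S_1$ and $C_2\setminus S_2$ both lie outside the disk bounded by $C(xy)$ and, together with the common subpath of $C_1,C_2$ when present, link the two endpoints of $S_1$ (and independently of $S_2$) through the exterior. Since $H$ is (contracted) triconnected and planar, $C(xy)$ is exactly the union of the boundaries of the two faces adjacent to $xy$ minus the edge $xy$, so $C(xy)$ passes through both $x$ and $y$, and the two arcs of $C(xy)\setminus\{x,y\}$ already furnish two internally vertex-disjoint $x$-to-$y$ paths in $H\setminus xy$. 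The hypothesis that $\{p,q\}$ separates $x$ from $y$ therefore forces each of these two arcs to contain exactly one of $p,q$ as an internal vertex.

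To finish, I would construct a third $x$-to-$y$ path avoiding $\{p,q\}$ by stitching together pieces of $C(xy)$ with the exterior backbone. Starting at $x$, walk along one arc of $C(xy)$ toward its blockage; before reaching the blocking vertex, leave $C(xy)$ at some vertex $z$: if $z$ lies in $R_1\cup R_2$, condition~2 supplies a detour from $z$ to $C_1$ or $C_2$ that is disjoint from $C(xy)\setminus\{z\}$; if $z$ lies in $S_1\cup S_2$, the outer arc of the corresponding cycle is itself the detour. Follow that detour across the exterior to the other side of the blockage, re-enter $C(xy)$ via $S_1$ or $S_2$, continue along $C(xy)$, and apply the same detour argument at the second blockage. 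Concatenating yields the desired $x$-to-$y$ path in $H\setminus\{xy,p,q\}$.

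The main obstacle is the case analysis that certifies this construction always succeeds. One must handle all positions of $\{p,q\}$ on the four parts of $C(xy)$, off $C(xy)$ but on $C_1\cup C_2$, or entirely outside $C(xy)\cup C_1\cup C_2$, and verify that the detours can be chosen to avoid $\{p,q\}$ and remain internally disjoint from the walk already built. Condition~1 is needed to keep the outer backbone connected after deleting two vertices; condition~2 is needed to leave $C(xy)$ from an internal vertex of a bridge $R_i$ (where no edge of $C_1\cup C_2$ is directly available). Planarity is essential throughout, since in the plane a pair of vertices can only ``pinch'' a cycle in one way, which is exactly what keeps $\{p,q\}$ from simultaneously cutting every route around $xy$; this is also why the lemma does not survive on surfaces of higher genus.
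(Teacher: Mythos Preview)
Your approach has a genuine gap. You try to prove removability of $xy$ directly by exhibiting an $x$-to-$y$ path in $H\setminus\{xy,p,q\}$, but the paper's own analysis shows that the separating set $\{p,q\}$ can (and in fact \emph{must}) consist of the two immediate neighbors of $x$ on the bridge $R_1\subset C(xy)$. In that situation your detour construction cannot even get started: from $x$ both arcs of $C(xy)$ hit a deleted vertex on the very first step, and condition~2 only supplies escape paths from \emph{neighbors} of $x$ or $y$ on $C(xy)$ (i.e., from $p$ and $q$ themselves), not from $x$. Since $|A|\ge 2$ there is an extra vertex $v$ adjacent to $x$, but $v$ is trapped on the $x$-side of the separating curve and condition~2 says nothing about it. So no third path can be built this way.

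The paper's proof takes a different route that crucially uses the minimality of $H$, which you never invoke. After localizing $S_{xy}=\{a,b\}$ to the two neighbors of $x$ on $R_1$ (this is exactly where condition~2 is used), the paper does not try to contradict the separation directly. Instead it observes, via Theorem~\ref{thm: removableone}, that the edge $xa$ is removable, and then checks that both $x$ and $a$ have degree at least $4$ (condition~2 is used again here to get the fourth edge at $a$). Hence $H\ominus xa$ involves no contractions, so it is still $Q$-triconnected, contradicting minimality of $H$. Your purely path-based argument cannot reproduce this, because without minimality the statement you are trying to prove is actually stronger than what the lemma needs and appears to be false.
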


Taking the face of $H$ guaranteed by Lemma~\ref{lem: exist} as the infinite face, for any path of $T^*$ this drawing guarantees a cycle as given by Lemma~\ref{lem: cycle}.  
Arbitrarily root $T^*$ and let $\cal P$ be a collection of root-to-leaf paths that minimally contains all the edges of $T^*$.  
For path $P \in {\cal P}$, 
let $C_P$ be the cycle that is guaranteed by Lemma~\ref{lem: cycle} which encloses the fewest faces.
By the maximality of $T^*$, the neighbors of $P$'s endpoints in $C_P$ are terminals.
Since $P$ is a path of $T^*$ and $T^*$ is a maximal terminal-free tree and since, by Lemma~\ref{lem: cycle}, $C_P$ strictly encloses only the vertices of $P$, the neighbors of $P$ on $C_P$ are either terminals or vertices of $T^*$.  

We construct $C(T)$ from $\bigcup_{P\in \cal P} C_P$.
Consider any order $P_1, P_2, \dots, P_{|\cal P|}$ of the paths in $\cal P$. 
Let $C^1 = C_{P_1}$ and 
let $C^i$ be the cycle bounding the outer face of 
$C^{i-1} \cup C_{P_{i}}$ for $i = 2, \dots, |P|$. 
Inductively, $C^{i-1}$ bounds a disk, and strictly encloses $P_1, \dots, P_{i-1}$. Also, $C_{P_i}$ bounds a disk that overlaps $C^{i-1}$'s disk. We define $C(T) = C^{|\cal P|}$. It follows that $C(T)$ strictly encloses $T^*$ (giving property (a)). That $C(T)$ encloses the fewest faces will follow from properties (b) and (c).
An example is given in Figure~\ref{fig: treecycle}.

\begin{figure}[ht]
\centering
\includegraphics[scale=1]{structure_treecycle.png}
\caption{Illustration of $C(T)$. The dashed cycle is $C_P$ for $P$ from $l_0$ to $l_1$ and the dotted cycle is $C_{P'}$ for $P'$ from $l_0$ to $l_2$. The outer boundary forms $C(T)$.}\label{fig: treecycle}
\end{figure}
\paragraph*{Property (b)}
We prove the following lemma and Property (b) follows from $C(T) = C^{|{\cal P}|}$.
\begin{lemma}\label{lem: cycles}
$C^i$ strictly encloses only the vertices of $\bigcup_{j\le i}P_j$ and encloses fewest faces.
\end{lemma}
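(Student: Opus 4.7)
The plan is to proceed by induction on $i$. The base case $i=1$ is immediate: $C^1 = C_{P_1}$ is precisely the cycle supplied by Lemma~\ref{lem: cycle}, which strictly encloses only the vertices of $P_1 = \bigcup_{j \le 1} P_j$; and we picked $C_{P_1}$ to enclose the fewest faces.

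For the inductive step, assume $C^{i-1}$ strictly encloses only the vertices of $\bigcup_{j \le i-1} P_j$ and encloses the fewest faces. Let $D^{i-1}$ and $D_{P_i}$ denote the closed disks bounded by $C^{i-1}$ and $C_{P_i}$ that do not contain the outer face of $H$. Because $T^*$ is rooted and $\mathcal P$ consists of root-to-leaf paths, the root of $T^*$ belongs both to $P_i$ and to some $P_j$ with $j<i$, so by the inductive hypothesis and by Lemma~\ref{lem: cycle} the root lies in the interior of both disks. Hence the interiors of $D^{i-1}$ and $D_{P_i}$ overlap, and $C^i$, as the boundary of the outer face of $C^{i-1} \cup C_{P_i}$, bounds a disk $D^i$ that covers $D^{i-1} \cup D_{P_i}$ together with possibly some ``hole'' regions of $C^{i-1} \cup C_{P_i}$ outside $D^{i-1} \cup D_{P_i}$. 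Any vertex $v$ strictly enclosed by $C^i$ lies strictly inside $D^{i-1}$, strictly inside $D_{P_i}$, or strictly inside such a hole. In the first case $v \in \bigcup_{j \le i-1} P_j$ by the inductive hypothesis, and in the second case $v \in P_i$ by Lemma~\ref{lem: cycle}; both give $v \in \bigcup_{j \le i} P_j$.

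To rule out the third case, I would invoke Lemma~\ref{lem: disjoint}: the boundary of any hole consists of a subpath of $C^{i-1}$ and a subpath of $C_{P_i}$ meeting only at their endpoints, so this configuration embeds into the two-nested-cycle hypothesis of Lemma~\ref{lem: disjoint} (with $C^{i-1}$ and $C^i$ playing the roles of the two nested cycles, sharing only the boundary arc from $C^{i-1}$, and with the hole lying strictly between). For any edge $xy$ incident to a vertex strictly inside the hole, the preconditions of Lemma~\ref{lem: disjoint} are met, so $xy$ is removable; but the paths in $\mathcal P$ are terminal-free, so $x$ and $y$ are non-terminals, contradicting Lemma~\ref{lem: deletable}. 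This simultaneously gives $D^i = D^{i-1} \cup D_{P_i}$, from which the fewest-faces claim follows: a hypothetical cycle $C^*$ enclosing $\bigcup_{j \le i} P_j$ with strictly fewer faces than $C^i$ would, by splitting along the interface of $D^{i-1}$ and $D_{P_i}$, yield a cycle enclosing $\bigcup_{j \le i-1} P_j$ with fewer faces than $C^{i-1}$ or a cycle enclosing $P_i$ with fewer faces than $C_{P_i}$, contradicting the inductive hypothesis or Lemma~\ref{lem: cycle}.

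The hardest step is the hole exclusion: one must carefully verify that the preconditions of Lemma~\ref{lem: disjoint} --- in particular condition~(1) that the relevant cycles share \emph{at most one subpath}, and condition~(2) about existence of $z$-to-$C_i$ paths avoiding the separating configuration --- are met by the hole's boundary structure. This amounts to checking that the two enclosing cycles $C^{i-1}$ and $C_{P_i}$ interact in a geometrically controlled way around the root prefix of $P_i$ shared with earlier paths, and invoking the planarity of $H$ to identify the required attachment paths.
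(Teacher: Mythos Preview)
Your inductive setup and the fewest-faces argument match the paper's approach. The genuine gap is in the hole-exclusion step.

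You propose to take an edge $xy$ incident to a vertex strictly inside a hole, show via Lemma~\ref{lem: disjoint} that $xy$ is removable, and then derive a contradiction with Lemma~\ref{lem: deletable} because ``the paths in $\mathcal P$ are terminal-free, so $x$ and $y$ are non-terminals''. But a hole, by your own definition, lies outside both $D^{i-1}$ and $D_{P_i}$; hence any vertex strictly inside a hole is \emph{not} on any $P_j$, and you have no reason to believe its incident edges have non-terminal endpoints. The appeal to Lemma~\ref{lem: deletable} therefore fails. The paper avoids this by choosing the edge $xy$ not inside a hypothetical hole but inside the terminal-free tree $T^*$ itself: specifically, it identifies an $R_4$-to-$R_3$ path $P^* \subseteq \bigcup_{j\le i} P_j$ through the common region of the two disks and takes $xy$ to be its first edge. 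This choice guarantees that $x$ and $y$ are non-terminals, so Lemma~\ref{lem: deletable} gives nonremovability, and the contradiction with Lemma~\ref{lem: disjoint} goes through.

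Separately, you acknowledge that verifying the preconditions of Lemma~\ref{lem: disjoint} is ``the hardest step'' but do not actually carry it out. In the paper this verification is the bulk of the proof: one first establishes a four-arc decomposition $R_1,R_2,R_3,R_4$ of the boundary of the common region, argues that $R_1$ and $R_2$ each contain an edge (so $R_3$ and $R_4$ are disjoint), proves a uniqueness claim about $y$-to-$R_3$ paths, and then spends two further claims pinning down the exact shape of $C(xy)$ and constructing the required $z$-to-$C_i$ escape paths. Only after all this can one name the two nested cycles ($C_1 = C'$ and $C_2 = C^i$, where $C'$ is built from the assumed extra intersection of $S$ with $C^{i-1}$) and check both conditions of Lemma~\ref{lem: disjoint}. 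Your sketch does not supply any of this structure, and for an arbitrary edge inside a hole it is not clear the conditions could be met at all.
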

\begin{proof} 
We prove this lemma by induction by
assuming the lemma is true for $C^{i-1}$.
The base case ($C^1$) follows from Lemma~\ref{lem: cycle}. 
Refer to Figure~\ref{fig: cycles} (a).
Since all the paths in $\cal P$ share the root $r$ of $T^*$ as one endpoint, $C^{i-1}$ and $C_{P_i}$ both strictly enclose $r$ and so must enclose a disk enclosing $r$.
This disk is bounded by a cycle $C$ consisting of four subpaths: two vertex-disjoint subpaths $R_1$ and $R_2$ of $C^{i-1}\cap C_{P_i}$ 
and subpaths $R_3 \subseteq C^{i-1}$ and $R_4 \subseteq C_{P_i}$ each connecting $R_1$ and $R_2$. 

\begin{figure}[ht]
\centering
\includegraphics[scale=1]{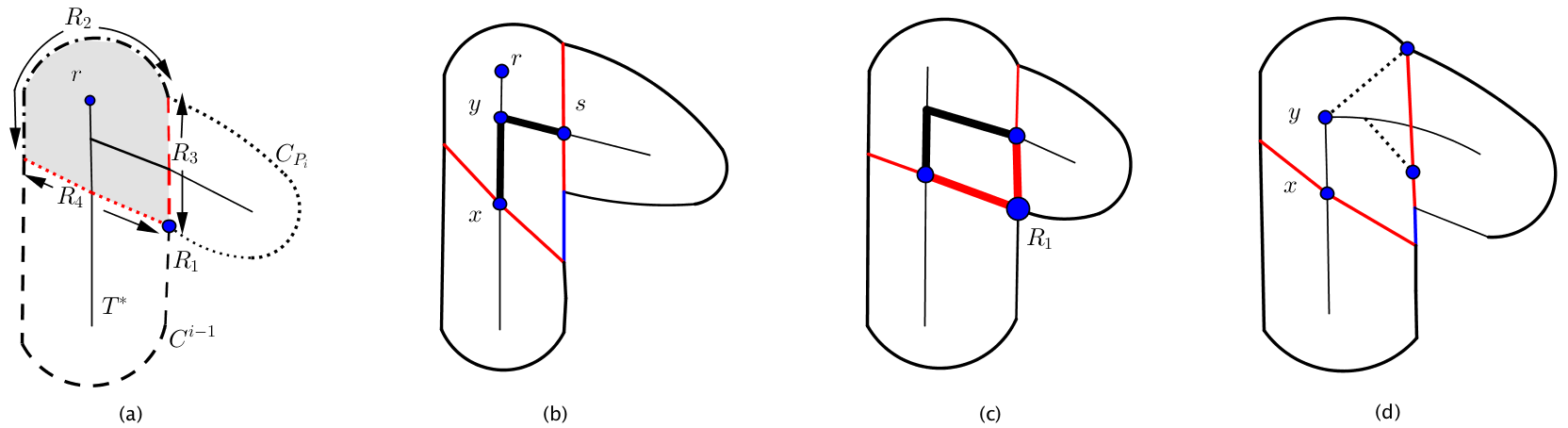}
\caption{Examples for Lemma~\ref{lem: cycles}. 
(a) The dotted cycle is $C_{P_i}$ and the dashed cycle is $C^{i-1}$. 
The two red paths are $R_3$ and $R_4$. $R_1$ is trivial.
The shaded region represents the common faces enclosed by $C^{i-1}$ and $C_{P_i}$. 
(b) The bold path is $P^*$ and $y$ is strictly inside of $C$.
(c) If $R_1$ is a vertex, then the bold cycle only contains one terminal: $R_1$, which has two neighbors in $\bigcup_{j\le i}P_j$.
(d) The dotted paths represent possible $y$-to-$R_3$ paths inside of $C$ that is different from $P_i$.
}\label{fig: cycles}
\end{figure}
Notice that $R_3$ must be enclosed by $C_{P_i}$ and contains a vertex of $T^*$, since $C^{i-1}$ strictly encloses only $\bigcup_{j<i} P_j$ and so is crossed by $P_i$ at some vertex in $R_3$. 
Similarly, $R_4$ must be enclosed by $C^{i-1}$ and contains a vertex in $T^*$.
So there is an $R_4$-to-$R_3$ path $P^*$ in $\bigcup_{j\le i} P_j$ enclosed by $C$.
Let $xy$ be the first edge in $P^*$ with $x\in R_4$
(Figure~\ref{fig: cycles} (b)).
Then we have the following observations about $xy$:
\begin{observation}\label{obs: xy}
Edge $xy$ is nonremovable (since neither $x$ nor $y$ are terminals, Lemma~\ref{lem: deletable}).
\end{observation}
\begin{observation}\label{obs: y}
Vertex $y$ is strictly enclosed by $C$.
\end{observation}
\begin{proof}
Let $s$ be the endpoint of $P^*$ in $R_3$. 
Then $s\in P_i$ since $s$ is an internal vertex of $R_3$ and strictly enclosed by $C_{P_i}$. 
The $r$-to-$s$ path $\Psi_1$ of $T^*$ is strictly enclosed by $C_{P_i}$, so $\Psi_1$ is a subpath of $P_i$.  
The $r$-to-$x$ path $\Psi_2$ of $T^*$ is strictly enclosed by $C^{i-1}$, so $\Psi_2$ is a subpath of $\bigcup_{j<i} P_j$.
Note the lowest common ancestor $LCA_{T^*}(x,s)$ of $x$ and $s$ in $T^*$ is in $\Psi_1\cap \Psi_2$.
So $LCA_{T^*}(x,s)\neq x$ since $LCA_{T^*}(x,s) \in \Psi_1\subseteq P_i$ and 
$LCA_{T^*}(x,s)\neq s$ since $LCA_{T^*}(x,s)\in \Psi_2\subseteq \bigcup_{j<i} P_j$.
See Figure~\ref{fig: cycles} (b). 
It follows $LCA_{T^*}(x,s)$ is strictly enclosed by $C$.
Since $y$ is between $x$ and $LCA_{T^*}(x,s)$ in $P^*$, we know $y\notin R_3$ and the claim follows.
\end{proof}

We argue that $R_1$ and $R_2$ must each contain at least one edge and so it will follow that $R_3$ and $R_4$ are vertex-disjoint.
For a contradiction, assume $R_1$ is a vertex. 
Refer to Figure~\ref{fig: cycles} (c).
Consider the cycle $C^*$ formed by $\bigcup_{j\le i} P_j$, a subpath of $R_3$, $R_1$ and a subpath of $R_4$. 
By Theorem~\ref{thm: minimal}, $C^*$ must contain a terminal. 
By construction, any terminal in $C^*$ must be in $R_1$. So $R_1$ is a terminal $t$.
Since $t$ is the crossing of $C^{i-1}$ and $C_{P_i}$, $t$'s degree is at least four. 
By Lemma~\ref{lem: oneterminal}, there is an edge in $C$ incident to $t$ that is removable, which contradicts the minimality of $H$. 
The same argument holds for $R_2$.
By the same reasoning, we have the following observation.
\begin{observation}\label{obs: neighbor}
Any vertex in $(C^{i-1}\cup C_{P_i}) \setminus \bigcup_{j\le i} P_j$ has at most one neighbor in $\bigcup_{j\le i} P_i$.
\end{observation}
Next, we will prove by contradiction that the subpath $S = C_{P_i} \setminus E( \bigcup_j R_j )$ only shares endpoints with $C^{i-1}$.
This will imply that $C^i$ strictly encloses only vertices of $P_i$ and vertices strictly inside of $C^{i-1}$.
Further, this will also imply that $C^i$ encloses fewest faces.
If there is another cycle $C^I$ that strictly encloses all vertices of $\bigcup_{j\le i} P_j$ and fewer faces than $C^i$, then there is some face that is enclosed by $C^i$ but not enclosed by $C^I$.
If that face is inside of $C^{i-1}$, then $C^{i-1}$ is not the cycle that encloses $\bigcup_{j<i} P_j$ and fewest faces, contradicting our inductive hypothesis; 
if that face is inside of $C_{P_i}$, then $C_{P_i}$ is not the cycle that encloses $P_i$ and fewest face, contradicting our choice of $C_{P_i}$.
So there can not be such cycle $C^I$, giving the lemma.

To prove that $S$ only shares endpoints with $C^{i-1}$, we first have some claims.

\begin{claim}\label{clm: R3}
The $y$-to-$R_3$ subpath of $P_i$
is the only one $y$-to-$R_3$ path whose edges are strictly enclosed by $C$.
\end{claim}
\begin{proof}
By Observation~\ref{obs: y}, $y$ is strictly enclosed by $C$. 
For a contradiction, assume there is another $y$-to-$R_3$ path $\Psi$ enclosed by $C$ that is not a subpath of $P_i$.
Notice that the endpoint of $\Psi$ in $R_3$ can not be an endpoint of $R_3$, for otherwise this endpoint of $R_3$ has two neighbors in $\bigcup_{j\le i} P_j$, contradicting Observation~\ref{obs: neighbor}.
Then we know $\Psi\subseteq T^*$, 
and we have two $y$-to-$R_3$ paths in $T^*$, which together with $R_3$ form a cycle without any terminals, contradicting Theorem~\ref{thm: minimal}.
See Figure~\ref{fig: cycles} (d).
\end{proof}

\begin{figure}[ht]
\centering
\includegraphics[scale=1]{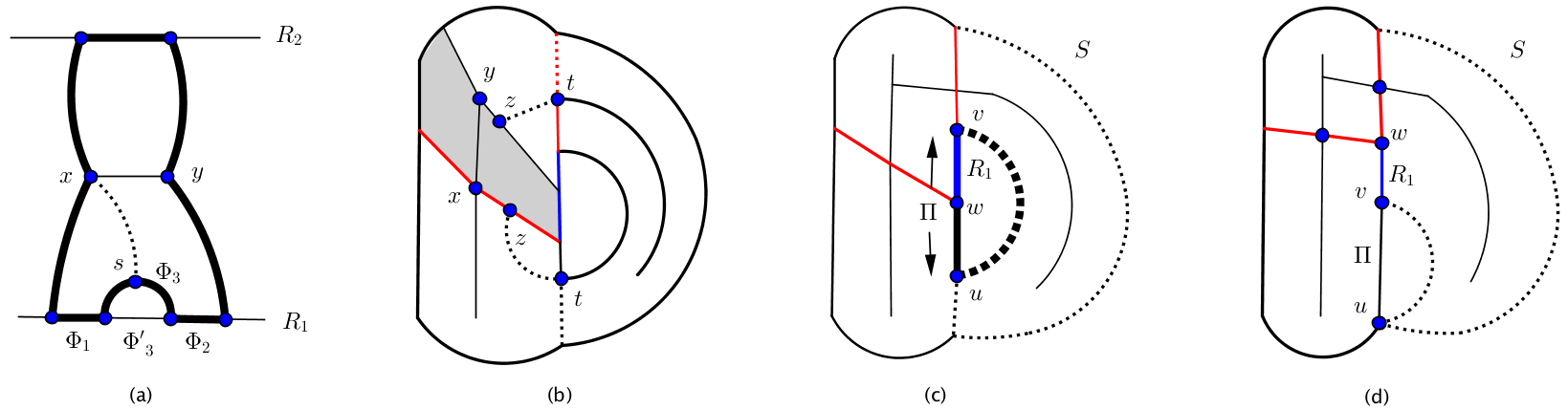}
\caption{Examples for Lemma~\ref{lem: cycles}. 
(a) The bold cycle is $C(xy)$.
(b) $C(xy)$ bounds the shaded region contains. The dotted paths are some possible examples for $\Phi$.
(c) The dotted path is $S$, which shares a subpath with $C^{i-1}$.
The bold cycle is $C'$. $R_3$ and $R_4$ are disjoint. $\Pi$ contains $R_1$.
(d) $\Pi$ does not contain $R_1$, and the endpoint $w$ of $R_1$ has two neighbors in $\bigcup_{j\le i} P_j$. 
}\label{fig: neighbors}
\end{figure}
\begin{claim}\label{clm: condition1}
$C(xy)$ consists of two vertex-disjoint $R_1$-to-$R_2$ paths, a subpath of $R_1$ and a subpath of $R_2$.
\end{claim}
\begin{proof}
We show the two $x$-to-$y$ subpaths of $C(xy)$ share at least an edge with $R_1$ and $R_2$ respectively, that implies $C(xy)$ must contain two disjoint $R_1$-to-$R_2$ paths: one through $x$ and the other through $y$. 

We first argue that the two $x$-to-$y$ paths in $C(xy)$ must each contain at least one terminal in $R_1\cup R_2$.
$C(xy)\cup xy$ contains two cycles whose intersection is $xy$. 
By Theorem~\ref{thm: minimal}, each of these cycles must contain a terminal.
Since neither $x$ nor $y$ are terminals, each of the $x$-to-$y$ paths in $C(xy)$ must contain a terminal.
Note that $C(xy)$ is enclosed by $C$ and 
there is no terminal strictly enclosed by $C$.
Further, since $R_3$ is enclosed by $C_{P_i}$, all its internal vertices are not terminals. Similarly, all internal vertices of $R_4$ are not terminals.
So all the terminals in $C$ are either in $R_1$ or $R_2$, and each $x$-to-$y$ path in $C(xy)$ contains at least one terminal in $R_1\cup R_2$.

Next, we show the two terminals can not be both in $R_1$ or $R_2$, 
which implies the two paths must share vertices with $R_1$ and $R_2$ respectively.
Notice that $P^*$ divides the region inside of $C$ into two parts, each of which has $R_1$ and $R_2$ in the bounding cycle respectively. 
If $C(xy)$ only contains vertices in $R_1$ or $R_2$, then $C(xy)$ must cross $P^*$ and there is another cycle that encloses $xy$ and fewer faces than $C(xy)$, contradicting the definition of $C(xy)$.
Then $C(xy)$ must contain two $R_1$-to-$R_2$ paths: one through $x$ and the other through $y$. 

Both of $C(xy)\cap R_1$ and $C(xy)\cap R_2$ can not be a vertex, for otherwise the vertex has two neighbors in $\bigcup_{j \le i}P_j$, contradicting Observation~\ref{obs: neighbor}. So the two $R_1$-to-$R_2$ paths are vertex disjoint.  

We then argue that $C(xy)$ could only share one subpath with $R_1$, and the same argument holds for $R_2$.
For a contradiction, assume there are more than one subpath of $C(xy)$ in $R_1$. 
Then let $\Phi_1$ and $\Phi_2$ be two such subpaths that are connected by a subpath $\Phi_3$ of $C(xy)$ that is not in $R_1$.
Refer to Figure~\ref{fig: neighbors} (a).
Notice that $\Phi_3$ is enclosed by a cycle $C'$ consisting of $R_1$, an $x$-to-$R_1$ subpath of $C(xy)$, an $y$-to-$R_1$ subpath of $C(xy)$ and edge $xy$.
Let $\Phi'_3$ be the subpath of $R_1$ that has the same endpoints as $\Phi_3$.
If $\Phi_3$ is an edge, then $(C^{i-1} \cup \Phi_3) \setminus \Phi'_3$ is a cycle that strictly encloses $\bigcup_{j<i} P_j$ and fewer faces than $C^{i-1}$, contradicting our inductive hypothesis.
If $\Phi_3$ contains an internal vertex $s$, then $s\in \bigcup_{j<i} P_j$ since it is strictly enclosed by $C^{i-1}$.
So there is an $s$-to-$x$ path in $\bigcup_{j<i} P_j$ whose edges must be enclosed by $C'$.
And then by replacing the $x$-to-$R_1$ subpath of $C(xy)$ with the $s$-to-$x$ path, we obtain a cycle that encloses $xy$ and fewer faces than $C(xy)$, contradicting the definition of $C(xy)$.
\end{proof}

\begin{claim}\label{clm: condition2}
For every neighbor $z$ of $xy$ in $C(xy)\setminus \{R_1, R_2\}$, there is a $z$-to-$(C^i\cup R_1)$ path that shares only $z$ with $C(xy)$.
\end{claim}
\begin{proof}
If any neighbor $z$ of $xy$ in $C(xy)$ is not in $R_1\cup R_2$, then 
$z$ must be on an $R_1$-to-$R_2$ subpath of $C(xy)$, whose internal vertices are in $\bigcup_{j\le i} P_j$. 
We construct a $z$-to-$(C^i\cup R_1)$ path $\Phi$ as follows: first find a $z$-to-$C^{i-1}$ subpath
$\Phi_1$ through $\bigcup_{j< i}P_j$, and then find a subpath $\Phi_2$ of $C^{i-1}$ that connects $\Phi_1$ with $C^i\cup R_1$ and is disjoint with $C(xy)$.
Let $t$ be the endpoint of $\Phi_1$ in $C^{i-1}$.
If $z\in C^{i-1}$, then $\Phi_1$ is empty and $t=z$.
If $t\in C^i\cup R_1$, then $\Phi_2$ is empty.
We define $\Phi = \Phi_1 \cup \Phi_2$.
In the following, we first argue that if $\Phi_1$ is not empty then it only shares $z$ with $C(xy)$, and then show that if $\Phi_2$ is not empty, then $\Phi$ only shares $z$ with $C(xy)$.

Assume $\Phi_1$ is not empty. Since $H$ is triconnected and $C(xy)$ encloses fewest faces, there exists a path $\Phi_1$ from $z$ to $C^{i-1}$ that is outside of $C(xy)$. 
Since $(\Phi_1 \setminus \{t\}) \subseteq \bigcup_{j<i} P_j$, $\Phi_1$ can not share any internal vertex with $C(xy)$.
Further, $\Phi_1$ and $C(xy)$ can not share $t$, for otherwise $t$ will be an endpoint of $C(xy)\cap R_1$ or $C(xy)\cap R_2$ and it has two neighbors in $\bigcup_{j\le i}P_j$: one via $\Phi_1$ and the other via $C(xy)$, contradicting Observation~\ref{obs: neighbor}.
So $\Phi_1$ only shares $z$ with $C(xy)$.

Consider the possible position of $t$ in $C^{i-1}$. If it is in $C^i\cup R_1$, then $\Phi_2$ is empty. 
If not, then it will be strictly enclosed by $C^i$. 
Refer to Figure~\ref{fig: neighbors} (b). 
There are two cases. 
\begin{description}
\item [Case 1.] If $t$ is in $R_3$, then we choose as $\Phi_2$ a subpath of $R_3$. 

Note that in this case $z$ could only be $y$'s neighbor by planarity and Claim~\ref{clm: condition1}.
Consider the position of $z$. 

If $z\notin R_3$, we argue 
$R_3$ and $C(xy)$ are vertex-disjoint, 
which will imply that there always exists a $t$-to-$C^i$ subpath of $R_3$.
For a contradiction, assume $R_3$ and $C(xy)$ are not disjoint.
Then $C(xy)$ must contains a $y$-to-$R_3$ subpath by planarity.
Further, $\{yz\} \cup \Phi_1$ witnesses another $y$-to-$R_3$ path whose edges are strictly enclosed by $C$, contradicting Claim~\ref{clm: R3}.
So $R_3$ and $C(xy)$ are vertex-disjoint.

If $z\in R_3$, we argue $R_3$ contains a $z$-to-$(C^i\cup R_1)$ subpath, which shares only $z$ with $C(xy)$.
By Claim~\ref{clm: condition1}, $C(xy)$ contains two $R_1$-to-$R_2$ paths, one of which contains $y$.
Then there are two $y$-to-$C$ subpaths of $C(xy)$: an $y$-to-$R_1$ subpath and an $y$-to-$R_2$ subpath, which are in distinct regions divided by $P^*$ and enclosed by $C$.
By Claim~\ref{clm: R3}, there is only one $y$-to-$R_3$ path whose edges are strictly enclosed by $C$. 
So only one of the $y$-to-$C$ subpath of $C(xy)$ shares vertices with $R_3$, and the other one is vertex-disjoint with $R_3$.
If the $y$-to-$R_i$ subpath is disjoint with $R_3$ for $i=1$ or $i=2$,
then there is a $z$-to-$R_i$ subpath of $R_3$ that shares only $z$ with $C(xy)$.
Since $R_2\subseteq C^i$, we have the $z$-to-$(C^i\cup R_1)$ subpath of $R_3$ that shares only $z$ with $C(xy)$.

\item [Case 2.] If $t$ is not in $R_3$, then we choose as $\Phi_2$ the $t$-to-$C^i$ subpath of $C^{i-1}$ that is enclosed in $C^i$. 
This subpath always exists and is disjoint with $C(xy)$. 
\end{description}
\end{proof}

Now we prove that $S$ only shares endpoints with $C^{i-1}$.
For a contradiction, assume $S$ has an internal vertex that is in $C^{i-1}$. 
Then by construction $C^i$ would enclose either $R_1$ or $R_2$;
w.l.o.g.~assume $R_1$ is enclosed by $C^i$.
Let $\Pi$ be the minimal subpath of $C^{i-1}$ that is enclosed by $C^i$ and connects an internal vertex $u$ of $S$ with the common endpoint $v$ of $R_1$ and $S$.
Let $w$ be the other endpoint of $R_1$.
There is a simple cycle $C'$ consisting of two $u$-to-$v$ subpaths: one is $\Pi$ and the other is a subpath of $S$.
See Figure~\ref{fig: neighbors} (c). 
Further, $\Pi$, and also $C'$, contains $R_1$, for otherwise $R_3$ and $R_4$ share $w$ as an endpoint and $w$ has two neighbors in $\bigcup_{j\le i}P_j$: one via $R_3$ and the other via $R_4$, contradicting Observation~\ref{obs: neighbor}. 
See Figure~\ref{fig: neighbors} (d).
It follows that $R_3$ and $R_4$ are vertex disjoint $C'$-to-$C^i$ paths.

We argue $H$ and $xy$ satisfy the conditions of Lemma~\ref{lem: disjoint} with $C_1 = C'$ and $C_2 = C^i$, which shows $xy$ is removable, giving a contradiction to Observation~\ref{obs: xy}:
\begin{description}
\item [Condition 1.]
Note that $C_1$ and $C_2$ could only share at most one vertex which is $u$ and shown in Figure~\ref{fig: neighbors} (d). 
Since $C_{P_i}$ is simple, $u\neq w$, so $u\notin C$. 
Since $C(xy)$ is enclosed by $C$, $u\notin C(xy)$. 
So $C(xy)$ does not contain the common vertex of $C_1$ and $C_2$. 
Then the first condition in Lemma~\ref{lem: disjoint} follows from
Claim~\ref{clm: condition1}.
\item [Condition 2.]
Since $R_1\subseteq C_1$, the second condition follows from Claim~\ref{clm: condition2}. 
Note that the $z$-to-$C^i$ path $\Phi$ constructed in Claim~\ref{clm: condition2} may contain a $z$-to-$C_1$ subpath, which shares only $z$ with $C(xy)$.
\end{description}
\end{proof}

\paragraph*{$T$ is a tree} To prove this, we show that when $T$ is a tree, $H$ is $Q_3$-triconnected. That is, for any pair of terminals $x$ and $y$, there are three $x$-to-$y$ internally vertex-disjoint paths only one of which contains internal vertices of $T$ when $T$ is a tree. Since $T$ is connected, this implies $T$ can not contain any cycle, for otherwise $H$ is not minimal. The proof is similar to that of Lemma~\ref{lem: disjointpaths} but simpler: we modify the paths between $x$ and $y$ such that only one of them contains internal vertices of $T$ while maintaining them internally vertex-disjoint. 

Let $R_1, R_2$ and $R_3$ be three $x$-to-$y$ disjoint paths. 
If there is only one path containing internal vertices of $T$, then 
it is sufficient for $T$ to be simply connected
for triconnectivity between $x$ and $y$. 
So we assume there are at least two paths containing internal vertices of $T$.
By Lemma~\ref{lem: cycles}, $C(T)$ strictly encloses all internal vertices of $T$, so any path that contains an internal vertex of $T$ must touch $C(T)$. 
We order the vertices of the three paths from $x$ to $y$.
Let $u_i$ and $v_i$ be the first and last vertex of $R_i$ that is in $C(T)$ for $i=1,2,3$. If $R_i$ is disjoint from $C(T)$, we say $u_i$ and $v_i$ is undefined. Let $C_1$ and $C_2$ be the disjoint minimum paths of $C(T)$ that connect $\{u_1, u_2, u_3\}$ and $\{v_1, v_2, v_3\}$.
If there are only two paths, say $R_1$ and $R_2$, containing vertices of $C(T)$, we can replace $R_1[u_1, v_1]\cup R_2[u_2, v_2]$ with $C_1 \cup C_2$. Then the new paths are disjoint and do not contain any internal vertex of $T$. If all the three paths contain vertices of $C(T)$, then 
let $C_3$ be the simple path from the middle vertex of $\{u_1, u_2, u_3\}$ to the middle vertex of $\{v_1, v_2, v_3\}$ in $(C(T)\cup T) \setminus (C_1\cup C_2)$.  Now the to-$C(T)$ prefices of $R_i$, $C_i$ and the from-$C(T)$ suffices of $R_i$ (for $i=1,2,3$) together form three vertex-disjoint paths between $x$ and $y$. Further, among the three resulting paths, only the path containing $C_3$ contains internal vertices of $T$. 
Therefore, it is sufficient for $T$ to be simply connected for $H$ to be $Q_3$-triconnected.

\paragraph*{Property~(c)}
By triconnectivity, every leaf of $T^*$ has at least two neighbors on $T$ that are terminals.
So each leaf of $T^*$ can not be a leaf of $T$ and then all leaves of $T$ are terminals.
By Lemma~\ref{lem: cycles}, 
$C(T)$ only strictly encloses all vertices of $T^*$.
So all its neighbors, which are terminals in $T$ by the maximality of $T^*$, are on $C(T)$, giving property (c).

Since each terminal-bounded component is a tree, any terminal on $T$ must be a leaf by the construction of terminal-bounded component.
Therefore, we have the following lemma.
\begin{lemma}\label{lem: leaf}
A vertex of a terminal-bounded tree $T$ is a terminal if and only if it is a leaf of $T$.
\end{lemma}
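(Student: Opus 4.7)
The plan is to prove the two implications separately. The forward direction, that every leaf of $T$ is a terminal, is already an immediate consequence of Property~(c) of the Tree Cycle Theorem, combined with the observation in its proof that each leaf of $T^*$ has at least two terminal neighbors in $T$ (so has degree at least two in $T$) and each internal vertex of $T^*$ already has degree at least two in $T^*$. Hence any leaf of $T$ must be one of the terminal neighbors of $T^*$ attached during the construction, and so must itself be a terminal. I would simply cite Property~(c) for this direction.

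For the reverse direction, that every terminal of $T$ is a leaf of $T$, I would unwrap the definition of a terminal-bounded component. Fix a terminal $t$ in $T$. If $T$ is the degenerate case, namely a single edge between two terminals, then $t$ trivially has degree one. Otherwise $T$ is obtained from a maximal terminal-free tree $T^*$ by attaching the edges from $T^*$ to each of its neighbors, all of which are terminals by maximality of $T^*$. In this nontrivial case, every edge of $T$ incident to $t$ is one of these attaching edges and therefore has its other endpoint in $V(T^*)$. If $t$ had degree at least two in $T$, it would have at least two neighbors $u, v \in V(T^*)$; the two edges $tu$, $tv$ together with a $u$-to-$v$ path inside the connected tree $T^*$ would then form a cycle in $T$. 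This contradicts the fact, established in the paragraph ``$T$ is a tree'' above, that $T$ is a tree. Hence $t$ has degree one in $T$ and is a leaf.

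There is no real obstacle here: the lemma is a direct corollary of the construction of terminal-bounded components and the already-established fact that each such component is a tree, together with Property~(c) of the Tree Cycle Theorem. The only subtlety to watch is that the definition of a terminal-bounded component splits into two cases, so the trivial single-edge case should be dispatched separately before invoking the $T^*$-based argument.
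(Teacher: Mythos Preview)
Your proposal is correct and follows essentially the same approach as the paper. The paper's argument is simply more terse: for ``leaf $\Rightarrow$ terminal'' it uses exactly the observation you cite (in the Property~(c) paragraph, that leaves of $T^*$ have at least two terminal neighbors, so only the attached terminals can be leaves of $T$), and for ``terminal $\Rightarrow$ leaf'' it says in one line ``since each terminal-bounded component is a tree, any terminal on $T$ must be a leaf by the construction of terminal-bounded component'' --- which is precisely your cycle argument unpacked. One small wording point: Property~(c) itself only asserts that the leaves of $T$ lie on $C(T)$, not that they are terminals; the fact you actually need is stated in the paragraph proving Property~(c), so cite that paragraph rather than the property.
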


\paragraph*{Property~(d)}
Let $T_1$ be any terminal-bounded tree. 
For a contradiction, assume there is a terminal-bounded tree $T_2$ whose vertices are in distinct maximal terminal-free paths of $C(T_1)$. 
By Lemma~\ref{lem: leaf}, the leaves of $T_2$ are terminals.
So each component of $C(T_1) \cap T_2$ is a terminal-to-terminal path and contains at most one maximal terminal-free path. 
Then by the assumption for the contradiction, there are two vertex-disjoint non-trivial components (paths) $\Pi_1$ and $\Pi_2$ of $C(T_1) \cap T_2$.

Since $C(T_2)$ strictly encloses only internal vertices of $T_2$, the interiors of $C(T_2)$ and $C(T_1)$ overlap, and so $C(T_2)\cap T_1 \neq \emptyset$. 
Then $C(T_2)\cap T_1$ contains only paths whose endpoints are terminals on $C(T_1)$.
Let $\cal R$ be the set consisting of all maximal paths of $C(T_2)\cap T_1$.
Refer to Figure~\ref{fig: simplecycle} (a) and (b).
Since $C(T_2)$ is simple, and since the endpoints of paths in $\cal R$ are leaves of $T_1$, we have
\begin{observation}\label{obs: disjoint}
Any two paths in $\cal R$ are vertex disjoint. 
\end{observation}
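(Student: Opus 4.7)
My plan is to prove this observation by contradiction, combining the simplicity of the cycle $C(T_2)$ with a characterization of the endpoints of paths in $\mathcal{R}$ as leaves of $T_1$.

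First I would show that every endpoint of every path $R \in \mathcal{R}$ is a leaf of $T_1$. A path $R$ is a maximal subpath of $C(T_2)$ whose edges all lie in $T_1$, so at an endpoint $v$ of $R$ the other edge of $C(T_2)$ incident to $v$ (which exists because $C(T_2)$ is a cycle and has no true endpoints) must belong to some terminal-bounded tree different from $T_1$. But two distinct terminal-bounded trees can share only terminals, since each non-terminal vertex lies in a unique maximal terminal-free subtree by construction of the terminal-bounded decomposition. Hence $v$ is a terminal, and by Lemma~\ref{lem: leaf} $v$ is a leaf of $T_1$.

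Next, suppose for contradiction that two distinct $R_1, R_2 \in \mathcal{R}$ share a vertex $v$. Because $C(T_2)$ is simple, it visits $v$ exactly once and contributes exactly two edges at $v$. If $v$ were an internal vertex of one of the paths, say $R_1$, then both of these edges would already lie in $R_1$, leaving no edge of $C(T_2)$ at $v$ available for $R_2$; so $v$ must be an endpoint of both $R_1$ and $R_2$. By the first step, $v$ is then a leaf of $T_1$ and has exactly one incident edge in $T_1$. But being an endpoint of two distinct maximal paths in $C(T_2) \cap T_1$ requires two distinct $T_1$-edges at $v$, one entering the interior of each path, a contradiction.

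The step I expect to require the most care is the first: pinning down why endpoints of paths in $\mathcal{R}$ are leaves of $T_1$. It hinges on the terminal-bounded decomposition (non-terminals are uniquely assigned to one component) together with Lemma~\ref{lem: leaf}, which was just established as part of the proof of property~(c). Once those pieces are in place, the simplicity of $C(T_2)$ and the tree structure of $T_1$ make the remaining contradiction immediate.
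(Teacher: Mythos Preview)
Your proposal is correct and follows essentially the same approach as the paper. The paper's entire justification is the sentence immediately preceding the observation---``Since $C(T_2)$ is simple, and since the endpoints of paths in $\mathcal{R}$ are leaves of $T_1$''---and your proof is a faithful unpacking of exactly those two ingredients, including the argument (which the paper states a few lines earlier) that endpoints of paths in $\mathcal{R}$ are terminals and hence leaves of $T_1$.
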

Note that if $T_1$ is an edge or a star, $|{\cal R}| \le 1$ and so we would already have our contradiction. 

For any path $R_i \in \cal R$ with endpoints $u_i$ and $v_i$, there is an $u_i$-to-$v_i$ subpath $R'_i$ of $C(T_1)$ 
such that $R_i$ and $R'_i$ form a cycle enclosing a region that 
is enclosed by both of $C(T_1)$ and $C(T_2)$.
Since $C(T_1)$ and $C(T_2)$ only strictly enclose edges of $T_1$ and $T_2$ respectively, and since $T_1\neq T_2$, this region must be a face. 
Notice that any path of $C(T_1)\cap T_2$ could only be subpath of $R'_i$ for some $R_i\in \cal R$ for $C(T_2)$ encloses $T_2$.
By the following observation, there exist $R_1$ and $R_2$ such that $\Pi_1\subseteq R'_1$ and $\Pi_2 \subseteq R'_2$. 
\begin{figure}[ht]
\centering
\includegraphics[scale=0.9]{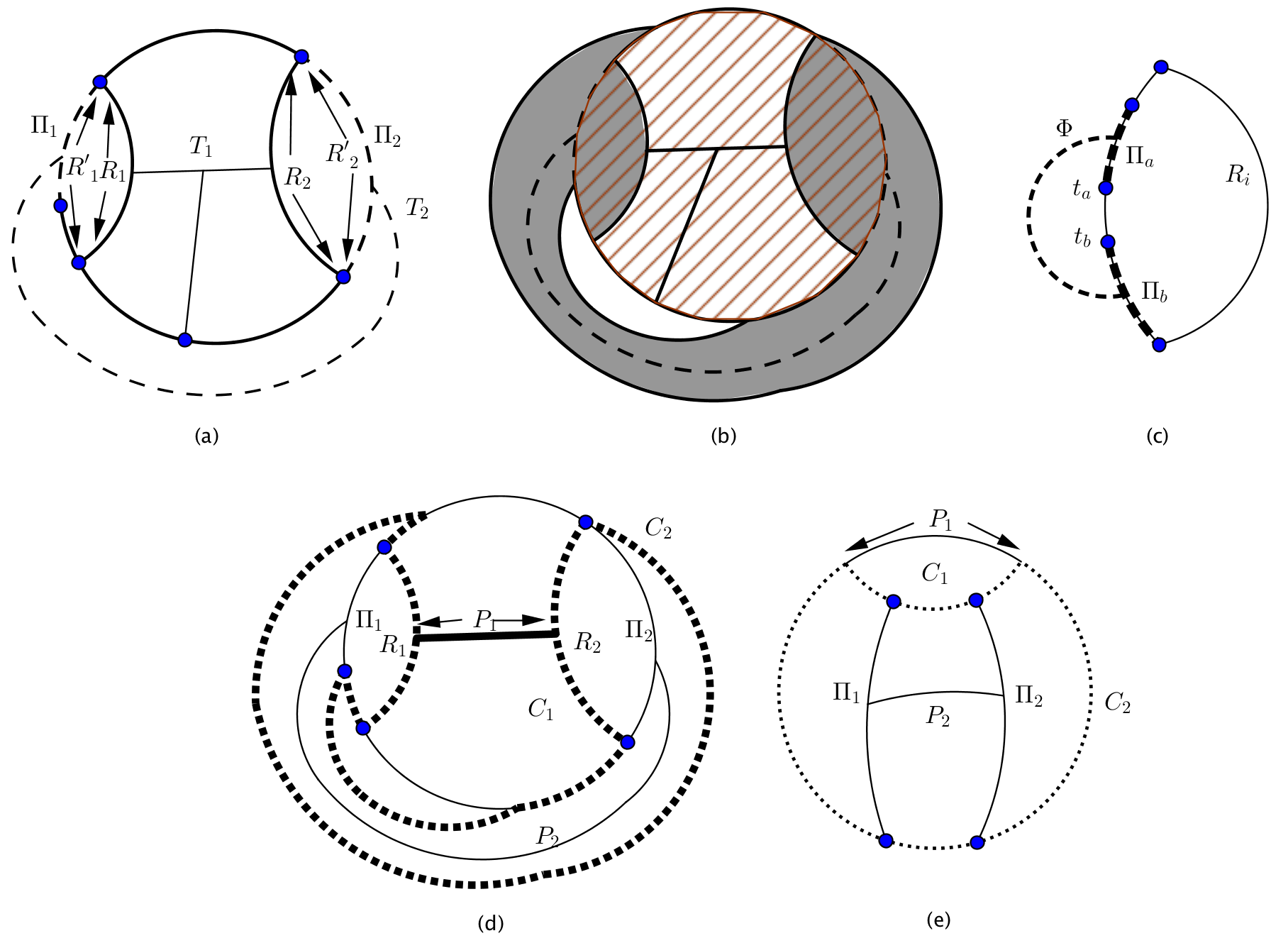}
\caption{
(a) The dashed tree is $T_2$ and there are two paths $\Pi_1$ and $\Pi_2$ of $T_2$ on $C(T_1)$. The blue vertices are terminals.
(b) The two shaded cycles are $(T_1)$ and $C(T_2)$, which shares two regions enclosed by $R_i$ and $R'_i$ for $i=1,2$.
(c) The dashed subtree is in $T_2$. $R'_i$ contains two subpaths of $T_2$ and two terminals $t_a$ and $t_b$ are enclosed.
(d) The dotted cycle is $C(T_2)$. 
The bold cycles $C_1$ and $C_2$ share $P_1$.
(e) An example for Simplified $C_1$ and $C_2$ in (d). The dotted cycle is $C(T_2)$ and the outer cycle is $C_2$. $C_1$ and $C_2$ share $P_1$.
}\label{fig: simplecycle}
\end{figure}
\begin{observation}\label{obs: onepath}
Every $R'_i$ contains at most one maximal path of $T_2$.
\end{observation}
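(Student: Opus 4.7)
The plan is to prove Observation~\ref{obs: onepath} by contradiction: suppose $R'_i$ contains two distinct maximal paths $\Pi_a$ and $\Pi_b$ of $T_2$, separated on $R'_i$ by a subpath $S$ of $R'_i$ whose endpoints are leaves of $T_2$ (hence terminals, by Lemma~\ref{lem: leaf}). Since $T_2$ has been shown to be a tree, there is a unique path $Q$ in $T_2$ from $\Pi_a$ to $\Pi_b$ whose internal edges avoid $R'_i$.

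I would first pin down where $Q$ sits in the plane. By property~(a) applied to $T_2$, the internal vertices of $Q$ are strictly enclosed by $C(T_2)$, and by property~(b) applied to $T_1$, these vertices cannot be strictly enclosed by $C(T_1)$ (else they would lie on $T_1 \ne T_2$). Also, the region bounded by $R_i \cup R'_i$ is a single face, so $Q$ cannot traverse it. Combining these facts, $Q$ must lie entirely outside $C(T_1)$, while $R_i$ lies strictly inside $C(T_1)$ (being a subpath of $T_1$ with internal vertices non-terminal). Consequently, the cycle $C(T_2)$ crosses $C(T_1)$ exactly at $u_i$ and $v_i$, and the structure of $T_2$ straddles $C(T_1)$ with $\Pi_a, \Pi_b$ on $C(T_1)$ and $Q$ on the outer side.

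Next I would exploit this straddling configuration to build two nested cycles of the kind handled by Lemma~\ref{lem: disjoint}, mirroring the construction depicted in Figure~\ref{fig: simplecycle}(d): let $C_2$ be the outer cycle formed by $R_i$, portions of $\Pi_a, \Pi_b$, and $Q$, and let $C_1$ be a smaller inner cycle formed by a suitable subpath of $C(T_2)$ together with a shared subpath $P_1 \subseteq T_2$. By Observation~\ref{obs: disjoint} and the fact that the paths in $\cal R$ are vertex-disjoint, these cycles will satisfy the hypotheses of Lemma~\ref{lem: disjoint}. Selecting an edge $xy$ on $Q$ with both endpoints internal to $T_2$ (so neither is a terminal), the lemma would certify $xy$ as removable, contradicting Lemma~\ref{lem: deletable}.

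The main obstacle will be verifying the two hypotheses of Lemma~\ref{lem: disjoint} for a well-chosen edge $xy$ of $Q$ and for carefully constructed $C_1, C_2$: ensuring that $C(xy)$ is disjoint from the shared subpath of $C_1$ and $C_2$, splits correctly into two vertex-disjoint $C_1$-to-$C_2$ paths plus subpath pieces, and that each neighbor of $xy$ in $C(xy) \setminus \{C_1, C_2\}$ has a path to one of the $C_i$ touching $C(xy)$ only at that neighbor. This will require using the tree structure of $T_2$ (which yields $z$-to-$C_1$ paths through $T_2$ itself) and Observation~\ref{obs: disjoint} together with property~(b) applied to both $T_1$ and $T_2$ to preclude extraneous vertices inside $C(xy)$. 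With the hypotheses verified, removability of $xy$ yields the contradiction and establishes the observation.
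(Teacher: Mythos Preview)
Your plan has a concrete gap. You propose to take $xy$ on the $T_2$-path $Q$ and then apply Lemma~\ref{lem: disjoint} with an outer cycle $C_2$ that you build from $R_i$, pieces of $\Pi_a,\Pi_b$, and $Q$ itself. But Lemma~\ref{lem: disjoint} requires $xy$ to be \emph{strictly enclosed} by $C_2$; if $Q$ is part of $C_2$, then $xy\in Q$ lies on $C_2$, not strictly inside it, and the lemma does not apply. Your description of $C_1$ (``a suitable subpath of $C(T_2)$ together with a shared subpath $P_1\subseteq T_2$'') is too vague to see how you would repair this, and there is no obvious alternative edge to remove: the only edges naturally trapped between $R_i$ and $Q$ are on $R'_i\subseteq C(T_1)$, and those may be incident to terminals. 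In effect you are trying to import the machinery from the \emph{main} Property~(d) argument (which uses two distinct $R_1,R_2\in\mathcal R$ and a $T_1$-path $P_1$ between them to build the nested cycles), but that setup is unavailable here precisely because the observation is what later guarantees $\Pi_1,\Pi_2$ land in different $R'_i$'s.

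The paper's proof avoids removable edges entirely and is much shorter. Take two successive maximal $T_2$-paths $\Pi_a,\Pi_b$ on $R'_i$ and the terminal-free $T_2$-path $\Phi$ joining them (your $Q$). Then $\Phi$ together with $R_i$ and two subpaths of $R'_i$ bound a cycle $C$ that strictly encloses a leaf $t_a$ of $T_2$ (one of the inner endpoints of $\Pi_a$). By Property~(c), $t_a\in C(T_2)$, so some arc $P$ of $C(T_2)$ through $t_a$ is enclosed by $C$. Now split the region inside $C$ by the subpath $\Phi'\subseteq R'_i$ with the same endpoints as $\Phi$: one side is the face bounded by $R_i\cup R'_i$, so $P$ is confined to the $\Phi$--$\Phi'$ side. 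But $P$ cannot cross $\Phi$ (every vertex of $\Phi$ is an internal vertex of $T_2$, strictly inside $C(T_2)$ by Property~(a)) and cannot cross an internal vertex of $\Phi'$ (it would enter the face). Hence $P$ cannot reach $R_i\subseteq C(T_2)$, contradicting that $C(T_2)$ is a single cycle. This is a pure topological trapping argument using only Properties~(a)--(c) already established; no removable-edge lemma is needed.
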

\begin{proof}
For a contradiction, assume there is a path $R'_i$ that contains more than one maximal path of $T_2$. 
Let $\Pi_a$ and $\Pi_b$ be two successive such paths.
By the concept of terminal-bounded tree, there is a terminal-free path $\Phi$ of $T_2$ connecting $\Pi_a$ with $\Pi_b$ that only share endpoints with $C(T_1)$.
Refer to Figure~\ref{fig: simplecycle} (c).
So $\Phi$, $R_i$ and $R'_i$ (which contains two $R_i$-to-$\Phi$ paths) witness a cycle $C$ that strictly encloses an endpoint $t_a$ of $\Pi_a$ and an endpoint $t_b$ of $\Pi_b$.
Since the endpoints of $\Pi_a$ and $\Pi_b$ are leaves of $T_2$,
$t_a$ and $t_b$ should be in $C(T_2)$ by Property~(c).
So there is a subpath $P$ of $C(T_2)$ that is enclosed by $C$ such that $t_a\in P$.
Since the region enclosed by $C$ is divided into two parts by a subpath $\Phi'$ of $R'_i$ (which have the same endpoints as $\Phi$), and since one of the two parts is the face enclosed by $R_i$ and $R'_i$, we know
$P$ could only be in the region bounded by $\Phi'$ and $\Phi$.
However, $P$ can not cross $\Phi$ by Property~(a) since $\Phi$ is terminal-free and every vertex of $\Phi$ is an internal vertex of $T_2$; and $P$ can not cross any internal vertex of $\Phi'$ for otherwise $P$ will enter the face enclosed by $R_i$ and $R'_i$.
Therefore, $P$ and $R_i$ can not be connected, contradicting $P$ is a subpath of $C(T_2)$.
\end{proof}

Next, we construct two cycles $C_1$ and $C_2$ that share a subpath. 
Refer to Figure~\ref{fig: simplecycle} (d).
Since $T_1$ is a tree and $R_1$ and $R_2$ are vertex disjoint by Observation~\ref{obs: disjoint}, there is an $R_1$-to-$R_2$ subpath $P_1$ in $T_1$. 
Since $C(T_2)$ is simple and contains $R_1$ and $R_2$, $C(T_1)\cup P_1$ contains two simple cycles $C_1$ and $C_2$ whose intersection is $P_1$.
W.l.o.g.~assume $C_1$ is enclosed by $C_2$.

Since $T_2$ is a tree and $\Pi_1$ and $\Pi_2$ are vertex disjoint, there is a $\Pi_1$-to-$\Pi_2$ path $P_2$ in $T_2$.
Note that $P_2$ is terminal-free,
since $P_2$ does not contain any leaf of $T_2$ and terminals in $T_2$ are leaves by Lemma~\ref{lem: leaf}.
Let $xy$ be an edge of $P_2$.
Then by Lemma~\ref{lem: deletable}, $xy$ is nonremovable. 
However, $H$ and $xy$ also satisfy Lemma~\ref{lem: disjoint} with $C_1$ and $C_2$, which shows $xy$ is removable, giving a contradiction.  
\begin{description}
\item [Condition 1.]
Note that $C(xy)$ is enclosed by the cycle $C$ consisting of $\Pi_1$, $\Pi_2$, and two $\Pi_1$-to-$\Pi_2$ subpaths of $C(T_2)$: one is of $C_1$ and the other is of $C_2$. Refer to Figure~\ref{fig: simplecycle} (d) and (e).
Since $C$ is disjoint with the common subpath $P_1$ of $C_1$ and $C_2$, $C(xy)$ is also disjoint with $P_1$.

Showing the remainder of the first condition of Lemma~\ref{lem: disjoint} is similar to that of Claim~\ref{clm: condition1} if we replace $R_1$ and $R_2$ with the two $\Pi_1$-to-$\Pi_2$ subpaths of $C_1$ and $C_2$. 
Note that we have a stronger version of Observation~\ref{obs: neighbor} here, since $T_2$ is a tree.

\item [Condition 2.] If any neighbor $z$ of $xy$ in $C(xy)$ is not in $C_1\cup C_2$, it will be a non-leaf vertex in $T_2$, since all leaves of $T_2$ are in $C(T_2) \subseteq C_1\cup C_2$.
Then $z$ is not a terminal by Lemma~\ref{lem: leaf}.
We find a subpath $\Phi$ in $T_2$ from $z$ to $C(T_2) \subseteq C_1\cup C_2$ such that $\Phi$ only shares $z$ with $C(xy)$. 
By triconnectivity, there are at least three disjoint paths from $z$ to $C(T_2)$.
Since $C(xy)$ contains two such paths and encloses the fewest faces, there is a path $\Phi$ from $z$ to $C(T_2)$ outside of $C(xy)$.
If $\Phi$ shares any vertex with $C(xy)$ other than $z$, 
then $\Phi$ and the $C_1$-to-$C_2$ path that contains $z$ witness a cycle in $T_2$, a contradiction.
\end{description}

This proves the Tree Cycle Theorem.

\fi
\ifFull

\subsubsection{Terminal-free paths}\label{sec:strictly-term-free}
Let $P$ be a terminal-free $a$-to-$b$ path of $H$ such that there exists a cycle that strictly encloses the internal vertices of $P$.  Then $a,b\in C(P)$, since $H$ is triconnected by Lemma~\ref{lem: triconnected}. 
 Let $P_1(P)$ and $P_2(P)$ be the two $a$-to-$b$ subpaths of $C(P)$.

\begin{lemma}\label{lem: clean} 
If for every edge $e \in P$ there is a separating set $S_e \in C(P)$, then all the vertices inside of $C(P)$ are in $P$.
\end{lemma}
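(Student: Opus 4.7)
The plan is to argue by contradiction. Suppose some vertex $w$ lies strictly inside $C(P)$ with $w \notin V(P)$. I will use the hypothesis to produce, for some edge $e_i$ of $P$, a path connecting the two endpoints of $e_i$ in $H \setminus (\{e_i\} \cup S_{e_i})$, contradicting that $S_{e_i}$ is a separating set for $e_i$ (Theorem~\ref{thm: nonremovable}).

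Order the edges of $P$ as $e_1, \ldots, e_k$ with $e_i = v_{i-1}v_i$, $v_0 = a$, $v_k = b$, and write $S_{e_i} = \{s_i, s_i'\} \subseteq V(C(P))$ with components $A_i \ni v_{i-1}$, $B_i \ni v_i$. The first step would be to establish a planarity restriction: $s_i$ and $s_i'$ must lie on opposite $a$-to-$b$ arcs of $C(P)$, one on $P_1(P)$ and one on $P_2(P)$. Otherwise, one arc of $C(P) \setminus S_{e_i}$ contains both $a$ and $b$, and concatenating this arc with $P[a,v_{i-1}]$ and $P[v_i,b]$ yields a $v_{i-1}$-to-$v_i$ path in $H \setminus (\{e_i\} \cup S_{e_i})$, contradicting separation. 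So assume $s_i \in P_1(P)$ and $s_i' \in P_2(P)$.

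The second step uses the planar embedding: the open interior of $C(P)$ is partitioned by the path $P$ into two open regions $R_1$, bounded by $P \cup P_1(P)$, and $R_2$, bounded by $P \cup P_2(P)$. Without loss of generality $w$ lies in $R_1$. Because $s_i' \in P_2(P) \subseteq \partial R_2$ does not lie on $\partial R_1$, it cannot block any path that stays inside $R_1$; the only obstructions inside $R_1$ are the vertex $s_i$ and the edge $e_i$. Using the triconnectivity of $H$ (Lemma~\ref{lem: triconnected}) together with the fact that $w \notin V(P_1(P)) \cup V(P)$, I would show that $w$ has paths inside $R_1$ (possibly touching $\partial R_1$) that avoid both $s_i$ and $e_i$ and reach each of the two $a$-to-$v_{i-1}$ and $v_i$-to-$b$ halves of $\partial R_1$. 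This yields $w \in A_i \cap B_i$, the desired contradiction.

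The main obstacle is the last step: producing paths from $w$ to \emph{both} halves of $\partial R_1 \setminus \{s_i, e_i\}$. The delicate point is ruling out the possibility that every $w$-to-$\partial R_1$ path is forced through $s_i$ (or through $e_i$). I would handle this by choosing $i$ to maximize the distance from $a$ to $s_i$ along $P_1(P)$ over all valid separating sets: for a neighboring edge $e_{i\pm1}$, the corresponding separator must lie elsewhere on $C(P)$, and combining the resulting paths through $R_1$ with the planarity of the embedding forces a path from $w$ that avoids both $s_i$ and $e_i$ while touching both sides. The triconnectivity of $H$ supplies three internally vertex-disjoint $w$-to-$(C(P)\cup P)$ paths whose planar arrangement guarantees that two of them land on opposite sides of $\{s_i,e_i\}$ within $R_1$, completing the contradiction.
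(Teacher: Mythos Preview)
Your argument has a genuine gap at exactly the point you flag as ``the main obstacle,'' and the maximization trick you propose does not close it. The difficulty is that you never use the defining property of $C(P)$: it is the cycle that strictly encloses the fewest faces among all cycles enclosing $P$. Without this minimality, nothing prevents $w$ from having, say, two internally disjoint paths inside $R_1$ landing on $P_1(P)$, both of which could be blocked by a single well-placed $s_i$; your extremal choice of $i$ gives no control over where those landing points are relative to $s_i$, and the appeal to ``combining the resulting paths through $R_1$'' for neighboring edges is not a concrete argument.

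The paper's proof is much shorter and exploits minimality directly. If $u$ strictly inside $C(P)$ had two internally disjoint paths to $C(P)$ avoiding $P$, those paths would let you reroute part of $C(P)$ through $u$ and obtain a cycle enclosing $P$ with strictly fewer faces, contradicting the definition of $C(P)$. Hence at most one of the three disjoint $u$-to-$(C(P)\cup P)$ paths guaranteed by triconnectivity lands on $C(P)$; the other two land on $P$ at distinct vertices $v_1, v_2$ and are disjoint from $C(P)$. Now pick any edge $e$ of $P$ between $v_1$ and $v_2$: the path $R_1 \cup R_2$ through $u$ connects the two endpoints of $e$ while avoiding $C(P)$, so any separating set $S_e$ for $e$ must contain a vertex of $R_1 \cup R_2$, hence cannot lie entirely on $C(P)$. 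This is exactly the contradiction, obtained without any analysis of regions $R_1, R_2$ or extremal choices of $i$. The piece you are missing is the first step: minimality of $C(P)$ forces two of the three paths from $u$ onto $P$ rather than onto $C(P)$.
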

\begin{proof}
For a contradiction, assume there is a vertex $u$ strictly inside of $C(P)$ that is not on $P$. There can not be more than one path from $u$ to $C(P)$ disjoint from $P$, otherwise there will be another cycle which encloses fewer faces than $C(P)$ (Figure~\ref{fig: examples} (a)). By Lemma~\ref{lem: triconnected}, $H$ is triconnected, so there are at least two disjoint paths $R_1$ and $R_2$ from $u$ to $v_1$ and $v_2$ on $P$ disjoint from $C(P)$ (Figure~\ref{fig: examples} (b)).  For an edge $e$ on $P$ between $v_1$ and $v_2$, every separating set for $e$ must include a vertex on the path $R_1\cup R_2$, however, this contradicts the assumption that there is a separating set for $e$ in $C(P)$.
\begin{figure}[ht]
\centering
\includegraphics[scale=0.7]{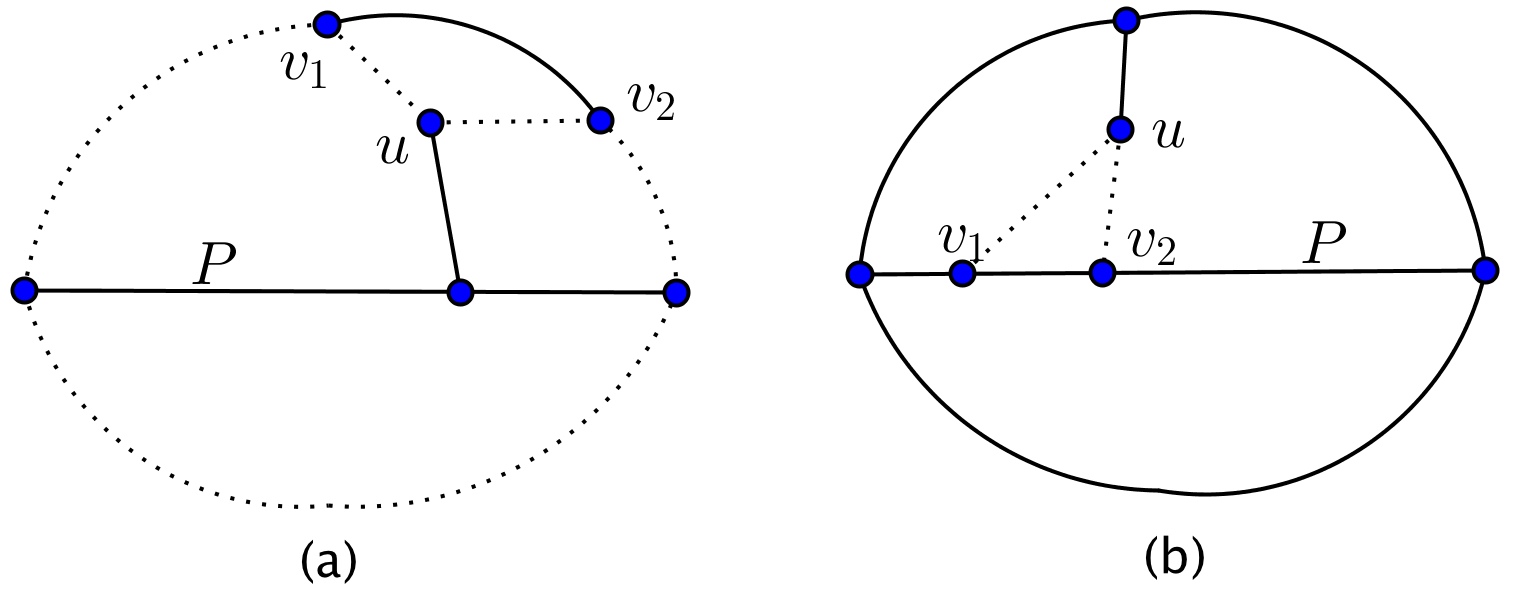}
\caption{(a) If vertex $u$ has two paths to $v_1$ and $v_2$ on $C(P)$ that are disjoint from $P$, then there is a smaller cycle (dotted) through $v_1, u, v_2$ that encloses $P$. (b) The separating set for an edge $e$ on $P$ between $v_1$ and $v_2$ must include a vertex of the path (dotted) strictly enclosed by $C(P)$, contradicting there is a separating set for $e$ in $C(P)$.}\label{fig: examples}
\end{figure}
\end{proof}

\begin{lemma}\label{lem: common} 
  No pair of adjacent vertices in $P$ has a common neighbor in $C(P)$.
\end{lemma}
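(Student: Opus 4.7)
The plan is to derive a contradiction from the existence of such a common neighbor by exhibiting a triangle containing at most one terminal and then invoking Lemma~\ref{lem: oneterminal} to produce a removable edge, contradicting the minimality of $H$.

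Suppose for contradiction that $u$ and $v$ are adjacent on $P$ and that $w\in C(P)$ is a common neighbor of $u$ and $v$. Then the three edges $uv,\ uw,\ vw$ together with the vertices $\{u,v,w\}$ form a $3$-cycle $T$ in $H$: the three vertices are distinct because any internal vertex of $P$ lies strictly inside $C(P)$ while $w\in C(P)$, and if $u$ or $v$ is an endpoint of $P$ (and thus lies on $C(P)$) it is still distinct from $w$ because $u\neq v$ and at most one of $u,v$ can be an endpoint of $P$.

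Since $P$ is terminal-free, neither $u$ nor $v$ is a terminal. By Theorem~\ref{thm: minimal} the cycle $T$ must contain a vertex of $Q_3$, so that vertex must be $w$, and $w$ is the \emph{only} terminal on $T$. Applying Lemma~\ref{lem: oneterminal} to $T$, one of the two edges of $T$ incident to $w$, namely $uw$ or $vw$, is removable. Assume without loss of generality that $uw$ is removable, so that $H\ominus uw$ is triconnected.

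To reach the contradiction I would check that $H\ominus uw$ still contains all of $Q_3$ and is therefore a smaller $Q_3$-triconnected graph than $H$. The removal operation deletes $uw$, then possibly contracts edges at a newly degree-$2$ endpoint and merges parallel edges; since neither $u$ nor $v$ is a terminal and since $w$ retains degree at least $3$ after deleting $uw$ (it has its two $C(P)$-neighbors together with $v$), no vertex of $Q_3$ is removed or identified with another vertex of $Q_3$. Hence $Q_3\subseteq V(H\ominus uw)$ and $H\ominus uw$ is $Q_3$-triconnected with strictly fewer edges than $H$, contradicting the minimality of $H$. The main obstacle is the bookkeeping in this last step: verifying that the contractions triggered by the removal of $uw$ cannot collide two terminals and that the parallel-edge case need not be considered here (it is disposed of before the proof of the Tree Cycle Theorem by Lemmas~\ref{lem: minimal} and~\ref{lem: simple}, which imply that the $H$ considered in the main proof is simple).
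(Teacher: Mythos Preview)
Your overall strategy is sound and close in spirit to the paper's, but the route differs: the paper does not invoke Lemma~\ref{lem: oneterminal}. Instead it observes that since $u,v$ are non-terminals, $uv$ is nonremovable (Lemma~\ref{lem: deletable}), takes a separating pair $(uv,S_{uv})$, argues that the common neighbor $z$ must lie in $S_{uv}$ (because $z$ is adjacent to both sides of the split), and then applies Theorem~\ref{thm: removableone} to get that $uz$ is removable. Crucially, the paper derives $\deg_H(z)\ge 4$ from the separating-pair structure itself: each of the two components $A,B$ of $H\setminus(\{uv\}\cup S_{uv})$ has a vertex other than $u$ (resp.\ $v$), and three internally disjoint paths from that vertex force a neighbor of $z$ in $A\setminus\{u\}$ and one in $B\setminus\{v\}$, in addition to $u$ and $v$.

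Your degree argument for $w$ is where the gap lies. You assert that after deleting $uw$ the vertex $w$ still has its two $C(P)$-neighbors together with $v$, hence degree at least $3$. But this count can fail when one of $u,v$ is an endpoint of $P$. If, say, $u=a$ lies on $C(P)$ and $uw$ happens to be an edge of $C(P)$, then $u$ \emph{is} one of $w$'s two $C(P)$-neighbors; your three listed neighbors collapse to two, and $w$ may genuinely have degree exactly $3$ in $H$. In that case deleting $uw$ drops $w$ to degree $2$, so $H-uw$ is \emph{not} $Q_3$-triconnected (the terminal $w$ has only two internally disjoint paths to anything), and you get no contradiction with minimality. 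Note that Lemma~\ref{lem: oneterminal} only guarantees that \emph{one} of $uw,vw$ is removable, so you cannot simply switch to the other edge; the same degree problem arises for $vw$ as well in this scenario. The separating-pair argument in the paper's proof sidesteps this entirely, since it yields $\deg_H(z)\ge 4$ without any reference to where $u,v$ sit on $P$ or on $C(P)$.
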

\begin{proof}
  For a contradiction, assume there are adjacent vertices $u$ and $v$ with a common neighbor $z$ in $C(P)$.  Since $u,v,z$ forms a cycle, $S_{uv}$ must contain $z$.  Therefore, by Theorem~\ref{thm: removableone}, $uz$ is removable.  Further, both components of $H\setminus \{uv, S_{uv}\}$ must contain a vertex distinct from $u$ and $v$ and each of those vertices must have vertex disjoint paths to $S_{uv}$ and $uv$; therefore the degree of $z$ is at least 4.  Therefore, removing $uz$ will not result in contracting any edges incident to $z$ and so will preserve triconnectivity of the terminals, contradicting the minimality of $H$.
\end{proof}

For a separating pair $(e,S_e)$, let $\Sigma(e,S_e)$ be a closed curve that only intersects the drawing of $H$ in an interior point of $e$ and two vertices of $S_e$ and partitions the plane according to the components of $H\setminus (\{e\}\cup S_e)$.  Each portion of $\Sigma(e,S_e)\setminus H$ is contained in a face of $H$.  Since this is true for any $\Sigma(e,S_e)$, for two separating pairs $(e_1,S_{e_1})$ and $(e_2,S_{e_2})$ we may assume that the curves $\Sigma(e_1,S_{e_1})$ and $\Sigma(e_2,S_{e_2})$ are drawn so they cross each other at most 3 times.  $\Sigma(e_1,S_{e_1})$ and $\Sigma(e_2,S_{e_2})$ cross either at a point that is interior to a face of $H$ or at one vertex of $S_{e_1} \cup S_{e_2}$.  Since they are simple closed curves, they cross each other twice or not at all.

\begin{lemma}\label{lem: cut_on_cycle} 
For every edge $e$ in $P$, there exists a separating set for $e$ in $C(P)$.
\end{lemma}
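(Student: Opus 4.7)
The plan is to first obtain some separating set via nonremovability, and then push it onto $C(P)$ using a planarity-based rerouting argument whose failure contradicts the minimality of $C(P)$.

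First I would verify that every edge $e$ of $P$ admits some separating set. Since $P$ is terminal-free, neither endpoint of $e = uv$ is a terminal, so by Lemma~\ref{lem: deletable} the edge $e$ is nonremovable. Applying Theorem~\ref{thm: nonremovable} then yields a pair $S_e = \{s_1, s_2\}$ together with two components $A \ni u$ and $B \ni v$ of $H \setminus (\{e\} \cup S_e)$, each of size at least two.

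Next, I would fix a separating set $S_e$ maximizing $|S_e \cap C(P)|$ and assume for contradiction that this value is at most $1$. The main tool is the closed curve $\Sigma = \Sigma(e, S_e)$, which meets $H$ only at an interior point of $e$ and at $s_1, s_2$ and whose two complementary open disks separate the embeddings of $A$ and $B$. Because the midpoint of $e$ lies strictly inside $C(P)$ and $\Sigma$ can cross $C(P)$ only at vertices of $S_e \cap C(P)$, a Jordan-curve parity argument pins down the picture: if $|S_e \cap C(P)| = 0$ then $\Sigma$ lies entirely inside $C(P)$, and if $|S_e \cap C(P)| = 1$ then $\Sigma$ escapes $C(P)$ at the single common vertex and its other endpoint $s_i \notin C(P)$ lies on one determined side of $C(P)$.

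I would then split into cases on $|S_e \cap C(P)| \in \{0,1\}$ and on whether the off-cycle vertices of $S_e$ lie strictly inside or strictly outside $C(P)$. In each case the goal is to exhibit either (i) a different separating set for $e$ with strictly more vertices on $C(P)$, contradicting maximality, or (ii) a simple cycle $C'$ of $H$ that still strictly encloses every internal vertex of $P$ but encloses strictly fewer faces than $C(P)$, contradicting the minimality of $C(P)$. The candidate $C'$ is built from one of the two $a$-to-$b$ subpaths $P_1(P), P_2(P)$ of $C(P)$ concatenated with a path that ``hugs'' $\Sigma$ on its interior side, with the hugging path coming from triconnectivity of $H$ (Lemma~\ref{lem: triconnected}) applied to route around the component of $H \setminus (\{e\} \cup S_e)$ that gets pinched off; the rerouted cycle can also be witnessed by applying Theorem~\ref{thm: removableone} to an edge joining $S_e$ to $\{u,v\}$ when $s_1 \in C(P)$, which will be removable and hence provide the needed structural contradiction with minimality.

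The main obstacle will be the subcase where an off-cycle vertex of $S_e$ lies strictly inside $C(P)$: the replacement cycle $C'$ must still strictly enclose every internal vertex of $P$, and now some of these internal vertices lie on the ``wrong'' side of $\Sigma$ (namely, the one in $B$, together with $v$). To handle this I would thread $C'$ along $P$ itself in combination with $\Sigma$, using the fact that $A$ contains the prefix $P[a,u]$ and $B$ contains the suffix $P[v,b]$ so that every $p_i$ is enclosed on one of the two sides. Once this subcase is settled, the remaining cases reduce to straightforward rerouting through the single vertex of $S_e \cap C(P)$, and the contradiction closes the proof.
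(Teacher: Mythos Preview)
Your approach is genuinely different from the paper's: you attempt a direct extremal argument (choose $S_e$ maximizing $|S_e\cap C(P)|$, then reroute), whereas the paper proceeds by \emph{induction on strict subpaths of $P$}. That induction is what makes Lemmas~\ref{lem: clean} and~\ref{lem: common} available for $C(P[a,x])$ and $C(P[x,b])$, and the paper leans on them heavily to rule out the case $s_i\in P$ (this is their Claim~\ref{clm: notonP}). Your proposal does not set up any such induction, so you have no access to these tools for $P$ itself or its subpaths, and this is exactly where the gap lies.

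Concretely, your parity analysis is correct that any off-cycle vertex of $S_e$ must lie strictly \emph{inside} $C(P)$ (the ``outside'' subcase is vacuous, which you do not note). But then, in the case $|S_e\cap C(P)|=0$, the entire cycle $C(P)$---and in particular both endpoints $a,b$ of $P$---lies in one component, say $B$, of $H\setminus(\{e\}\cup S_e)$. Since $u\in A$, the prefix $P[a,u]$ must cross $S_e$, so some $s_i$ is an \emph{internal vertex of $P$}. Your sentence ``$A$ contains the prefix $P[a,u]$ and $B$ contains the suffix $P[v,b]$'' is therefore false, and the ``thread $C'$ along $P$ in combination with $\Sigma$'' idea collapses: you cannot build a cycle $C'$ strictly enclosing all internal vertices of $P$ with fewer faces, because the offending $s_i$ sits on $P$ itself and $\Sigma$ is not a subgraph of $H$. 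This is precisely the obstruction the paper spends Claim~\ref{clm: notonP} eliminating, via the inductive hypothesis applied to $C(P[a,x])$ and $C(P[x,b])$ together with Lemmas~\ref{lem: clean} and~\ref{lem: common}.

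A second issue: your appeal to Theorem~\ref{thm: removableone} (``an edge joining $S_e$ to $\{u,v\}$ \ldots\ will be removable'') only fires if such an edge exists, and nothing in your setup forces $s_1$ to be adjacent to $u$ or $v$. In the paper, adjacency of this kind is \emph{derived} (again via the induction and Lemma~\ref{lem: clean}), not assumed. Without the inductive scaffolding, neither your rerouted cycle nor your improved separating set can be produced, so the contradiction does not close.
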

\begin{proof}
We prove by induction on the subpaths of $P$: we assume the lemma is true for every strict subpath of $P$.  The base case is when $P$ is one edge $xy$: $S_{xy}$ must include one vertex of $P_1(xy)$ and one vertex of $P_2(xy)$ but does not include $x$ or $y$, giving the lemma.

Let $xy$ be any edge of $P$.  Without loss of generality, we assume $x \ne a$ and $y \notin P[a,x]$.
By the inductive hypothesis, there exists a separating set $S_{xy}=\{s_1, s_2\}$ in $C(P[x,b])$.  The following claim simplifies our proof, which we prove after using this claim to prove that $S_{xy} \in C(P)$.

\begin{claim}\label{clm: notonP}
  $S_{xy}$ does not contain an internal vertex of $P$.
\end{claim}

There are two cases:

\begin{enumerate}
\item If $b\in S_{xy}$, w.l.o.g.~assume $b=s_1$. Then we only need to
  show $s_2$ is in $C(P)$.  By the induction hypothesis, $s_2\in
  C(P[x,b])$. By Claim~\ref{clm: notonP}, $s_2$ can not be internal
  vertex of $P[a,y]$, so $a$ and $x$ must be in the same component of
  $H\setminus \{xy, S_{xy}\}$. Then $C(P[a,y])$ must contain $s_2$
  since $C(P[a,y])$ must intersect $\Sigma(xy, S_{xy})$ in vertices of
  $S_{xy}$.  Therefore, $s_2$ is in both of $C(P[a,y])$ and
  $C(P[x,b])$. By planarity, it must be in $C(P)$, for otherwise there
  will be other cycles which enclose fewer faces than $C(P[a,y])$
  and $C(P[x,b])$ and do not contain $s_2$.
\item If $b\notin S_{xy}$,
by Claim~\ref{clm: notonP}, $S_{xy}$ is not in $P$, so $a$ and $b$ are
in the distinct components of $H\setminus \{xy, S_{xy}\}$ since they
are connected to $x$ and $y$ by $P[a,x]$ and $P[y, b]$
respectively. That is, $b$ is strictly inside of $\Sigma(xy, S_{xy})$
and $a$ is outside of $\Sigma(xy, S_{xy})$. Then $C(P)$ must intersect
$\Sigma(xy, S_{xy})$ twice in vertices of $S_{xy}$. Because $C(P)$ is
simple, it can not intersect $\Sigma(xy, S_{xy})$ in the same vertex
of $S_{xy}$. Therefore, both vertices of $S_{xy}$ are in $C(P)$.
\end{enumerate}
This completes the proof of Lemma~\ref{lem: cut_on_cycle}.
\end{proof}
\begin{proof}[Proof of Claim~\ref{clm: notonP}]
 For a contradiction, assume $s_1$ is an internal vertex of $P$. Then it must in $P[a,x]$ since $S_{xy}$ is in $C(P[x,b])$. Further, $s_2$ can not be also in $P[a,x]$ since $x$ and $y$ are triconnected to $C(P)$, and then $C(P)$ and the paths from $x$ and $y$ to $C(P)$ contains a path from $x$ to $y$ disjoint from $P[a,x]$.

Since $a$ and $x$ are on different sides of $\Sigma(xy, S_{xy})$, $C(P[a,x])$ must cross $\Sigma(xy, S_{xy})$ twice. $\Sigma(xy, S_{xy})$ could only intersect $H$ at $xy$ and $S_{xy}$, so $C(P[a,x])$ must cross $\Sigma(xy, S_{xy})$ at $xy$ and $s_2$, for $s_1$ is in $P[a,x]$ and $C(P[a,x])$ is simple; w.l.o.g.~assume $P_1(P[a,x])$ contains $s_2$.

In $P[s_1, x]$, there must be a vertex between $s_1$ and $x$, for otherwise by Theorem~\ref{thm: removableone} edge $s_1x$ is removable, which contradicts Lemma~\ref{lem: deletable}.  Let $zx$ be an edge of $P[s_1, x]$. Then there exists a separating set $S_{zx}$ for $zx$ in $C(P[a, x])$ by induction hypothesis. 
We claim $\Sigma(zx, S_{zx})$ must intersect $P_1(P[a,x])$ between $s_2$ and $x$. If not, then it will intersect $P_1(P[a,x])$ outside of $\Sigma(xy, S_{xy})$. However, the $z$-to-$s_2$ path, together with the $s_2$-to-$x$ subpath of $P_1(P[a,x])$ and edge $zx$ form a cycle inside of $\Sigma(xy, S_{xy})$. 
See Figure~\ref{fig: sigma} (a).
Since $\Sigma(zx, S_{zx})$ must cross the edge $zx$, it must intersect the described cycle twice. Note that cycle is disjoint from $P_2(P[a,x])$. So $\Sigma(zx, S_{zx})$ will intersect the drawing of $H$ four times: twice at the described cycle inside of $\Sigma(xy, S_{xy})$, once at $P_1(P[a,x])$ outside of $\Sigma(xy, S_{xy})$ and once at $P_2(P[a,x])$. This contradicts the definition of $\Sigma(zx, S_{zx})$.  
Let $s$ be the vertex of $S_{zx}$ in $P_1(P[a,x])$. By the above argument, $s$ is between $s_2$ and $x$.
There are two cases.

\begin{enumerate}
\item  If $s = s_2$, the only vertex of $P_1(P[a,x])$ inside both of $\Sigma(zx, S_{zx})$ and $\Sigma(xy, S_{xy})$ is $s_2$. Then there is a path between $z$ and $s_2$ edge disjoint from $C(P[a,x])$ and $P$. By Lemma~\ref{lem: clean}, all vertices inside of $C(P[a,x])$ are in $P[a,x]$, so $z$ and $s_2$ are adjacent
and this edge is the only path between $z$ and $s_2$ disjoint from $P$ in $\Sigma(xy, S_{xy})$.  
Consider $C(P[z,b])$. It must intersect $\Sigma(xy, S_{xy})$ at $s_1$ and $s_2$. Then edge $zs_2$ must be in $C(P[z,b])$ since it is the only path between $z$ and $s_2$ disjoint from $P$ in $\Sigma(xy, S_{xy})$. By Lemma~\ref{lem: clean}, the $x$-to-$s_2$ path disjoint from $P$ in $\Sigma(xy, S_{xy})$ is an edge. However, the two edges $zs_2$ and $xs_2$ contradict Lemma~\ref{lem: common}.

\item If $s \ne s_2$, $S_{zx}$ must contain vertex $a$, for otherwise $\Sigma(zx, S_{zx})$ will intersect $P_1[a,x]$ between $a$ and $s_2$, which is the second intersection for $\Sigma(zx, S_{zx})$ and $P_1[a,x]$ and the fourth for $\Sigma(zx, S_{zx})$ and $H$. 
Then $s_1$ and $z$ are both connected to $s$ by paths edge disjoint from $C(P[a,x])$ and $P$ since they are inside of $\Sigma(zx, S_{zx})$. By Lemma~\ref{lem: clean} they are both adjacent to $s$. See Figure~\ref{fig: sigma} (c). However, this contradicts Lemma~\ref{lem: common}.
\qedhere
\end{enumerate}
\end{proof}
\begin{figure}[ht]
  \centering
  \includegraphics[scale=0.8]{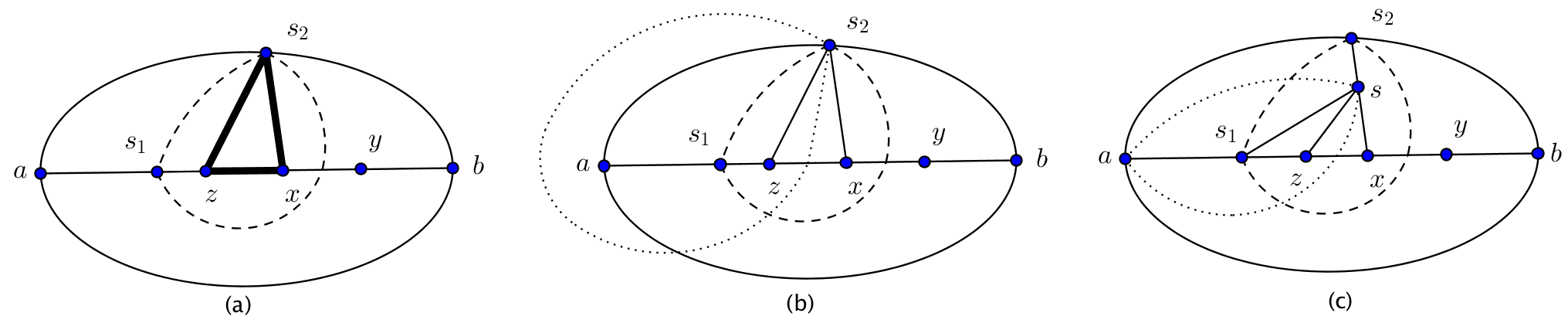}
  \caption{$\Sigma(xy,S_{xy})$ is the dashed cycle and $\Sigma(zx,S_{zx})$ is the dotted cycle.  (a) The bold cycle must be intersected by $\Sigma(zx, S_{zx})$ twice. (b) $\Sigma(zx, S_{zx})$ intersects the drawing of $H$ at $s_2$. (c) $\Sigma(zx, S_{zx})$ intersects the drawing of $H$ at $s$ and $a$.}
  \label{fig: sigma}
\end{figure}

\begin{lemma}\label{lem: outside}
Let $u_1$ and $u_2$ be the neighbors of an endpoint of $P$ on $C(P)$. Then there is a $u_1$-to-$u_2$ path whose internal vertices are strictly outside of $C(P)$.
\end{lemma}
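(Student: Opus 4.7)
The plan is to combine triconnectivity of $H$ (Lemma~\ref{lem: triconnected}) with the Jordan structure of the simple cycle $C(P)$ and the face‑minimality of $C(P)$ from Lemma~\ref{lem: cycle}. First I would observe that the two arcs of $C(P)$ joining $u_1$ and $u_2$---namely $P_1 = u_1 a u_2$ through the endpoint $a$ and the complementary arc $P_2$---give two internally vertex-disjoint $u_1$-to-$u_2$ paths. Since $H$ is triconnected, Menger's theorem provides a third internally vertex-disjoint $u_1$-to-$u_2$ path $R$ whose internal vertices lie off $V(C(P))$. Because $R$ is a simple curve whose endpoints lie on the Jordan curve $C(P)$ and whose interior avoids $C(P)$, it lies entirely in the interior or entirely in the exterior region bounded by $C(P)$. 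If $R$ is exterior, it is the desired path.

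The crux is thus to exclude the possibility that \emph{every} third internally vertex-disjoint $u_1$-to-$u_2$ path lies in the interior of $C(P)$. Suppose for contradiction that no $u_1$-to-$u_2$ path in $H$ has all of its internal vertices strictly outside $C(P)$; equivalently, $\bigl(V(C(P))\setminus\{u_1,u_2\}\bigr)\cup \inter(P)$ is a $u_1,u_2$-cut in $H$. Under this assumption every third internally disjoint path $R$ lies inside $C(P)$, so by Lemma~\ref{lem: clean} its internal vertices are all vertices of $P$. I would apply Lemma~\ref{lem: cut_on_cycle} to the first edge $e=av$ of $P$ out of $a$ to obtain a separating pair $(e,S_e)$ with $S_e\subseteq C(P)$, and then combine $S_e$ with the interior path $R$ to construct a simple cycle $C^\ast$ that still strictly encloses every edge of $P$ but strictly fewer faces than $C(P)$, contradicting the minimality clause of Lemma~\ref{lem: cycle}. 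Informally, $C^\ast$ replaces the short arc $u_1 a u_2$ of $C(P)$ by a shortcut that dips into the interior using $R$ together with the structure forced by $S_e$, shaving off the face(s) incident to $a$.

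The main obstacle will be making the replacement step rigorous. I need to verify that the resulting $C^\ast$ is simple, still strictly encloses every edge of $P$, and strictly decreases the enclosed face count. Here I expect Lemma~\ref{lem: common} (no adjacent pair of $P$-vertices has a common neighbor on $C(P)$) to be essential in guaranteeing that the shortcut near $a$ is non-degenerate, and Lemma~\ref{lem: clean} (all vertices strictly inside $C(P)$ lie on $P$) to argue that the faces ``cut off'' by the shortcut genuinely lie outside $C^\ast$ so that the face count strictly drops. The precise position of the separating set $S_e$ relative to $u_1,u_2$ on $C(P)$ controls the exact form of the shortcut and is where the planarity of $H$ is crucially used.
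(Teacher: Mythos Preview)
Your argument has two genuine gaps.

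\textbf{The Menger step does not give a path off $C(P)$.} Triconnectivity guarantees three internally vertex-disjoint $u_1$-to-$u_2$ paths, but it does \emph{not} guarantee that, given the two arcs of $C(P)$ as two of them, a third one exists whose internal vertices avoid $V(C(P))$. As a concrete obstruction, take the cube graph $Q_3$ (which is $3$-connected and planar) with a Hamiltonian cycle $C$, and let $u_1,a,u_2$ be three consecutive vertices on $C$. Then every $u_1$-to-$u_2$ path must use some vertex of $C$ other than $u_1,u_2$, because there are no vertices off $C$ at all; hence no such $R$ exists. So your dichotomy ``$R$ is inside or $R$ is outside'' is not exhaustive: it misses the case that no $R$ with internal vertices off $C(P)$ exists. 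The extra structure of $C(P)$ (face-minimality, Lemmas~\ref{lem: clean} and~\ref{lem: cut_on_cycle}) is exactly what is needed to manufacture the outside path, and you cannot replace that work by a bare appeal to Menger.

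\textbf{The face-minimality contradiction cannot be set up as you describe.} Suppose you do have an interior path $R$ and you ``replace the short arc $u_1au_2$ by a shortcut that dips into the interior.'' Any simple cycle $C^\ast$ obtained this way no longer strictly encloses the first edge $ac$ of $P$ (that edge is incident to $a$, which now lies outside or on $C^\ast$). Hence $C^\ast$ is not a competitor in the definition of $C(P)$ and no contradiction with face-minimality follows. Trying instead to reroute along the separating set $S_e=\{s_1,s_2\}$ for $e=ac$ does not help for the same reason: any cycle that cuts off a face adjacent to $a$ necessarily drops the edge $ac$ from its interior.

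The paper proceeds differently and avoids both issues: it never looks for a third disjoint path and never appeals to face-minimality. Instead it applies Lemma~\ref{lem: cut_on_cycle} to the edge $ac$ to get a separating pair $(ac,\{s_1,s_2\})$ with $s_i\in P_i(P)$, and then analyzes the component of $H\setminus\{ac,s_1,s_2\}$ containing $a$. That component must contain, for each of $u_1$ and $u_2$, internally disjoint paths to $a$, $s_1$, and $s_2$; the latter two necessarily run strictly outside $C(P)$, and a short planarity argument (the $u_1$-to-$s_2$ path separates $u_2$ from $s_1$) forces the $u_1$- and $u_2$-paths to cross, yielding the desired exterior $u_1$-to-$u_2$ path directly. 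If you want to repair your approach, the natural fix is exactly this: use the separating pair not to shrink $C(P)$ but to exhibit the outside path constructively.
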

\begin{proof} 
Without loss of generality, let $u_i$  be the neighbor of $a$ on $P_i(P)$, $i = 1,2$.  Let $\{s_1,s_2\}$ be the separating set for $ac \in P$ such that $s_i \in P_i(P)$, $i = 1,2$, as guaranteed by Lemma~\ref{lem: cut_on_cycle}.

If $u_i = s_i$, $i = 1,2$, then the component of $H \setminus \{ac,s_1,s_2\}$ that contains $a$ must contain another vertex $x$ and $x$ must have three vertex-disjoint paths to $a$, $s_1$ and $s_2$.  The latter two of these witness the $u_1$-to-$u_2$ path that gives the lemma.

If $u_1 \ne s_1$ and $u_2 = s_2$, then the component of $H \setminus \{ac,s_1,s_2\}$ that contains $u_1$ must have three vertex-disjoint paths to $a$, $s_1$ and $s_2$ and the latter of these paths witness the $u_1$-to-$u_2$ path that gives the lemma.  The case $u_1 = s_1$ and $u_2 \ne s_2$ is symmetric.

\begin{figure}[ht]
  \centering
  \includegraphics[scale=1]{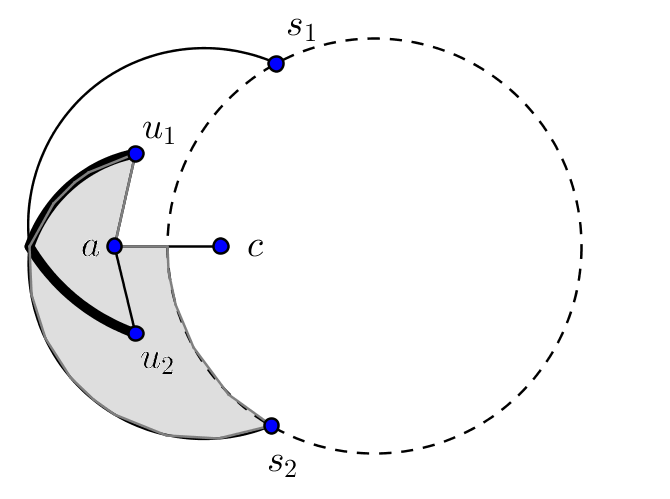}
  \caption{The construction of paths for Lemma~\ref{lem: outside}.  $\Sigma(ac,\{s_1,s_2\})$ is the dashed cycle.  The $u_2$ side of the cycle that separates $u_2$ and $s_1$ is shaded.  The path witnessing the lemma is bold and is formed by the $u_2$-to-$s_1$ and $u_1$-to-$s_2$ paths that avoid edges $u_1a$ and $u_2a$.} 
  \label{fig: outside}
\end{figure}

If $u_i \ne s_i$, $i = 1,2$, then the component of $H \setminus \{ac,s_1,s_2\}$ that contains $u_1$ and $u_2$ must have three vertex-disjoint paths from $u_i$ to $a$, $s_1$ and $s_2$ for $i = 1,2$.  The first of these, we may assume is the edge $u_ia$.  Consider the $u_1$-to-$s_2$ path; together with the edges $u_1a$ and $ac$ and the $s_2$-to-$ac$ portion of $\Sigma(ac,\{s_1,s_2\})$ that does not contain $s_1$, these form a closed curve in the plane that separates $u_2$ and $s_1$ (see Figure~\ref{fig: outside}).  The above-described $u_2$-to-$s_1$ path must therefore cross the $u_1$-to-$s_2$ path; these paths witness the $u_1$-to-$u_2$ path that gives the lemma.
\end{proof}

\begin{lemma}\label{lem: exist1}
For any terminal-free path $P$, there is a drawing of $H$ such that all internal vertices of $P$ are strictly enclosed by a simple cycle.
\end{lemma}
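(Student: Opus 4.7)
The plan is to produce a simple cycle $C$ in $H$ that is disjoint from $\inter(P)$ (meaning the internal vertices of $P$) but which, in the planar embedding of $H$, separates $\inter(P)$ from at least one face of $H$; taking such a face as the outer face then realizes the required drawing. Concretely, I aim to show that $H\setminus\inter(P)$ is biconnected between $a$ and $b$, that is, there exist two internally vertex-disjoint $a$-to-$b$ paths $Q_1$ and $Q_2$ in $H$ whose internal vertices all lie outside $\inter(P)$. Their union $C:=Q_1\cup Q_2$ is then a simple cycle through $a$ and $b$ that avoids $\inter(P)$ entirely.

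Given such a $C$, planarity finishes the argument. By Lemma~\ref{lem: triconnected}, $H$ is triconnected and hence has an essentially unique planar embedding (up to reflection and the choice of outer face). In that embedding, $C$ is a simple closed curve and $P$ is a simple arc from $a$ to $b$ whose interior is disjoint from $C$ by construction. By the Jordan curve theorem applied to $C$, together with connectedness of $P$, all internal vertices of $P$ lie in a common one of the two regions bounded by $C$. Selecting the outer face to be any face of $H$ lying in the opposite region produces a drawing in which $C$ is a simple cycle strictly enclosing every vertex of $\inter(P)$, as desired.

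The principal challenge is establishing that $a$ and $b$ are $2$-vertex-connected in $H\setminus\inter(P)$; this does not follow from triconnectivity of $H$ alone, since $\inter(P)$ can be larger than a single vertex and could, a priori, combine with one extra vertex $w$ to form a separator of $\{a,b\}$ in $H$. To preclude this, I would exploit the terminal-freeness of $P$: every edge of $P$ has both endpoints non-terminal, so Lemma~\ref{lem: deletable} implies every edge $e\in E(P)$ is nonremovable, and Theorem~\ref{thm: nonremovable} then provides a separating pair $(e,\{s_1^e,s_2^e\})$ together with a planar separating curve $\Sigma(e,\{s_1^e,s_2^e\})$. Assuming for contradiction a cut $\{w\}\cup\inter(P)$ of $\{a,b\}$ in $H$, the interaction of $w$ with the separating curves $\Sigma(e,\{s_1^e,s_2^e\})$ in the planar embedding, as $e$ ranges over $E(P)$, would force either a previously-excluded removable edge of $P$ or a $2$-vertex cut of $H$, contradicting Lemma~\ref{lem: deletable} or triconnectivity (Lemma~\ref{lem: triconnected}). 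I expect this step to require a careful case analysis, possibly an induction on the length of $P$ in which one reduces to a shorter terminal-free path by choosing a suitable edge $e\in E(P)$ and restricting attention to one side of $\Sigma(e,\{s_1^e,s_2^e\})$.
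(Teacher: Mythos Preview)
Your reformulation is sound: exhibiting two internally vertex-disjoint $a$-to-$b$ paths in $H\setminus\inter(P)$ does yield a simple cycle through $a$ and $b$ disjoint from $\inter(P)$, and the Jordan-curve step then gives the desired drawing. This target is in fact equivalent to the existence of $C(P)$ as the paper defines it (the paper notes $a,b\in C(P)$ whenever $C(P)$ exists, so the two $a$-to-$b$ arcs of $C(P)$ are exactly your $Q_1,Q_2$).

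However, there is a genuine gap at the step you yourself flag as the principal challenge. You have not proven that $a$ and $b$ are $2$-connected in $H\setminus\inter(P)$; you only offer a heuristic (``the interaction of $w$ with the separating curves $\Sigma$ would force a removable edge or a $2$-cut'') without saying what that interaction is or why it yields a contradiction. Two concrete issues with the sketch: first, you assume the obstruction is a single extra vertex $w$, but you must also dispose of the case where $\inter(P)$ alone separates $a$ from $b$ in $H$ (triconnectivity says nothing once $|\inter(P)|\ge 3$). Second, for an arbitrary edge $e\in E(P)$ the separating set $S_e$ need not avoid $\inter(P)$ a priori, so an edge-by-edge appeal to $\Sigma(e,S_e)$ does not obviously constrain a hypothetical $w$; establishing that $S_e$ can be taken on a surrounding cycle is precisely the content of Lemma~\ref{lem: cut_on_cycle}, which itself presupposes that such a cycle exists.

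The paper sidesteps a direct $2$-connectivity argument. It argues by minimal counterexample: if every face of $H$ meets $\inter(P)$, take $P$ minimal with this property and strip off the first edge $ac$. By minimality, one of the two faces incident to $ac$ avoids $\inter(P[c,b])$, so $C(P[c,b])$ exists. Then Lemma~\ref{lem: outside} (whose proof is where the separating-pair machinery, via Lemma~\ref{lem: cut_on_cycle}, is actually discharged) provides a path strictly outside $C(P[c,b])$ between the two $C(P[c,b])$-neighbors of $c$; splicing this path in yields a cycle enclosing all of $\inter(P)$, a contradiction. So the separating-curve work you anticipate is indeed required, but it is packaged in Lemmas~\ref{lem: cut_on_cycle} and~\ref{lem: outside} and applied to the \emph{shorter} path $P[c,b]$ (for which the surrounding cycle is already known to exist), not to $P$ directly.
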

\begin{proof} 
For a contradiction, assume there is a terminal-free path $P$ whose internal vertices is not strictly enclosed by any cycle for any choice of infinite face of $H$: that is, every face of $H$ contains an internal vertex of $P$. Let $P$ be a minimal such path, let $a$ and $b$ be $P$'s endpoints, and let $c$ be $a$'s neighbor on $P$. Note that $b \ne c$, for otherwise $P$ is an edge and $H$ would have at most two faces (each containing $P$), but every triconnected graph has at least three faces.

First observe that there is a face $f$ whose bounding cycle strictly encloses all internal vertices of $P[c,b]$.  Let $f_1$ and $f_2$ be the two faces that contains edge $ac$: $f$ is one of $f_1$ and $f_2$. One of $f_1$ and $f_2$ only contains one internal vertex, namely $c$, of $P$, for otherwise both faces would contain at least two internal vertices of $P$ and every face of $H$ would contain an internal vertex of $P[c,b]$, contradicting the minimality of $P$.

Take $f$, defined in the previous paragraph, to be the infinite face of $H$. Since $f$'s bounding cycle strictly encloses the internal vertices of $P[c,b]$, $C(P[c,b])$ exists. Let $u$ and $v$ be $c$'s neighbors in $C(P[c,b])$. Note that $ac$ may or may not be in $C(P[c,b])$. By Lemma~\ref{lem: outside}, there is a $u$-to-$v$ path $R$ whose internal vertices are strictly outside of $C(P[c,b])$, so there is a drawing of  $H$ so that $R \cup C(P[c,b]) \setminus \{cu, cv\}$ is a cycle that strictly encloses internal vertices of $P$, a contradiction.  See Figure~\ref{fig: exist} (a). 
\begin{figure}[ht]
 \centering
 \includegraphics[scale=1.2]{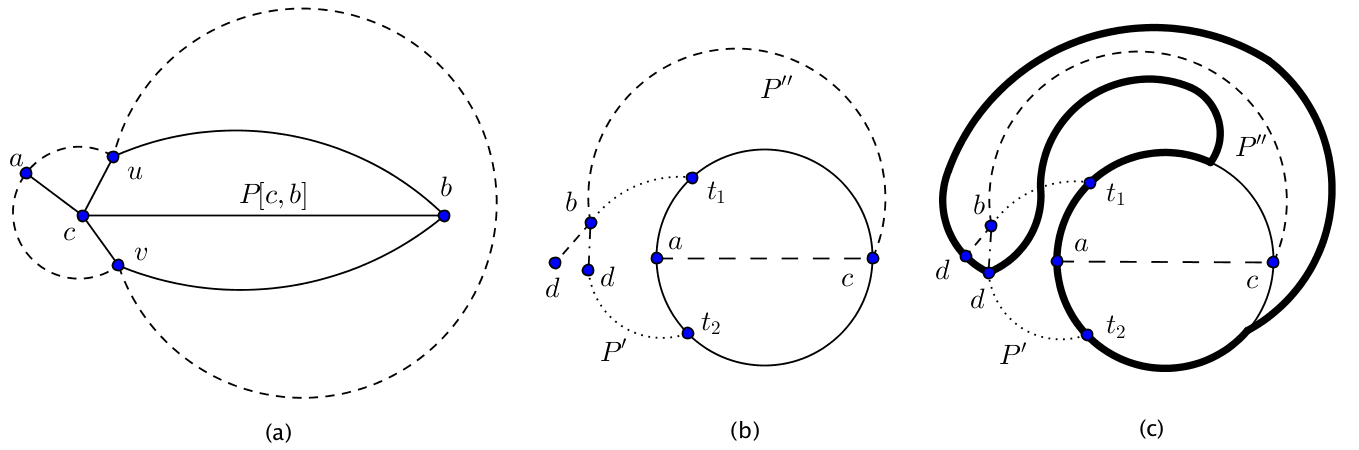}
 \caption{(a) The dashed paths show two possible $u$-to-$v$ paths $R$ outside of $C(P[c,b])$. The cycle $R \cup C(P[c,b]) \setminus \{cu, cv\}$ strictly encloses a face that does not contain any internal vertex of $P$ for an appropriate choice of infinite face. (b) The dotted path is $P'$ and dashed path is $P''$. Note that $d$ may or may not be in $P'$. (c) The bold cycle is $C(P'')$ and it crosses $P'$.}\label{fig: exist}
 \end{figure}
\end{proof}

\begin{proof} [Proof of Lemma~\ref{lem: exist}]
If $T^*$ is an edge, then this claim is trivial.  Suppose otherwise.

For a contradiction, assume every face of $H$ contains an internal vertex of $T^*$.
Let $ac$ be a leaf edge of $T^*$, where $a$ is a leaf of $T^*$. Recall that if a terminal-bounded component is not an edge, then it is obtained from a maximal terminal-free tree.  By the maximality of $T^*$, the two neighbors, $t_1$ and $t_2$, of $a$ on $C(ac)$ are terminals. By Lemma~\ref{lem: outside}, there exists a $t_1$-to-$t_2$ path $P'$ whose internal vertices are strictly outside of $C(ac)$. Choose $P'$ such that $C' = P' \cup \{at_1, at_2\}$ encloses the fewest faces. Then $C'$ does not strictly enclose any vertex since $H$ is triconnected by Lemma~\ref{lem: triconnected}.
By the assumption for the contradiction and the choice of $P'$, $P'$  must contain an internal vertex $b$ of $T^*$.

Let $P^*$ be the $a$-to-$d$ path of $T^*$ containing vertex $b$. Note that $d$ may or may not be in $P'$ (see Figure~\ref{fig: exist} (b)). By Lemma~\ref{lem: exist1}, $C(P^*)$ exists and by Lemmas~\ref{lem: clean} and~\ref{lem: cut_on_cycle}, $C(P^*)$ strictly encloses only internal vertices of $P^*$. Then $C'$ and $C(P^*)$ must cross each other and there exists a subpath of $C(P^*)$ that is strictly enclosed by $C'$, which contradicts that $C'$ encloses fewest faces.
See Figure~\ref{fig: exist} (c). 
\end{proof}

\begin{proof}[Proof of Lemma~\ref{lem: cycle}]
By Lemma~\ref{lem: exist1}, $C(P)$ exists. Then $P_1(P)$ and $P_2(P)$ both contain a terminal, for otherwise there is a cycle composed by $P$ and one of $P_1(P)$ and $P_2(P)$ that does not contain any terminal, contradicting Theorem~\ref{thm: minimal}.
We first prove that at least one of $P_1(P)$ and $P_2(P)$ contains more than one internal vertex, and then construct the cycle strictly enclosing $P$.
By Lemma~\ref{lem: clean} and~\ref{lem: cut_on_cycle}, all vertices inside of $C(P)$ are in $P$. If $P_1(P)$ and $P_2(P)$ both only contain one internal vertex, then the endpoints of the first edge of $P$ are both adjacent to the internal vertex of $P_1(P)$ or $P_2(P)$, which contradicts Lemma~\ref{lem: common}. Let
$x_a$ and $y_a$ (or $x_b$ and $y_b$) be the neighbors of $a$ (or $b$) in $P_1(P)$ and $P_2(P)$ respectively. Then at least one of $\{x_a, x_b\}$ and $\{y_a, y_b\}$ contains two distinct vertices.

By Lemma~\ref{lem: outside}, there is an $x_a$-to-$y_a$ path whose internal vertices are strictly outside of $C(P)$. We choose such an $x_a$-to-$y_a$ path $R_a$ such that the cycle $R_a\cup\{ax_a, ay_a\}$ encloses fewest faces. Then this cycle does not strictly enclose any vertex, for otherwise the vertex strictly inside of the cycle is triconnected to the cycle and we can find another cycle through that vertex which could  enclose fewer faces, contradicting the choice of $R_a$. Similarly, for $x_b$ and $y_b$, we can find an $x_b$-to-$y_b$ path $R_b$ such that $R_b\cup \{bx_b, by_b\}$ does not strictly enclose any vertex. Since at least one of $\{x_a, x_b\}$ and $\{y_a, y_b\}$ contains two distinct vertices, $R_1$ and $R_2$ are distinct. So the cycle $(R_a\cup R_b\cup C(P)) \setminus \{ax_a, bx_b, ay_a, by_b\}$ strictly encloses all the vertices of $P$ and only the vertices of $P$.
\end{proof}

\begin{proof} [Proof of Lemma~\ref{lem: disjoint}]
For a contradiction, assume edge $xy$ is nonremovable and consider $\Sigma(xy,S_{xy})$.  

First note that $\Sigma(xy,S_{xy})$ must be enclosed by $C_2$ and not enclosed by $C_1$  for otherwise, $\Sigma(xy,S_{xy})$ would intersect $C_1$ and $C_2$ in more than one point, and since $C_1$ and $C_2$ share at most one common subpath that is vertex disjoint from $C(xy)$ (by condition of the lemma), this would result in $|S_{xy}| \ge 3$, a contradiction.
Therefore $S_{xy}$ contains, w.l.o.g., two vertices of the $C_1$-to-$C_2$ path through $C(xy)$;  let $a$ be the vertex of $S_{xy}$ on the $x$-to-$C_1$ path. Refer to Figure~\ref{fig: removeedge}.  

Next note that $a$ must be a neighbor of $x$. For otherwise the neighbor $z$ of $x$ must be on the $x$ side of   $\Sigma(xy,S_{xy})$.  By condition of the lemma, there is a path from $z$ to $C_1$ or $C_2$ that is disjoint from $C(xy)$, however, the only way to cross $\Sigma(xy,S_{xy})$ is via a vertex of $C(xy)$, a contradiction.  Therefore $a$ is a neighbor of $x$.

Then $a$ has degree 2 in the part of $H$ on the $x$ side of $\Sigma(xy,S_{xy})$: one degree is given by the edge $ax$ and the other is given by the existence of a vertex $v \ne x$ on the $x$ side of $\Sigma(xy,S_{xy})$ which has vertex disjoint paths to $x$ and each vertex in $S_{xy}$.  For the same reason, $x$ has degree 4: degree 2 via the $C_1$-to-$C_2$ path, degree 1 via $y$ and degree 1 via the $v$-to-$x$ path.
    
Further we will argue that $a$ has degree 2 on the $y$ side of $\Sigma(xy,S_{xy})$ as well; $a$ then has degree 4.  By Theorem~\ref{thm: removableone}, $xa$ is removable.  Since $x$ and $a$ both have degree 4, removing $xa$ will not result in any edge contractions; this maintains the triconnectivity of the terminals, and contradicts the minimality of $H$.
    
        To show that $a$ has degree 2 on the $y$ side of  $\Sigma(xy,S_{xy})$, we have two cases.  If $a \in C_1$, then this follows from the two edges of $C_1$ incident to $a$.  If $a \notin C_1$, then by condition of the lemma, there is a path from $a$ to $C_1$ or $C_2$ that is disjoint from $C(xy)$ and so must be on the $y$ side of   $\Sigma(xy,S_{xy})$ and is notably disjoint from the 3 edges incident to $a$ on the $C_1$-to-$C_2$ path of $C(xy)$ and on the $v$-to-$a$ path.
\end{proof}
\begin{figure}  [ht]
    \centering
	\includegraphics[scale=0.9]{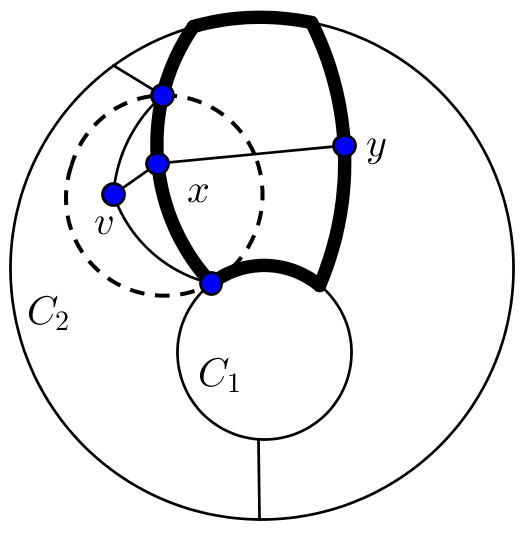}    
        \caption{The bold cycle is $C(xy)$ and the dashed cycle is $\Sigma(xy, S_{xy})$.} 
        \label{fig: removeedge}
    \end{figure}
\fi

\ifFull
\section{Correctness of spanner}\label{sec:spanner}
In this section, we prove the correctness of our spanner.
Let $\OPT$ be the weight of an optimal solution for 3-ECP.  
Then the correctness requires two parts: (1) bounding its weight by $O(\OPT)$ and (2) showing it contains a $(1+\epsilon)$-approximation of the optimal solution. The weight of our spanner is bounded by the weight of mortar graph, which we briefly introduce in Subsection~\ref{sec:mortargraph}.

The following Structure Theorem guarantees that there is a nearly-optimal solution in our spanner and completes the correctness of our spanner:
\begin{theorem}[Structure Theorem]
For any $\epsilon>0$ and any planar graph instance $(G,w, r)$ of 3-ECP, there exists a feasible solution $S$ in our spanner such that 
\begin{itemize}
\item the weight of $S$ is at most $(1+c\epsilon) \OPT$ where $c$ is an absolute constant, and 
\item the intersection of $S$ with the interior of any brick is a set of trees whose leaves are on the boundary of the brick and each tree has a number of leaves depending only on $\epsilon$.
\end{itemize}
\end{theorem}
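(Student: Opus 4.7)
The plan is to establish Properties P1, P2, and P3 from the overview, after which the Structure Theorem follows by combining them with the standard mortar-graph/portal construction. First I would convert the optimal solution $\OPT$ into a vertex-connected multisubgraph $\OPT'$ of a cleaved graph $G'$ by the cleaving operation sketched in the introduction; this splits each vertex so that $k$-edge-connectivity between terminals of $\OPT$ becomes $k$-vertex-connectivity between the corresponding terminals of $\OPT'$, while preserving planarity and not changing weight. By further discarding edges I may assume $\OPT'$ is minimal $(Q,r)$-vertex-connected, so all of Section~\ref{sec:vert-conn} and Section~\ref{sec:conn-sep} apply. Since every cleaving-edge has weight zero, any structural statement about $\OPT'$ transfers back to $\OPT$ by identifying cleaved copies.

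For Property P1, I would invoke Theorem~\ref{thm: minimal}: every cycle of the minimal $(Q,r)$-vertex-connected graph $\OPT'$ contains a terminal. Because the mortar graph $MG$ spans all terminals and forms the boundaries of the bricks, no cycle of $\OPT'$ lies entirely in $\inter(B)$. Hence $\OPT' \cap \inter(B)$ is acyclic and decomposes naturally into a forest $\mathcal{T}$ of trees whose leaves lie on $\partial B$. For Property P2, I would use Corollary~\ref{cor: terminal}: for any two terminals $x,y$ at the required connectivity there exist vertex-disjoint paths such that every path connecting any two of them contains a terminal. If replacing some $T\in\mathcal{T}$ by a new tree $T'$ on the same boundary leaves merged two of these witnessing paths $P_1,P_2$, then $T'$ would contain a $P_1$-to-$P_2$ subpath all of whose internal vertices have degree $\ge 2$ in $T'$. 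By Corollary~\ref{cor: terminal} this subpath must pass through a terminal, but terminals live only on $\partial B$; this forces a vertex of degree one strictly inside $B$ on that subpath, contradicting the fact (P1) that all leaves of $T'$ lie on $\partial B$. Thus the replacement preserves feasibility.

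For Property P3, I would follow the Steiner tree PTAS framework of \cite{BKK07,BKM09}. Using the grid-like structure of $MG$ and the short boundary paths of each brick, a standard portal-assignment argument shortcuts each leaf of a tree in $\mathcal{T}$ to one of the $O(1)$ portals on $\partial B$, paying at most $\epsilon$ times the total weight of boundary traversed; summed over all bricks this is $O(\epsilon\cdot\mathrm{MG}) = O(\epsilon\cdot\OPT)$ since the mortar graph has weight $O(\OPT)$. Within each brick I then replace the shortcutted trees by optimal Steiner trees on the chosen portal sets using the Erickson--Monma--Veinott algorithm~\cite{EMV87}, producing the collection $\mathcal{T}'$ of trees each with $O(1)$ leaves. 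Adding $(\OPT\setminus\mathcal{T})\cup\mathcal{T}'$ together with at most two extra copies of the boundary edges used for shortcutting yields a feasible solution of weight $(1+c\epsilon)\OPT$, which moreover lies in the spanner by construction.

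The main obstacle, already overcome by the preceding sections, is Property P2: naive tree replacement can destroy 3-edge-connectivity in ways that do not arise for 2-ECP (see Figure~\ref{fig: replace}), and ruling this out required developing the terminal-bounded component machinery, the Tree Cycle Theorem, and the Connectivity Separation Theorem, all of which rest essentially on global planarity. Once P2 is in hand, the weight bound in P3 is a routine adaptation of the Steiner tree argument to this richer collection of trees, and feasibility follows because any connectivity a boundary edge provided can be restored by doubling that edge at negligible cost relative to $\epsilon\cdot\mathrm{MG}$.
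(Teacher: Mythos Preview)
Your outline for P1 and P2 is essentially the paper's approach (modulo a slip: the $P_1$-to-$P_2$ path lies in the \emph{original} tree $T$ being replaced, not in the replacement $T'$; this matters because P1 only tells you the leaf structure of $T$, not of $T'$).

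The real gap is P3. Your argument ``shortcut each leaf of a tree in $\mathcal{T}$ to one of the $O(1)$ portals'' silently assumes the number of leaves (joining vertices with $\partial B$) is already bounded. It is not: a tree in $\OPT'\cap\inter(B)$ can have arbitrarily many leaves on $\partial B$, and redirecting each one to its nearest portal costs up to $w(\partial B)/\theta(\epsilon)$ \emph{per leaf}, so the total redirection cost is (number of leaves)$\cdot w(\partial B)/\theta(\epsilon)$, which you cannot bound by $\epsilon\cdot w(MG)$. The paper therefore inserts two steps \emph{before} redirecting: an \textbf{Augment} step (adding copies of supercolumns so that all joining vertices lie on the $\epsilon$-short north/south boundaries rather than the east/west ones), a \textbf{Flatten} step (replacing trees that touch only one side by boundary subpaths), and a \textbf{Restructure} step that replaces the remaining trees so that the total number of joining vertices per brick is $o(\epsilon^{-5.5})$. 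Only after this bound is in hand does the portal redirection cost become $O(\epsilon\,\OPT)$.

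The Restructure step is also where 3-ECP genuinely diverges from 2-ECP, and your proposal does not address it. When many trees cross a brick, the paper merges a contiguous block of them into a single gadget $\widehat{C}$ (Lemma~\ref{lem: newmap}); for 2-ECP a simple cycle sufficed, but here $\widehat{C}$ must guarantee that any three pairs of boundary vertices can be connected by three edge-disjoint paths simultaneously. Concretely $\widehat{C}$ is built from doubled copies of boundary subpaths plus two tree paths, and a short case analysis shows it has the required routing property. Without this lemma you cannot simultaneously (i) bound the number of joining vertices, (ii) keep the weight increase at $O(\epsilon\,\OPT)$, and (iii) preserve 3-edge-connectivity after the replacement; doubling boundary edges alone, as you suggest, restores at most 2-edge-connectivity across the brick, not 3.
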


We prove the Structure Theorem in Subsection~\ref{sec:structure-thm}. The idea is similar to that for 2-ECP, that is we transform an optimal solution to a feasible solution satisfying the theorem.  Throughout we indicate where the transformation for 3-ECP departs from those of 2-ECP. 
In the following, we denote by $Q$ the set of terminals, which are the vertices with positive requirement. 

\subsection{Mortar graph, bricks and portals}\label{sec:mortargraph} 
\fi
\ifFull
\else
\section{Proof of the Structure Theorem}
\ifFull In this section, we prove Structure Theorem.
\else In this section, we give a brief overview of the proof of the Structure Theorem (Theorem~\ref{thm:struct}); full details are in the appended full paper.
\fi First we introduce some properties of the mortar graph and bricks.
For a brick $B$, let $\partial B$ be its boundary and $\inter (B) = E(B) \setminus E(\partial B)$ be its interior. A  path is $\epsilon$-{\em short} if the distance between every pair of vertices on that path is at most $(1+\epsilon)$ times the distance between them in $G$. 
Bricks have the following properties.
\begin{lemma}
{\rm (Lemma 6.10~\cite{BKM09} rewritten)} The boundary of a brick $B$, in counterclockwise order, is the concatenation of four paths $W_B$, $S_B$, $E_B$ and $N_B$ (west, south, east and north) such that:
\begin{enumerate}
\item Every vertex of $Q\cap B$ is in $N_B \cup S_B$.
\item $N_B$ is 0-short and every proper subpath of $S_B$ is $\epsilon$-short.
\end{enumerate}
\end{lemma}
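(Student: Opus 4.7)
The plan is to invoke the mortar graph construction of Borradaile, Klein and Mathieu~\cite{BKM09} and verify that their Lemma 6.10 directly implies the four-path decomposition stated here. Their construction builds the mortar graph in two phases: first compute a constant-factor approximate Steiner tree $T$ spanning all terminals and decompose it into shortest-path pieces; then recursively augment each face with $\epsilon$-short paths until a termination condition holds. Each face of the resulting graph is a brick, and the decomposition of $\partial B$ into four cardinal paths is inherited from the recursive construction.

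First I would set up the identification between the two formulations. In the recursion, every face naturally inherits a ``top'' boundary (a subpath of a shortest path inserted at some earlier level of recursion, or of the initial Steiner tree) together with a ``bottom'' boundary (the most recent augmenting path inserted into the face's parent); I would identify these with $N_B$ and $S_B$ respectively. The two remaining boundary pieces $W_B$ and $E_B$ are the ``sides'' joining the endpoints of $N_B$ to those of $S_B$, each being a subpath of some previously-inserted path.

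Next I would verify the two properties. For Property~1, I would check by induction on the recursion depth that every augmentation step places terminals only on the newly inserted $\epsilon$-short path (which becomes $S_B$ of the smaller face it bounds, or $N_B$ of the larger face when viewed from the other side), so terminals never land on the side paths $W_B \cup E_B$. For Property~2, $N_B$ is a subpath of a path inserted as a shortest path in $G$, hence is 0-short; and every proper subpath of $S_B$ is $\epsilon$-short because $S_B$ was added precisely at the base case of the recursion --- had any proper subpath violated $\epsilon$-shortness, the recursion would have fired once more and further subdivided $S_B$.

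The main obstacle is the bookkeeping needed to align the recursive formalism of~\cite{BKM09} with the compact four-path statement here, in particular ensuring that the endpoints of $N_B$, $S_B$, $W_B$, $E_B$ are chosen consistently so that they concatenate in counterclockwise order; the mathematical content is entirely contained in the cited lemma and no new idea is required beyond this translation.
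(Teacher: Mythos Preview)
The paper does not prove this lemma; it is simply quoted verbatim (in a slightly rewritten form) from Borradaile, Klein and Mathieu~\cite{BKM09} and used as a black box. Your proposal to invoke the mortar-graph construction of~\cite{BKM09} and read off the four-path decomposition is therefore exactly what the paper does, only you have additionally sketched the internal reasoning of the cited lemma, which the present paper omits entirely.
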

The paths that form eastern and western boundaries of bricks are called {\em supercolumns}, and the weight of all edges in supercolumns is at most $\epsilon \OPT$ (Lemma 6.6~\cite{BKM09}).
\fi
\ifFull
The {\em Mortar graph} is a grid-like subgraph of $G$ that (1) spans $Q$ and (2) has weight at most $9\epsilon^{-1}$ times the weight of a minimum  Steiner tree that spans $Q$. Since the weight of minimum Steiner tree is no more than OPT, the weight of mortar graph is no more than $9\epsilon^{-1}\OPT$. 
A brick $B$ is the subgraph of $G$ that is enclosed by a face of the mortar graph (including the boundary of the face); it has boundary $\partial B$ and interior $\inter(B) = G[E(B)\setminus E(\partial B)]$. 
Further, bricks have the following property:

\begin{lemma}
{\rm (Lemma 6.10~\cite{BKM09})} The boundary of a brick $B$, in counterclockwise order, is the concatenation of four paths $W_B$, $S_B$, $E_B$ and $N_B$ (west, south, east and north) such that:
\begin{enumerate}
\item The set of edges $B\setminus\partial B$ is nonempty.
\item Every vertex of $Q\cap B$ is in $N_B$ and $S_B$.
\item $N_B$ is 0-short and every proper subpath of $S_B$ is $\epsilon$-short.
\item There exists a number $t\le\kappa(\epsilon)$ and vertices $s_0,s_1,\dots,s_t$ ordered from west to east along $S_B$ such that for any vertex $x$ of $S_B[s_i,s_{i+1})$, the distance from $x$ to $s_i$ along $S_B$ is less than $\epsilon$ times the distance from $x$ to $N_B$ in $B$. 
\end{enumerate}
\end{lemma}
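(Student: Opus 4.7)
The plan is to follow the explicit mortar graph construction in \cite{BKM09} and verify each of the four properties directly from the way bricks are carved out. I would first recall the construction: starting from a constant-factor approximate Steiner tree $T$ for $Q$, one embeds $T$ in the plane and traces a boundary walk; then one iteratively partitions each region of the walk by threading a $0$-short ``north'' path through the terminals of that region, attaching two \emph{supercolumns} (the $W_B$ and $E_B$ of the future brick) as shortest paths from the north path back to $T$, and finally adding an $\epsilon$-short ``south'' path on the far side. Each brick then inherits a natural four-sided boundary from this carving.

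With the construction in hand, properties (1)--(3) reduce to local bookkeeping. For (1), any face whose interior is trivially empty is absorbed by its neighbor during construction, so every surviving brick has $B\setminus\partial B\neq\emptyset$. For (2), terminals are attached to the mortar graph only through the north and south paths by design; the supercolumns, being shortest paths in $G$, can be perturbed to avoid terminals that would otherwise lie in their interiors. For (3), $N_B$ is a verbatim shortest path in $G$ and is therefore $0$-short, while $S_B$ is assembled from $\epsilon$-short pieces joined at designated breakpoints which serve as the $s_i$ of property~(4); any proper subpath of $S_B$ avoiding such a breakpoint inherits $\epsilon$-shortness from its parent piece.

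The substantive claim is (4), the partition of $S_B$ into $t\leq \kappa(\epsilon)$ intervals of geometrically controlled width relative to distance to $N_B$. I would place the $s_i$ by a doubling-scale argument: for $x\in S_B$, let $d(x)$ be its distance to $N_B$ inside $B$, and place a marker on $S_B$ at steps of length roughly $\epsilon\cdot 2^k$ inside the region where $d(x)\in[2^k,2^{k+1})$. Within each scale the markers telescope into $O(1/\epsilon)$ per scale, and scales below the width of $N_B$ can be absorbed into a single preceding marker; what remains depends only on $\epsilon$, so $\kappa(\epsilon)$ is finite and independent of $n$.

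The main obstacle will be verifying that the doubling-scale marker placement for (4) is compatible with the global construction, that is, that the supercolumns and $\epsilon$-short paths inserted for one brick do not destroy the scale bucketing of later bricks. This is handled in \cite{BKM09} by a charging scheme that pays the cost of each new supercolumn against the north path above it, thereby amortizing the total weight of markers across the recursion; I would import this charging scheme essentially verbatim to close the argument and obtain the stated $\kappa(\epsilon)$ bound.
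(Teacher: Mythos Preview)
The paper does not prove this lemma at all: it is quoted verbatim as Lemma~6.10 of \cite{BKM09} and used as a black box in the spanner construction. There is therefore no proof in the present paper to compare your proposal against; the authors simply import the mortar-graph machinery wholesale from the Steiner-tree PTAS.

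Your sketch is a plausible high-level reconstruction of the argument in \cite{BKM09}, but since it is not part of this paper's contribution you need not reproduce it here. If you do want to flesh it out, a couple of details are slightly off relative to the actual construction: the south boundary $S_B$ is not added as a separate $\epsilon$-short path but is inherited from the (Eulerian) boundary walk of the approximate Steiner tree, and the north paths are the shortest paths threaded through the strips; the $s_i$ in property~(4) arise not from a doubling-scale bucketing but from the recursive strip-cutting procedure that greedily peels off $\epsilon$-short prefixes. For the purposes of this paper, though, a citation suffices.
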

For the above lemma, a path is $\epsilon$-{\em short} if the distance between every pair of vertices on that path is at most $(1+\epsilon)$ times the distance between them in $G$ and $\kappa(\epsilon) = 4\epsilon^{-2}(1+\epsilon^{-1})$. The paths that forms eastern and western boundaries of bricks are called {\em supercolumns}, and further satisfy:
\begin{lemma}\label{lem: supercolumns}
{\rm (Lemma 6.6~\cite{BKM09})} The sum of the weight of all edges in supercolumns is at most $\epsilon\OPT$.
\end{lemma}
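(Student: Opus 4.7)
The plan is to apply a Baker-style shifting argument built into the mortar graph construction. The mortar graph is constructed recursively by partitioning each strip (the region between a north beam $N$ and a south beam $S$) via vertical $S$-to-$N$ shortest paths; supercolumns are precisely these chosen vertical paths. The key freedom is that the algorithm need not commit to a fixed partition up front: it considers a whole indexed family of candidate vertical paths and may pick any offset of a periodic subfamily, allowing an averaging argument to produce a light subfamily.

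First I would make the candidate family explicit. For each strip considered during the construction, let $\mathcal{C}$ denote the collection of shortest $S$-to-$N$ paths anchored at every vertex of $S$ at which a split is permissible. I would bound the total weight of $\mathcal{C}$, summed across all strips and all levels of recursion, by $\alpha\,\OPT$ for an absolute constant $\alpha$. The intuition is that each candidate geodesic has length at most the local ``width'' of its strip, and these widths can be charged to essentially disjoint pieces of $\OPT$ by using planarity together with the fact that $\OPT$ must span the terminals on $N \cup S$ and so contains a path crossing each strip near each anchor.

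Second, I would apply Baker shifting within each strip: partition the candidates of $\mathcal{C}$ into consecutive blocks of size $\lceil \alpha/\epsilon \rceil$ in west-to-east order along $S$. For each of the $\lceil \alpha/\epsilon\rceil$ possible offsets $r$, selecting the $r$-th path in every block yields a valid set of bricks whose east--west boundaries are supercolumns. Averaging over $r$, some offset produces supercolumns of total weight at most an $(\epsilon/\alpha)$-fraction of $w(\mathcal{C})$. The construction commits to this best offset strip-by-strip, so the total weight of all supercolumns is at most $(\epsilon/\alpha) \cdot \alpha \cdot \OPT = \epsilon\,\OPT$.

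The main obstacle is the first-step charging $w(\mathcal{C}) = O(\OPT)$: naively summing candidate weights across the $O(\log n)$ levels of the recursion would introduce a logarithmic factor. To prevent this, I would use planarity to argue that the candidate geodesics appearing at different levels of the recursion are nested inside nested disks, and that each edge of $\OPT$ is therefore charged only $O(1)$ times across the entire construction, making the contributions telescope rather than multiply. Once this edge-disjoint charging is in place, the shifting averaging described above yields the claimed $\epsilon\,\OPT$ bound directly.
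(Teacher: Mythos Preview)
The paper does not prove this lemma at all; it is quoted verbatim from~\cite{BKM09} and used as a black box. So there is no proof in the present paper to compare against, only the original construction in~\cite{BKM09}.

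Your proposal has a genuine gap in Step~1. You propose to let $\mathcal{C}$ be the set of shortest $S$-to-$N$ paths anchored at \emph{every} vertex of $S$ and then argue $w(\mathcal{C})\le \alpha\,\OPT$ by charging each candidate to a nearby crossing of $\OPT$. This fails: a strip can have arbitrarily many vertices on $S$, each at distance $d$ from $N$, while $\OPT$ crosses the strip only once (or not at all, if no terminals lie on that side). Then $w(\mathcal{C})$ is unbounded relative to $\OPT$, and no edge-disjoint or nested charging can rescue this, because $\OPT$ simply does not supply enough mass near most anchors. The ``$O(\log n)$ levels of recursion'' you worry about is also a misreading: the column step in~\cite{BKM09} is not recursive in that sense.

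The actual argument in~\cite{BKM09} never bounds the full candidate family. Instead, columns are selected greedily along $S$ so that the length of each chosen column is at most an $\epsilon$-fraction of the $S$-distance accumulated since the previous column. Summing, the total supercolumn weight is at most $\epsilon$ times the total length of all south boundaries. Those south boundaries are subpaths of the initial $O(1)$-approximate Steiner tree used to seed the mortar graph, so their total length is $O(\OPT)$, yielding the bound (after the usual rescaling of $\epsilon$). The charging is to the mortar-graph boundaries, not to $\OPT$ directly, and there is no averaging over a full anchor set.
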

To obtain the spanner, we add a set of Steiner trees in each brick $B$ whose terminals are vertices of $\partial B$.  
The terminals are drawn from a subset of {\em portal} vertices evenly spaced on the boundary of each brick. 
\else
We designate a set of vertices, called {\em portals}, evenly spaced on the boundary of each brick. Each brick has only constant number (depending on $\epsilon$) of portals on its boundary.
\fi 
\ifFull
We bound the number of portals per brick by $\theta(\epsilon)=O(\epsilon^{-2}\alpha(\epsilon))$ and $\alpha(\epsilon)$ in turn depends on the number of connections required to allow a nearly optimal solution, which is bounded by $o(\epsilon^{-5.5})$.  
The portals satisfy:
\begin{lemma}
{\rm (Lemma 7.1~\cite{BKM09})} For any vertex $x$ on $\partial B$, there is a portal $y$ such that the weight of $x$-to-$y$ subpath of $\partial B$ is at most $1/\theta(\epsilon)$ times the weight of $\partial B$. 
\end{lemma}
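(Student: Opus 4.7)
The plan is to place the $\theta(\epsilon)$ portals at evenly spaced positions along $\partial B$ when measured by arc length (i.e.\ cumulative edge weight), and then to argue that every vertex of $\partial B$ is close to one of them. Let $W = w(\partial B)$ and $k = \theta(\epsilon)$. First I would fix an arbitrary starting vertex $v_0$ on $\partial B$ and a traversal direction, and parametrize the walk along $\partial B$ by the running sum of traversed edge weights, obtaining a surjection from $\partial B$ onto $[0,W]$ in which each vertex maps to a single point and each edge of weight $w(e)$ maps to an interval of length $w(e)$.

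Next I would define the portals. For each $i = 0, 1, \dots, k-1$, let $y_i$ be the vertex on $\partial B$ whose parameter value is the largest value $\le iW/k$ (breaking ties by taking the vertex reached first along the walk). This gives at most $k = \theta(\epsilon)$ portals, as required. The key property of this construction is that for consecutive portals $y_i, y_{i+1}$ (indices mod $k$), the parameter gap is at most $W/k + w(e^*)$, where $e^*$ is the edge straddling parameter value $(i+1)W/k$, but the portion of $\partial B$ between them that lies \emph{strictly between} successive target positions has arc length at most $W/k$.

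Now consider any vertex $x$ on $\partial B$ with parameter value $p$. Let $i$ be the index with $iW/k \le p < (i+1)W/k$. By definition $y_i$ has parameter value in $[iW/k - w(e_i), iW/k]$ for some edge $e_i$ on $\partial B$, and the portion of $\partial B$ from $y_i$ to $x$ (in the walk direction) has weight equal to $p$ minus $y_i$'s parameter, which is at most $p - iW/k + w(e_i) \le W/k$ after we charge the edge $e_i$ to the previous parameter block (or, equivalently, redefine $y_i$ to be the vertex whose parameter value is the smallest value $\ge iW/k$, whichever of the two candidates gives the shorter $x$-to-$y$ subpath). Thus some portal is within $W/k = w(\partial B)/\theta(\epsilon)$ of $x$ along $\partial B$.

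The only real subtlety, and the step I would scrutinize, is the treatment of heavy edges: a single edge on $\partial B$ could have weight exceeding $W/k$, so a naive ``place a portal every $W/k$ units'' scheme might fail to ever land on a vertex close to some $x$. The fix above is to choose, at each target position $iW/k$, between the vertex immediately before and immediately after that target, and take whichever lies on the same side of the target as $x$. Since $x$ itself is a vertex, it is separated from the target by at most the weight of one edge of $\partial B$, so after this local choice the $x$-to-$y$ subpath has arc length at most $W/k$, giving the bound.
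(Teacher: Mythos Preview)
This lemma is not proved in the paper at all; it is simply quoted from Borradaile, Klein, and Mathieu~\cite{BKM09} as a known property of the portal construction, so there is no in-paper argument to compare against. Your approach---drop $\theta(\epsilon)$ markers at evenly spaced arc-length positions on $\partial B$ and round each to a nearby vertex---is the standard one and matches what is done in~\cite{BKM09}.

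That said, your write-up has a real wobble. The bound you assert on the backward distance from $x$ to $y_i$ is false in general: if a single heavy edge spans several consecutive target positions, then $y_i$ (the last vertex with parameter $\le iW/k$) can sit far to the left of $iW/k$, and $p - q_i$ can exceed $W/k$ by an arbitrary amount. Your proposed ``fix''---choosing, \emph{for each $x$ separately}, whichever of the before/after candidate at the nearby target is closer---does not define a fixed portal set, so as literally written it is not a valid construction. The clean repair is simply to look in the other direction: with your ``last vertex $\le iW/k$'' portals, bound the \emph{forward} distance from $x$ to $y_{i+1}$ instead. Since $x$ is itself a vertex with parameter $p < (i+1)W/k$, the portal $y_{i+1}$ has parameter $q_{i+1}$ with $p \le q_{i+1} \le (i+1)W/k$, hence $q_{i+1} - p \le (i+1)W/k - iW/k = W/k$. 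Equivalently, define $y_i$ as the \emph{first} vertex with parameter $\ge iW/k$ and bound the backward distance. Either way the lemma follows in one line and no $x$-dependent choice of portals is needed.
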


Since the weight of each Steiner tree in a brick $B$ can be bounded by the weight of the weight of $\partial B$, and since there are only constant number (that is $2^{\theta(\epsilon)}$) of such Steiner trees, the weight of all trees we add in $B$ is at most $f(\epsilon)$ times the weight of $\partial B$. 
So the total weight of our spanner is bounded by $(2f(\epsilon) + 9\epsilon^{-1})$ times the weight of $MG$, which is $O(\OPT)$.

\subsection{Proof of the Structure Theorem}\label{sec:structure-thm}

We 
\else

To prove the Structure Theorem, we 
\fi
transform $\OPT$ for the instance $(G, Q, r)$ so that it satisfies the following properties (repeated from the introduction):
\begin{description}
\item [P1:] $\OPT \cap \inter(B)$ can be partitioned into a set of trees $\mathcal T$ whose leaves are on the boundary of $B$.
\item [P2:] If we replace any tree in $\mathcal T$ with another tree spanning the same leaves, the result is a feasible solution.
\item [P3:] There is another set of $O(1)$ trees $\mathcal T'$ that costs at most a $1+\epsilon$ factor more than $\mathcal T$, such that each tree of $\mathcal T'$ has $O(1)$ leaves and $(\OPT \setminus {\mathcal T} ) \cup {\mathcal T}'$ is a feasible solution.
\end{description}
\ifFull To argue about the leaves of trees on the boundary of bricks, we use the following definition:
    \begin{definition}
{\rm(Joining vertex \cite{BK13})}. Let $H$ be a subgraph of $G$ and $P$ be a subpath of $\partial G$. A joining vertex of $H$ with $P$ is a vertex of $P$ that is the endpoint of an edge of $H-P$.
\end{definition}
\fi
The transformation consists of the following steps:

        \begin{description} 
        
        \item[Augment] We add four copies of each supercolumn; we take two copies each to be interior to the two adjacent bricks. After this, connectivity between the east and west boundaries of a brick will be transformed to that between the north and south boundaries.
\ifFull
By Lemma~\ref{lem: supercolumns}, 
\else
Since the weight of all supercolumns is at most $\epsilon \OPT$,
\fi
this only increases the weight by an small fraction of $\OPT$.


    \item[Cleave] By cleaving a vertex, we split it into multiple
     copies while keeping the connectivity as required by adding
     artificial edges of weight zero between two copies and maintaining a planar
     embedding.
     We call the resulting solution $\OPT_C$.
In this step, we turn $k$-edge-connectivity into $k$-vertex-connectivity for $k =1,2,3$. By Theorem~\ref{thm: minimal}, we can obtain Property P1: $\OPT_C \cap \inter(B)$ can be partitioned into a set $\cal T$ of trees whose leaves are in $\partial B$.
By Corollary~\ref{cor: terminal}, we can obtain Property P2: we can obtain another feasible solution by replacing any tree in $\cal T$ with another tree spanning the same leaves.

    \item[Flatten] For each brick $B$, we consider the connected
     components of $\OPT_C \cap \inter(B)$.
     If the component only spans vertices in the north or south
     boundary, we replace it with the minimum subpath of the boundary
     that spans the same vertices. 
This will not increase the weight much by the $\epsilon$-shortness of the north and south boundaries. Note that vertex-connectivity may bread as a result, but edge-connectivity is maintained.
In the remainder, we only maintain edge-connectivity.
We call the resulting solution $\OPT_F$.


\item[Restructure] For each brick $B$, we consider the connected
     components of $\OPT_F \cap \inter(B)$. 
     We replace each component with a subgraph through a mapping
     $\phi$. The new subgraph may be a tree or a subgraph
     $\widehat{C}$ \ifFull given by Lemma~\ref{lem: newmap}. \else that is the union of a cycle and two subpaths of $\partial B$. \fi  The mapping
     $\phi$ has the following properties:
    \begin{enumerate}
    \item For any component $\chi$ of $\OPT_M \cap \inter(B)$,
     $\phi(\chi)$ is connected and spans $\chi \cap \partial B$.
    \item For two components $\chi_1$ and $\chi_2$ of $\OPT_M \cap
     \inter(B)$, if $\phi(\chi_i) \ne \widehat{C}$ for at least one of
     $i=1,2$, then $\phi(\chi_1)$ and $\phi(\chi_2)$ are edge-disjoint,
     taking into account edge multiplicities.
    \item The new subgraph $\phi($OPT$_M \cap \inter(B) )$ has 
     \ifFull
     $\alpha(\epsilon) = o(\epsilon^{-5.5})$ joining vertices with
     $\partial B$.
     \else only constant number (depending on $\epsilon$) of vertices in the boundary $\partial B$.
     \fi
    \end{enumerate}
We can prove that the total weight is increased by at most $\epsilon \OPT_F$, giving Property P3.
    We call the resulting solution $\OPT_R$.

    \item[Redirect] \ifFull We connect each joining vertex $j$ of $\OPT_R \cap
     \inter(B)$ 
\else
	We connect each vertex $j$ of $\OPT_R \cap \inter(B)$ in the boundary $\partial B$
\fi
to the nearest portal $p$ on $\partial B$ by adding multiple copies of the short $j$-to-$p$ subpath of $\partial B$. 
\ifFull
     We call the resulting solution $\widehat{OPT}$.
\else
Similar to 2-ECP, we can prove this only increases the weight by an $\epsilon$ fraction of $\OPT$ and the resulting solution satisfies the Structure Theorem.
\fi
    \end{description}


\ifFull
We give more details of these transformations in the following subsections and argue that the transformations guarantee the required connectivity as we do so.  These details are very similar to that used for 2-ECP; we note the differences.  The Restructure step which requires a different structural lemma than used for Borradaile and Klein's PTAS for 2-ECP~\cite{BK08}; the difference between 
 Lemma~\ref{lem: newmap} and that used by Borradaile and Klein is that $\widehat{C}$ need only be a cycle for the 2-ECP, whereas to maintain 3-edge connectivity, a more complicated subgraph $\widehat{C}$ is required.
We borrow the following lemma from~\cite{BK08} to prove Lemma~\ref{lem: newmap}.
\begin{lemma}\label{lem: simpleoneroot}
{\rm(Simplifying a tree with one root, Lemma 10.4 \cite{BKM09}). }Let $r$ be a vertex of $T$. There is another tree $\widehat{T}$ that spans $r$ and the vertices of $T\bigcap P$ such that $w(\widehat{T})\leq (1+4\epsilon)w(T)$ and $\widehat{T}$ has at most $11\epsilon^{-1.45}$ joining vertices with $P$.
\end{lemma}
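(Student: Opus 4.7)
The plan is to use a bucketing/shortcutting scheme that exploits the $\epsilon$-shortness of $P$ to convert many joining vertices of $T$ into subpaths of $P$, while keeping the weight increase small.

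First, I would root $T$ at $r$ and let $J = \{v_1, \dots, v_k\}$ denote the joining vertices of $T$ with $P$, enumerated in their order along $P$. For consecutive pairs $(v_i, v_{i+1})$, let $T_i$ be a $v_i$-to-$v_{i+1}$ walk in $T$ obtained by traversing the unique tree path between them, and let $P_i$ be the $v_i$-to-$v_{i+1}$ subpath of $P$. Since $P$ is $\epsilon$-short, $w(P_i) \leq (1+\epsilon)\, d_G(v_i,v_{i+1}) \leq (1+\epsilon)\, w(T_i)$. Any $T$-subtrees hanging off the paths $T_i$ but disjoint from $P$ are not problematic for the joining-vertex count, so I can focus on the interaction of $T$ with $P$ along the $v_i$'s.

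Second, the core step is a greedy bucketing of the joining vertices. I would process $v_1, \dots, v_k$ left-to-right along $P$, starting a new bucket at $v_i$ and adding $v_{i+1}, v_{i+2}, \dots$ while the cumulative tree weight between the current bucket's first vertex and the newcomer stays below a carefully chosen threshold $\tau$. Once $\tau$ is exceeded, close the bucket and start a new one. Inside each closed bucket, I retain only its first joining vertex and replace the internal structure by the single subpath of $P$ spanning the bucket; the removed $T$-paths are deleted from the tree. After processing all buckets, the surviving tree together with the added $P$-subpaths forms $\widehat T$, which still spans $r$ and $J$ because each bucket's retained $P$-subpath reconnects what the removal severed.

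Third, I would analyze the two quantities separately. The number of buckets is at most $w(T)/\tau + 1$, giving the bound on joining vertices of $\widehat T$. Per bucket, the added weight from the $P$-subpath is at most $(1+\epsilon)$ times the removed tree weight, plus a small slack term coming from the last step in each bucket that exceeds the threshold. Summing over buckets, the total excess weight is at most $\epsilon w(T) + O(\tau)$. Choosing $\tau$ to balance the joining-vertex count $w(T)/\tau$ against the excess weight $\tau$ yields the shape $\tau \asymp \epsilon^{a} w(T)$ for some $a \in (0,1)$; optimizing the exponent to satisfy both the $(1+4\epsilon)$ weight bound and the $O(\epsilon^{-1.45})$ count bound simultaneously forces the specific fractional exponent that appears in the statement.

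The main obstacle is the simultaneous optimization: keeping the joining-vertex count as small as $11\epsilon^{-1.45}$ while ensuring that the weight blowup is at most a factor $(1+4\epsilon)$. The fractional exponent $1.45$ signals that a single-level bucketing does not suffice and that one likely needs a two-level partition --- coarse blocks chosen to absorb the additive slack into the $\epsilon w(T)$ budget, and finer subdivisions inside each coarse block to drive the joining-vertex count down to the claimed bound. I would expect the cleanest execution to set up the thresholds explicitly and then check the two inequalities by a direct telescoping over buckets, using $\epsilon$-shortness to bound each $w(P_i)$ in terms of $w(T_i)$ without double-counting tree edges.
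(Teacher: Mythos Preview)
The paper does not prove this lemma at all; it is quoted directly from \cite{BKM09} (as the attribution ``Lemma 10.4 \cite{BKM09}'' indicates) and used as a black box in the proof of Lemma~\ref{lem: newmap}. There is therefore no proof in this paper to compare your proposal against.

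As for the proposal itself, the overall strategy---bucket the joining vertices along $P$, replace each bucket by a subpath of $P$ using $\epsilon$-shortness, and balance the number of buckets against the weight slack---is indeed the right shape and is close in spirit to the argument in \cite{BKM09}. A few points would need tightening if you were to actually write it out: the hypothesis that $P$ is $\epsilon$-short is used but not stated in the lemma (it holds in context); the tree paths $T_i$ between consecutive joining vertices can share edges, so $\sum_i w(T_i)$ is bounded by $2w(T)$ rather than $w(T)$, and the accounting must respect that; and, as you yourself note, a single-level bucketing only yields an $O(\epsilon^{-2})$ bound on joining vertices, not $O(\epsilon^{-1.45})$. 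The fractional exponent in \cite{BKM09} comes from a specific recursive construction, and your sketch stops short of setting up that recursion or deriving the exponent---so as written the proposal is a plausible outline but not yet a proof of the stated bound.
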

\begin{lemma}\label{lem: newmap}
Let $\mathcal{F}$ be a set of non-crossing trees whose leaves are joining vertices with $\epsilon$-short paths $P_1$ and $P_2$ on the boundary of the graph, and each tree in $\mathcal{F}$ has leaves on both paths. Then there is a subgraph or empty set $\widehat{C}$, a set $\widehat{\mathcal{F}}$ of trees, and a mapping $\phi:\mathcal{F}\to\widehat{\mathcal{F}} \cup \widehat{C}$ with the following properties
\begin{itemize}
\item For every tree $T$ in $\mathcal{F}$, $\phi(T)$ spans $T$'s leaves.
\item For two trees $T_1$ and $T_2$ in $\mathcal{F}$, if $\phi(T_i) \neq \widehat{C}$ for at least one of $i=1,2$ then $\phi(T_1)$ and $\phi(T_2)$ are edge-disjoint (taking into account edge multiplicities).
\item The subgraph $\bigcup \widehat {\mathcal{F}} \cup\{ \widehat{C}\}$ has $o(\epsilon^{-2.5})$ joining vertices with $P_1\cup P_2$.
\item $w(\widehat{C})+ w(\widehat{\mathcal{F}})  \leq 6 w(P_2) + (1+d\cdot\epsilon) w(\mathcal{F})$, where $d$ is an absolute constant.
\item Any two vertices of $\widehat{C} \cap (P_1\cup P_2)$ are three-edge-connected.
\item For any three pairs of vertices of $\widehat{C} \cap (P_1 \cup P_2)$ (where vertices may be repeated), $\widehat{C}$ contains three edge-disjoint paths connecting them respectively. 
\end{itemize}
\end{lemma}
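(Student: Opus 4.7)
The plan is to mimic the analogous structural lemma used for 2-ECP in Borradaile--Klein~\cite{BK08}, where $\widehat{C}$ is a single cycle providing 2-edge-connectivity; the departure here is that $\widehat{C}$ must provide 3-edge-connectivity in the sense of the last two bullets. The overall strategy is to carve a \emph{skeleton} subgraph $\widehat{C}$ out of a bounded number of trees of $\mathcal{F}$ together with (possibly repeated) subpaths of $P_1$ and $P_2$, and then use Lemma~\ref{lem: simpleoneroot} to simplify each remaining tree of $\mathcal{F}$ into a member of $\widehat{\mathcal{F}}$ with few joining vertices.

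Concretely, first let $P_i^{\ast}$ be the minimal subpath of $P_i$ spanning all joining vertices of $\bigcup \mathcal{F}$ on $P_i$, for $i=1,2$, taking $P_2$ to be the lighter of the two (swapping roles if needed, since both are $\epsilon$-short). Next, select a constant number of skeleton trees from $\mathcal{F}$ and assemble $\widehat{C}$ by combining these with $P_1^{\ast}$ and a small number of copies of $P_2^{\ast}$. The goal is that $\widehat{C}$ contains three edge-disjoint $P_1$-to-$P_2$ transversals together with enough boundary copies to route any three pairs of joining vertices via three edge-disjoint paths; the $6 w(P_2)$ slack in the weight bound is precisely there to accommodate up to six copies of $P_2^{\ast}$, while any skeleton tree of weight $w$ pulled from $\mathcal{F}$ is absorbed by the $(1+d\epsilon)\,w(\mathcal{F})$ term. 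For each tree $T \in \mathcal{F}$ not used as a skeleton tree, apply Lemma~\ref{lem: simpleoneroot} with the root chosen at a joining vertex of $T$ on $P_1 \cup P_2$, yielding a simplified tree in $\widehat{\mathcal{F}}$ of weight at most $(1+4\epsilon)\,w(T)$ and at most $11\epsilon^{-1.45}$ joining vertices. Setting $\phi(T)$ to the simplified tree for non-skeleton $T$ and $\phi(T)=\widehat{C}$ for skeleton trees gives the required mapping; edge-disjointness of $\phi(T_1),\phi(T_2)$ for non-skeleton $T_1, T_2$ follows because the original non-crossing trees are edge-disjoint and Lemma~\ref{lem: simpleoneroot} operates inside each tree individually.

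The main obstacle will be designing $\widehat{C}$ so that it both satisfies the three-pair routing property of the final bullet and keeps its weight within the stated budget. The delicate case is a pair of joining vertices both lying on $P_1$ (or both on $P_2$): the obvious three edge-disjoint paths would all use portions of the same boundary path and collide, so additional copies of $P_2^{\ast}$ must be introduced and two of the three paths routed via distinct $P_1$-to-$P_2$ transversals of the skeleton. I expect to resolve this by a case analysis on the relative cyclic order of the six endpoints of the three transversals around $\widehat{C}$, reducing the three-pair property to the pairwise 3-edge-connectivity property via Menger's theorem applied to $\widehat{C}$ with edge multiplicities respected. Bounding the total number of joining vertices by $o(\epsilon^{-2.5})$ then follows from the constant contribution of the skeleton trees plus the $O(\epsilon^{-1.45})$ bound per simplified tree, summed over the trees that can contribute, which is bounded by an $\epsilon$-shortness argument analogous to the 2-ECP case.
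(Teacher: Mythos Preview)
Your overall architecture (build a skeleton $\widehat{C}$, simplify remaining trees via Lemma~\ref{lem: simpleoneroot}) matches the paper, but the proposal has a genuine gap in how you control both the weight of $\widehat{C}$ and the number of trees in $\widehat{\mathcal{F}}$.

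You say you will ``select a constant number of skeleton trees'' and map only those to $\widehat{C}$, simplifying \emph{every} other tree individually.  But $|\mathcal{F}|$ is not bounded a priori, and $\epsilon$-shortness of $P_1,P_2$ does not bound it either; so summing $O(\epsilon^{-1.45})$ joining vertices over all non-skeleton trees gives no $o(\epsilon^{-2.5})$ bound.  The paper's key device is a \emph{weight threshold}: order the trees $T_1,\ldots,T_k$ along $P_2$, let $a$ be the smallest and $b$ the largest index with $w(T_a),w(T_b)\le \epsilon\,w(\mathcal{F})$, and map \emph{all} of $T_a,\ldots,T_b$ to $\widehat{C}$ (while building $\widehat{C}$ only from $T_a$, $T_b$, and subpaths $P_1',P_2'$ of the boundary).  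Every tree left outside $[a,b]$ then has weight $>\epsilon\,w(\mathcal{F})$, so there are at most $\epsilon^{-1}$ of them; that is what yields the $o(\epsilon^{-2.5})$ count.  This same threshold is what makes the weight bound work: because $T_a$ and $T_b$ are \emph{light}, the doubled copies $\{P_a,P_a,P_b\}$ in $\widehat{C}$ cost only $O(\epsilon)\,w(\mathcal{F})$.  Your sentence ``any skeleton tree of weight $w$ is absorbed by the $(1+d\epsilon)w(\mathcal{F})$ term'' is not enough once you take multiple copies of a path inside a skeleton tree, which you must do to get 3-edge-connectivity.

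Two smaller points.  First, a single application of Lemma~\ref{lem: simpleoneroot} does not by itself span leaves on \emph{both} $P_1$ and $P_2$; the paper takes a minimal subtree $T_i'$ hitting all $P_1$-leaves and one vertex $r\in P_2$, applies the lemma along $P_1$, and then adds the minimal subpath $R_i\subseteq P_2$ through $T_i$'s $P_2$-leaves (these $R_i$ are disjoint across $i$ by non-crossing, which is where part of the $6w(P_2)$ budget goes).  Second, the paper's $\widehat{C}$ uses only \emph{two} transversals ($P_a$ and $P_b$), not three; the third edge-disjoint path for the last bullet is obtained by doubling $P_a$, $P_1'$, and $P_2'$, and the three-pair routing is then a short case analysis on interleaving along the cycle $P_a\cup P_1'\cup P_b\cup P_2'$.
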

\begin{proof}
Call $P_1$ the top path and $P_2$ the bottom path.  Order the trees of $\mathcal F$: $T_1, T_2, \ldots, T_k$  according to their leaves on $P_2$ from left-to-right; the trees are well ordered since they are non-crossing and have leaves on both paths.
There are two cases:
\begin{description}
        \item [ $ k>\epsilon^{-1}$:] In this case we reduce the
     number of trees by incorporating a subgraph $\widehat{C}$. Let
     $a$ be the smallest index such that $w(T_a)\leq \epsilon
     w(\mathcal{F})$ and $b$ the largest index such that $w(T_b)\leq
     \epsilon w(\mathcal{F})$. We replace trees
     $T_a,T_{a+1},\dots,T_b$ with a subgraph $\widehat{C}$. Let $P'_1$
     be the minimal subpath of $P_1$ that spans all the leaves of tree
     $T_i$ for $a\leq i\leq b$. $P'_2$ is likewise the minimal subpath
     on $P_2$. Let $u_i$ and $v_i$ be $P'_i$'s endpoints (with $u_i$
     the left end) for $i=1,2$. Let $P_a$ ($P_b$) be the $u_1$-to-$u_2$
     ($v_1$-to-$v_2$) subpath in $T_a$ ($T_b$).  Since $P_1$ is
     $\epsilon$-short, we have $$w (P'_1) \leq (1+\epsilon) w (P_a
     \cup P_b \cup P'_2).$$ Let $\widehat{C}$ be the multisubgraph $\{P_a,P_a, P_b ,P'_1,P'_1,P'_2,P'_2\}$.  We replace $\bigcup_{i=a}^bT_i$
     with $\widehat{C}$ and set $\phi(T_i) = \widehat{C}$ for $a \leq i
     \leq b$.

    \item[$ k\leq \epsilon^{-1}$:] In this case, the number of trees is already bounded and we set  $\widehat{C} = \emptyset$.
    \end{description}
In both cases, we transform the remaining trees  $T_i$ ($i\neq a,a+1,\dots,b$) as follows. Let $T'_i$ be a minimal subtree of $T_i$ that spans all leaves of $T_i$ on $P_1$ and exactly one vertex $r$ on $P_2$. Let $R_i$ be the minimal subpath on $P_2$ that spans all leaves of $T_i$ on $P_2$. By Lemma \ref{lem: simpleoneroot}, there is another tree, say $T''_i$, for $T'_i$ with root $r$ and path $P_1$. We replace $T_i$ with $\widehat{T_i}=T''_i\cup R_i$, and set    $\phi(T_i)=\widehat{T_i}$ for $i\neq a,\dots,b$. Then $\widehat{T_i}$ spans all leaves of $T_i$.

$\widehat{C}$ (if non-empty) has six joining vertices with $P_1\cup P_2$. Each tree $\widehat{T_i}$ has one joining vertex with $P_2$ and by Lemma \ref{lem: simpleoneroot} $o(\epsilon^{-1.5})$ joining vertices with $P_1$. By the choice of $a$ and $b$, there are at most $\epsilon^{-1}$ trees mapped to $\widehat{T_i}$. So $\widehat{\mathcal{F}}\cup\{\widehat{C}\}$ has totally at most $o(\epsilon^{-2.5})$ joining vertices with $P_1\cup P_2$.

The total weight of $\widehat{C}$ is $$\begin{array}{llll}
w(\widehat{C})& \leq 2 w(P'_1)+2 w(P'_2)+ 2 w(P_a)+ w(P_b) \\
&\leq 2(1+\epsilon) [w(P_a)+ w(P_b)+ w(P'_2)] +2 w(P'_2)+ 2 w(P_a)+ w(P_b)\\
&\leq (4+2\epsilon) w(P_a)+ (3+2\epsilon) w(P_b)+ (4+2\epsilon) w(P'_2)\\ 
&\leq (7+4\epsilon) \epsilon \cdot w (\mathcal{F})+ (4+2\epsilon) w(P'_2).\\
\end{array}$$
And the total weight of $\widehat{T_i}$ is $$\begin{array}{llll}
\sum_{i=1,\dots,a-1,b+1,\dots,k}w(\widehat{T_i})&\leq\sum_{i=1,\dots,a-1,b+1,\dots,k} (w(R_i)+ (1+4\epsilon) w(T_i)).
\end{array}$$
Since all the trees in $\mathcal{F}$ are non-crossing, $R_i$ and $P'_2$ are disjoint. The total weight of our replacement is at most $6w(P_2)+(1+O(\epsilon))w(\mathcal{F})$.

Now we prove the connectivity properties for $\widehat{C}$. By its construction, $\widehat{C}$ spans all the leaves of tree $T_i$ for $a \leq i \leq b$  and vertices of $\widehat{C} \cap (P_1 \cup P_2)$ are three-edge-connected. So we only need to prove $\widehat{C}$ contains three edge-disjoint paths connecting any three pair of vertices of $\widehat{C} \cap (P_1 \cup P_2)$ respectively.  This can be seen by case analysis, as described and illustrated below.  Let $P^*=P_a \cup P'_1 \cup P'_2$, then every edge of $P^*$ has multiplicity of two in $\widehat{C}$ by construction.
 
We first consider the case that any two pairs do not contain identical vertex. For $i=1,2$ let $x_i$, $y_i$ and $z_i$ be the three pair of vertices.  If there are two pairs of vertices are not interleaving in $P^*$, then we could connect these two pairs by $P^*\cup P_b$, which is a cycle. And the other pair could be connected by a subpath of $P^*$ which is edge-disjoint from the other two paths in $\widehat{C}$. Otherwise, we have the two sets, $\{x_i, y_j, z_l\}$ and $\{x_{3-i}, y_{3-j}, z_{3-l}\}$, which do not interleave each other and appear in the same order in $P^*$. See Figure~\ref{fig: colorpaths}: there are three edge-disjoint paths connecting the three pairs respectively. 
\begin{figure}[ht]
  \centering
  \includegraphics[scale=1]{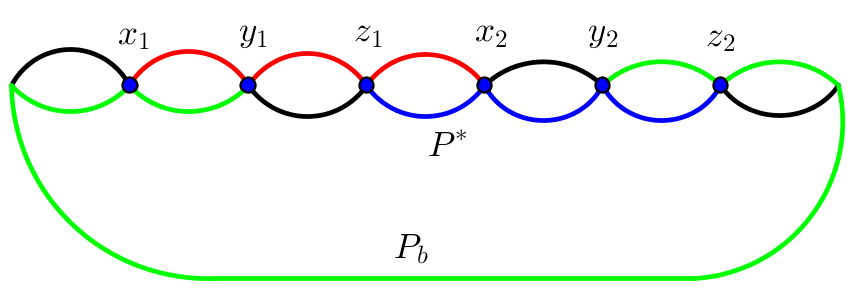}
  \caption{If every two pairs of the three interleave, there are three edge-disjoint paths, shown in different colors, connecting them respectively in $\widehat{C}$.}
  \label{fig: colorpaths}
\end{figure}

If there are two pairs containing identical vertices, we could connect these two pairs by $P^*\cup P_b$ and then connect the other pair by another subpath of $P^*$. Since every edge in $P^*$ is multiple in $\widehat{C}$, these paths are edge-disjoint.
\end{proof}

\subsubsection{Augment} 
For each supercolumn $P$ common to two bricks $B_1$ and $B_2$, we do the following:
\begin{itemize}
\item Add four copies of $P$, $P_1, P_2, P_3, P_4$ to $\OPT$.  We consider $P_1$ and $P_2$ to be internal to $B_1$, and $P_3$ and $P_4$ to be internal to $B_2$.  We also cut open the graph along $P$, creating a new face between $P_2$ and $P_3$.  Call the result $\OPT'_A$.
\item Remove edges from $\OPT'_A$ until what remains a minimal subgraph satisfying the connectivity requirements.  Similar to the argument illustrated in Figure~\ref{fig: colorpaths}, maintaining connectivity in the presence of this new face is achievable.
\end{itemize}

Let the resulting solution be $\OPT_A$. By this construction, $\OPT_A$ has no joining vertices with internal vertices of the supercolumns:

\begin{lemma}\label{lem: terminal_NS}
For every brick $B$, the joining vertices of OPT$_A\cap B$ with $\partial B$ belong to $N_B$ and $S_B$.
\end{lemma}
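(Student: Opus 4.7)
The plan is to argue by contradiction using the minimality step at the end of the Augment operation. Suppose that for some brick $B$ there is a joining vertex $v$ of $\OPT_A \cap B$ with $\partial B$ lying strictly inside a supercolumn; by symmetry, say $v$ lies in the interior of $W_B$. Let $e = vu$ be the corresponding edge in $(\OPT_A \cap B) \setminus \partial B$, and let $n, s$ denote the two endpoints of $W_B$. First I would extract the structure implanted by the Augment step: after the cut-open, the original supercolumn splits into two copies, one of which is now $W_B$, and in addition $\OPT_A$ contains two further copies of this path drawn entirely inside $B$ (namely $P_1, P_2$ or $P_3, P_4$, depending on which side $B$ lies on). Together these furnish three pairwise edge-disjoint $n$-to-$s$ paths in $\OPT_A \cap B$, and the new face created by the cut-open guarantees that $B$ shares no vertex with its west neighbor other than $n$ and $s$, so every edge of $\OPT_A$ incident to $v$ lies in $B$.

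The core step is then to show that $e$ is redundant in $\OPT_A$. All terminals of $B$ lie on $N_B \cup S_B$, so every $k$-edge-disjoint path family between a pair of terminals enters and leaves $B$ only through vertices of $N_B \cup S_B$; whenever such a path uses $e$, it arrives at $v$ along an edge of $W_B$ and departs via $e$ into $B$'s interior. I would reroute each such subwalk by detouring along $W_B$ from $v$ to the nearer corner and then along one of the two interior copies of $W_B$, rejoining the original path where it next leaves the supercolumn; the planar embedding --- in which $W_B$ now borders the cut face --- ensures the detour can be drawn without crossing any other path of the family. Applying these reroutings simultaneously across all requirements produces a feasible subgraph that omits $e$, contradicting the minimality of $\OPT_A$.

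The main obstacle is orchestrating the reroutings so that edge-disjointness is preserved across all $k$-path families at once. Since the three parallel copies of $W_B$ provide exactly $k = 3$ edge-disjoint $n$-to-$s$ paths, matching rather than exceeding the maximum connectivity requirement, the detours must be scheduled carefully so that no single copy is overused. The planarity of the modified graph --- specifically the fact that after the cut-open $W_B$ lies on the boundary of the new face, giving it an outer geometry --- is what enables a clean, non-crossing accounting of the detours and closes the contradiction.
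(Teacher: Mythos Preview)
The paper itself offers no proof beyond the one sentence ``By this construction, $\OPT_A$ has no joining vertices with internal vertices of the supercolumns,'' placed immediately before the lemma. So there is little to compare against; your proposal already says more than the paper does, and the underlying idea --- the three parallel $n$-to-$s$ copies absorb whatever connectivity used to pass through internal vertices of $W_B$ --- is the intended one.

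There is, however, a real gap in how you set up the contradiction. You work in $\OPT_A$ and assert that it ``contains two further copies of this path drawn entirely inside $B$.'' But $\OPT_A$ is obtained from $\OPT'_A$ by \emph{minimizing}, and minimal feasible subgraphs are not unique: nothing prevents the minimization from discarding $P_1,P_2$ (or pieces of them) while retaining $e$, routing the needed connectivity through $e$ and other interior edges instead. In that case your detours have nothing to run along, and no contradiction with minimality follows. What the paper's ``by construction'' is really invoking is that $\OPT'_A$ remains feasible after deleting \emph{every} edge from an internal supercolumn vertex into $\inter(B)$ (this is the step the paper justifies ``similar to the argument illustrated in Figure~\ref{fig: colorpaths}''), and the minimization is then \emph{chosen} to strip those edges first. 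The lemma is a statement about how $\OPT_A$ is defined, not a property forced on every minimal subgraph of $\OPT'_A$.

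Your rerouting sketch has a second loose end even at the level of $\OPT'_A$. You assume the path ``arrives at $v$ along an edge of $W_B$'' and later ``leaves the supercolumn'' at a point where an interior copy can pick it up. Neither is guaranteed: $v$ may carry several edges into $\inter(B)$, so a path can enter and leave $v$ entirely through such edges; and a path that goes $n\to v$ along $W_B$ and then via $e$ straight toward a terminal on $N_B$ never returns near the supercolumn, so the $n$-to-$s$ interior copies never meet it. The clean fix is to delete all such edges at once: then the internal vertices of $W_B$ form a pure $n$-to-$s$ path with no side branches, every visit a path makes to this cul-de-sac collapses to a single $n$-to-$s$ traversal, and the three copies suffice because $k\le 3$. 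Your final paragraph gestures at exactly this accounting but does not carry it out.
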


\subsubsection{Cleave}\label{sec:cleave}

In this part, we call the vertices with positive requirement terminals.  Given a vertex $v$ with a non-interleaving bipartition $A$ and $B$ of the edges incident to $v$, we define {\em cleaving} $v$ as the following: split $v$ into two copies $v_1$ and $v_2$, such that all the edges in $A$ ($B$) are incident to $v_1$ ($v_2$); and then we add one zero-weight edge between $v_1$ and $v_2$.  See Figure~\ref{fig: cleavings}.  We have two types of cleavings:

{\bf Simplifying cleavings} Let $C$ be a non-self-crossing, non-simple cycle that visits vertex $v$ twice. For $1\leq i \leq 4$, let $e_i$ be the edge of $C$ incident to $v$ such that $e_1$, $e_2$, $e_3$ and $e_4$ are embedded clockwise and $e_1$ and $e_4$ are in the same subcycle of $C$.  Define a bipartition $A$, $B$ of the edges incident to $v$ as follows: given the clockwise embedding of those edges, let $A$ contain $e_1$, $e_2$ and all the edges between them clockwise.

{\bf Lengthening cleavings} Let $C$ be a cycle and $v$ a vertex on $C$. Let $e_1$ and $e_2$ be the edges incident to $v$ strictly inside of $C$, and let $e'_1$ and $e'_2$ be the edges of $C$ incident to $v$. Define a bipartition $A$ and $B$ of the edges incident to $v$ as follows: given the embedding of those edges with $e_1$, $e_2$, $e'_2$ and $e'_1$ in the clockwise order, $A$ contains $e_1$, $e'_1$ and all the edges from $e'_1$ to $e_1$ clockwise. 

\begin{figure}[ht]
  \centering
  \includegraphics[scale=1]{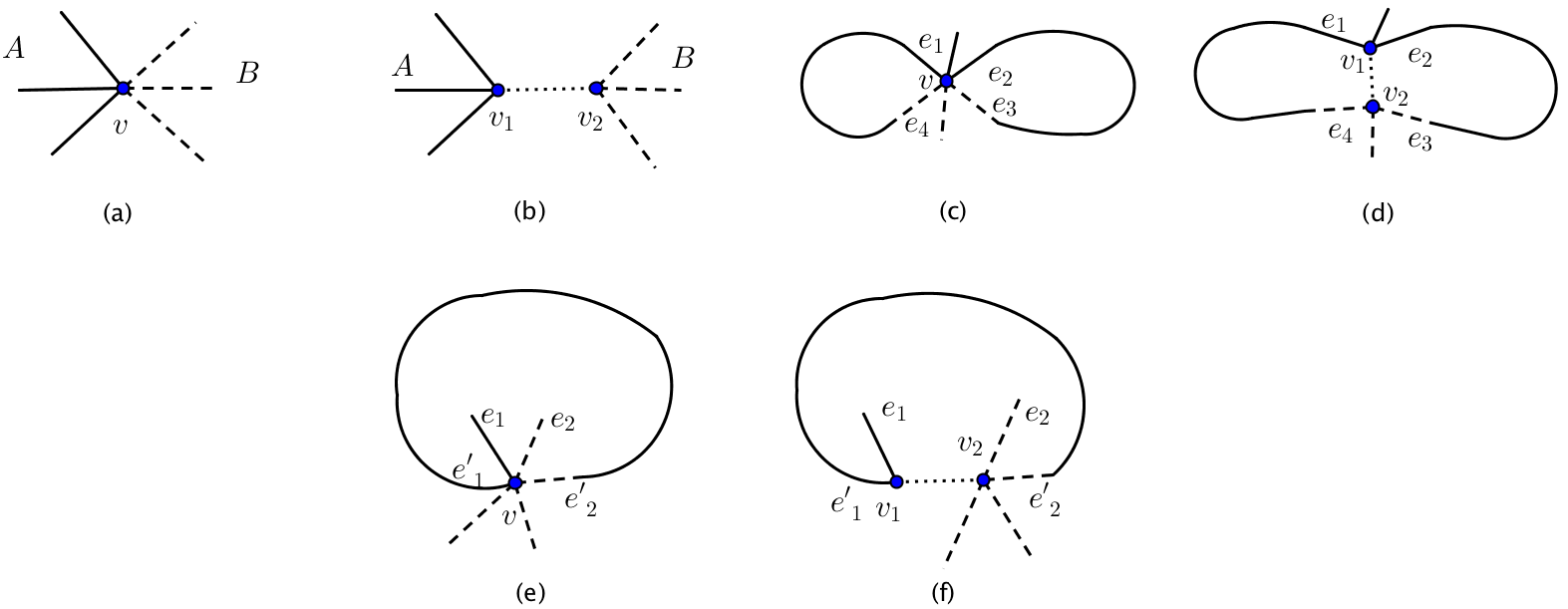}
  \caption{Cleaving examples. The bipartition of edges incident to $v$ is illustrated by solid edges $A$ and dashed edges $B$, and the added artificial edge is illustrated by the dotted edge. (a) and (b) give an example before and after a cleaving. (c) and (d) illustrate a simplifying cleaving. (e) and (f) show a lengthening cleaving. }
  \label{fig: cleavings}
\end{figure}

We perform simplifying cleavings for all the non-simple cycles of OPT$_A$ until every cycle is simple and call the resulting solution OPT$'_S$ and the resulting graph $G_S$. Note that after simplifying cleaving a terminal, we take only one copy that is in the mortar graph as the terminal.  These cleavings will reduce edge-connectivity to vertex-connectivity by the following lemmas.
\begin{lemma}\label{lem: copies}
For the cleaved vertex $v$, the copies of $v$ are three-edge-connected in OPT$'_S$. 
\end{lemma}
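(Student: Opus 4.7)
The plan is to proceed by induction on the sequence of cleavings used in constructing $\OPT'_S$ from $\OPT_A$, maintaining the stronger invariant that in every intermediate graph $G_k$, every pair of copies of any original vertex is three-edge-connected. The base case is a single cleaving of $v$ along the non-self-crossing non-simple cycle $C$ visiting $v$ twice. Since $C$ is non-self-crossing, it decomposes at $v$ into two edge-disjoint subcycles $C_1$ (containing $e_1$ and $e_4$) and $C_2$ (containing $e_2$ and $e_3$), sharing only the vertex $v$. By the definition of the bipartition, $\{e_1, e_2\} \subseteq A$ and $\{e_3, e_4\} \subseteq B$, so after cleaving, $e_1, e_2$ become incident to $v_1$ and $e_3, e_4$ to $v_2$; consequently each of $C_1$ and $C_2$ becomes a $v_1$-to-$v_2$ path. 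Together with the artificial edge $v_1 v_2$ added by the cleaving, these two paths form three pairwise edge-disjoint $v_1$-to-$v_2$ paths.

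For the inductive step, suppose the invariant holds in $G_k$, and let $G_{k+1}$ be obtained by cleaving a vertex $u$ in $G_k$ into $u', u''$ with artificial edge $e = u'u''$. Applying the base-case argument to the new cleaving cycle shows immediately that $u'$ and $u''$ are three-edge-connected in $G_{k+1}$. To extend the invariant to the remaining pairs, I would argue by contradiction via the contraction identity $G_k = G_{k+1}/e$. Consider any pair $(x,y)$ of vertices in $G_{k+1}$ whose preimages in $G_k$ (meaning $x$ itself if $x \notin \{u',u''\}$ and $u$ otherwise, and similarly for $y$) are three-edge-connected by the inductive hypothesis. Suppose, for contradiction, that $F$ is a 2-cut separating $x$ from $y$ in $G_{k+1}$. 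If $e \notin F$, then $u'$ and $u''$ lie on the same side of $F$, so after contracting $e$ the cut $F$ remains a 2-cut in $G_k$ still separating the preimages of $x$ and $y$, contradicting the inductive hypothesis. If $e \in F$, then $F$ must separate $u'$ from $u''$ in $G_{k+1}$, contradicting the three-edge-connectivity of $u', u''$ just established.

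The main obstacle I anticipate is the careful case analysis of the contraction argument, particularly verifying that when $e \notin F$ contracting $e$ does not create or destroy edges between the two sides of the cut; this hinges on both endpoints of $e$ lying on the same side. A secondary subtlety is that when one of $x$ or $y$ is a newly introduced copy $u'$ or $u''$, the inductive hypothesis must be applied to the preimage $u$, which requires carefully identifying that $u$ and the other vertex were indeed three-edge-connected in $G_k$. Finally, the base-case decomposition of $C$ into $C_1$ and $C_2$ relies on $C$ being non-self-crossing under the planar embedding, ensuring the two subcycles are well-defined and edge-disjoint at $v$ so that they yield two genuinely edge-disjoint $v_1$-to-$v_2$ paths after the cleaving.
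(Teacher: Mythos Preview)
Your base case is exactly the paper's proof: the paper's entire argument is the two-line observation that the two subcycles of $C$ become two edge-disjoint $v_1$-to-$v_2$ paths after the cleaving, and the artificial edge is the third. So at the level of this lemma, you have recovered the paper's proof.

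Where you diverge is in scope. You strengthen the statement to an invariant over the whole sequence of cleavings and carry out an induction with a cut/contraction argument. The paper instead treats Lemma~\ref{lem: copies} as a statement about a single cleaving and defers the preservation under subsequent cleavings to the next lemma (a single cleaving preserves $k$-edge-connectivity for $k\le 3$, proved by combining the $k$ paths to $\{v_1,v_2\}$ with the three paths between $v_1$ and $v_2$) and its corollary (iterate). Your contraction argument and the paper's path-splicing argument are two standard ways to get the same conclusion; yours is slightly more uniform, theirs is more constructive. One small point of precision in your inductive step: when $e\in F$, the conclusion that $F$ separates $u'$ from $u''$ requires $F$ to be a \emph{minimum} $x$--$y$ cut (so that every edge of $F$ crosses the bipartition); either state this, or note that if $u',u''$ lie on the same side then $F\setminus\{e\}$ is already a cut of size $\le 1$, which contradicts the inductive hypothesis even more directly.
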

\begin{proof}
The original cycle will give two edge-disjoint paths between the two copies. And the artificial edge will be the third path.
\end{proof} 

\begin{lemma}
Let $\widehat{H}$ be the graph obtained from $H$ by a simplifying cleaving a vertex $v$. Then for $k=1,2,3$, if a vertex $u$ and $v$ are $k$-edge-connected in $H$, then $u$ and any copy of $v$ are $k$-edge-connected in $\widehat{H}$.
\end{lemma}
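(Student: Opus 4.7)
The plan is to use Menger's theorem and argue by contradiction: suppose there is an edge set $F \subseteq E(\widehat H)$ with $|F| < k$ that separates $u$ from one of the copies, say $v_1$, in $\widehat H$. I will then derive a contradiction by a two-case analysis depending on whether $F$ also separates the two copies $v_1, v_2$ from each other.

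In the first case, $F$ separates $v_1$ from $v_2$ in $\widehat H$. By Lemma~\ref{lem: copies}, $v_1$ and $v_2$ are three-edge-connected in $\widehat H$ (via the two $v_1$-to-$v_2$ paths obtained from the two subcycles of $C$, plus the artificial edge), so any edge set separating them has size at least $3$. This gives $|F| \ge 3 \ge k$, contradicting $|F| < k$.

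In the second case, $F$ does not separate $v_1$ from $v_2$. Since $F$ separates $u$ from $v_1$, and $v_2$ lies in the same component as $v_1$ in $\widehat H \setminus F$, the set $F$ also separates $u$ from $v_2$. Let $F' = F \setminus \{e\}$, where $e$ is the zero-weight artificial edge; note $F' \subseteq E(H)$ under the natural identification of edges. I would then check that $F'$ still separates $u$ from $\{v_1, v_2\}$ in $\widehat H \setminus F'$: adding back $e$ only introduces a direct link between $v_1$ and $v_2$, and since both are already in the same non-$u$ component, no $u$-to-$\{v_1,v_2\}$ path is created. Contracting $e$ (equivalently, identifying $v_1$ and $v_2$) recovers $H$, and the separation is preserved, so $F'$ is a $u$-to-$v$ edge cut in $H$ of size at most $|F|-1 \le k-2 < k$, contradicting the $k$-edge-connectivity of $u$ and $v$ in $H$.

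Either case yields a contradiction, so no such $F$ exists and $u$ is $k$-edge-connected to $v_1$ in $\widehat H$; the argument for $v_2$ is symmetric. The only subtlety, and the place to be careful, is the bookkeeping in Case~2 when $F$ contains the artificial edge: one must verify explicitly that removing the artificial edge from $F$ does not inadvertently reconnect $u$ to a copy of $v$, which follows precisely because $F$ already places $v_1$ and $v_2$ on the same (non-$u$) side.
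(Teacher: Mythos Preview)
Your argument is correct. The paper's own proof is the path-based dual of yours: it observes that the $k$ edge-disjoint $u$-to-$v$ paths in $H$ become $k$ edge-disjoint $u$-to-$\{v_1,v_2\}$ paths in $\widehat H$, and then invokes Lemma~\ref{lem: copies} (three-edge-connectivity of $v_1,v_2$) together with $k\le 3$ to conclude that $u$ is $k$-edge-connected to each copy. You instead argue on the cut side via Menger, splitting on whether a hypothetical small cut $F$ also separates $v_1$ from $v_2$; Lemma~\ref{lem: copies} plays exactly the same role (Case~1), and Case~2 pushes the cut back to $H$. The two arguments are equivalent reformulations; yours is slightly more explicit about why the conclusion follows from the $3$-edge-connectivity of the copies.

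One minor slip: in Case~2 you write that $F'$ has size at most $|F|-1\le k-2$. That bound only holds when the artificial edge $e$ actually lies in $F$; if $e\notin F$ then $F'=F$ and you only get $|F'|\le k-1$. Either way $|F'|<k$, so the contradiction stands, but you should state the two sub-cases separately (or just write $|F'|\le |F|<k$).
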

\begin{proof}
If $u$ and $v$ are $k$-edge-connected, then there are $k$ edge-disjoint paths from $u$ to $\{v_1, v_2\}$. Since $k$ is no more than three, and $v_1$ and $v_2$ are three-edge-connected by Lemma~\ref{lem: copies},  $u$ and $v_1$ are $k$-edge-connected. Similarly, $u$ and $v_2$ are $k$-edge-connected.
\end{proof}
\begin{corollary}
For $k=1,2,3$, if two vertices are $k$-edge-connected in OPT$_A$, then any of their copies are $k$-edge-connected in OPT$'_S$.
\end{corollary}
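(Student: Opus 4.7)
The plan is to proceed by induction on the number of simplifying cleavings that transform $\OPT_A$ into $\OPT'_S$. At each step, the preceding lemma handles the case where the cleaved vertex is itself one of the endpoints of the pair whose connectivity we are tracking; the main obstacle will be the complementary case, where the cleaved vertex is neither endpoint.

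Fix $u_0, v_0 \in \OPT_A$ that are $k$-edge-connected, and suppose inductively that every pair consisting of a copy of $u_0$ and a copy of $v_0$ in the intermediate graph $H$ is $k$-edge-connected in $H$. Let $\widehat{H}$ be obtained from $H$ by simplifying cleaving a vertex $w$ into $w_1, w_2$ joined by the zero-weight artificial edge $e$, and consider an arbitrary pair $(\hat{u}, \hat{v})$ of copies in $\widehat{H}$. If $\hat{u} \in \{w_1, w_2\}$ then $w$ was a copy of $u_0$ in $H$, while $\hat{v}$ is unchanged by the cleaving and remains a copy of $v_0$ in $H$; by the inductive hypothesis $w$ and $\hat{v}$ are $k$-edge-connected in $H$, so applying the preceding lemma with $w$ in the role of the cleaved endpoint yields $k$-edge-connectivity of $\hat{u}$ and $\hat{v}$ in $\widehat{H}$. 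The case $\hat{v} \in \{w_1, w_2\}$ is symmetric.

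The remaining, harder case is $\hat{u}, \hat{v} \notin \{w_1, w_2\}$: both vertices already exist in $H$ and are $k$-edge-connected there by induction, but the preceding lemma does not directly apply since the cleaved vertex is neither endpoint. Here I would invoke Menger's theorem and show that every minimal $\hat{u}$-$\hat{v}$ edge cut $F$ in $\widehat{H}$ satisfies $|F| \geq k$. If $e \notin F$, then contracting $e$ exhibits $F$ as a cut in $H$ separating $\hat{u}$ and $\hat{v}$, so $|F| \geq k$ by the inductive hypothesis. If $e \in F$, then minimality of $F$ forces $w_1$ and $w_2$ to lie on opposite sides of the cut, so they are disconnected in $\widehat{H} \setminus F$; since Lemma~\ref{lem: copies} guarantees three edge-disjoint $w_1$-to-$w_2$ paths in $\widehat{H}$, each must contribute at least one edge to $F$, giving $|F| \geq 3 \geq k$.

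The crux of the argument is this intermediate-vertex case: the fact that the two copies of the cleaved vertex are themselves $3$-edge-connected is precisely what compensates for having only a single artificial edge available to bridge them, and it is what makes the induction close for all $k \leq 3$.
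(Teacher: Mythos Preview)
Your proof is correct. The paper states this corollary without proof, evidently regarding it as an immediate consequence of the preceding lemma together with an induction on the number of simplifying cleavings; your argument spells out exactly that induction, and in particular supplies the ``neither endpoint is cleaved'' case that the preceding lemma does not directly cover. The Menger/cut dichotomy you use there---either the artificial edge $e$ lies outside the cut (so the cut descends to $H$) or it lies inside (so the cut separates $w_1$ from $w_2$ and hence has size at least $3$ by Lemma~\ref{lem: copies})---is the clean way to close the induction, and your observation that the $3$-edge-connectivity of the copies is precisely what makes this work for all $k\le 3$ is exactly the point.
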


We then remove edges from OPT$'_S$ until the rest is a minimal subgraph satisfying the connectivity requirement. We call the resulting solution OPT$_S$.  Since every cycle in OPT$_S$ is simple, we have the following lemma:
\begin{lemma}\label{lem: simplecleaving}
For $k=1,2,3$, if two terminals are $k$-edge-connected in OPT$_A$, then they are $k$-vertex-connected OPT$_S$.
\end{lemma}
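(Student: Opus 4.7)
My plan is to reduce $k$-vertex-connectivity in $\OPT_S$ to the combination of $k$-edge-connectivity (which is inherited from $\OPT_A$) and the defining property of $\OPT'_S$ that every cycle is simple. The first ingredient is immediate: by the preceding corollary, any copies of $u,v$ are $k$-edge-connected in $\OPT'_S$, and since $\OPT_S$ is a minimum subgraph of $\OPT'_S$ meeting the original edge-connectivity requirements, the designated terminal copies of $u$ and $v$ remain $k$-edge-connected in $\OPT_S$. The second ingredient is also inherited: every closed trail in $\OPT_S$ is a closed trail in $\OPT'_S$, so every cycle of $\OPT_S$ is simple.

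I would then proceed by contradiction. Assume $u,v$ are $k$-edge-connected in $\OPT_S$ but there is a vertex set $S \subseteq V(\OPT_S)\setminus\{u,v\}$ of size at most $k-1$ that separates them. Fix $k$ edge-disjoint $u$-to-$v$ paths $P_1,\dots,P_k$ in $\OPT_S$. Every $P_i$ must visit $S$, so pigeonhole yields a vertex $w\in S$ that is an internal vertex of at least two of them, say $P_1$ and $P_2$. The key step is to exhibit a non-simple cycle through $w$ inside $\OPT_S$: since $P_1$ and $P_2$ are edge-disjoint simple paths with common endpoints, the concatenation $P_1\cdot P_2^{-1}$ is a closed trail based at $u$ that uses every edge of $P_1\cup P_2$ exactly once, and in it $w$ appears once inside $P_1$ and once inside $P_2^{-1}$. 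This closed trail is therefore a non-simple cycle of $\OPT_S$, contradicting the property that every cycle of $\OPT_S$ is simple.

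The case $k=1$ needs no work, since $1$-edge-connectivity and $1$-vertex-connectivity both coincide with connectedness; the contradiction argument only has to be carried out for $k=2,3$. The main thing to verify is that the closed trail $P_1\cdot P_2^{-1}$ really is a non-simple cycle, i.e., that it uses each edge once while visiting $w$ twice; this follows directly from the edge-disjointness of $P_1$ and $P_2$ and from $w$ being an internal vertex of both. Given the strong structural property enforced by the simplifying cleavings, the rest is bookkeeping.
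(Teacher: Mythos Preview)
Your proof is correct and matches the paper's approach: the paper simply asserts that the lemma follows because every cycle in $\OPT_S$ is simple, and your argument---that two edge-disjoint $u$-to-$v$ paths sharing an internal vertex $w$ would concatenate to a non-simple closed trail---is precisely the natural elaboration of that one-line justification. The pigeonhole step on a putative small vertex cut and the observation that minimality of $\OPT_S$ preserves the required edge-connectivity (for $k\le\min\{r(u),r(v)\}$, which is the only case that matters) are exactly what the paper leaves implicit.
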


Let $Q_{23}$ be set of terminals whose requirement is at least two  and $H$ be the minimal $(Q_{23}, r)$-vertex-connected subgraph in $\OPT_S$.  Since every cycle in OPT$_S$ is simple, we know that $H$ also has this property. It follows that the degree of any vertex $v$ of $H$ is no more than three, since $H$ is biconnected by Lemma~\ref{lem: biconnected}:  for otherwise, let the first ear (a cycle) contain $v$ and let the second and third ear start with the next two edges incident to $v$ -- from these ears it is easy to construct two cycles that only meet at $v$, witnessing a non-simple cycle.

We perform lengthening cleavings w.r.t.\ the boundary of each brick and the edges of $\OPT_S$ that are incident to a boundary vertex in that brick until the vertices of the brick boundaries have at most one edge in the solution in the interior of each incident brick.  We add the introduced zero-length edges to the solution.  Call the new solution  $\OPT_C$ and the new graph and mortar graph $G_C$ and  $MG_C$, respectively.

 For each cleaved terminal, we assign only one copy to be the terminal: we pick the copy of highest degree in $\OPT_C$ to be the terminal; note that terminals are still in the mortar graph.
        Although the flatten, restructure and redirect steps may break the vertex connectivity guaranteed by Lemma~\ref{lem: copies}, the following lemmas will guarantee that edge-connectivity will not be preserved.
 
\begin{lemma}\label{lem: lengthen}
For $k=1,2,3$, if terminals are $k$-vertex-connected in $\OPT_S$, then they are $k$-vertex-connected $\OPT_C$.  Further, $\OPT_C$ is minimal.
\end{lemma}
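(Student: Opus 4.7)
The plan is to induct on the number of lengthening cleavings applied in going from $\OPT_S$ to $\OPT_C$, so it suffices to analyze a single cleaving at a vertex $v$ lying on a brick-boundary cycle $C$. Such a cleaving partitions the edges incident to $v$ into $A$ (containing boundary edge $e'_1$ and interior edge $e_1$) and $B$ (containing $e'_2$ and $e_2$), yielding copies $v_1$ (incident to the edges of $A$) and $v_2$ (incident to those of $B$), together with a zero-weight artificial edge $v_1 v_2$. The crucial structural observation is that, in the cleaved graph, $C$ persists as a cycle traversing $v_1$, the artificial edge, $v_2$, and the remainder of $C$; consequently, $v_1$ and $v_2$ are biconnected through two internally vertex-disjoint paths, namely the artificial edge and the rest of $C$.

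To show connectivity preservation, I consider terminals $x, y$ that are $k$-vertex-connected in the pre-cleaving graph via internally vertex-disjoint paths $P_1, \ldots, P_k$, and I construct analogous paths in the post-cleaving graph. If $v \notin \{x, y\}$, internal vertex-disjointness implies that at most one $P_i$ uses $v$ internally, and I reroute its transit through $v_1$, $v_2$, or both (joined by the artificial edge) according to which of $A, B$ contain its two adjacent edges. If $v \in \{x, y\}$, say $x = v$ with designated copy $v_1$ (the higher-degree copy), each $P_i$ begins with an edge in $A$ (yielding a path from $v_1$ directly) or in $B$ (requiring a $v_1$-to-$v_2$ prefix); prefixes can be taken from either the artificial edge or the remainder of $C$, giving two internally vertex-disjoint options that, together with the direct starts out of $A$, accommodate all $k \le 3$ paths. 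I expect the main obstacle to be ensuring these $v_1$-to-$v_2$ prefixes remain vertex-disjoint from the other paths; I plan to address this by a Menger-style argument, showing that any vertex cut of size $<k$ separating $v_1$ from $y$ in the cleaved graph projects, after identifying $v_1$ and $v_2$ back to $v$, to a cut of size $<k$ for $v$-to-$y$ in $\OPT_S$, contradicting the hypothesis.

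For the minimality claim, I observe that $\OPT_S$ is minimal by its construction (edges were removed until any further removal would violate connectivity). Cleavings only add new vertices and zero-weight artificial edges, so non-artificial edges of $\OPT_C$ correspond bijectively to edges of $\OPT_S$. If some non-artificial edge $e$ were removable from $\OPT_C$ without violating $(Q,r)$-vertex-connectivity, then contracting all artificial edges in $\OPT_C \setminus \{e\}$ would yield $\OPT_S \setminus \{e\}$ still witnessing the required connectivity, contradicting the minimality of $\OPT_S$. If an artificial edge $v_1 v_2$ were removable, then by the Menger-style analysis of part~(i) every terminal pair would retain its $k$ internally vertex-disjoint paths using only the $C$-detour between $v_1$ and $v_2$; re-contracting back to $\OPT_S$ would then exhibit a redundant edge of $\OPT_S$ along the corresponding arc of $C$ through $v$, again contradicting the minimality of $\OPT_S$. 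Hence no edge of $\OPT_C$ is removable, and $\OPT_C$ is minimal.
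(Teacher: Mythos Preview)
There is a genuine gap in your treatment of the case where the cleaved vertex $v$ is itself one of the terminals. You claim that ``prefixes can be taken from either the artificial edge or the remainder of $C$,'' but the cycle $C$ here is the brick boundary, which lies in the mortar graph $MG_C$ and hence in $G_C$, \emph{not} in the solution subgraph $\OPT_C$. The optimal solution need not contain the north and south boundaries of a brick (only the supercolumns were added in the Augment step, and those are east/west). Consequently, the only $v_1$-to-$v_2$ connection you are guaranteed inside $\OPT_C$ is the single artificial edge, so at most one of the $k$ paths starting on the $B$-side can be prefixed back to $v_1$. If two or more of the $P_i$ begin with $B$-edges, your construction fails to produce internally vertex-disjoint paths.

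Your Menger-style backup does not repair this. Take a vertex cut $S$ of size $k-1$ separating $v_1$ from $y$ with $v_2\in S$, and consider a simple $v$-to-$y$ path $P$ in $\OPT_S$ whose first edge lies in $B$. Its lift $P'$ starts at $v_2$ and never visits $v_1$; $P'$ may meet $S$ only at its initial vertex $v_2$, so the projected set $S\setminus\{v_2\}$ need not separate $v$ from $y$ in $\OPT_S$. The projection argument breaks precisely when the terminal is cleaved and paths start on the ``wrong'' side.

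The paper closes this gap by a structural fact you do not use: just before this lemma it is argued that every vertex of the minimal $(Q_{23},r)$-vertex-connected subgraph of $\OPT_S$ has degree at most three. A triconnected terminal $a$ therefore has degree exactly three, the three disjoint paths consume all three incident edges, and after a lengthening cleaving the higher-degree copy $a'$ still has degree three in $\OPT_C$ (two original edges plus the one artificial edge). Hence at most one of the three paths needs an artificial-edge prefix to reach $a'$, and disjointness is immediate. Your argument can be salvaged by invoking this degree bound in place of the unavailable $C$-detour; the minimality argument for artificial edges should likewise be rewritten without appealing to $C$.
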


\begin{proof}
Borradaile and Klein prove (Lemma~5.9~\cite{BK13}) that lengthening cleavings maintain biconnectivity, so we need only argue that terminals $a$ and $b$ that are  triconnected in $\OPT_S$ have terminal copies $a'$ and $b'$ that are triconnected in $\OPT_C$.

Consider three, vertex-disjoint $a$-to-$b$ paths $P_1, P_2, P_3$ in $\OPT_S$. As argued earlier, the degrees of $a$ and $b$ in $\OPT_S$ are three.  The terminal copy $a'$ of $a$ likewise has degree 3 in $\OPT_C$: if $a$ is lengthening cleaved, then there is a copy of $a$ that has degree 3.

The paths $P_1, P_2, P_3$ map to paths $P_1', P_2',P_3'$ between copies of $a$ and $b$ in $\OPT_C$: if an internal vertex of $P_i$ is lengthening cleaved, then include the introduced edge in the path.  The endpoints of $P_i'$ that map to copies of, w.l.o.g., $a$ are likewise connected by introduced edges (if $a$ is subject to a lengthening cleaving).  Let $a'$ be the terminal copy of $a$; we augment the paths $P_i'$ to connect to $a'$ via an introduced, zero-weight edge.  Doing so for $b'$ as well gives three vertex disjoint $a'$-to-$b'$ paths in $\OPT_C$.

That $\OPT_C$ is minimal follows directly from the fact that $\OPT_S$ is minimal.
\end{proof}

The following lemma was proved by Borradaile and Klein for 2-ECP (Lemma 5.10~\cite{BK13}); their proof only relied on the fact that every cycle contains a terminal.  Since this is true for the 3-ECP too (Theorem~\ref{thm: minimal}), we get the same lemma:

\begin{lemma}\label{lem: terminal_cycle} Let $B$ be a brick in $G_C$ with respect to $MG_C$. The intersection OPT$_C\cap \inter(B)$ is a forest whose joining vertices with $\partial B$ are all the leaves of the forest.
\end{lemma}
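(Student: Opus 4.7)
The plan is to prove the two claims separately, using two invariants of $\OPT_C$: the lengthening-cleaving step guarantees that every vertex of $\partial B$ is incident to at most one edge of $\OPT_C \cap \inter(B)$, and Theorem~\ref{thm: minimal} asserts that every cycle in a minimal $(Q,r)$-vertex-connected graph contains a terminal.

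First I would show that $\OPT_C \cap \inter(B)$ is acyclic. Suppose for contradiction that $C$ is a cycle in $\OPT_C \cap \inter(B)$. Any vertex $v \in C \cap \partial B$ would be incident to two edges of $C$, both of which lie in $\inter(B)$, contradicting the lengthening-cleaving invariant. Hence every vertex of $C$ lies strictly inside $B$ and so is not a vertex of $MG_C$; since every terminal lies on $MG_C$, the cycle $C$ is terminal-free. To derive a contradiction I would now show that every cycle of $\OPT_C$ must contain a terminal of $Q_{23}$. Pick a minimal $(Q_{23}, r|_{Q_{23}})$-vertex-connected subgraph $\widetilde H \subseteq \OPT_C$ (which exists because $\OPT_C$ itself is $(Q_{23}, r)$-vertex-connected by Lemmas~\ref{lem: simplecleaving} and~\ref{lem: lengthen}). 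Any edge $e$ of a cycle of $\OPT_C$ that lay in $\OPT_C \setminus \widetilde H$ could be deleted without destroying any $\ge 2$ requirement (since $\widetilde H \subseteq \OPT_C - e$ still supplies it) or any $1$-connectivity requirement (since the remainder of the cycle supplies an alternate path), contradicting the minimality of $\OPT_C$ from Lemma~\ref{lem: lengthen}. Hence every cycle of $\OPT_C$ is in fact a cycle of $\widetilde H$, and Theorem~\ref{thm: minimal} forces it to contain a terminal of $Q_{23}$, the promised contradiction.

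For the leaf characterization I would argue both inclusions. If $v$ is a joining vertex of $\OPT_C \cap \inter(B)$ with $\partial B$, then $v \in \partial B$ and is incident to an edge of $\inter(B) \cap \OPT_C$; the lengthening-cleaving invariant forces this to be its unique such edge, so $v$ is a leaf. Conversely, suppose $\ell$ is a leaf of the forest and assume for contradiction that $\ell \notin \partial B$. Then $\ell$ is strictly interior to $B$, so $\ell \notin MG_C$ and $\ell$ is a non-terminal; moreover $\ell$ lies in no other brick and has no edge on $\partial B$, so its unique incident edge in $\inter(B)$ is its only edge in $\OPT_C$. Removing that edge preserves feasibility because no connectivity requirement involves the non-terminal $\ell$, contradicting the minimality of $\OPT_C$. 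Thus $\ell \in \partial B$ and $\ell$ is a joining vertex.

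The main obstacle I expect is bridging the gap between Theorem~\ref{thm: minimal}, which is stated for requirement functions with values in $\{2,3\}$, and $\OPT_C$, whose requirement function also takes the value $1$; the argument that every cycle of $\OPT_C$ actually lives in the essential subgraph $\widetilde H$ is the crux, and once it is in place everything else reduces to degree bookkeeping and the cleaving invariant, mirroring the corresponding 2-ECP argument of Borradaile--Klein (Lemma~5.10~\cite{BK13}).
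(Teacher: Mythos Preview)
Your proposal is correct and follows essentially the same approach as the paper, which simply defers to Borradaile--Klein's Lemma~5.10 and invokes Theorem~\ref{thm: minimal} for the ``every cycle contains a terminal'' property. You are actually more careful than the paper here: you explicitly bridge the gap that Theorem~\ref{thm: minimal} is stated only for $r:Q\to\{2,3\}$ by passing to a minimal $(Q_{23},r|_{Q_{23}})$-vertex-connected subgraph $\widetilde H$ and arguing that minimality of $\OPT_C$ forces every cycle into $\widetilde H$ --- the paper leaves this implicit (the same fact appears only later, in the proof of Lemma~\ref{lem: final}, as ``each connected component of $\OPT_C\setminus H_C$ is a tree'').
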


\begin{lemma}\label{lem: final}
Let $x$ and $y$ be two terminals in OPT$_C$ such that $k= \min\{r(x), r(y)\} \geq 2$ and let $B$ be a brick. There exist $k$ vertex-disjoint paths from $x$ to $y$ such that for any two such paths $P_1$ and $P_2$, any connected component of $P_1 \cap \inter(B)$ and any connected component of $P_2 \cap \inter(B)$ belong to distinct components of OPT$_C \cap \inter(B)$.
\end{lemma}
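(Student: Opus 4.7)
The plan is to apply Corollary~\ref{cor: terminal} to obtain $k$ vertex-disjoint $x$-to-$y$ paths whose pairwise interactions avoid sharing any terminal-bounded tree, and then to translate this property into the required component separation using the forest structure of $\OPT_C \cap \inter(B)$ given by Lemma~\ref{lem: terminal_cycle}. First, I would fix a minimal $(Q_{23}, r)$-vertex-connected subgraph $H^{\ast}$ of $\OPT_C$; its existence is guaranteed by combining Lemma~\ref{lem: simplecleaving} and Lemma~\ref{lem: lengthen}, which together ensure that $\OPT_C$ is $(Q_{23}, r)$-vertex-connected. Applying Corollary~\ref{cor: terminal} to $H^\ast$ yields $k$ vertex-disjoint paths $P_1, \dots, P_k$ in $H^\ast$ (and hence in $\OPT_C$) such that any two of them contain no edges of a common terminal-bounded tree of $H^\ast$.

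The key structural step is the following observation about any brick $B$. By Lemma~\ref{lem: terminal_cycle}, $\OPT_C \cap \inter(B)$ is a forest whose joining vertices with $\partial B$ are exactly its leaves, while Lemma~\ref{lem: terminal_NS} tells us that no terminal lies in $\inter(B)$. Hence the internal (non-leaf) vertices of every connected component $\chi$ of $\OPT_C \cap \inter(B)$ are non-terminals, and any two edges of $\chi$ are joined by a path in $\chi$ all of whose internal vertices are non-terminals. It follows that every such $\chi$ lies inside a single terminal-bounded component of $\OPT_C$ with respect to $Q_{23}$.

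To conclude, suppose for contradiction that a component $C_i$ of $P_i \cap \inter(B)$ and a component $C_j$ of $P_j \cap \inter(B)$ (with $i \ne j$) lie in the same component $\chi$ of $\OPT_C \cap \inter(B)$. By the structural observation, $C_i \cup C_j$ sits inside a single terminal-bounded tree of $\OPT_C$, and consequently $P_i$ and $P_j$ share edges of a common terminal-bounded tree of $H^\ast$, contradicting the output of Corollary~\ref{cor: terminal}. The main obstacle I expect is precisely this last pull-back step: an edge of $\OPT_C \setminus H^\ast$ could \emph{a priori} fuse two distinct $H^\ast$-terminal-bounded trees into a single $\OPT_C$-terminal-bounded tree; ruling this out inside $\inter(B)$ requires exploiting the minimality of $\OPT_C$ (Lemma~\ref{lem: lengthen}) to show that any such fusing edge would be redundant for $Q_1$-reachability already provided by the rest of $\OPT_C$, and hence removable.
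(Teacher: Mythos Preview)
Your approach is essentially the paper's: fix a minimal $(Q_{23},r)$-vertex-connected subgraph $H_C\subseteq\OPT_C$, invoke Corollary~\ref{cor: terminal} there, and then argue that a shared component of $\OPT_C\cap\inter(B)$ would yield a terminal-free $P_i$-to-$P_j$ connection, contradicting the corollary. The one place where you diverge is in how you handle the obstacle you flagged, and there the paper's resolution is considerably simpler than what you propose.

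The paper observes directly that every connected component of $\OPT_C\setminus H_C$ is a tree attached to $H_C$ by a \emph{single} edge. This is an immediate consequence of the minimality of $\OPT_C$ (Lemma~\ref{lem: lengthen}): any terminal lying in such a component has requirement $1$, so a second attaching edge would create a cycle through only $r=1$ terminals and could be deleted. Once you have this, your worry evaporates: a pendant tree touching $H_C$ at one vertex cannot fuse two distinct terminal-bounded trees of $H_C$, so the terminal-bounded partition of $H_C$ survives in $\OPT_C$. There is no need to argue about redundancy for $Q_1$-reachability beyond this one-line observation.

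The paper also streamlines the final contradiction by using Corollary~\ref{cor: terminal} in its ``any $P_i$-to-$P_j$ path contains a terminal'' form rather than the terminal-bounded-tree form of Theorem~\ref{thm: tri_disjoint}. Since $C_i,C_j\subseteq H_C$ and the $\OPT_C\setminus H_C$ pieces inside $\chi$ are pendant subtrees, $\chi\cap H_C$ is still connected and contains a $C_i$-to-$C_j$ path lying entirely in $H_C\cap\inter(B)$; that path must contain a terminal, yet $\inter(B)$ is terminal-free (all terminals lie on $\partial B$, and lengthening cleavings leave boundary vertices with at most one interior edge). This gives the contradiction without ever naming terminal-bounded trees of $\OPT_C$.
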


\begin{proof}
Let $H_C$ be the minimal $(Q_{23},r)$-vertex-connected subgraph in OPT$_C$. Then each connected component of OPT$_C\setminus H_C$ is a tree and is connected to $H_C$ by one edge.
By Corollary~\ref{cor: terminal}, there are $min\{ r(x), r(y)\}$ vertex-disjoint paths from $x$ to $y$ in $H_C$ (and then in OPT$_C$) such that any path connecting any two of those $x$-to-$y$ paths contains a terminal. Let $P_1$ and $P_2$ be any two such $x$-to-$y$ paths. 
Since all terminals are on the boundaries of bricks, any $P_1$-to-$P_2$ path in $\inter(B)$ will be divided into two subpaths by some lengthening cleaving. 
So any connected component of OPT$_C \cap \inter(B)$ can not contain the components of both $P_1 \cap \inter(B)$ and $P_2 \cap \inter(B)$.
\end{proof}

\subsubsection{Flatten}

This step is the same as described by Borradaile and Klein for 2-ECP~\cite{BK13}. For each brick $B$, consider the edges of OPT$_C\cap \inter(B)$. By Lemma \ref{lem: terminal_cycle}, the connected components of OPT$_C \cap \inter(B)$ are trees. By Lemma \ref{lem: terminal_NS}, every leaf is either on $N_B$ or $S_B$. For every tree whose leaves are all on $N_B$ ($S_B$), we replace the tree with the minimal subpath of $N_B$ ($S_B$) that contains all its leaves. Let the resulting solution be OPT$_F$. By Lemma~\ref{lem: final}, this guarantees 2- and 3-edge-connectivity between terminals as required; trees may be flattened against a common $\epsilon$-short path that is the northern boundary of one brick and the southern boundary of another, so vertex-connectivity gets broken at this stage.

\subsubsection{Restructure}  
This step is the same as described by Borradaile and Klein for 2-ECP~\cite{BK13}, except we apply our 3-ECP specific lemma (Lemma~\ref{lem: newmap}).  Restructuring replaces  $\OPT_F \cap \inter(B)$, which is a set of non-crossing trees (Lemma~\ref{lem: terminal_cycle}) with leaves on the $\epsilon$-short north and south brick boundaries (Lemma~\ref{lem: terminal_NS}), with subgraphs guaranteed by Lemma~\ref{lem: newmap}.  The resulting solution is $\OPT_R$.

Let ${\cal P}_{xy}$ be a set of 2 (3) vertex disjoint paths in $\OPT_F$ for terminals $x,y$ requiring bi- (tri-)connectivity.  Each path in ${\cal P}_{xy}$ is broken into a sequence of small paths, each of which is either entirely in the interior of a brick or entirely in the mortar graph.  Let ${\cal P}_{xy}'$ be the set of these path sequences.

We define a map $\hat{\phi}$ for the paths in ${\cal P}_{xy}'$.  For a path $P$ of $\mathcal{P}'_{xy}$, if $P$ is on mortar graph, we define $\hat{\phi}(P)=P$; otherwise we define $\hat{\phi}(P) = \phi(T)$ where $T$ is the tree in OPT$_M$ containing $P$. Since $\phi(T)$ spans all leaves of $T$, $\hat{\phi}(P)$ also spans leaves of $T$ and connects endpoints of $P$. 

Let $P_1$ and $P_2$ be any two paths from distinct path-sequences inside of the same brick $B$. By Lemma~\ref{lem: final}, $P_1$ and $P_2$ can not belong to the same component of OPT$_M \cap \inter(B)$. So if $\hat{\phi}(P_1) \neq \hat{\phi}(P_2)$, then $\hat{\phi}(P_1)$ and $\hat{\phi}(P_2)$ are edge disjoint by the constructions of $\phi$ and $\hat{\phi}$. 
Otherwise, we know the image is a subgraph $\widehat{C}$ which guarantees triconnectivity for all vertices of $\widehat{C} \cap \partial B$ by Lemma~\ref{lem: newmap}. 
However,  there may be more than one paths from any path-sequence whose image is $\widehat{C}$, and the new $x$-to-$y$ paths may not be edge-disjoint in $\widehat{C}$. For this situation, we could shortcut the paths in $\widehat{C}$ such that each new $x$-to-$y$ path only contain one subpath in $\widehat{C}$.
Since there are at most three such subpaths and there endpoints are in $\partial B$,
$\widehat{C}$ contains edge-disjoint paths connecting the endpoints of those subpaths  by the last property of Lemma~\ref{lem: newmap}.

Therefore, the restructure step maintains that if terminals were 1-, 2- or 3-edge connected in $\OPT_F$, then they still are in $\OPT_R$.  We also see that, by construction of Lemma~\ref{lem: newmap}, the intersection $\OPT_R$ with the interior of a brick $B$ is a set of trees with leaves on $\partial B$.  Since the  Redirect step will only add edges of $\partial B$, this property does not change, proving one of the guarantees of the Structure Theorem.  

Further, the number of joining vertices is guaranteed by Lemma~\ref{lem: newmap} and the construction that is used for 2-ECP.  The number of joining vertices is on the same order as for 2-ECP, which depends only on $\epsilon$ as required.

\subsubsection{Redirect} For every joining vertex $j$ of $\OPT_R \cap B$ with $\partial B$ for a brick $B$, we add the path from $j$ to the nearest portal $p$ on $\partial B$.  This guarantees that the trees guaranteed by the Restructure step have leaves that are portals: this allows us to efficiently enumerate all possible Steiner trees in bricks whose terminals are portals to compute the spanner graph.

\subsubsection{Analysis of weight increase}
The analysis of the weight increase is exactly the same as for 2-ECP by Borradaile and Klein; the only difference are the weight in Lemma~\ref{lem: newmap} which is on the order of the weight in the equivalent Lemma used in 2-ECP.

This completes the proof of the Structure Theorem.

\section{Dynamic programming for $k$-ECP on graphs with bounded branchwidth} \label{sec:dp}

In this section, we give a dynamic program to compute the optimal solution of $k$-ECP problem on graphs with bounded branchwidth. 
This is inspired by the work of Czumaj and Lingas~\cite{CL98, CL99}.
Note that such graphs need not be planar. 
This can be used in our PTAS after the contraction step of the framework.

A {\em branch decomposition} of a graph $G=(V(G), E(G))$ is a hierarchical clustering of $E(G)$. It can be represented by a binary tree, called the {\em decomposition tree}, the leaves of which are in bijection with the edges of $G$. 
After deleting an edge $e$ of this decomposition tree, $E(G)$ is partitioned into two parts $E_1$ and $E_2$ according to the edges mapped to the leaves of the two subtrees. 
All the vertices common to $E_1$ and $E_2$ comprise the {\em separator} corresponding to $e$ in the decomposition.
The {\em width} of the decomposition is the maximum size of the separator in that decomposition.
The {\em branchwidth} of $G$ is the minimum width of any branch decomposition of $G$.

Let $G= (V(G), E(G), r)$ be an instance of $k$-ECP. Then $r\in \{0,1,\cdots,k\}$. We call a vertex a terminal if its requirement is positive. We first augment $G$ such that each edge becomes $k$ parallel edges. Our dynamic programming will work on this new graph $G$.
Given a branch decomposition of $G$, root the decomposition tree $T$ at an arbitrary leaf.  
For any node $q$ in $T$, let $L$ be the separator corresponding to its parent edge, and $E_1$ be the subset of $E(G)$ mapped to the leaves in the subtree rooted at $q$. 
Let $H$ be a subgraph of $G[E_1]$ such that it contains all terminals in $G[E_1]$. 
An {\em separator completion} of $L$ is a multiset of edges between vertices of $L$, each of which may appear up to $k$ times.  
A {\em configuration} of a terminal $v$ of $H$ in $L$ is a tuple $(A, B, r(v))$, where 
$A$ is a tuple $(a_1, a_2, \dots, a_{|L|})$, representing that there are $a_i$ edge-disjoint paths from $v$ to the $i$th vertex of $L$ in $H$, and $B$ is a set of tuples $(x_i, y_i, b_i)$, representing that there are $b_i$ edge-disjoint paths between the vertices $x_i$ and $y_i$ of $L$ in $H$. All the $\sum_{i=1}^{|L|} a_i+\sum_{i}b_i$ paths in a configuration are mutually edge-disjoint in $H$. 
We adapt a definition of Czumaj and Lingas~\cite{CL98, CL99}:

\begin{definition}
For any pair of terminals $u$ and $v$ in $H$, let $Com_H(u,v)$ be the set of separator completions of $H$ each of which augments $H$ to a graph where $u$ and $v$ satisfy the edge-connectivity requirement. 
For each terminal $v$ in $H$, let $Path_H(v)$ be a set of configurations of $v$ on $L$.
Let $Path_H$ be the set of all the non-empty $B$ in which all tuples can be satisfied in $H$.
Let $C_H$ be the set consisting of one value in each $Com_H(u,v)$ for all pairs of terminals $u$ and $v$ in $H$, and $P_H$ be the set consisting of one value in each $Path_H(v)$ for all terminal $v$ in $H$. 
We call the tuple $(C_H, P_H, Path_H)$ the {\em connectivity characteristic} of $H$, and denote it by $Char(H)$.
\end{definition}

Let $w$ be the width of the decomposition. 
Then $|L| \le w$.
Note that $H$ may correspond to multiple $C_H$ and $P_H$, so $H$ may have multiple connectivity characteristics. Further, each value in $P_H$ represents at least one terminal.  
For any $L$, there are at most $k^{O(w^2)}$ distinct separator completions ($O(w^2)$ pairs of vertices, each of which can be connected by at most $k$ parallel edges) and at most $2^{k^{O(w^2)}}$ distinct sets $C_H$ of separator completions. 
For any $L$, there are at most $k^{O(w^2)}$ different configurations for any terminal in $H$ since the number of different sets $A$ is at most $k^w$, the number of different sets $B$ is at most $k^{O(w^2)}$ (the same as the number of separator completions) and $k$ different choices for $r(v)$. 
So there are at most $2^{k^{O(w^2)}}$ different sets of configurations $P_H$, and at most $2^{k^{O(w^2)}}$ different sets $B$. Therefore, there are at most $2^{k^{O(w^2)}}$ distinct connectivity characteristics for a fixed $L$. 

\begin{definition}
A configuration of $v$ on $L$ is {\em connecting} if for any terminal $u$ in $V(G)\setminus V(H)$ the inequality $\sum_{i=1}^{|L|} a_i \geq \min\{r(v), r(u)\}$ holds where $a_i$ is the $i$th coordinate in $A$. That is, there are enough edge-disjoint paths from $v$ to the separator which can connect $u$ and $v$. $Char(H)$ is connecting if all configurations in its $P_H$ set are connecting. $H$ is connecting if at least one of $Char(H)$ is connecting. In the following, we only consider connecting connectivity characteristics and subgraphs.
\end{definition}

In the following, we need as a subroutine an algorithm to solve the following problem: when given a set of demands $(x_i, y_i, b_i)$ and a multigraph, we want to decide if there exist $b_i$ edge-disjoint paths between vertices $x_i$ and $y_i$ in the graph and all the $\sum_i b_i$ paths are mutually edge-disjoint. 
Although we do not have a polynomial time algorithm for this problem, we only need to solve this on graphs with $O(w)$ vertices, $O(kw^2)$ edges and $O(w^2)$ demands.
So even an exponential time algorithm is acceptable for our purpose here. Let $ALG$ be an algorithm for this problem, whose running time is bounded by a function $f(k,w)$, which may be exponential in both $k$ and $w$.

For a node $p$ of degree three in the decomposition tree $T$, let $q_1$ and $q_2$ be its two children and $q$ be its parent. 
Let $T_i$ be the subtree of $T$ rooted at $q_i$, let
$E_i$ be the subset of $E(G)$ corresponding to $T_i$ and let $L_i$ be the separator corresponding to the edge $pq_i$ for $i=1,2$. 
Let $L$ be the separator corresponding to $pq$.
For $i=1,2$, let $H_i$ be a subgraph of $G[E_i]$ that contains all the terminals. Let $H = H_1 \cup H_2$. Then we have the following lemma.
\begin{lemma}\label{lem: dp}
For any pair of $Char(H_1)$ and $Char(H_2)$, all the possible $Char(H)$ that could be obtained from $Char(H_1)$ and $Char(H_2)$ can be computed in 
$O(k^{w^2}f(k,w)+k^{w^2k^{w^2}})$
time.
\end{lemma}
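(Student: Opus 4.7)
The plan is to enumerate compatible combinations of separator completions and configurations drawn from $Char(H_1)$ and $Char(H_2)$ and to certify their combined realizability in $H = H_1 \cup H_2$ via the subroutine $ALG$ applied to a small auxiliary multigraph on $L_1 \cup L_2$. The key structural observation is that $L \subseteq L_1 \cup L_2$ and that any path in $H$ whose endpoints lie in $L$ decomposes into an alternating sequence of subpaths in $H_1$ and $H_2$ meeting only at vertices of $L_1 \cap L_2$. Hence the edge-disjoint path-packing information recorded in $Char(H_1)$ and $Char(H_2)$ (the separator completions and configurations on $L_1$ and $L_2$) is exactly what determines which separator completions and configurations are realizable in $H$ after projection to $L$.

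To compute $C_H$, I would iterate over all pairs $(S_1, S_2) \in C_{H_1} \times C_{H_2}$. For each pair I build a multigraph $\widetilde G$ on vertex set $L_1 \cup L_2$, whose edge multiplicities encode the path packings guaranteed by $S_1$ on $L_1$ and by $S_2$ on $L_2$. Then, for each of the $k^{O(w^2)}$ candidate separator completions $S$ of $L$, I call $ALG$ on $\widetilde G$ with the demands encoded by $S$; realizability places $S$ into $C_H$. The sets $Path_H$ and $P_H$ are produced analogously: for each terminal $v$, compatible configurations in $Path_{H_1}(v)$ and $Path_{H_2}(v)$ (or a single side if $v$ lies in only one of $H_1, H_2$) are merged into demands on $\widetilde G$, and $ALG$ decides whether the merged demands are simultaneously realizable; surviving merges yield configurations on $L$. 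The connecting property is a local check against the requirements of terminals in $V(G) \setminus V(H)$ associated with $L$, which can be done without further calls to $ALG$.

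The running time has two regimes. The first iterates over $O(k^{w^2})$ candidate separator completions, invoking $ALG$ once on a graph of $O(w)$ vertices and $O(kw^2)$ edges, contributing $O(k^{w^2} f(k,w))$. The second enumerates candidate sets $Path_H$, whose cardinality is bounded by $2^{k^{O(w^2)}}$, giving the $k^{w^2 k^{w^2}}$ term; each candidate is bundled into a single multi-commodity instance and verified by one $ALG$ call, absorbed into this term. The main obstacle is proving \emph{completeness}: every legitimate $Char(H)$ must arise from some pair $(Char(H_1), Char(H_2))$ together with a valid $ALG$ witness on $\widetilde G$. This follows because any realization of the edge-disjoint paths in $H$ that witnesses $Char(H)$ restricts to edge-disjoint path packings in $H_1$ and $H_2$; their traces on $L_1$ and $L_2$ are separator completions and configurations belonging to some $Char(H_1)$ and $Char(H_2)$, and the merged demands are, by construction of $\widetilde G$, realizable by $ALG$.
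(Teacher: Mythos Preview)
Your overall strategy---build an auxiliary multigraph on $L_1\cup L_2$ and call $ALG$---is the same as the paper's, but the way you populate that auxiliary graph reveals a confusion between the two kinds of objects in a connectivity characteristic. Elements of $C_{H_i}$ are \emph{separator completions}: external edge-multisets that, when added to $H_i$, make some fixed terminal pair $(u,v)$ meet its requirement. They do not encode internal path packings of $H_i$ and cannot be used as edge multiplicities in $\widetilde G$. What encodes internal path packings is $Path_{H_i}$ (the sets $B$) and the $(A,B)$ parts of configurations in $P_{H_i}$. The paper accordingly lifts a completion $C\in C_{H_i}$ to completions on $L$ by combining a candidate $C'$ on $L$ with each $B\in Path_{H_{3-i}}$ (not with anything from $C_{H_{3-i}}$) and testing via $ALG$ whether the demands of $C$ are met. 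Your iteration over $(S_1,S_2)\in C_{H_1}\times C_{H_2}$ is therefore the wrong product.

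Two further gaps. First, $C_H$ must also contain a completion for every \emph{cross} pair $(u,v)$ with $u\in H_1$ and $v\in H_2$; these are produced not from $C_{H_1}$ or $C_{H_2}$ at all, but by pairing a configuration of $u$ in $P_{H_1}$ with a configuration of $v$ in $P_{H_2}$ and counting how many edge-disjoint $u$--$v$ paths can be routed through $L_1\cap L_2$ before asking a candidate completion on $L$ to supply the rest. Your write-up does not address this case. Second, the $k^{w^2 k^{w^2}}$ term does not come from enumerating candidate $Path_H$ sets; $Path_H$ is uniquely determined from $Path_{H_1}$ and $Path_{H_2}$. The doubly-exponential term arises because, after computing for each $C\in C_{H_i}$ (resp.\ each configuration in $P_{H_i}$) the \emph{set} $X_C$ (resp.\ $Y_v$) of all legal lifts to $L$, a possible $C_H$ (resp.\ $P_H$) is formed by choosing one element from each such set, and there are up to $k^{O(w^2)}$ sets each of size up to $k^{O(w^2)}$.
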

\begin{proof}
We compute all the possible sets for the three components of $Char(H)$.

\noindent{\bf Compute all possible $\mathbf{C_{H}}$} 
$C_{H}$ contains two parts: the first part covers all pairs of terminals in the same $H_i$ for $i=1,2$ and the second part covers all pairs of terminals from distinct subgraphs. 

For the first part, we generalize each value $C \in C_{H_i}$ for $i=1,2$ into a possible set $X_C$. Notice that each separator completion can be represented by a set of demands $(x, y, b)$. 
For a candidate separator completion $C'$ on $L$, we combine $C'$ with each $B \in Path_{H_{3-i}}$ to construct a graph $H'$ and define the demand set the same as $C$. 
By running $ALG$ on this instance, we can check if $C'$ is a legal generalization for $C$. 
This may be computed in $k^{O(w^2)}w^2 + k^{O(w^2)}f(k,w)$ time for each $C$.
All the legal generalizations for $C$ form $X_C$. 

Now we compute the second part. For any pair of configurations $(A^1, B^1, r(u)) \in P_{H_1}$ and $(A^2, B^2, r(v)) \in P_{H_2}$ for $u\in H_1$ and $v\in H_2$, we compute possible $Com_H(u,v)$. Let $L' = L_1 \cap L_2$. We first count how many edge-disjoint paths between $u$ and $v$ could go through $L'$ by checking $A^1$ and $A^2$, and then check if a candidate separator completion $C'$ on $L$ can provide the remaining paths. All those $C'$ that are capable of providing enough paths form $Com_H(u,v)$. This can be computed in $w^2k^{O(w^2)}$ time for each pair of values.

A possible $C_H$ consists of each value in $X_C$ for every $C \in C_{H_i}$ for $i =1,2$ and each value in $Com_H(u,v)$ for all pairs of configurations of $P_{H_1}$ and $P_{H_2}$. To compute all the sets, we need at most $k^{O(w^2)}w^2 + k^{O(w^2)}f(k,w)$ time. There are at most $k^{O(w^2)}$ sets and each may contain at most $k^{O(w^2)}$ values. Therefore, to generate all the possible $C_H$ from those sets, we need at most $k^{w^2k^{O(w^2)}}$ time.

\noindent{\bf Compute all possible $\mathbf{P_H}$}
We generalize each configuration $(A, B, r(v))$ of $v$ in $P_{H_i}$ into a possible set $Y_v$. 
For each set $B'$ in $Path_{H_{3-i}}$, we construct a graph $H'$ by $A$, $B$ and $B'$ on vertex set $L_1 \cup L_2 \cup \{v\}$: if there are $b$ disjoint paths between a pair of vertices represented in $A$, $B$ or $B'$, we add $b$ parallel edges between the same pair of vertices in $H'$, taking $O(w^2)$ time.
For a candidate value $(A^*, B^*, r(v))$ corresponding to $L$, we define a set of demands according to $A^*$ and $B^*$ and run $ALG$ on all the possible $H'$ we construct for sets in $Path_{H_{3-i}}$. If there exists one such graph that satisfies all the demands, then we add this candidate value into $Y_v$. 
We can therefore compute each set $Y_v$ in $k^{O(w^2)}w^2 + k^{O(w^2)}f(k,w)$ time.
A possible $P_H$ consists of each value in $Y_v$. There are at most $k^{O(w^2)}$ such sets and each may contain at most $k^{O(w^2)}$ values. So we can generate all possible $P_H$ from those sets in $k^{w^2k^{O(w^2)}}$ time. 

\noindent{\bf Compute $\mathbf{Path_H}$}
For each pair of $B^1 \in Path_{H_1}$ and $B^2 \in Path_{H_2}$, we construct a graph $H'$ on vertex set $L_1 \cup L_2$: if two vertices are connected by $b$ disjoint paths, we add $b$ parallel edges between those vertices in $H'$. Since each candidate $B'$ on $L$ can be represented by a set of demands, we only need to run $ALG$ on all possible $H'$ to check if $B'$ can be satisfied.
We add all satisfied candidates $B'$ into $Path_H$. This can be computed in $k^{O(w^2)}w^2 + k^{O(w^2)}f(k,w)$ time.

Therefore, the total running time is $O(k^{w^2}f(k,w)+k^{w^2k^{w^2}})$.
For each component we enumerate all possible cases, and the correctness follows.
\end{proof}

Our dynamic programming is guided by the decomposition tree $T$ from leaves to root. 
For any node $q$ in $T$, let $T_q$ be the subtree of $T$ rooted at $q$ and $L_q$ be the separator corresponding $q$'s parent edge. Let $E_q$ be the subset of $E(G)$ corresponding to $T_q$.  
For each node $q$, our dynamic programming table is indexed by all the possible connectivity characteristics on the corresponding separator $L_q$. 
Each entry indexed by the connectivity characteristic $Char$ in the table is the weight of the minimum-weight subgraph of $G[E_q]$ that contains all the terminals in $G[E_q]$ and has $Char$ as its connectivity characteristic. 

\paragraph*{Base case} For each leaf of $T$, the only subgraph $H$ is the edge $uv$ contained in the leaf and the separator only contains its endpoints $u$ and $v$. There are three cases.
\begin{enumerate}
\item Both $u$ and $v$ are not terminals. $Com_H(u,v)$ contains all subsets of the multiset of edge $uv$ (up to $k$ times), including the empty set. $P_H$ is empty since there is no terminal. $Path_H$ contains one set: $\{(u, v, 1)\}$.

\item Only one of $u$ and $v$ is a terminal. W.l.o.g.~assume $u$ is the terminal. $Com_H(u,v)$ contains all subsets of the multiset of edge $uv$ (up to $k$ times), including the empty set. $Path_H(u)$ contains two configurations: $((k, 0), \{(u, v, 1)\}, r(u))$ and $( (k, 1), \emptyset, r(u) )$. $Path_H$ contains one set: $\{(u, v, 1)\}$.

\item Both $u$ and $v$ are terminals. $Com_H(u,v)$ contains the multisets of edge $uv$ that appears at least $\min\{ r(u), r(v) \} - 1$ times. $Path_H(u)$ contains two configurations: $((k,0), \{(u, v, 1)\}, r(u))$ and $((k,1), \emptyset, r(u))$, and $Path_H(v)$ contains two configurations: $((0, k), \{(u, v, 1)\}, r(v))$ and $((1,k), \emptyset, r(v))$. $Path_H$ contains one set: $\{(u, v, 1)\}$.
\end{enumerate}

For each non-leaf node $q$ in $T$, we combine every pair of connectivity characteristics from its two children to fill in the dynamic programming table for $q$. The root can be seen as a base case, and we can combine it with the computed results. 
The final result will be the entry indexed by $( \emptyset, \emptyset, \emptyset)$ in the table of the root.
If $E(G) = km$, then the size of the decomposition tree $T$ is $O(km)$. 
By Lemma~\ref{lem: dp}, we need $O(k^{w^2}f(k,w) + k^{w^2k^{w^2}})$ time to combine each pair of connectivity characteristics. 
Since there are at most $2^{k^{O(w^2)}}$ connectivity characteristics for each node, the total time will be $O(2^{k^{w^2}}k^{w^2}f(k,w)m + 2^{k^{w^2}}k^{w^2k^{w^2}}m)$.

\paragraph*{Correctness}
The separator completions guarantee the connectivity for the terminals in $H$, and the connecting configurations enumerate all the possible ways to connect terminals in $H$ and terminals of $V(G)\setminus V(H)$. So the connectivity requirement is satisfied . The correctness of the procedure follows from Lemma~\ref{lem: dp}. 

\subparagraph*{Acknowledgements}
We thank Hung Le, Amir Nayyeri and David Pritchard for helpful discussions.
\fi
\newpage
\bibliography{zerr.bib}
\bibliographystyle{plain}

\end{document}